\def\RR{\hbox{I\kern-.2em\hbox{R}}}
\newcommand{\qed}{\hbox to 0pt{}\hfill$\rlap{$\sqcap$}\sqcup$ \vspace{3mm}}
\numberwithin{equation}{section}
\newtheorem{Lemma}{\textbf{Lemma}}
\newtheorem{Th}{\textbf{Theorem}}
\tikzstyle{rect} = [draw, rectangle, fill=blue!20, text width=6em, text centered, minimum height=2em]
\tikzstyle{elli} = [draw, ellipse, fill=red!20, minimum height=2em]
\tikzstyle{circ} = [draw, circle, fill=white!20, minimum width=8pt, inner sep=5pt]
\tikzstyle{diam} = [draw, diamond, fill=white!20, text width=6em, text badly centered, inner sep=0pt]
\tikzstyle{line} = [draw, -latex']
\tikzstyle{cs} = [rectangle,minimum width=1cm, minimum height=1cm, text centered,
\tikzstyle{ce} = [rectangle,minimum width=1cm, minimum height=1cm, text centered,draw=orange,fill=orange!5,thick]
\tikzstyle{ci} = [rectangle,minimum width=1cm, minimum height=1cm, text centered,
\tikzstyle{cr} = [rectangle,minimum width=1cm, minimum height=1cm, text centered,draw=green,fill=green!5,thick]
\tikzstyle{cv} = [rectangle,minimum width=1cm, minimum height=1cm, text centered,draw=green,fill=green!5,thick]
\tikzstyle{ct} = [rectangle,minimum width=1cm, minimum height=1cm, text centered,draw=green,fill=green!5,thick]
\tikzstyle{arrow} = [thick,->,>=stealth]
\date{}
\begin{document}

	\title{{Bifurcation Analysis of an Influenza A (H1N1) Model with Treatment and Vaccination}}

	\author[1]{\small  Kazi Mehedi Mohammad\thanks{ Email: mehedimim.me@gmail.com}}
 \author[1]{\small Asma Akter Akhi \thanks{Email: akhiasma752@gmail.com}}
	\author[1]{\small Md. Kamrujjaman\thanks{Corresponding author Email: kamrujjaman@du.ac.bd}}

	\affil[1]{\footnotesize Department of Mathematics, University of Dhaka, Dhaka 1000, Bangladesh}

	\maketitle
	\vspace{-1.0cm}
	\noindent\rule{6.35in}{0.02in}\\
	{\bf Abstract.}\\
	This study focuses on the modeling, mathematical analysis, developing theories, and numerical simulation of Influenza virus transmission. We have proved the existence, uniqueness, positivity, and boundedness of the solutions. Also, investigate the qualitative behavior of the models and find the basic reproduction number $(\mathcal{R}_0)$ that guarantees the asymptotic stability of the disease-free and endemic equilibrium points. The local and global asymptotic stability of the disease free state and endemic equilibrium of the system is analyzed with the Lyapunov method, Routh-Hurwitz, and other criteria and presented graphically. {This study helps to investigate the effectiveness of control policy and makes suggestions for alternative control policies. Bifurcation analyses are carried out to determine prevention strategies.} Transcritical, Hopf, and backward bifurcation analyses are displayed analytically and numerically to show the dynamics of disease transmission in different cases. Moreover,
	analysis of contour plot, box plot, relative biases, phase portraits are presented to show the
	influential parameters to curtail the disease outbreak.
	 We are interested in finding the nature of $\mathcal{R}_0$, which determines whether the disease dies out or persists in the population. The findings indicate that the dynamics of the model are determined by the threshold parameter $\mathcal{R}_0$. \\

 
	\noindent{\it \footnotesize Keywords}: {\small SVEIRT model; Influenza H1N1; Equilibrium States; Stability Analysis; Bifurcation.}\\
	\noindent
	\noindent{\it \footnotesize AMS Subject Classification 2010}: 53C25, 83C05, 57N16. \\
	\noindent\rule{6.35in}{0.02in}
\section{Introduction}
{Based on historical records, the flu has existed for at least 1,500 years. Hippocrates (5th century BC) is credited with first describing the history of influenza when he wrote that a disease like the flu had traveled from northern Greece to the southern islands and beyond. When a flu outbreak struck Florence, Italy in the 1300s, the city's officials dubbed it influenza di freddo, or ``cold influence", presumably a reference to their theory of the disease's cause. 
	Numerous flu epidemics have been documented throughout history, ranging from one that originated in Asia and moved to Europe and Africa in 1580 to others that have occurred throughout centuries in Europe and Britain. With a third of the world's population affected and an estimated 50 million deaths, the 1918 ``Spanish flu" pandemic is referred to as the ``mother of all pandemics" and was the deadliest pandemic in history \cite{CDC, WHO}.}
	
Throughout human history, pandemics and epidemics have decimated the population several times, frequently changing the course of history drastically and putting an end to entire civilizations. Mathematical models that explain the dynamics of infectious diseases are crucial to public health because they shed light on how to adopt effective and feasible disease-control measures. Flu, or influenza, is a condition brought on by the Orthomyxoviridae virus that mostly affects the nose, bronchi, throat, and occasionally the lungs. Humans who have influenza frequently have fever over 38 degrees Celsius, coughing, headaches, sniffles, and anorexia. ILI, or Influenza like Illness, is an unpleasant body condition. Virus transmission lasts 2–7 days, and it typically goes away on its own \cite{Stability Bound-5, Stability Bound-8, Stability Bound-9}. This flu is often referred to as a self-limiting disease by Indonesians. If there are no complications from other disorders, the illness will be gone in 4–7 days. The immune system of a person has a significant impact on the disease's severity. The generation of airborne particles and aerosols containing viruses is necessary for respiratory transmission. When people speak and breathe normally, aerosols are created. Sneezing is a method of expulsion from the nasal cavity that is more efficient if the infection generates more snot \cite{Stability Bound-14, Stability Bound-17, Stability Bound-18}.

 The World Health Organization (WHO) proclaimed the influenza A (H1N1) virus, a new virus, to be a pandemic on June 11, 2009, after it was discovered in Mexico and the United States \cite{WHO}. 
 A (H1N1) virus strain that primarily affected children and young people without immunity to the new strain and started in North America but spread globally also became a pandemic \cite{Stability Bound-19, Stability Bound-20}. Since many elderly persons had previously been exposed to a similar H1N1 virus strain, they were shielded by their antibodies. However, it caused the deaths of more than 200,000 individuals worldwide \cite{WHO}.

Despite the availability of vaccinations for numerous infectious diseases, World continues to experience significant suffering and mortality from these diseases. Given this context, additional research is essential to determine the control of the widespread transmission of the H1N1 influenza A pandemic virus. 
Sequencing the mathematical models, Nguyen Huu Khanh has considered the SEIR model, which describes the influenza virus's propagation while taking human illness resistance into account \cite{Stability Bound-13}. Consequently, in the simulation, an exposed or infected individual. Without therapy, a person could return to a vulnerable one.
In a series, many authors looked to the well-known SEIR model or its adaptations to explain how people move through various compartments, which stand in for the phases of disease across the entire population over time \cite{Stability Bound-14, AAA1, Stability Bound-21}. By receiving an annual influenza vaccination, people can avoid influenza. Due to the virus's fast mutation, a vaccine created for one year might not be effective the following year \cite{Stability Bound-23}. Additionally, the virus's antigenic drift may take place after the year's vaccine has been developed, making it less protective. As a result, outbreaks are more likely to happen, especially among high-risk populations. Other precautions include avoiding ill people, hiding coughs and sneezes, and often washing your hands \cite{CDC, WHO, Stability Bound-14}.

Considering the existing literature's and historical notes, the main objectives of this article are:
\begin{itemize}
    \item Vaccination and treatment strategy to control the disease spreading and outbreaks. 
    \item Perform theoretical observation of the model by examining the existence, positivity, and boundedness of the solutions.
    \item Observe the basic reproduction number with vaccination and without vaccination. 
    \item Perform Hopf, forward and backward bifurcation analysis of disease-free and endemic equilibrium and analyze their local and global stability. 
    \item Contour plots, box plots, and phase plane analyses were conducted to scrutinize both single and multi-compartment interactions.
\end{itemize}
The findings of this article regarding the goal are:
\begin{itemize}
    \item Disease-free equilibrium (DFE) and endemic equilibrium (EE) points persist in the system. 
    \item Persistence theorem considering the eigenvalue and their corresponding eigenfunctin analysis.
    \item After numerical simulation, it is evident that transcritical bifurcation occurs at DFE when the basic reproduction number $(\mathcal{R}_0)$ is equal to one. That means the DFE becomes stable situation to unstable.
    \item Backward bifurcation property arises in the model because of the reinfection of the susceptible population. Moreover, the disease transmission rate, population density, interventions, and contact patterns influence the relationship between the basic reproduction number and the force of infection.  
    \item When a Hopf bifurcation occurs, the disease-free equilibrium becomes unstable, and a stable limit cycle appears. This cyclic pattern shows the periodic oscillations in the dynamics of the disease, with the number of individuals in each compartment varying over time.
    \item We conducted contour plot, box plot, and relative influence analyses to illustrate various scenarios and their effects on the basic reproduction number $\mathcal{R}_0$.
\end{itemize}

This paper is organized as follows: mathematical model is discussed elaborately in Sections \ref{Section-Mathematical Model SVEIRT}. Existence, positivity, and boundedness of solutions are described in Sections \ref{Subsection-Existence of Solution}, \ref{Subsection-Positivity of Solution}, \ref{Subsection-Boundedness of Solution}. Further, the determination of fixed points such as disease free equilibirum (DFE) and endemic equilibrium  (EE) and calculation of basic reproduction number $(\mathcal{R}_0)$ (with control and without control) are presented in Sections \ref{Section-Determination Fixed Points}, \ref{Subsection-DFE point}, \ref{Subsection-EE point}, \ref{Section-Reproduction Number}, \ref{Subsection-R0-with-Control}, and \ref{Subsection-R0-without control}, respectively. Then, the existence of the endemic equilibrium point is calculated, and forward-backward and Hopf bifurcation analysis are carried out in Sections \ref{Section-Existence of EE}, \ref{Section-Hopf-Bifurcation}. In Sections \ref{Section-Local-global-stability of DFE EE}, \ref{Subsection-Stability and Persistence} local, global stability of DFE and EE is presented, and stability and persistence of solutions are discussed. In Section \ref{Section-Numerical-Simulation}, we presented a variety of numerical examples, elucidating the results of contour plot and box plot analyses, phase plane analysis focusing on the relative influence on $\mathcal{R}_0$, and computational biological findings.
The outcomes are summarized and discussed in Section \ref{Section-Concluding-Remarks}.

\section{Mathematical Model Formulation of Influenza}\label{Section-Mathematical Model SVEIRT}
Most pandemic situations observe an exponential curve followed by gradual flattening \cite{Marcheva Book}, that is, reducing the epidemic peak. In the absence of any established treatment or vaccination, understanding the transmission dynamics of a new infectious disease outbreak is imperative for flattening the curve. Mathematical models are thus considered a crucial tool to the public health authorities in this regard as their decision to optimize the control measures depends hugely on the short- and long-term predictions of these models \cite{Stability Bound-4}. Amongst the various models used for describing such epidemic evolutions, classic compartmental models such as SIR and SEIR are of immense value to decision-makers and even non-expert operators for their simplicity, reliability, and usage of multiple data sources \cite{CDC, WHO, Stability Bound-6}. During the outbreak of seasonal Influenza virus, these epidemiological models have been of great significance to many countries \cite{WHO} including Italy, Mexico who have widely adopted the models for obtaining insights into the recent situations, assessing the impact of the outbreak control measured, searching for alternative interventions and providing a roadmap to other similar settings. Moreover, in disadvantaged settings, prediction models with multiple features will be of great value to healthcare workers for monitoring patients within limited resources \cite{CDC, Marcheva Book, Stability Bound-7, Stability Bound-10}.\\
In the case of the modeling of infectious disease, the traditional SIR model permits the determination of critical conditions of disease occurrence in the population with total population size \cite{Marcheva Book}. Influenza is classified as a person-to-person transmissible disease. In several cases, infected people have no apparent symptoms, and, in those cases, an SEIR model is principally used since exposed class (E) individuals spread the infection rapidly. {In contrast to the SIR epidemic model, SEIR is a more updated and sophisticated model that is biologically plausible regarding numerous pandemics and infectious disorders \cite{Stability Bound-11, Stability Bound-4}.} Therefore, a model with numerous compartments is a helpful tool for forecasting the nature of current influenza disease patterns. In this study, we consider a modified version of the typical SEIR model, we propose the following six compartments' potential SVEIRT (Susceptible-Vaccinated-Exposed-Infectious-Treatment-Removal) mathematical model as a system of ordinary differential equations:
\begin{align} \label{new_model}
	\begin{cases}
		\vspace{0.2cm}
		\displaystyle\frac{dS}{dt} = \Lambda - \left(\beta_1E+\beta_2I\right)S-(\mu+\phi) S. \\
		\vspace{0.2cm}
		\displaystyle\frac{dV}{dt} = \phi S-(1-\varepsilon)\left(\beta_1E+\beta_2I\right) V-\mu V.\\
		\vspace{0.2cm}
		\displaystyle\frac{dE}{dt} =\left(\beta_1E+\beta_2I\right)S-(\alpha+\mu) E.\\
		\vspace{0.2cm}
		\displaystyle\frac{dI}{dt} =\alpha E+ (1-\varepsilon)\left(\beta_1E+\beta_2I\right) V-(\mu+\delta+\gamma+\gamma_1) I.\\
		\vspace{0.2cm}
		\displaystyle\frac{dR}{dt} = \gamma I-\mu R.\\
		\displaystyle\frac{dT}{dt} =\gamma_1 I-\mu T. 
	\end{cases}
\end{align}
for $ t \in (0,\infty) $ with initial conditions,
\begin{align}\label{ic}
	S(0) = S_{0},\;\;V(0) = V_0. \;\;  E(0) = E_0, \;\;  I(0) = I_0, \;\;  R(0) = R_0, \; \text{and}\;  T(0) = T_0,
\end{align}
and the total population for the SVEIRT model is found by,
\begin{equation}\label{equ_2.5}
	N(t) \equiv S(t) +V(t)+ E(t)+ I(t) + R(t) +T(t).
\end{equation} 
\begin{figure}[H]
	\centering
	\includegraphics[width=4 in]{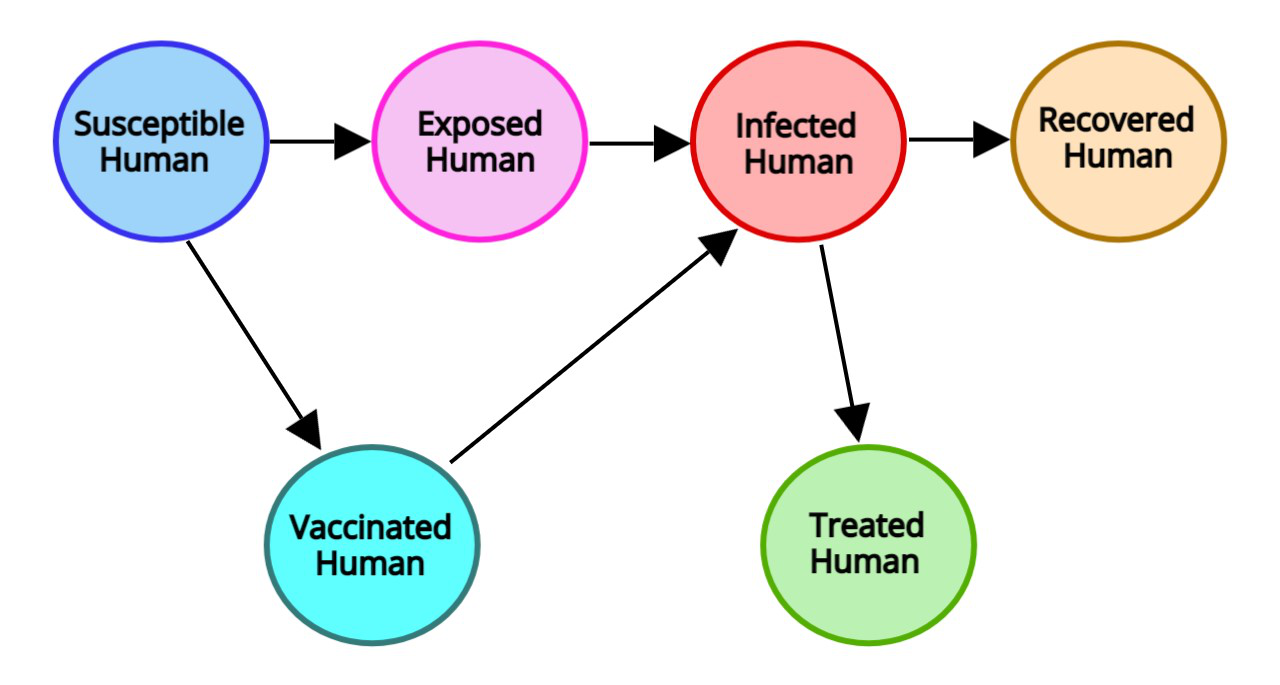} 
	\caption{The transmission cycle of disease in the SVEIRT model.}
	\label{transmission-cycle-SVEIRT}
\end{figure}

\noindent	
The definitions of all state variables and model parameters with brief descriptions are presented in Table~\ref{tableparameter}, and they are non-negative because of the dynamics of a population. The whole cycle and the flow diagram for the suggested model \eqref{new_model} are illustrated in Figure~\ref{transmission-cycle-SVEIRT}. 
Assume that the following guidelines control the transmission of disease:
\begin{enumerate}
	\item[(i)] The total population stays fixed at a level $N$ over the interval.
	\item[(ii)] The susceptible population becomes infected by contracting the disease. The rate of change of the susceptible population is proportional to the number of contracts between the $S(t)$ and $E(t)$ also between the $S(t)$ and $I(t)$. This number is proportional to the number of susceptible, exposed, and infected persons.
\end{enumerate}

\begin{table}[H]
 	\begin{center}
 		\caption{Model parameters values with descriptions.}
 		\scriptsize
 		\label{tableparameter} 
 		\begin{tabular}{|l|l|l|l|}
 			\hline\noalign{\smallskip}
 			\textbf{Notation} & \textbf{Definition} & \textbf{Value}  & \textbf{Source}  \\
 			\noalign{\smallskip}\hline\noalign{\smallskip}
    $N$ & Total number of human population & &\\
			$S$ & Total number of susceptible population & & \\
			$V$ & Total number of vaccinated population & & \\
			$E$ & Total number of exposed population & & \\
			$I$ & Total number of infected population & & \\
			$R$ & Total number of recovered population & & \\
			$T$ & Total number of treated population & &\\
 			$ \alpha $ & Transition rate from $E$ to $I$  &  $0.75$ week$ ^{-1} $  & \cite{Stability Bound-5,Stability Bound-13} \\
 			$\Lambda$ & Recruitment rate in $S$ class & $5\times 10^2$ week$ ^{-1} $ & \cite{Stability Bound-14,Stability Bound-17} \\
 			$ \beta_1 $ & Transmission rate from contact with $E$ to class $S$  & $ [0.0045,0.0055 ]$ week$ ^{-1} $ & \cite{Stability Bound-18,Stability Bound-20}\\
 			$ \beta_2  $ & Transmission rate from contact with $I$ to class $S$&  $ [0.0045,0.0055 ]$ week$ ^{-1} $  & \cite{Stability Bound-21,Stability Bound-23} \\
 			$\gamma$ & Recovery rate of $I$  & $ 0.65 $ week$ ^{-1} $ & \cite{Stability Bound-19,Stability Bound-24}\\
 			$\gamma_1$ & Treatment progression rate of $I$ &  $ 0.25$ week$ ^{-1} $ & \cite{Stability Bound-8, Stability Bound-9} \\
 			$ \mu $ & Natural death rate & $ 5 \times 10^{-2}$ week$ ^{-1} $ & \cite{Stability Bound-5,Stability Bound-8}\\
 			$ \delta $ & Disease induced death rate & $0.3$ week $^{-1} $ & \cite{Stability Bound-13,Stability Bound-19}\\
 			$ N $ & Total Population in Mexico (assumed) & $520$  & \cite{Stability Bound-8,Stability Bound-20}\\
 			$ \lambda $ & Vaccine inefficiency rate & $0.55$  & \cite{Stability Bound-14,Stability Bound-19}\\
 			\noalign{\smallskip}\hline
 		\end{tabular}
 	\end{center}
 \end{table}
 
 In the following section, we stated a few auxiliary results and Appendix \ref{allproofs} contains the proofs of the existence, positivity, and boundedness of the solution.
\subsection{Existence of Solution}\label{Subsection-Existence of Solution}
\begin{Th}\label{theorem01}
	(Existence of solution). Let $\{S_0, V_0, E_0, I_0, R_0, T_0\} \in \mathbb{R}$ be presented. There subsists $t_0 >0$, and continuously differentiable functions $\{S, V, E, I, R, T : [0,t_0) \rightarrow \mathbb{R}\}$ such that the ordered pairs of states $(S, V, E, I, R, T)$ satisfies \eqref{new_model} and $(S, V, E, I, R, T)(0)= (S_0, V_0, E_0, I_0, R_0, T_0).$
\end{Th}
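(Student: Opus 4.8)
The plan is to cast the system \eqref{new_model} in the vector form $\dot{x} = F(x)$ with $x = (S,V,E,I,R,T)^{T}$ and $F:\mathbb{R}^{6}\to\mathbb{R}^{6}$ the right-hand side, then invoke the Picard--Lindel\"of (Cauchy--Lipschitz) existence-and-uniqueness theorem for ODEs. Concretely, first I would observe that each component of $F$ is a polynomial in the coordinates $(S,V,E,I,R,T)$: the only nonlinearities are the bilinear terms $(\beta_1 E + \beta_2 I)S$ and $(1-\varepsilon)(\beta_1 E + \beta_2 I)V$, which are quadratic. Hence $F$ is $C^{\infty}$ on all of $\mathbb{R}^{6}$, and in particular continuously differentiable, so its Jacobian is continuous and therefore bounded on any compact neighborhood of the initial point $x_0 = (S_0,V_0,E_0,I_0,R_0,T_0)$.

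The key steps, in order, are: (1) rewrite \eqref{new_model}--\eqref{ic} as the initial value problem $\dot{x}=F(x)$, $x(0)=x_0$; (2) fix $r>0$ and let $B = \overline{B}(x_0,r)$ be the closed ball; since $F\in C^{1}$, the mean value inequality gives a Lipschitz constant $L = \sup_{\xi\in B}\|DF(\xi)\|<\infty$, so $F$ is Lipschitz on $B$; (3) also set $M = \sup_{\xi\in B}\|F(\xi)\|<\infty$; (4) apply the Picard--Lindel\"of theorem, which yields $t_0 = \min\{r/M,\, 1/(2L)\} > 0$ (or any standard choice) and a unique $C^{1}$ solution $x:[0,t_0)\to\mathbb{R}^{6}$ with $x(0)=x_0$; (5) unpack the components of $x$ to recover $\{S,V,E,I,R,T\}$, which are then continuously differentiable on $[0,t_0)$ and satisfy \eqref{new_model} together with the prescribed initial data. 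One may equivalently phrase step (4) as a contraction-mapping argument: define the Picard operator $(\mathcal{T}x)(t) = x_0 + \int_0^{t} F(x(s))\,ds$ on $C([0,t_0];B)$ with the sup-norm, show $\mathcal{T}$ maps this set into itself and is a contraction for $t_0$ small, and take the unique fixed point.

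Since the argument is a textbook application of a standard theorem to a manifestly smooth vector field, there is essentially no serious obstacle; the only point requiring a little care is the bookkeeping that produces explicit bounds $L$ and $M$ on the ball $B$ — one should note that the quadratic terms are what force the solution to be only \emph{local} (defined on $[0,t_0)$ rather than all of $[0,\infty)$) at this stage, with global existence deferred to the later positivity and boundedness results in Sections \ref{Subsection-Positivity of Solution} and \ref{Subsection-Boundedness of Solution}. If one wants to avoid even writing down Lipschitz constants, it suffices to cite local Lipschitz continuity of $C^{1}$ maps and the standard local existence theorem directly. I would also remark that the statement as written asks only for existence, so uniqueness need not be stressed, though it comes for free from the same theorem.
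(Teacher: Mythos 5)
Your proposal is correct and follows essentially the same route as the paper: the paper's proof also writes the system as $\dot{\mathbf{y}}=\mathbf{f}(\mathbf{y})$, computes the Jacobian $\nabla\mathbf{f}$, uses the Mean Value Theorem to extract a local Lipschitz constant on a compact set, and invokes the Picard--Lindel\"of theorem to obtain the local $C^1$ solution. No substantive difference to report.
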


\subsection{Positivity of Solution}\label{Subsection-Positivity of Solution}
The boundedness and positivity of the solutions are two main constituents of an epidemic model. To convey that any solution with positive beginning values stays positive for all times $t>0$, it is necessary to establish that all parameters and variables are always positive for $t>0$. Positive behavior is biologically interpreted as the long-term survival of the population \cite{Stability Bound-14, Stability Bound-17}. 
\begin{Th}\label{th1}
	(Positivity of solution) Consider the initial conditions of the system \eqref{new_model} are $S(0) \geq 0 ,\; V(0) \geq 0, \; E(0) \geq 0,\; I(0) \geq 0,\; R(0) \geq 0,\; and \; T(0) \geq 0 $;
	the solutions $ S(t),\; V(t),\; E(t),\; I(t),\; R(t),\; and \; T(t) $ are non negative $\forall\;t > 0 $.
\end{Th}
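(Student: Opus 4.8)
The plan is to show that the nonnegative orthant $\mathbb{R}^6_{\ge 0}$ is forward invariant for \eqref{new_model}, exploiting the fact that each equation is ``quasi-positive'': its right-hand side is nonnegative whenever the corresponding state variable vanishes and the remaining states are nonnegative. Concretely, I would put each scalar equation in the form $\dot x = -a_x(t)\,x + b_x(t)$ and integrate with the appropriate exponential integrating factor, processing the compartments in an order that keeps every ``source'' term $b_x$ nonnegative. Throughout I use that all parameters are positive and $0\le\varepsilon\le1$, and that Theorem~\ref{theorem01} supplies a $C^1$ solution on some $[0,t_0)$ along which all coefficients below are continuous.

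First, from the $S$--equation, $\dot S = \Lambda - a_S(t)\,S$ with $a_S(t)=\beta_1 E(t)+\beta_2 I(t)+\mu+\phi$. Multiplying by $\exp\big(\int_0^t a_S(s)\,ds\big)$ and integrating gives
\[
S(t)=S_0\,e^{-\int_0^t a_S(s)\,ds}+\int_0^t \Lambda\, e^{-\int_r^t a_S(s)\,ds}\,dr,
\]
which is $\ge 0$ (indeed $>0$ for $t>0$ since $\Lambda>0$), \emph{irrespective of the signs of the other states}, since the exponential kernel is always positive. Having fixed $S(t)\ge0$, the same device applied to $\dot V=\phi S-a_V(t)V$ with $a_V(t)=(1-\varepsilon)\big(\beta_1 E+\beta_2 I\big)+\mu$ has nonnegative source $\phi S(t)\ge 0$, hence $V(t)\ge0$ on $[0,t_0)$. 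At the end of the chain, once $I(t)\ge0$ is known, the linear equations $\dot R=\gamma I-\mu R$ and $\dot T=\gamma_1 I-\mu T$ have nonnegative sources, so $R(t)=R_0e^{-\mu t}+\int_0^t\gamma I(r)e^{-\mu(t-r)}\,dr\ge0$ and likewise $T(t)\ge0$.

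The genuinely coupled part, and the main obstacle, is the pair $(E,I)$: the source in the $E$--equation is $\beta_2 S I$ (which needs $I\ge0$), while the source in the $I$--equation is $\alpha E+(1-\varepsilon)\beta_1 E V$ (which needs $E\ge0$), so neither compartment can be treated in isolation. I would resolve this by observing that, with $S(t),V(t)\ge0$ already in hand, $(E,I)$ satisfies the \emph{linear} time-varying system $\frac{d}{dt}(E,I)^{\top}=A(t)\,(E,I)^{\top}$ with
\[
A(t)=\begin{pmatrix} \beta_1 S(t)-(\alpha+\mu) & \beta_2 S(t)\\ \alpha+(1-\varepsilon)\beta_1 V(t) & (1-\varepsilon)\beta_2 V(t)-(\mu+\delta+\gamma+\gamma_1)\end{pmatrix},
\]
whose off-diagonal entries $\beta_2 S(t)\ge0$ and $\alpha+(1-\varepsilon)\beta_1 V(t)\ge0$ are nonnegative; that is, $A(t)$ is a Metzler matrix for every $t$. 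Since the evolution operator of a continuous Metzler system leaves $\mathbb{R}^2_{\ge0}$ invariant, $E_0,I_0\ge0$ forces $E(t),I(t)\ge0$ on $[0,t_0)$.

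If one prefers to avoid quoting the Metzler invariance property, the same conclusion follows by a first-exit (minimal-time) argument: introduce the perturbed system $\dot E_\epsilon=(\beta_1 E_\epsilon+\beta_2 I_\epsilon)S-(\alpha+\mu)E_\epsilon+\epsilon$ and $\dot I_\epsilon=\alpha E_\epsilon+(1-\varepsilon)(\beta_1 E_\epsilon+\beta_2 I_\epsilon)V-(\mu+\delta+\gamma+\gamma_1)I_\epsilon+\epsilon$; at a hypothetical first time where $E_\epsilon$ or $I_\epsilon$ hits $0$ with the other still $\ge0$, the corresponding derivative would be $\ge\epsilon>0$, contradicting the variable decreasing to $0$, so $E_\epsilon,I_\epsilon>0$; letting $\epsilon\downarrow0$ and using continuous dependence gives $E,I\ge0$. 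Either way, combining the steps yields nonnegativity of all six components for $t>0$, with strict positivity of $S$; the only place that requires real care is the $(E,I)$ block, everything else being a direct integrating-factor estimate.
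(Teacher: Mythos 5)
Your proposal is correct, and it takes a recognizably different (and more careful) route than the paper. The paper's proof sets $\widehat{t}=\sup\{t>0: S,V,E,I,R,T\ge 0\}$ and, on $[0,\widehat{t}]$, bounds each compartment separately from below by discarding its non-negative inflow terms and integrating against an exponential integrating factor; the mutual coupling of $E$ and $I$ is handled only implicitly, through the fact that on that interval all states are assumed non-negative so the discarded sources $(\beta_1E+\beta_2I)S$ and $\alpha E+(1-\varepsilon)(\beta_1E+\beta_2I)V$ are indeed $\ge 0$. You instead process the compartments in an order that makes each step unconditional: the variation-of-constants formula gives $S\ge 0$ (in fact $S>0$ for $t>0$) with no sign assumption on $E,I$, then $V\ge 0$ from the non-negative source $\phi S$, and you isolate the one genuinely coupled block $(E,I)$ as a linear time-varying system with a Metzler coefficient matrix, invoking invariance of the orthant under cooperative flows (or, equivalently, your $\epsilon$-perturbed first-exit argument), before finishing with $R,T$. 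What your version buys is rigor precisely where the paper is loosest: it makes the circularity between $E$ and $I$ explicit and resolves it, and it covers the boundary case $E_0=0$ or $I_0=0$, where the paper's lower bounds such as $E(\widehat{t})\ge E(0)e^{-(\mu+\alpha)\widehat{t}}$ degenerate to $\ge 0$ rather than the claimed strict positivity. What the paper's version buys is brevity and a uniform template applied to all six compartments at once. Both rest on the same underlying quasi-positivity of the vector field, so the proofs are cousins rather than strangers, but yours is the more defensible write-up.
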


\subsection{Boundedness of Solution}\label{Subsection-Boundedness of Solution}
 Boundedness can be understood as a natural growth constraint resulting from scarce resources, while positivity suggests that every member of the compartment population survives \cite{Stability Bound-14, AAA1, AAA2}. 
 
\begin{Th}\label{th2}
	\cite{Stability Bound-17}. (Positive invariance and boundedness of solutions)
	The closed region $ \Omega = \{(S,V,E, I, R,T) \in \mathbb{R}_{+}^{6} : 0 < N \leq \dfrac{\Lambda}{\mu} \} $ is positively invariant and attracting set for the system (\ref{new_model}).
\end{Th}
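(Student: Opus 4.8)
The plan is to reduce the six-dimensional question to a single scalar differential inequality for the total population $N(t)$ defined in \eqref{equ_2.5}, and then trap $N$ below $\Lambda/\mu$ by a comparison argument. First I would add the six equations of \eqref{new_model} and exploit the fact that the bilinear incidence terms cancel in pairs: the term $-(\beta_1E+\beta_2I)S$ in the $S$-equation cancels $+(\beta_1E+\beta_2I)S$ in the $E$-equation, and $-(1-\varepsilon)(\beta_1E+\beta_2I)V$ cancels $+(1-\varepsilon)(\beta_1E+\beta_2I)V$; the internal transfers $\alpha E$, $\gamma I$, $\gamma_1 I$ merely shuttle individuals between compartments and do not affect the sum. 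What survives is
\[
\frac{dN}{dt} = \Lambda - \mu N - \delta I .
\]

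Next, by Theorem~\ref{th1} the solution components are nonnegative for $t>0$, so in particular $I(t)\ge 0$ and hence $\dfrac{dN}{dt} \le \Lambda - \mu N$. Applying a standard Gronwall / comparison estimate to this linear differential inequality yields
\[
N(t) \le \frac{\Lambda}{\mu} + \Bigl(N(0) - \frac{\Lambda}{\mu}\Bigr) e^{-\mu t}, \qquad t \ge 0 .
\]
From this single bound I would read off both claims. If the initial datum lies in $\Omega$, i.e. $0 < N(0) \le \Lambda/\mu$, then the right-hand side never exceeds $\Lambda/\mu$, so $N(t) \le \Lambda/\mu$ for all $t\ge 0$; together with the positivity of Theorem~\ref{th1} (and the fact that $dN/dt = \Lambda > 0$ on $\{N=0\}$, so the state cannot collapse to the origin in finite time) this shows the trajectory remains in $\Omega$, giving positive invariance. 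For arbitrary nonnegative initial data the same estimate gives $\limsup_{t\to\infty} N(t) \le \Lambda/\mu$, so every trajectory eventually enters $\Omega$ and stays there, i.e. $\Omega$ is attracting.

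I expect the only place requiring care to be the cancellation bookkeeping when summing the equations, and the correct use of the sign of the $-\delta I$ term (which is why the appeal to Theorem~\ref{th1} must precede the comparison step); once the scalar inequality $\dot N \le \Lambda - \mu N$ is in hand, the remainder is routine and needs no fixed-point or topological-degree machinery.
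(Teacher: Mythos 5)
Your proposal is correct and follows essentially the same route as the paper's proof: summing the equations to obtain $\frac{dN}{dt}=\Lambda-\mu N-\delta I\le \Lambda-\mu N$ (using nonnegativity of $I$), integrating the comparison inequality to get $N(t)\le \frac{\Lambda}{\mu}+\bigl(N(0)-\frac{\Lambda}{\mu}\bigr)e^{-\mu t}$, and reading off positive invariance and attraction. Your version is in fact slightly more explicit than the paper's about the cancellation bookkeeping and the role of the $-\delta I$ term, but no new idea is involved.
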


\begin{Th}\label{th3} \cite{Stability Bound-17}.
	The feasible region $\Omega$ is determined by, $$\Omega=\left\{(S(t),\;V(t),\;E(t),\;I(t),\;R(t),\;T(t))\in \mathbb{R}_{+}^6 \;\;\vert \;\;0\leq N \leq \max\left\{N(0),\frac{\Lambda}{\mu}\right\} \right\}.$$
	with initial conditions $S(t)>0,\;V(t)>0,\;E(t)>0,\;I(t)>0,\;R(t)>0,\;T(t)>0,\;$is positively invariant and attracting with respect to system (\ref{new_model})\; $\forall \;t>0$.
\end{Th}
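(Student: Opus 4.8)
The plan is to collapse the six–dimensional system to a scalar differential inequality for the total population $N(t)$ and then run a standard comparison argument. First I would sum the six equations of \eqref{new_model}. Every internal transfer term cancels in pairs: the incidence term $(\beta_1 E+\beta_2 I)S$ leaves the $S$ equation and enters the $E$ equation; the vaccination flux $\phi S$ leaves $S$ and enters $V$; the progression term $\alpha E$ leaves $E$ and enters $I$; the vaccinated-infection term $(1-\varepsilon)(\beta_1 E+\beta_2 I)V$ leaves $V$ and enters $I$; and $\gamma I$, $\gamma_1 I$ leave $I$ and enter $R$, $T$ respectively. What survives is
\[
\frac{dN}{dt}=\Lambda-\mu N-\delta I .
\]
Since $I(t)\ge 0$ for all $t>0$ by Theorem~\ref{th1} and $\delta>0$, this gives the differential inequality $\frac{dN}{dt}\le \Lambda-\mu N$.

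Next I would integrate this inequality. Multiplying by the integrating factor $e^{\mu t}$ (equivalently, invoking the comparison theorem against the solution of $\dot{u}=\Lambda-\mu u$) yields
\[
N(t)\le N(0)\,e^{-\mu t}+\frac{\Lambda}{\mu}\bigl(1-e^{-\mu t}\bigr),\qquad t\ge 0 .
\]
The right-hand side is a convex combination of $N(0)$ and $\Lambda/\mu$, so $N(t)\le\max\{N(0),\Lambda/\mu\}$ for every $t\ge 0$; together with the forward invariance of $\mathbb{R}_+^6$ supplied by Theorem~\ref{th1}, this is exactly the positive invariance of $\Omega$. Letting $t\to\infty$ in the same estimate gives $\limsup_{t\to\infty}N(t)\le\Lambda/\mu$, so every trajectory is eventually drawn into $\{N\le\Lambda/\mu\}\subseteq\Omega$, which is the attracting property.

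There is no genuine obstacle here: nonnegativity of each coordinate is Theorem~\ref{th1}, and the only new content is the scalar estimate above, which is routine. The one place deserving a moment's care is verifying that the interaction terms truly cancel when the equations are added — in particular that $(1-\varepsilon)(\beta_1 E+\beta_2 I)V$ occurs with opposite signs in the $V$ and $I$ equations and that the disease-induced death term $-\delta I$ is the only surviving term apart from $\Lambda-\mu N$ — and then keeping the bound in the form $\max\{N(0),\Lambda/\mu\}$ rather than collapsing it prematurely to $\Lambda/\mu$ during the transient regime when $N(0)>\Lambda/\mu$.
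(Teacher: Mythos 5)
Your proposal is correct and follows essentially the same route as the paper: both reduce to the scalar inequality $\frac{dN}{dt}\le \Lambda-\mu N$ (the paper phrases the dropping of the $-\delta I$ term as the ``omission of influenza infection''), integrate to $N(t)\le \frac{\Lambda}{\mu}+\left(N(0)-\frac{\Lambda}{\mu}\right)e^{-\mu t}$, and read off both the $\max\{N(0),\Lambda/\mu\}$ bound and the attraction to $\{N\le\Lambda/\mu\}$. Your version is in fact slightly more careful, since you exhibit the exact cancellation and record the surviving $-\delta I$ term explicitly before discarding it.
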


\section{Determination of Fixed Points}\label{Section-Determination Fixed Points}
To identify the equilibrium points $(\widetilde{S},\widetilde{V},\widetilde{E},\widetilde{I},\widetilde{R},\widetilde{T})$ of the system \eqref{new_model}, we set the value of each derivative to zero. Thus, in situations of equilibrium, we obtain,

\begin{align} \label{equi}
	\begin{cases}
		\vspace{0.2cm}
		\displaystyle\Lambda - \left(\beta_1\widetilde{E}+\beta_2\widetilde{I}\right)\widetilde{S}-(\mu+\phi) \widetilde{S} =0.\\
		\vspace{0.2cm}
		\displaystyle\phi \widetilde{S}-(1-\varepsilon)\left(\beta_1\widetilde{E}+\beta_2\widetilde{I}\right) \widetilde{V}-\mu \widetilde{V} =0.\\
		\vspace{0.2cm}
		\displaystyle\left(\beta_1\widetilde{E}+\beta_2\widetilde{S}\right)\widetilde{S}-(\alpha+\mu) \widetilde{E} =0.\\
		\vspace{0.2cm}
		\displaystyle\alpha \widetilde{E}+ (1-\varepsilon)\left(\beta_1\widetilde{E}+\beta_2\widetilde{I}\right) \widetilde{V}-(\mu+\delta+\gamma+\gamma_1) \widetilde{I} =0.\\
		\vspace{0.2cm}
		\displaystyle\gamma \widetilde{I}-\mu \widetilde{R}=0.\\
		\displaystyle\gamma_1 \widetilde{I}-\mu \widetilde{T}=0.
	\end{cases}
\end{align}
Now, we have to solve the right side of the equations \eqref{equi} in the derivative terms to find steady states, and by solving the consequent equations for compartments $S, V, E, I, R$, and $T$, we discover that there are only two biologically significant equilibria in total. These events can be divided into two categories: when the influenza virus either becomes extinct in the area, i.e., $E=I=0$, or when the virus persists within the region  $(E\neq 0,\; I\neq 0)$.

\subsection{Disease-Free Equilibrium (DFE) Point}\label{DFE point}\label{Subsection-DFE point}
The disease-free equilibrium (DFE) point of \eqref{new_model} is, $\mathcal{E}^0 \equiv\left( \dfrac{\mu N}{\mu + \phi},\dfrac{\phi N}{\mu + \phi}, 0, 0, 0,0 \right).$

\subsection{Endemic Equilibrium (EE) Point}\label{Subsection-EE point}
The endemic equilibrium (EE) point for the system \eqref{new_model} is, 
$\mathcal{E}^*=\left(S^*, V^*, E^*, I^*, R^*, T^*\right)$. Here,
$S^{*}=\frac{\displaystyle \Lambda-(\alpha+\mu)E^{*}}{\displaystyle (\mu+\phi)}$, $V^{*}= \frac{\displaystyle \phi(\Lambda-a_1 E^{*})}{\displaystyle a_2(\mu+\lambda\lambda_1)},$ $R^* = \frac{\displaystyle \lambda I^*}{\displaystyle \mu}, \;
	T^* = \frac{\displaystyle \lambda_1 I^*}{\displaystyle \mu},$\\
$E^*=\frac{\displaystyle (\Lambda \beta_1-a_1a_2-a_1\beta_2 I^{*})\pm \sqrt{(\Lambda \beta_1-a_1a_2-a_1\beta_2 I^{*})^2 + 4\Lambda\beta_2I^* a_1\beta_1}}{\displaystyle 2a_1\beta_1}.$\\ Where, $a_1=\alpha+\mu, \; a_2=\mu+\phi, \; \lambda_1=\beta_1E^*+\beta_2I^*, \;\text{and}\; \lambda=1-\epsilon.$\\

A concise analysis of the endemic equilibrium point can be found in the Appendix \ref{endemic_calculation}.

\section{Calculation of Basic Reproduction Number }\label{Section-Reproduction Number}
The fundamental reproduction number, a crucial threshold quantity for examining infectious disease modeling, has been calculated in this section. To mathematically quantify the volatility of an infectious disease, it was developed for the study of epidemiology. It establishes whether the disease will disappear over time or remain prevalent in the community. This threshold quantity, usually denoted by $\mathcal{R}_{0}$, is defined as the expected number of secondary infections resulting from a single primary infection in a population where everyone is susceptible. When $\mathcal{R}_0 > 1$, referring that one primary infection can lead to several subsequent infections, the disease-free equilibrium (DFE) becomes unstable, causing an epidemic in the population. Conversely, if $\mathcal{R}_{0}< 1$, the DFE is locally asymptotically stable, preventing the disease from persisting in the community
 \cite{Marcheva Book}. That reflects the scenario that the situation is under control. Therefore, an efficient plan should be created as soon as a pandemic emerges to ensure that the reproduction number falls to less than zero.
Since, the considered model \eqref{new_model} has DFE, $\displaystyle {\mathcal{E}}^0\equiv\Big(\frac{\mu N}{\mu+\phi},\frac{\phi N}{\mu+\phi},0,0,0,0\Big);$ hence, $\mathcal{R}_0$ can be calculated analytically. In this section, we have employed the next-generation matrix method to determine the basic reproduction number for the Influenza model presented in \eqref{new_model}. The calculation is based on the formula $\mathcal{R}_0 = \rho(FV^{-1})$, that represents the spectral radius of $FV^{-1}$.

 \noindent
 Succinct computation is shown in Appendix \ref{BRN} 
\subsection{Basic Reproduction Number with Control}\label{Subsection-R0-with-Control}
The threshold value for the system (\ref{new_model}) associated with controlling strategies can be presented as follows,
\begin{align} \label{repro1}
	\mathcal{R}_{0V} 
	&=\frac{\mu N\left[\alpha\beta_2+\beta_1(\mu+\delta+\gamma+\gamma_1)\right]}{(\mu+\phi)(\alpha+\mu)(\mu+\delta+\gamma+\gamma_1)}+\frac{N\phi\beta_2\lambda}{(\mu+\phi)(\mu+\delta+\gamma+\gamma_1)}.
\end{align}

\subsection{Basic Reproduction Number without Control}\label{Subsection-R0-without control}
The fundamental reproduction number for the system (\ref{new_model}) in the absence of control strategies can be expressed as follows,
\begin{align} \label{repro2}
	\mathcal{R}_{0} &=\frac{S_0\left[\alpha\beta_2+\beta_1(\gamma+\gamma_1+\mu+\delta)\right]}{(\alpha+\mu)(\gamma+\gamma_1+\delta+\mu)}.
\end{align}

\section{Presence of Endemic Equilibrium}\label{Section-Existence of EE}
The existence of the endemic equilibrium point and its uniqueness is dependent upon the corresponding threshold number $ \mathcal{R}_{0} >1 $. There endures a solitary endemic equilibrium $\mathcal{E}^*\equiv(S^*,V^*,E^*,I^*,R^*,T^*)$ for the model \eqref{new_model}.
From system \eqref{equiEE} we have, 
\begin{align*} \label{eeI}
S^*=\frac{\displaystyle \Lambda}{\displaystyle \lambda_1+\mu+\phi},\;
	R^* = \frac{\lambda I^*}{\mu}, \;\text{and}\;\; T^* = \frac{\lambda_1 I^*}{\mu}.
\end{align*}
Now, from second equation of \eqref{equiEE},
\begin{align*}
	&\phi S^*=(\lambda\lambda_1+\mu)V^*.\\
	&V^*=\frac{\phi S}{\lambda\lambda_1+\mu}=\frac{\Lambda\phi}{(\lambda\lambda_1+\mu)(\lambda_1+\mu+\phi)}.\\
	&V^*=\frac{\Lambda\phi}{\lambda\lambda_1^2+\mu\lambda_1+\mu\lambda\lambda_1+\mu^2+\phi\lambda\lambda_1+\phi\mu}.
\end{align*}
Let, $p=\lambda\lambda_1^2+\mu\lambda_1+\mu\lambda\lambda_1+\mu^2+\phi\lambda\lambda_1+\phi\mu$.\\
Now, adding the second and fourth equation of \eqref{equiEE} we have,
\begin{align*}
	&\phi S^*-\mu V^*+\alpha E^*-(\mu+\delta+\gamma+\gamma_1)I^*=0\\
	&\Rightarrow \frac{\phi\Lambda}{\lambda_1+\mu+\phi}-\frac{\mu\phi\Lambda}{p}+\alpha E^* - (\mu+\delta+\gamma+\gamma_1)I^*=0\\
	&\Rightarrow I^*= \frac{E^*\alpha}{(\mu+\delta+\gamma+\gamma_1)}-\frac{\alpha\mu\phi\Lambda}{\alpha p (\mu+\delta+\gamma+\gamma_1)}+\frac{\phi\Lambda\alpha}{\alpha(\lambda_1+\mu+\phi)(\mu+\delta+\gamma+\gamma_1)}.
\end{align*}
Now, from third equation of \eqref{equiEE} we have get,
$E^*=\frac{\displaystyle \Lambda\lambda_1}{\displaystyle (\lambda_1+\mu+\phi)(\alpha+\mu)}.$
Substituting this in above expression we have,
\begin{align*}
	I^*&=\frac{\Lambda\lambda_1\alpha}{a_1(\alpha+\mu)(\lambda_1+\mu+\phi)}-\frac{\alpha\mu\phi\Lambda}{a_1\alpha p}	+\frac{\phi\Lambda\alpha}{\alpha(\lambda_1+\mu+\phi)a_1}.
\end{align*} 
Let, $a_1=(\mu+\delta+\gamma+\gamma_1).$ From \eqref{repro2} we have, $\displaystyle R_0=\frac{\Lambda(\alpha\beta_2+\beta_1 a_1)}{(\mu+\phi)(\alpha+\mu)a_1}$.
Now,
\begin{align*}
	I^*=&\frac{\Lambda\lambda_1\alpha^2p-(\alpha+\mu)(\lambda_1+\mu+\phi)+p(\alpha+\mu)\phi\alpha\Lambda}{a_1(\alpha+\mu)(\lambda_1+\mu+\phi)\alpha p}\\
	=&\frac{\Lambda\lambda_1\alpha^2p}{a_1(\alpha+\mu)(\lambda_1+\mu+\phi)\alpha p}+\frac{\left(\left(\frac{\Lambda(\alpha\beta_2+\beta_1 a_1)}{(\mu+\phi)(\alpha+\mu)a_1} \right)-1\right)p\phi\{1-(\mu+\phi+\lambda_1)\}(\mu+\phi)}{(\mu+\phi+\lambda_1)}\\
	=&\frac{\Lambda\lambda_1\alpha^2p}{a_1(\alpha+\mu)(\lambda_1+\mu+\phi)\alpha p}+\frac{(\mathcal{R}_0-1)p\phi\{1-(\mu+\phi+\lambda_1)\}(\mu+\phi)}{(\mu+\phi+\lambda_1)}.
\end{align*}
Let, the infectious force at the endemic steady state,
\begin{align*}
	\lambda_1^* &=(\beta_1E^*+\beta_2I^*)\\
	\Rightarrow \lambda_1^* &=\frac{\beta_1\Lambda\lambda_1}{(\lambda_1+\mu+\phi)(\alpha+\mu)}+\frac{\beta_2(\Lambda\alpha^2p\lambda_1-(\alpha+\mu)(\lambda_1+\mu+\phi)\alpha\mu\phi\Lambda+p(\alpha+\mu)\phi\Lambda\alpha)}{a_1\alpha p(\alpha+\mu)(\lambda_1+\mu+\phi)}\\
	\Rightarrow \lambda_1^* &=\frac{a_1\alpha p\beta_1\Lambda\lambda_1+\Lambda\alpha^2p\lambda_1-(\alpha+\mu)(\lambda_1+\mu+\phi)\alpha\mu\phi\Lambda+p(\alpha+\mu)\phi\Lambda\alpha}{a_1\alpha p(\alpha+\mu)(\lambda_1+\mu+\phi)}
\end{align*}
\begin{align*}
	\Rightarrow & \lambda_1^*a_1\alpha p\{\lambda_1^*(\alpha+\mu)+(\mu+\phi)(\alpha+\mu)\}=p[a_1\alpha\beta_1\Lambda\lambda_1^*+\lambda_1^*\Lambda\alpha^2+(\alpha+\mu)\phi\Lambda\alpha]-\\
	&\alpha\mu\phi\Lambda\{\lambda_1^*(\alpha+\mu)+(\mu+\phi)(\alpha+\mu)\}\\
	\Rightarrow & \lambda_1^*a_1\alpha\{\lambda_1^{*2}\lambda+\lambda_1^*(\mu+\lambda(\mu+\phi))+\mu(\mu+\phi)\}[\lambda_1^*(\alpha+\mu)+(\mu+\phi)(\alpha+\mu)]=\\
	&\{\lambda_1^{*2}\lambda+\lambda_1^*(\mu+\lambda(\mu+\phi))+\mu(\mu+\phi)\}[a_1\alpha\beta_1\Lambda\lambda_1^*+\lambda_1^*\Lambda\alpha^2+(\alpha+\mu)\phi\Lambda\alpha]\\
	&-\lambda_1^*(\alpha+\mu)\alpha\mu\phi\Lambda-(\alpha+\mu)(\mu+\phi)\alpha\mu\phi\Lambda\\
	\Rightarrow & [\lambda_1^{*3} a_1\alpha\lambda + \lambda_1^{*2} a_1\alpha(\mu+\lambda(\mu+\phi))+\lambda_1^* a_1\alpha \mu(\mu+\phi)][\lambda_1^*(\alpha+\mu)+(\mu+\phi)(\alpha+\mu)]=\\
	&\lambda_1^{*3}\lambda a_1\alpha\beta_1\Lambda+\lambda_1^{*3}\lambda\Lambda\alpha^2+\lambda_1^{*2}\lambda\alpha\Lambda\phi(\alpha+\mu)+\lambda_1^{*2}\Lambda\alpha a_1\beta_1(\mu+\lambda(\mu+\phi))+\\
	&\lambda_1^{*2}(\mu+\lambda(\mu+\phi))\Lambda\alpha^2+\lambda_1^*(\mu+\lambda(\mu+\phi))(\alpha+\mu)\phi\Lambda\alpha+\mu(\mu+\phi) a_1\alpha\beta_1\Lambda\lambda_1^*+\\
	&\mu(\mu+\phi)\lambda_1^*\Lambda\alpha^2+(\mu+\phi)\mu(\alpha+\mu)\phi\Lambda\alpha-\lambda_1^*(\alpha+\mu)\alpha\mu\phi\Lambda-(\alpha+\mu)(\mu+\phi)\alpha\mu\phi\Lambda\\
	\Rightarrow & \lambda_1^{*4} (\alpha+\mu)a_1\alpha\lambda+\lambda_1^{*3} a_1\alpha(\alpha+\mu)(\mu+\lambda(\mu+\phi))+\lambda_1^*(\alpha+\mu)a_1\alpha\mu(\mu+\phi)+\\
	&\lambda_1^{*3} a_1\alpha\lambda(\mu+\phi)(\alpha+\mu)+\lambda_1^{*2}a_1\alpha(\mu+\lambda(\mu+\phi))(\mu+\phi)(\alpha+\mu)+\lambda_1^*a_1\alpha\mu(\mu+\phi)^2(\alpha+\mu)=\\
	&\lambda_1^{*3}[\lambda a_1\alpha\beta_1\Lambda+\lambda\Lambda\alpha^2]+\lambda_1^{*2}[\lambda\alpha\Lambda\phi(\alpha+\mu)+\Lambda\alpha a_1\beta_1(\mu+\lambda(\mu+\phi))+(\mu+\lambda(\mu+\phi))\Lambda\alpha^2]+\\
	&\lambda_1^*[(\mu+\lambda(\mu+\phi))(\alpha+\mu)\phi\Lambda\alpha+\mu(\mu+\phi)a_1\alpha\beta_1\Lambda+\mu(\mu+\phi)\Lambda\alpha^2+(\alpha+\mu)\alpha\mu\phi\Lambda]+\\
	&(\mu+\phi)\mu(\alpha+\mu)\phi\Lambda\alpha-(\mu+\phi)(\alpha+\mu)\alpha\mu\phi\Lambda\\
	\Rightarrow & \lambda_1^{*4}(\alpha+\mu)a_1\alpha\lambda+\lambda_1^{*3}[a_1\alpha(\alpha+\mu)(\mu+\lambda(\mu+\phi))+a_1\alpha\lambda(\mu+\phi)(\mu+\alpha)-\\
	&\lambda a_1\alpha\beta_1\Lambda+\lambda\Lambda\alpha^2]+\lambda_1^{*2}[a_1\alpha(\mu+\lambda(\mu+\phi))(\mu+\phi)(\alpha+\mu)-\lambda\alpha\Lambda\phi(\alpha+\mu)-\\
	&\Lambda\alpha a_1\beta_1(\mu+\lambda(\mu+\phi))-(\mu+\lambda(\mu+\phi))\Lambda\alpha^2]+\lambda_1^*[(\alpha+\mu)(\mu+\phi)a_1\alpha\mu+\\
	&a_1\alpha\mu(\mu+\phi)^2(\alpha+\mu)-(\mu+\lambda(\mu+\phi))(\alpha+\mu)\phi\Lambda\alpha-\mu(\mu+\phi)a_1\alpha\beta_1\Lambda-\\
	&\mu(\mu+\phi)\Lambda\alpha^2-(\alpha+\mu)\alpha\mu\phi\Lambda]-(\mu+\phi)(\alpha+\mu)\mu\phi\Lambda\alpha-(\mu+\phi)(\alpha+\mu)\alpha\mu\phi\Lambda=0.
\end{align*}
Here, the basic reproduction number,
\begin{align*}
	\mathcal{R}_0 &=\frac{\Lambda[\alpha\beta_2+\beta_1(\gamma+\gamma_1+\mu+\delta)]}{(\alpha+\mu)(\mu+\phi)(\mu+\delta+\gamma+\gamma_1)}.\\
	1-\mathcal{R}_0 &= \frac{(\alpha+\mu)(\mu+\phi)a_1-\Lambda[\alpha\beta_2+\beta_1a_1]}{(\alpha+\mu)(\mu+\phi)a_1}.
\end{align*}
Thus, substituting the above formulas into the infectious force, at a steady state, we have the polynomial in the form,
$$b_4\lambda_1^4+b_3\lambda_1^3+b_2\lambda_2+b_1\lambda_1+b_0=0.$$
Where
\begin{flalign*}
	b_0 =&(\mu+\phi)(\alpha+\mu)\mu\phi\Lambda\alpha-(\mu+\phi)(\alpha+\mu)\alpha\mu\phi\Lambda=(1-\mathcal{R}_0)(\alpha+\mu)(\mu+\phi)a_1\alpha\mu\phi. &\\
	b_1 =&\left[\frac{(\alpha+\mu)(\mu+\phi)a_1-\Lambda[\alpha\beta_2+\beta_1 a_1]}{(\alpha+\mu)(\mu+\phi)a_1}\right]\mu\alpha(\alpha+\mu)(\mu+\phi)a_1+(\alpha+\mu)(\mu+\phi)a_1\alpha\mu(1-\mathcal{R}_{0V})-&\\
	&\mu(\mu+\phi)\Lambda\alpha^2-(\alpha+\mu)\alpha\mu\phi\Lambda &\\
	=&(1-\mathcal{R}_0)\mu\phi+(1-\mathcal{R}_{0V})\alpha\mu-\mu(\mu+\phi)\Lambda\alpha^2-(\alpha+\mu)\alpha\mu\phi\Lambda. &\\
	b_2=&\alpha(\mu+\lambda(\mu+\phi))(\alpha+\mu)(\mu+\phi)a_1(1-\mathcal{R}_0)-\lambda\alpha\Lambda\phi(\alpha+\mu)-(\mu+\lambda(\mu+\phi))\Lambda\alpha^2. &\\
	b_3=&a_1\alpha(\alpha+\mu)(\mu+\lambda(\mu+\phi))+\alpha\lambda a_1(\alpha+\mu)(\mu+\phi)(1-\mathcal{R}_0)+\lambda\Lambda\alpha^2. &\\
	b_4=&(\alpha+\mu)a_1\alpha\lambda.
\end{flalign*}
After calculating $\lambda_1^{*}$ from the polynomial and inserting positive values of $\lambda_1^{*}$ in the formulas of $S^*, V^*, E^*, I^*, R^*,$ and $T^*$, the components of the endemic equilibrium point (EEP) are determined. Moreover, it follows from the expression of polynomial coefficients that $b_4$ is always positive, and $b_0, b_1,b_2,b_3$ are positive (negative) if $\mathcal{R}_0$ and $\mathcal{R}_{0V}$ is less(greater) than one. Hence, the  following outcomes can be deduced,
\begin{itemize}
	\item Four or two endemic equilibria if $b_2>0, b_3<0, b_4>0$, and $\mathcal{R}_0<1$, $\mathcal{R}_{0V}<1$.
	\item Two endemic equilibria if $b_2>0, b_3>0, b_4<0$, and $\mathcal{R}_0<1$, $\mathcal{R}_{0V}<1$.
	\item No endemic equilibria otherwise if $\mathcal{R}_0<1$ and $\mathcal{R}_{0V}<1$.
\end{itemize}
Hence, by applying the Descartes rule of signs \cite{Stability Bound-26, Stability Bound-27}, the endemic equilibrium point of the model \eqref{new_model},  $\mathcal{E}^*$ exists iff $\mathcal{R}_0>1$.

\subsection{Forward and Backward Bifurcation Analysis of the Equilibrium States}\label{Subsection-Forward-Backward_Bif}
We proceed to establish the bifurcation condition of the equilibrium point that exists. In Section \ref{DFE point}, we have obtained the DFE point. We will now define the characteristics of the disease-free equilibrium point, $\mathcal{E}^0$. We have observed that the disease-free equilibrium state is locally asymptotically stable if $\mathcal{R}_0<1$ and unstable if $\mathcal{R}_0>1$ \cite{Bifurcation of R0-7, Bifurcation of R0-8}. It clearly reveals that when $\mathcal{R}_0=1$, then the above analysis becomes ineffective. The crucial factor $\mathcal{R}_0=1$ is equivalent to,
\begin{equation*}
	\beta_2={\beta_2}^{[TC]}=\frac{(\mu+\phi)(\alpha+\mu)(\mu+\delta+\gamma+\gamma_1)-\mu N\beta_1(\mu+\delta+\gamma+\gamma_1)}{\phi N\lambda(\alpha+\mu)+\mu N\alpha}.
\end{equation*} 
In the upcoming theorem, we will depict that the model system \eqref{new_model} undergoes a transcritical bifurcation (TC) at the disease-free equilibrium (DFE) point $\mathcal{E}_0$ when the critical parameter $\beta_2$ reaches its critical value $\beta_2={\beta_2}^{[TC]}$.
\begin{Th}
	Transcritical bifurcation of the system \eqref{new_model} obtains at the point of no disease $(\mathcal{E}_0)$ when model parameter $\beta_2$ goes through the critical value $\beta_2={\beta_2}^{[TC]}.$
\end{Th}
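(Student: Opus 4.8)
The plan is to apply the center manifold bifurcation theorem of Castillo-Chavez and Song, taking $\beta_2$ as the bifurcation parameter. It is convenient to reorder the coordinates as $x=(x_1,x_2,x_3,x_4,x_5,x_6)=(E,I,S,V,R,T)$ and write \eqref{new_model} as $\dot x=f(x,\beta_2)$, so that $\beta_2=\beta_2^{[TC]}$ is precisely the value at which the threshold reproduction number equals one. First I would inspect the Jacobian $J_0:=D_xf(\mathcal{E}^0,\beta_2^{[TC]})$. At the DFE the $E$- and $I$-equations have vanishing partial derivatives with respect to $S,V,R,T$, so in the chosen ordering $J_0$ is block lower-triangular,
\begin{equation*}
	J_0=\begin{pmatrix} B & 0\\ \ast & C\end{pmatrix},\qquad
	B=\begin{pmatrix}
		\beta_1 S^0-(\alpha+\mu) & \beta_2 S^0\\
		\alpha+(1-\varepsilon)\beta_1 V^0 & (1-\varepsilon)\beta_2 V^0-(\mu+\delta+\gamma+\gamma_1)
	\end{pmatrix},
\end{equation*}
with $S^0=\mu N/(\mu+\phi)$, $V^0=\phi N/(\mu+\phi)$, and $C$ the Jacobian of the non-infected block, itself lower-triangular with negative diagonal $-(\mu+\phi),-\mu,-\mu,-\mu$. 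Hence the eigenvalues of $J_0$ are the two eigenvalues of $B$ together with $-(\mu+\phi),-\mu,-\mu,-\mu$. A direct expansion of $\det B$ (in which the $\beta_1\beta_2$ cross-terms cancel) yields
\begin{equation*}
	\det B=(\alpha+\mu)(\mu+\delta+\gamma+\gamma_1)-\beta_1 S^0(\mu+\delta+\gamma+\gamma_1)-(1-\varepsilon)\beta_2 V^0(\alpha+\mu)-\alpha\beta_2 S^0,
\end{equation*}
which vanishes exactly at $\beta_2=\beta_2^{[TC]}$; since at that value the three subtracted terms are positive and sum to $(\alpha+\mu)(\mu+\delta+\gamma+\gamma_1)$, each is strictly smaller than that product, so $\operatorname{tr}B<0$. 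Thus at $\beta_2=\beta_2^{[TC]}$ the matrix $J_0$ has a simple zero eigenvalue and all remaining eigenvalues have negative real part, which is the standing hypothesis of the theorem.

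Next I would construct a right null vector $w=(w_1,\dots,w_6)^{T}$ and a left null vector $v=(v_1,\dots,v_6)$ of $J_0$, normalized by $v\cdot w=1$. The left null vector is carried by the infected coordinates: from the block form $v_3=v_4=v_5=v_6=0$, while $(v_1,v_2)$ is a left null vector of $B$ and can be taken strictly positive. The right null vector follows by back-substitution: $w_1,w_2>0$ from $B$; then $w_3=-\dfrac{\beta_1 S^0 w_1+\beta_2 S^0 w_2}{\mu+\phi}<0$ from the $S$-equation, and similarly $w_4<0$ from the $V$-equation; and finally $w_5=\gamma w_2/\mu>0$, $w_6=\gamma_1 w_2/\mu>0$.

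Finally I would evaluate the two bifurcation coefficients
\begin{equation*}
	a=\sum_{k,i,j=1}^{6}v_k\,w_i\,w_j\,\frac{\partial^2 f_k}{\partial x_i\,\partial x_j}\Big|_{(\mathcal{E}^0,\beta_2^{[TC]})},\qquad
	b=\sum_{k,i=1}^{6}v_k\,w_i\,\frac{\partial^2 f_k}{\partial x_i\,\partial\beta_2}\Big|_{(\mathcal{E}^0,\beta_2^{[TC]})}.
\end{equation*}
Since only $v_1,v_2\neq0$, only the Hessians of the $E$- and $I$-equations enter, and these equations are quadratic with the single bilinear block $(\beta_1E+\beta_2I)(\,\cdot\,)$; the surviving second partials are therefore $\beta_1,\beta_2,(1-\varepsilon)\beta_1,(1-\varepsilon)\beta_2$ times mixed products of $w_1,w_2$ with $w_3,w_4$, while for $b$ only the second partials $\partial^2 f_E/\partial I\,\partial\beta_2=S^0$ and $\partial^2 f_I/\partial I\,\partial\beta_2=(1-\varepsilon)V^0$ survive. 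This gives $b=w_2\bigl(v_1 S^0+v_2(1-\varepsilon)V^0\bigr)>0$ at once, and $a$ collapses to a positive multiple of $v_1 w_3+v_2(1-\varepsilon)w_4$, which is negative because $w_3,w_4<0$. By the theorem of Castillo-Chavez and Song, $b>0$ together with $a<0$ means the disease-free equilibrium $\mathcal{E}^0$ and a locally asymptotically stable endemic branch exchange stability as $\beta_2$ crosses $\beta_2^{[TC]}$; that is, the system undergoes a (forward, supercritical) transcritical bifurcation at $\mathcal{E}^0$, which is the assertion. The step I expect to be the main obstacle is this sign bookkeeping in $a$: one must track how the nonlinear terms of the $I$-equation, in particular the vaccine re-infection term $(1-\varepsilon)(\beta_1E+\beta_2I)V$, weigh against the negative components $w_3=w_S$ and $w_4=w_V$ of the null vector; the accompanying verification from $J_0$ that $v_3=v_4=v_5=v_6=0$ and that the entries of $w$ carry the claimed signs, while routine, must also be carried out with care.
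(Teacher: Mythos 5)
Your proposal is correct, and it proves the statement by a genuinely different route from the paper. The paper restricts to a four--dimensional SEIR sub-model (dropping the $V$ and $T$ equations), computes right and left eigenvectors of the reduced Jacobian for the zero eigenvalue, and verifies the three conditions of Sotomayor's theorem, $W^{T}F_{\beta_2}=0$, $W^{T}DF_{\beta_2}V\neq 0$ and $W^{T}D^{2}F_{\beta_2}(V,V)\neq 0$. You instead apply the Castillo-Chavez--Song center manifold theorem to the full six--dimensional system (the tool the paper itself only deploys later, for the backward-bifurcation analysis of the modified model), exploiting the block lower-triangular structure of $J_0$ in the ordering $(E,I,S,V,R,T)$ to isolate the simple zero eigenvalue in the $2\times2$ infected block $B$ and to force $v_3=\dots=v_6=0$. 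Your route buys two things: first, it keeps the vaccinated compartment in the analysis, which is the honest thing to do since the critical value $\beta_2^{[TC]}$ is defined by $\mathcal{R}_{0V}=1$ and its denominator $\phi N\lambda(\alpha+\mu)+\mu N\alpha$ carries the vaccination contribution that the paper's SEIR sub-model discards; the identity $\det B=(\alpha+\mu)(\mu+\delta+\gamma+\gamma_1)(1-\mathcal{R}_{0V})$ together with $\operatorname{tr}B<0$ at criticality cleanly verifies the simple-zero-eigenvalue hypothesis, which the paper asserts rather than checks. Second, the signs $a<0$, $b>0$ give you the direction of the bifurcation (forward/supercritical) for free, consistent with the paper's later observation that backward bifurcation only appears once the reinfection term $\phi_1$ is added. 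The paper's Sotomayor computation, where it is carried out on the sub-model, reaches the same transversality conclusions ($W^{T}DF_{\beta_2}V\neq0$, $W^{T}D^{2}F_{\beta_2}(V,V)\neq0$), so the two arguments agree where they overlap; yours is the more complete of the two.
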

\begin{proof}
	When $\beta_2={\beta_2}^{[TC]}$, one of the eigenvalues becomes zero, causing the collapse of standard eigen-method analysis. In such cases, we employ Somtomayor's Theorem \cite{Stability Bound-11, Bifurcation of R0-6} to examine the characteristics of the DFE point. Let $V$ and $W$ represent the eigenvectors with respect to the zero eigenvalue of $J(\mathcal{E}_0)$ and $J[(\mathcal{E}_0)]^T$, respectively.
 So,
	\[V=\begin{pmatrix}
		-\frac{(\mu+\delta+\gamma+\gamma_1)(\alpha+\mu)}{\alpha(\mu+\phi)} \\ \frac{\mu+\delta+\gamma+\gamma_1}{\alpha}\\ 1\\ \frac{\delta}{\mu}
	\end{pmatrix},\; \text{and}\;\; W=\begin{pmatrix}
		0 \\ 2\\ \frac{(\mu+\phi)(\mu+\alpha)-\beta_1\Lambda}{\alpha(\mu+\phi)} \\ 0
	\end{pmatrix}.\]\\
	Considering the sub-model of the above  model \eqref{new_model}, by taking the SEIR compartment we have,\\
	\[F=\begin{pmatrix}
		\Lambda -(\beta_1 E+\beta_2 I)S-(\mu+\phi)S \\ (\beta_1 E+\beta_2 I)S-(\alpha+\mu)E \\ \alpha E-(\mu+\delta+\gamma+\gamma_1)I \\ \gamma I-\mu R
	\end{pmatrix}.\]
	Next, we examine the system's dynamic behavior by methodically adjusting the parameters close to each equilibrium point. We adapt Sotomayor's theorem for local bifurcation analysis \cite{Bifurcation of R0-6}. The modified theorem states that the Jacobian matrix of the modified SEIR system at the disease-free equilibrium point $\mathcal{E}_0$ appears transcritical bifurcation.\\
	It is demonstrated that the Jacobian matrix of the system $(\mathcal{E}_0, \beta^*=\beta_2)$ can be evaluated as $J=Df(\mathcal{E}_0,\beta^*)$. Here,\\
	\[J=\begin{pmatrix}
		-(\mu+\phi) & -\beta_1 S & -{\beta_2}^* S & 0\\
		0 & \beta_1 S-(\alpha+\mu) & \beta_2 S & 0\\
		0 & \alpha & -(\mu+\delta+\gamma+\gamma_1) & 0\\ 
		0 & 0 & \gamma & -\mu
	\end{pmatrix}.\]
	Where, \begin{equation*}
		{\beta_2}^*=\frac{(\mu+\phi)(\alpha+\mu)(\mu+\delta+\gamma+\gamma_1)-\mu N\beta_1(\mu+\delta+\gamma+\gamma_1)}{\phi N\lambda(\alpha+\mu)+\mu N\alpha}.
	\end{equation*}
	From the Jacobian matrix, the third $\lambda_R$ eigen value  $\lambda_I$ in the direction of $I$ is $-(\mu+\delta+\gamma+\gamma_1)$ while $\lambda_S$ and $\lambda_R$ are negative. Further, the eigenvector $V=(v_1,v_2,v_3,v_4)^T$ corresponding to $\lambda_I$ satisfying the condition $Jz=\lambda z$ then $Jz=0$ gives, \\
	\[\begin{pmatrix}
		-(\mu+\phi) & -\beta_1 S & -{\beta_2}^* S & 0\\
		0 & \beta_1 S-(\alpha+\mu) & \beta_2 S & 0\\
		0 & \alpha & -(\mu+\delta+\gamma+\gamma_1) & 0\\ 
		0 & 0 & \gamma & -\mu
	\end{pmatrix}\begin{pmatrix}
		v_1 \\ v_2\\v_3\\v_4
	\end{pmatrix}=\begin{pmatrix}
		0 \\ 0\\0\\0
	\end{pmatrix}\]
	From which we get, \\
	\begin{align*}
		-(\mu+\phi)v_1-\beta_1 S v_2-{\beta_2}^*S v_3=0\\
		(\beta_1 E+\beta_2 I)v_1
		+\{\beta_1 S-(\alpha+\mu)\}v_2+\beta_2 Sv_3=0\\
		0+\alpha v_2-(\mu+\delta+\gamma+\gamma_1)v_3=0 \\
		\gamma v_3-\mu v_4=0
	\end{align*}
	The finding of the above system of equations is, 
	$$\displaystyle V=(v_1,v_2,v_3,v_4)^T=\left(\frac{\beta_1S(\mu+\delta+\gamma+\gamma_1)\mu}{-(\alpha+\mu)\alpha\gamma} v_4-\frac{{\beta_2}^*S\mu}{\gamma(\alpha+\mu)}v_4\;,\; \frac{(\mu+\delta+\gamma+\gamma_1)\mu}{\alpha\gamma}v_4\;,\;\frac{\mu v_4}{\gamma}\;,\; v_4 \right).$$
	Similarly, the eigenvector $W=(w_1,w_2,w_3,w_4)^T$ can be written as,\\
	\[J^Tw=\begin{pmatrix}
		-(\mu+\phi) & \beta_1 E+\beta_2 I & 0 & 0\\
		-\beta_1 S & \beta_1 S-(\alpha+\mu) & \alpha & 0\\
		-{\beta_2}^*S & \beta_2 S & -(\mu+\delta+\gamma+\gamma_1) & \gamma \\
		0 &0 &0 &-\mu
	\end{pmatrix}\begin{pmatrix}
		w_1 \\w_2 \\w_3 \\w_4 
	\end{pmatrix}=0.\]
	We have the solutions, 
	\begin{align*}
		w_4=0 \;,\; w_1=\frac{(\beta_1E+\beta_2I)}{(\mu+\phi)}w_2, \; \text{and}\\
		w_3= \frac{\beta_1S(\beta_1 E+\beta_2I)+(\alpha+\mu)(\mu+\phi)}{\alpha(\mu+\phi)}w_2.
	\end{align*}
	Where $w_2$ is a free variable. Now, it is possible to write the simplified SEIR system in vector form,
	\begin{equation*}
		\frac{dX}{dt}=f(X).
	\end{equation*}
	Here, $X=(S,E,I,R)^T$ and $F=(F_1,F_2,F_3,F_4)^T$ with $F_i(i=1,2,3,4)$ then calculate\\ $\frac{\displaystyle dF}{\displaystyle d\beta_2}=F_{\beta_2}$. From which we get that, \[F_\beta^*=\begin{pmatrix}
		-SI \\ SI \\ 0\\0
	\end{pmatrix}.\] Then, \[F_\beta^*(\mathcal{E}_0,\beta^*)=\begin{pmatrix}
		0\\0\\0\\0
	\end{pmatrix},\;\; w^T\centerdot F_\beta(\mathcal{E}_0,\beta^*)=0. \]
	\[DF_\beta(\mathcal{E}_0,\beta^*)=\begin{pmatrix}
		0& -S_1 & 0 & 0\\ 0 &S_1 & 0& 0 \\ 0&0&0&0\\0&0&0&0
	\end{pmatrix},\;\;\text{where}\; S_1=\frac{\pi}{\mu}.\] \\
	\[w^T\centerdot\left[DF_\beta(\mathcal{E}_0,\beta^*)\centerdot z\right]=w_2v_2S_1 \neq 0. \]
\end{proof}
Based on Sotomayor's theorem, when the parameter $\beta_2$ gets over the bifurcation value $\beta_2^{[TC]}$, then the transcritical bifurcation arises at DFE. Then according to \cite{Stability Bound-23}
we have obtain, 
\begin{align*}
	W^T F_{\beta_2}|_{\mathcal{E}_0,\beta_2=\beta_2^{[TC]}}=0\\
	W^T DF_{\beta_2}|_{\mathcal{E}_0,\beta_2=\beta_2^{[TC]}}V=-\frac{\Lambda(\mu+\delta+\gamma+\gamma_1)}{\alpha(\mu+\phi)}\neq 0\\
	W^TD^2F_{\beta_2}|_{\mathcal{E}_0,\beta_2=\beta_2^{[TC]}}(V,V)=\frac{-2(\mu+\delta+\gamma+\gamma_1)^2(\mu+\alpha)\beta_1}{\alpha^2(\mu+\phi)}\neq 0
\end{align*}
Consequently, the framework undergoes transcritical bifurcation as the rate of infection $I$ class $(\beta_2)$ exceeds the critical value $\beta_2=\beta_2^{[TC]}$. There endures a critical infection rate for the $I$ class, beyond which the endemic diseases will spread throughout the civilization; however, below this threshold, the disease is easily curable.
\begin{figure}[H]
	\centering  
	\subfloat[]{\includegraphics[width=2.5 in]{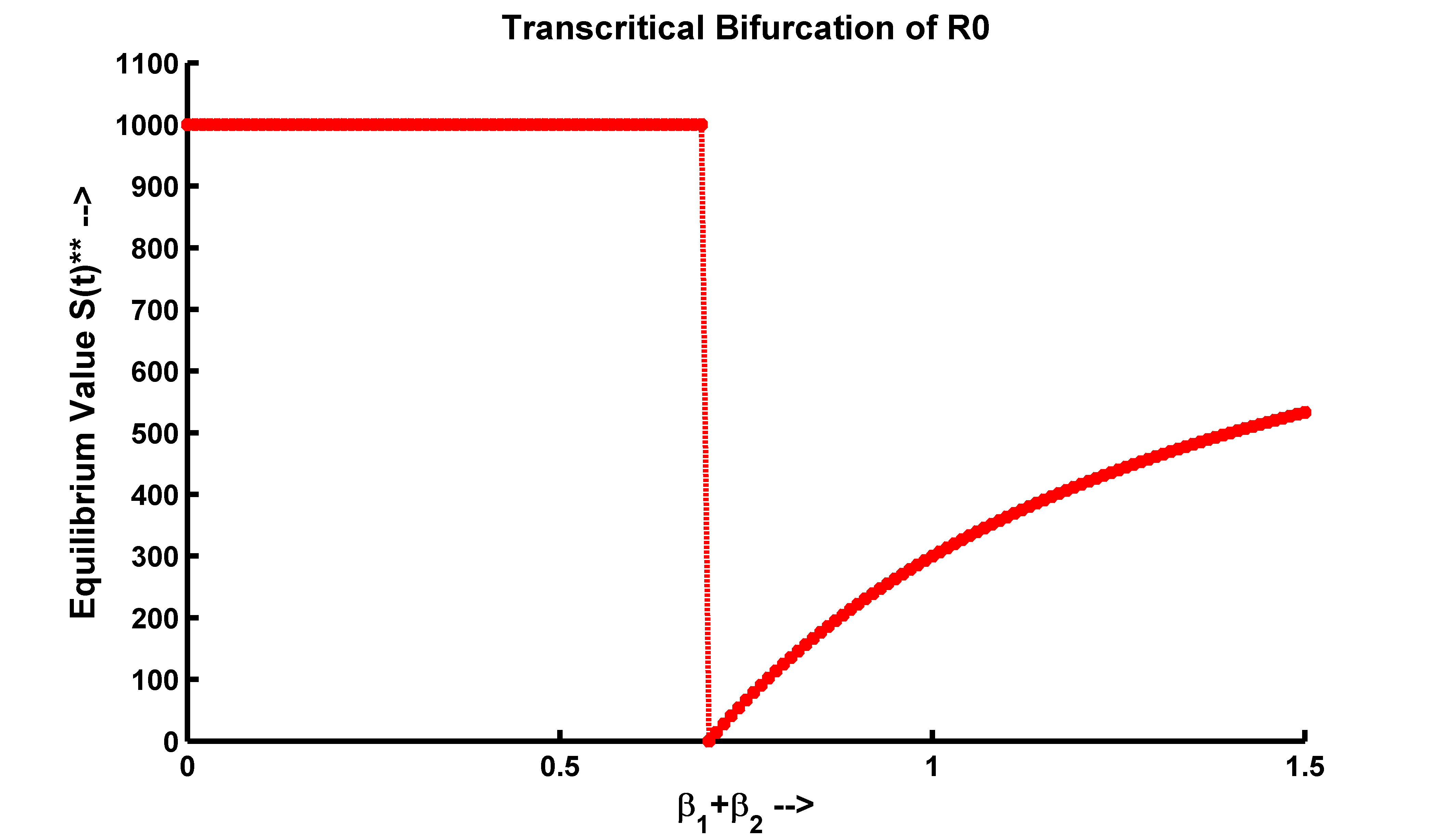}}
	\subfloat[]{\includegraphics[width=2.5 in]{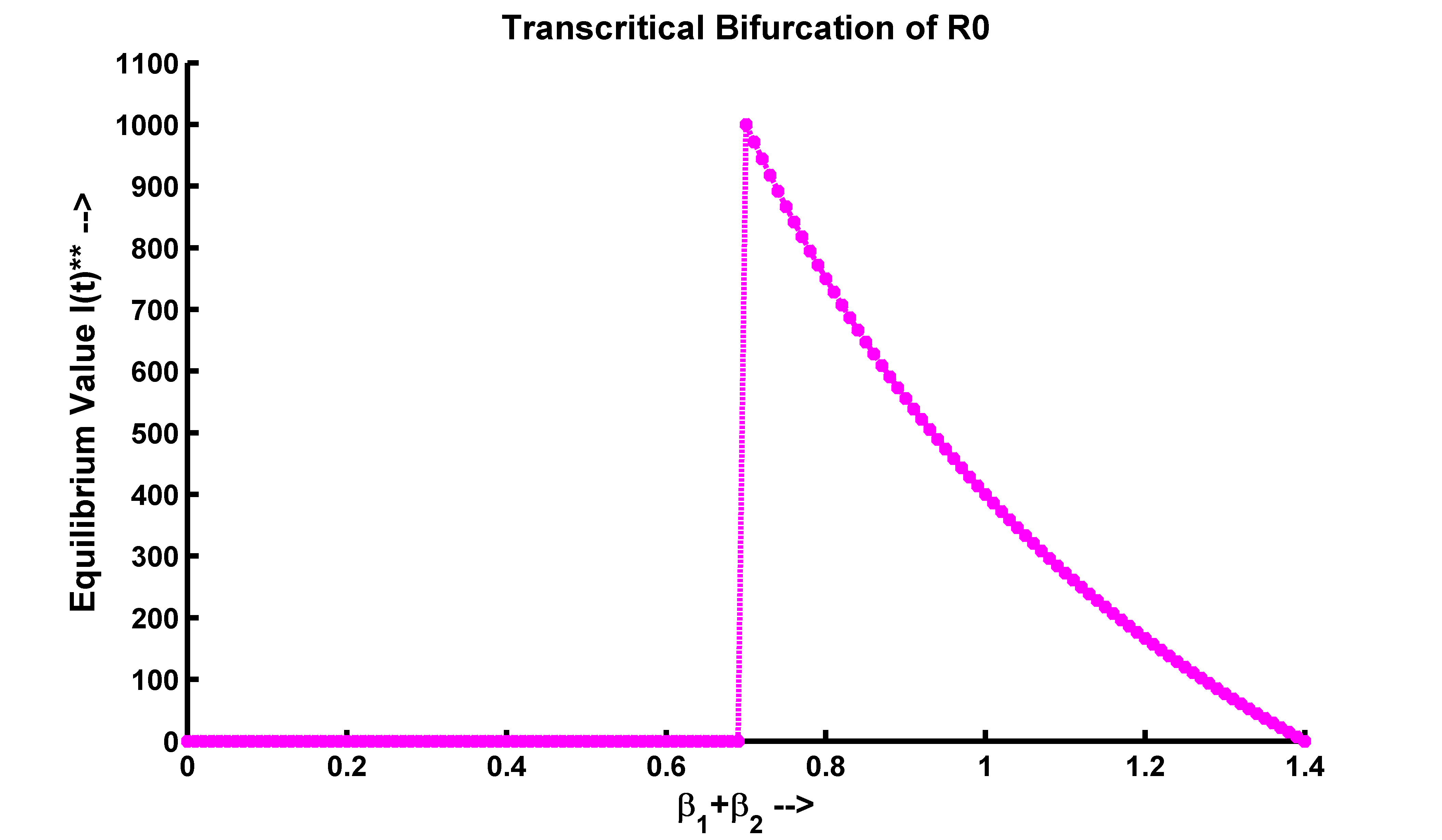}}\\
	\subfloat[]{\includegraphics[width=2.5 in]{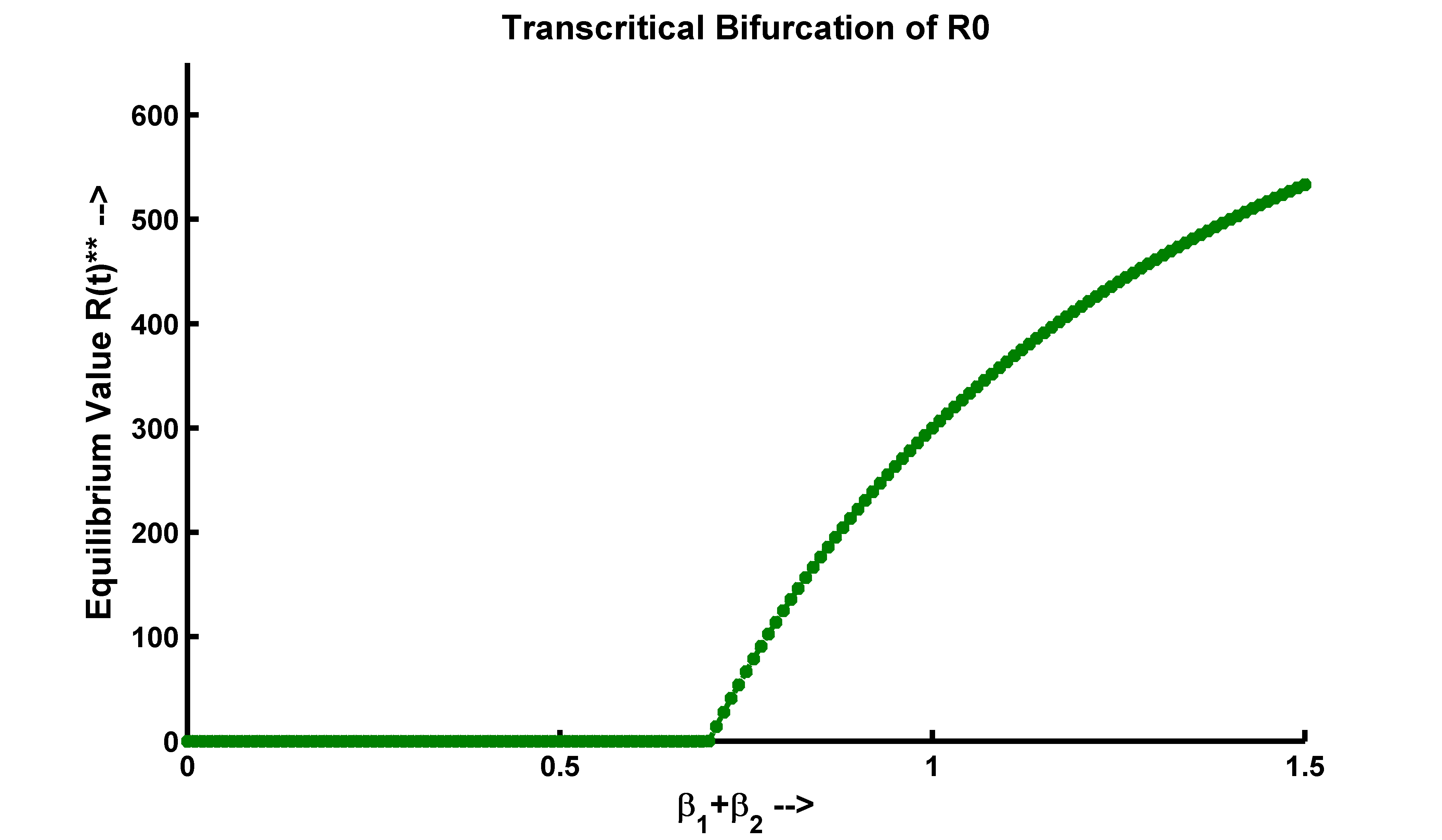}}
	\subfloat[]{\includegraphics[width=2.5 in]{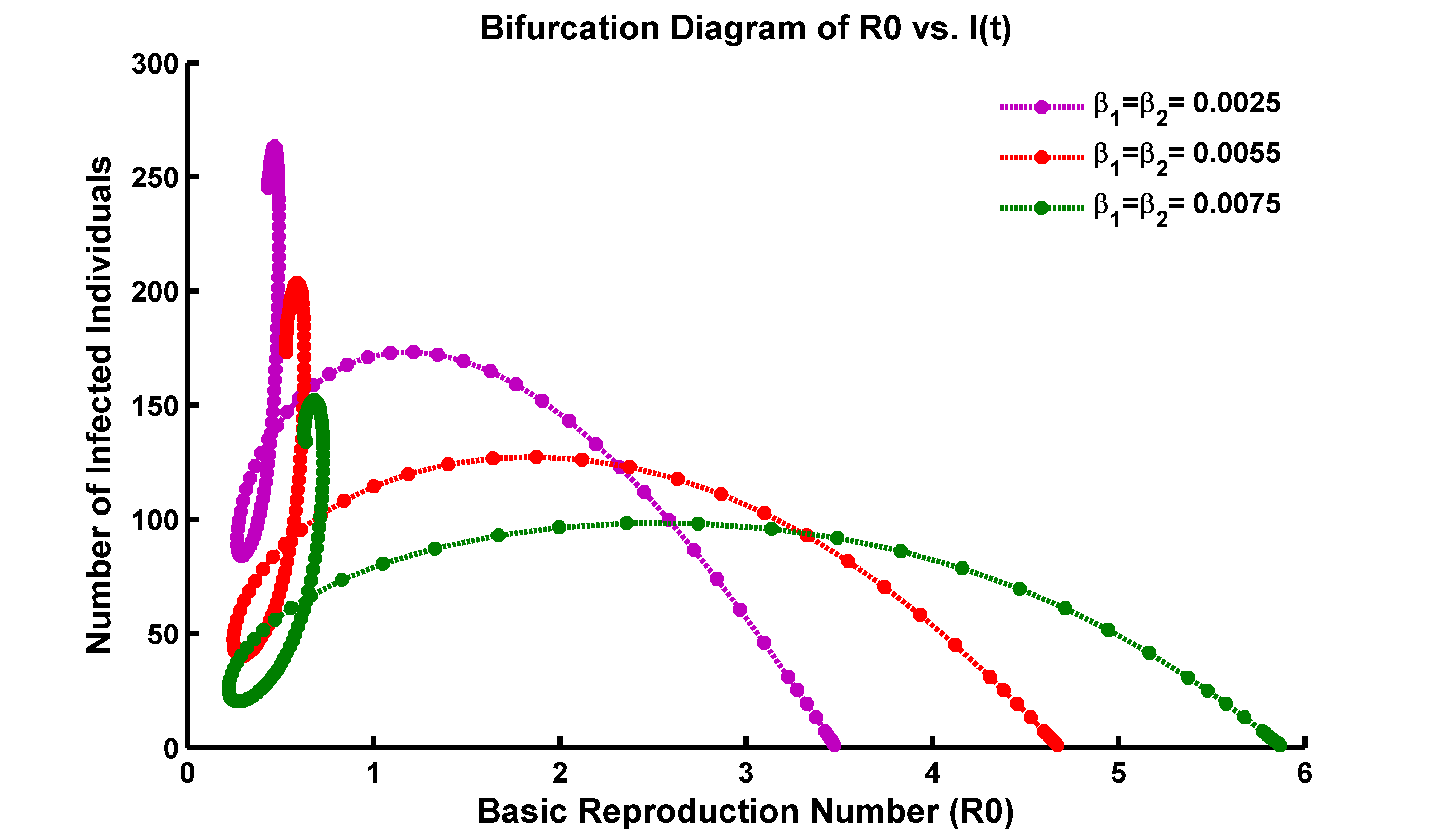}}
	\caption{{Transcritical bifurcation of the (a) $S(t)$ compartment, (b) $I(t)$ compartment, (c) $R(t)$ compartment, and (d) bifurcation of $I(t)$ concerning $\mathcal{R}_0$ at the disease-free equilibrium point, where all the parameters are taken from Table \ref{tableparameter}}.}
	\label{transcritical-bifurcation}
\end{figure}
\noindent
In a transcritical bifurcation of the basic reproduction number ($\mathcal{R}_0$), the scenario of the susceptible population versus the force of infection can be described as follows,\\
Before the bifurcation point, $\mathcal{R}_0$ is typically below a critical threshold value. The force of infection is relatively low, meaning that the rate at which susceptible individuals become infected is not very high. The susceptible population gradually decreases over time as some individuals get infected and transition into the infected or recovered compartments. Figure \ref{transcritical-bifurcation}(a) indicates that when the force of infection $\lambda_1$ lies in 0 to 0.75 the $S(t)$ population goes parallelly. After crossing $\lambda_1$ value 0.75, $S(t)$ population increase gradually. At the bifurcation point, $\mathcal{R}_0$ reaches a critical threshold value. This threshold represents a transition in the dynamics of the infectious disease system. The force of infection $\lambda_1$ increases abruptly, causing a sudden rise in the rate at which susceptible individuals become infected (Figure \ref{transcritical-bifurcation}(a)). This rise can lead to a rapid spread of the infection throughout the population. After the bifurcation, the force of infection remains high, and the susceptible population decreases at a faster rate. The number of infected individuals rises significantly, potentially resulting in an epidemic or an outbreak of the disease. The system enters a new equilibrium state characterized by a higher number of infected individuals compared to the pre-bifurcation phase.

\noindent
From Figure \ref{transcritical-bifurcation}(b), the transcritical bifurcation of the basic reproduction number ($\mathcal{R}_0$) refers to a critical point in a disease transmission model where the dynamics of the infected population can undergo a qualitative change. At this bifurcation point, the force of infection, which represents the rate at which susceptible individuals become infected, plays a significant role in determining the behavior of the infected population. Before the bifurcation, when the force of infection is relatively low (lies at 0 to 0.65), the infected population remains at a low level or may even die out. The disease is not able to sustain itself in the population, and there is no sustained transmission. At the transcritical bifurcation, a critical threshold of the force of infection is reached. Beyond this threshold, the force of infection becomes sufficient to sustain the disease transmission, leading to an increase in the infected population. The infected population transitions from a low or dying-out state to a persistent or growing state. After the bifurcation, when the force of infection exceeds the critical threshold (grows from 0.65 and progresses to value 1.4), the infected population grows exponentially or stabilizes at a higher level. The disease becomes endemic, with sustained transmission and a significant proportion of the population being infected.\\
\noindent
In a transcritical bifurcation of the threshold value ($\mathcal{R}_0$), the scenario of the recovered population vs. force of infection ($\lambda_1$) can be described as follows,\\
Figure \ref{transcritical-bifurcation}(c) reflects that when the force of infection is low (lies in 0 to 0.65), the recovered population $R(t)$ is also low. This suggests that a relatively small number of people have contracted the sickness, and then recovered from it. The disease may not be spreading efficiently, and the epidemic is unlikely to sustain itself. At a critical point, the force of infection reaches a threshold value. This is the point where the bifurcation occurs. In transcritical bifurcation, this threshold value is denoted as $\mathcal{R}_0^*$. At this point, the disease transitions from a non-endemic state to an endemic state, meaning it becomes self-sustaining. Beyond the critical point (when $\lambda_1$ exceeds 0.65), as the force of infection increases, the recovered population starts to rise exponentially. This indicates that more and more individuals are becoming infected and subsequently recovering from the disease. The epidemic is spreading efficiently, and the recovered population grows over time.
From Figure \ref{transcritical-bifurcation}(d), the relationship between the infected population and the basic reproduction number can be described as follows,\\
When $\mathcal{R}_0>1$, it implies that, on average, each infected person is spreading the infection to more than one susceptible individual. In this scenario, the infected population tends to increase over time. The epidemic spreads in the population, and if no interventions or control measures are implemented, it has the potential to cause a large-scale outbreak. When $\mathcal{R}_0=1$ each infected person, on average, transmits the infection to precisely one susceptible individual. In this case, the infected population tends to remain stable over time. The epidemic is said to be in an endemic state, meaning it persists within the population without causing exponential growth or decline. When $\mathcal{R}_0<1$, each infected person, on average, transmits the infection to fewer than one susceptible individual. In this situation, the infected population tends to decline over time. The epidemic subsides and eventually dies out, as the number of new infections generated by each infected individual is insufficient to sustain the transmission within the population. Thus, $I(t)$ individuals grows with the increasing amount of $\mathcal{R}_0$. This indicates that (Figure \ref{transcritical-bifurcation}(d)), the basic reproduction number is a crucial determinant of the behavior of infectious diseases. It provides insights into the potential for outbreaks, the sustainability of transmission, and the effectiveness of control measures in reducing the infected population.\\
From Figure \ref{transcritical-bifurcation}, it is noticeable that when $\mathcal{R}_0<1$, the system described by \eqref{new_model} shows only a stable disease-free equilibrium point. Conversely, when $\mathcal{R}_0>1$, a stable endemic equilibrium occurs, leading to the instability of the disease-free equilibrium (DFE). This instability, denoting a transition from stable to unstable, happens precisely at the critical point $\mathcal{R}_0=1$, leading to a transcritical bifurcation at DFE points. Therefore, when the model parameter $\beta_2$ exceeds its critical value $\beta_2^{[TC]}$, the stability of the disease-free equilibrium shifts from a stable state to an unstable one.\\
In the next part, we explore the phenomenon of backward bifurcation of the modified system of \eqref{new_model} as follows,
\begin{align} \label{Bifur_New_Model}
	f_1=x_1'=&\Lambda-(\beta_1x_3+\beta_2x_4)x_1-(\mu+\phi)x_1+\phi_1x_2. \nonumber\\
	f_2=x_2'=&\phi x_1-(1-\varepsilon)(\beta_1x_3+\beta_2x_4)x_2-(\mu+\phi_1)x_2. \nonumber\\
	f_3=x_3'=&(\beta_1x_3+\beta_2x_4)x_1-(\alpha+\mu)x_3. \nonumber\\
	f_4=x_4'=&\alpha x_3+(1-\varepsilon)(\beta_1x_3+\beta_2x_4)x_2-(\mu+\delta+\gamma+\gamma_1)x_4. \nonumber\\
	f_5=x_5'=&\gamma x_4-\mu x_5. \nonumber\\
	f_6=x_6'=&\gamma_1x_4-\mu x_6.
\end{align}
Where $(S,V,E,I,R,T)=(x_1,x_2,x_3,x_4,x_5,x_6).$ Here, we have considered that, after vaccination, a portion of the vaccinated population progresses to susceptible compartments by losing immunity at a constant rate $\phi_1$. Firstly, we examine bifurcation analysis with the help of the center manifold theorem \cite{Bifurcation of R0-7, Bifurcation of R0-8}. We investigate the properties of the equilibrium solutions in the vicinity of the bifurcation point $x=x_0$ where $\mathcal{R}_0=1$. As $\mathcal{R}_0$ can be difficult as a direct bifurcation parameter, we introduce a new parameter $\mu_1$ for this purpose. We define $\mu_1$ as a bifurcation parameter such that $\mathcal{R}_0 < 1$ for $\mu_1 < 0$ and $\mathcal{R}_0 > 1$ for $\mu_1 > 0$. Moreover, we ensure that $x_0$ persists in a Disease-Free Equilibrium (DFE) for all values of $\mu_1$. Consider the structure, $$\dot{x}=f(x,\mu_1).$$
Here, the restriction is that $f$ possesses at least two continuous derivatives with respect to both $x$ and $\mu_1$. The line $(x_0,\mu_1)$ represents the Disease-Free Equilibrium (DFE), and the local stability of the DFE undergoes a shift at the point $(x_0,0)$. Applying center manifold theory, we demonstrate the existence of non-trivial (endemic) equilibria close to the bifurcation point $(x_0,0)$. Before exploring these findings, we introduce some notation and gather relevant facts.\\
We represent the partial derivative of $f$ with respect to $x$ at $x=x_0,\;\mu_1=0$ as $D_xf(x_0,0)$. Assuming that $D_xf(x_0,0)$ has simple zero eigenvalues, we define $v$ and $w$ as the corresponding left and right null vectors such that $vw=1$. We also ensure that the remaining eigenvalues of $D_xf(x_0,0)$ possess negative real parts. Let's,
\begin{align*}
	a=&\frac{v}{2}D_{xx}f(x_0,0)w^2=\frac{1}{2}\sum_{i,j,k=1}^{n}v_iw_jw_k\frac{\partial^2f_i}{\partial x_j \partial x_k}(x_0,0).\\
	b=&vD_{x\mu_1}f(x_0,0)w=\sum_{i,j=1}^{n}v_iw_j\frac{\partial^2f_i}{\partial x_j\partial \mu_1}(x_0,0).
\end{align*}
We will demonstrate that the sign of $a$ dictates the characteristics of the endemic equilibria near the bifurcation point. Before delving into this analysis, it is essential to mention that the expression for $a$ can be revised using the outcomes established in the preceding sections. We utilize center manifold theory to theoretically divine the existence of the backward bifurcation phenomenon in the model \eqref{Bifur_New_Model}, as outlined below.
\begin{Th}\label{Center Manifold Theorem}

	Consider the following system of ordinary differential equations, incorporating a parameter $\phi$,
	\begin{equation}\label{center_reqn}
		\frac{dx}{dt}=f(x,\phi), f:\mathbb{R}^n\times\mathbb{R} \rightarrow \mathbb{R}, f\in \mathbb{C}^2(\mathbb{R}^n\times\mathbb{R}).
	\end{equation}
	Without departing generality, we assume that $x=0$ serves as an equilibrium for the system \eqref{center_reqn} across all parameter values of $\phi$. Suppose that
	\begin{enumerate}
		\item [(a)] The matrix $A=D_xf(0,0)$ represents the linearized matrix of system \eqref{center_reqn} at the equilibrium $x=0$, where $\phi$ evaluated at $0$. In this case, $0$ is a simple eigenvalue of $A$, and all other eigenvalues of $A$ reflect negative real parts.

		\item [(b)] The matrix $A$ possesses a non-negative right eigenvector $w$ and a left eigenvector $v$ associated with the zero eigenvalue. Let $f_k$ denote the $k^{th}$ component of $f$, and
		\begin{align*}
			a= \sum_{k,i,j=1}^{n} v_kw_iw_j\frac{\partial^2f_k}{\partial x_i\partial x_j}(0,0),\;\; b= \sum_{k,i=1}^{n} v_kw_i\frac{\partial^2f_k}{\partial x_i\partial \beta}(0,0).
		\end{align*}
		Subsequently, the signs of $a$ and $b$ have a complete influence on the local dynamics of system \eqref{center_reqn} around zero.
		\begin{enumerate}
			\item [(i)] Case-1: For $\phi<0$ with $|\phi|\ll1$, the equilibrium at $x=0$ is locally asymptotically stable and a positive unstable equilibrium exists if $a>0$ and $b>0$. When $0<\phi\ll 1 $, a negative and locally asymptotically stable equilibrium is present, and the equilibrium at $x=0$ is unstable.
			\item[(ii)] Case-2: In the scenario where $a<0$ and $b<0$, for $|\phi|\ll1$, the equilibrium at $x=0$ is unstable. There is a negative unstable equilibrium and the equilibrium at $x=0$ becomes locally asymptotically stable when $0<\phi\ll1$.

			\item[(iii)] Case-3: With $a>0$ and $b<0$, the equilibrium at $x=0$ is unstable and a locally asymptotically stable negative equilibrium emerges when $\phi<0$ with $|\phi|\ll1$. When $0<\phi\ll1$, the equilibrium at $x=0$ becomes stable, and a positive unstable equilibrium emerges.
			\item[(iv)] Case-4: $\phi$ changes from negative to positive, $x=0$ changes its stability from stable to unstable when $a<0,\;b>0$. A negative unstable equilibrium correspondingly becomes positive and locally asymptotically stable.
		\end{enumerate}
	\end{enumerate}
	In particular, if $a>0$ and $b>0$, a backward bifurcation occurs at $\phi=0$. These conditions, delineating the bifurcation locally at $\mathcal{R}_0=1$, align with the scenarios illustrated in Figure \ref{forward-backward-bifurcation}. Specifically, conditions (ii) and (iv) denote a forward bifurcation scenario, while conditions (i) and (iii) indicate the occurrence of a backward bifurcation \cite{Bifurcation of R0-3, Bifurcation of R0-4}.
\end{Th}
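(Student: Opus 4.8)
The plan is to obtain this statement as the Castillo-Chavez--Song center manifold reduction, arranged so that the four sign cases fall out of a single one-dimensional normal form. First I would use hypothesis (a): because $A=D_xf(0,0)$ has a simple zero eigenvalue and every other eigenvalue has strictly negative real part, the Center Manifold Theorem supplies a locally invariant, locally attracting one-dimensional center manifold $W^c$ through $(x,\phi)=(0,0)$. Using the right null vector $w$ to coordinatize the center direction, I would write $x=c\,w+y$ with $y$ lying in the stable subspace of $A$, adjoin the trivial equation $\dot\phi=0$, and substitute into \eqref{center_reqn}. The reduced dynamics then take the form $\dot c=g(c,\phi)$ with $g(0,0)=0$; moreover $\partial_c g(0,0)=0$ because $c$ is the center direction, and $\partial_\phi g(0,0)=0$ because $x=0$ is an equilibrium for every value of $\phi$ by assumption.

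Next I would Taylor-expand $g$ to second order about the origin. Projecting the vector field onto the center direction with the left null vector $v$ normalized by $vw=1$, and using that the center manifold satisfies $y=O(c^2,c\phi)$, the stable-direction contributions drop out of the quadratic terms because $vA=0$, leaving
\[
\dot c \;=\; \tfrac12\, a\, c^2 \;+\; b\, c\, \phi \;+\; \text{higher-order terms},
\]
where $a=\sum_{k,i,j} v_k w_i w_j\,\partial^2 f_k/\partial x_i\partial x_j(0,0)$ and $b=\sum_{k,i} v_k w_i\,\partial^2 f_k/\partial x_i\partial\phi(0,0)$ are exactly the quantities in the statement. This normal form governs the local bifurcation.

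Then I would analyze this scalar equation. Its equilibria near the origin are $c=0$ (the disease-free branch) and the nontrivial branch $c^{\ast}=-2b\phi/a$; since the components of $w$ are nonnegative, $c^{\ast}>0$ corresponds to a biologically meaningful endemic state and occurs precisely when $\phi$ and $-b/a$ share the same sign. Linearizing $\dot c=g(c,\phi)$ yields the stability exponent $\partial_c g=ac+b\phi$, equal to $b\phi$ on the branch $c=0$ and to $-b\phi$ on the branch $c^{\ast}$. Running these two facts through the four sign patterns of $(a,b)$ reproduces the stability and equilibrium assertions of Cases (i)--(iv), and comparing the direction of the nontrivial branch with the standard transcritical picture gives the forward/backward classification in the statement; in particular, when $a>0$ and $b>0$ the nontrivial branch appears on the side $\phi<0$, that is for $\mathcal{R}_0<1$, and is unstable, which is precisely a backward bifurcation at $\phi=0$.

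The step I expect to be the main obstacle is not the sign bookkeeping but the clean justification of the reduction: verifying that $g$ inherits the vanishing first derivatives above and, crucially, that the $y$-component of $W^c$ is genuinely $O(c^2,c\phi)$, so that no stable-direction terms leak into the coefficient $a$. This is where the simplicity of the zero eigenvalue and the normalization $vw=1$ do the real work, and it is what lets the full $n$-dimensional system be replaced by the scalar normal form above. Since the resulting statement coincides with the center manifold bifurcation theorems of \cite{Bifurcation of R0-3, Bifurcation of R0-4, Bifurcation of R0-6}, an acceptable alternative is to cite those references directly; I would nonetheless record the normal-form sketch for completeness, since it ties the explicit formulas for $a$ and $b$ used in the subsequent application to model \eqref{Bifur_New_Model} directly to the reduction.
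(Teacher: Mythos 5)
Your proposal is correct, and it actually does more than the paper, which states this result without proof and simply cites Castillo-Chavez--Song via \cite{Bifurcation of R0-3, Bifurcation of R0-4}. Your reduction to the scalar normal form $\dot c=\tfrac12 a c^2+b\phi c+\cdots$ is the standard argument behind that theorem, and your bookkeeping checks out: the nontrivial branch $c^{*}=-2b\phi/a$ together with the stability exponents $b\phi$ on $c=0$ and $-b\phi$ on $c^{*}$ reproduces all four sign cases and the forward/backward classification exactly as stated. You are also right to flag the genuine technical content — justifying that the center-manifold graph is $O(c^{2},c\phi)$ in the stable directions and that the projection by $v$ (with $vw=1$) eliminates the $Ay$ contribution — as the part that requires the full Castillo-Chavez--Song machinery; since the paper itself treats the theorem as a citation, deferring that step to the cited references, as you suggest, is entirely consistent with the paper's own treatment.
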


Let, $x=(x_1,x_2,x_3,x_4,x_5,x_6)^T=(S,V,E,I,R,T)^T$. Thus, the model \eqref{Bifur_New_Model} is modified with the previous model \eqref{new_model} by taking reinfection term $\phi_1$ which goes backward from $V$ compartment to $S$. Thus, model \eqref{Bifur_New_Model} is in the form $\displaystyle \frac{\displaystyle dx}{\displaystyle dt}=f(x)$, with $f(x)=(f_1(x),f_2(x),\cdots f_6(x))$. The Jacobian matrix of the system \eqref{Bifur_New_Model} at DFE $\mathcal{E}^0$ is given as, 
\begin{align*}
	&J^*(E_{0V})|_{\beta_2=\beta_2^*}=
	\begin{pmatrix}
		a_{11} & \phi_1 &-\beta_1x_1 &-\beta_2x_1& 0 & 0 \\
		\phi & a_{22} & -(1-\varepsilon)\beta_1x_1 & -(1-\varepsilon)\beta_2x_2& 0 &0 \\
		\beta_1x_3+\beta_2x_4 & 0 & a_{33} & \beta_2x_1 & 0 &0 \\
		0 & a_{42} & \alpha+\lambda\beta_1x_2 &a_{44}&0&0\\
		0&0&0&\gamma &-\mu & 0\\
		0&0&0&\gamma_1 &0 &-\mu
\end{pmatrix} \end{align*}
where $a_{11}=-(\beta_1x_3+\beta_2x_4)-(\mu+\phi)$, $a_{22}=-\lambda(\beta_1x_3+\beta_2x_4)-(\mu+\phi_1)$,
$a_{33}=\beta_1x_1-(\alpha+\mu)$, $a_{42}=\lambda(\beta_1x_3+\beta_2x_4)$, and
$a_{44}=\lambda\beta_2x_2-(\mu+\delta+\gamma+\gamma_1)$.\\
We have considered contact rate $\beta_2=\beta_2^*$ as the bifurcation parameter, setting $\mathcal{R}_0=1$ gives,
\begin{align*}
&S_0\alpha\beta_2^*+V_0\beta_2\lambda(\alpha+\mu)+S_0\beta_1(\mu+\delta+\gamma+\gamma_1)=(\alpha+\mu)(\mu+\delta+\gamma+\gamma_1)\\
	&\beta_2^*=\frac{(\alpha+\mu)(\mu+\delta+\gamma+\gamma_1)-S_0\beta_1(\mu+\delta+\gamma+\gamma_1)}{V_0\lambda(\alpha+\mu)+S_0\alpha}\\
	& \beta_2=\beta^*=\frac{(\mu+\phi)(\alpha+\mu)(\mu+\delta+\gamma+\gamma_1)-\mu N\beta_1(\mu+\delta+\gamma+\gamma_1)}{\phi N\lambda(\alpha+\mu)+\mu N\alpha} \\ &[\text{By putting}\; S_0\; \text{and} \;V_0 \;\text{at the DFE value.}]
\end{align*}
At the DFE point, we have $\displaystyle x_1=\frac{\Lambda}{\mu+\phi}=\frac{\mu N}{\mu+\phi}=\frac{k_1}{k_2}\;,\; x_2=\frac{N\phi}{\mu+\phi}=\frac{k_3}{k_2}.$\\
Here, we let $k_1=\mu N$, $k_3=N\phi$, and $k_2=\mu+\phi$. Hence,
\begin{align*}
	J^*(E_{0V})|_{\beta_2=\beta_2^*}=
	&\begin{pmatrix}
		-(\mu+\phi)&\phi_1&\frac{-\beta_1k_1}{k_2}&\frac{-\beta_2k_1}{k_2}& 0&0\\
		\phi & -(\mu+\phi_1)&-\lambda\beta_1\frac{k_3}{k_2}&-\lambda\beta_2\frac{k_3}{k_2} & 0&0\\
		0 & 0& \beta_1\frac{k_1}{k_2}-(\alpha+\mu) &\beta_2\frac{k_1}{k_2}&0&0\\
		0&0& \alpha+\lambda\beta_1\frac{k_3}{k_2} &a_{44}&0&0\\
		0&0&0&\gamma&-\mu &0\\
		0&0&0&\gamma_1 &0&-\mu
	\end{pmatrix}
\end{align*}
where $a_{44}=\lambda\beta_2\frac{\displaystyle k_3}{\displaystyle k_2}-(\mu+\delta+\gamma+\gamma_1).$\\
The transformed system \eqref{Bifur_New_Model} at the DFE $\mathcal{E}^0$ calculated for $\beta_2=\beta_2^*$ has a hyperbolic equilibrium point i.e. a simple eigenvalue with $0$ and all other eigenvalues have a negative real part. We therefore apply the Centre Manifold  Theorem in order to analyze the dynamics of \eqref{Bifur_New_Model} near $\beta_2=\beta_2^*$.\\
{The Jacobian of \eqref{Bifur_New_Model} at  $\beta_2=\beta_2^*$ denoted by $J(\mathcal{E}^0)|_{\beta_2=\beta^*}$. Now, the right $(w)$ and left $(v)$ eigenvector are computed from $J(\mathcal{E}^0)|_{\beta_2=\beta^*}$ (associated with zero eigenvalues) are given as}\\
\textbf{Right Eigenvector:} $J^*(E_{0V})\underline{w}=0.$
\; Where \[\underline{w}=\begin{pmatrix}
	w_1\\w_2\\w_3\\w_4\\w_5\\w_6
\end{pmatrix}, \;\; \text{and}\;\; \underline{0}=\begin{pmatrix}
	0\\0\\0\\0\\0\\0
\end{pmatrix}.\]
{For zero eigenvalue we have obtained,}
\begin{align}\label{case_1Bif}
	&-k_2w_1+\phi_1w_2-\frac{\beta_1k_1}{k_2}w_3-\frac{\beta_2k_1}{k_2}w_4=0. \nonumber\\
	&\phi w_1-(\mu+\phi_1)w_2-\lambda\beta_1\frac{k_3}{k_2}w_3-\lambda\beta_2\frac{k_3}{k_2}w_4=0.\nonumber\\
	&\left(\beta_1\frac{k_1}{k_2}-(\alpha+\mu)\right)w_3+\beta_2\frac{k_1}{k_2}w_4=0. \nonumber\\
	&\left(\alpha+\lambda\beta_1\frac{k_3}{k_2}\right)w_3+\left(\lambda\beta_2\frac{k_3}{k_2}-(\mu+\delta+\gamma+\gamma_1)\right)w_4=0. \nonumber\\
	&\gamma w_4-\mu w_5=0. \nonumber\\
	&\gamma_1 w_4-\mu w_6=0.
\end{align}
Let, $w_4>0$ be the free variable.  Here,
\begin{align}\label{case_w_5 and w_6}
	w_5= \frac{\gamma w_4}{\mu},\; \text{and}\; w_6=\frac{\gamma_1 w_4}{\mu}.
\end{align}
Also, from the third equation of \eqref{case_1Bif} we have,
\begin{align}\label{case_w_3}
	 w_3 =\frac{\beta_2k_1w_4}{k_2(\alpha+\mu)-\beta_1k_1}.
\end{align}
Multiply first equation of \eqref{Bifur_New_Model} by $\phi$, and second equation of \eqref{Bifur_New_Model} by $k_2$,
\begin{align*}
	-k_2^2\phi w_1+\phi\phi_1k_2w_2-\beta_1k_1\phi w_3-\beta_2k_1\phi w_4=0.\\
 \phi k_2^2 w_1-k_2^2(\mu+\phi_1)w_2-\lambda\beta_1k_3k_2w_3-\lambda\beta_2k_2k_3w_4=0.
\end{align*}
By adding these two equations we obtain,
\begin{align}\label{case_w_2}
	&w_2\{k_2\phi\phi_1-k_2^2(\mu+\phi_1)\}-w_3(\beta_1k_1\phi+\lambda\beta_1k_3k_2)-w_4(\beta_2k_1\phi+\beta_2k_3k_2\lambda)=0 \nonumber\\
	\Rightarrow & w_2\{k_2\phi\phi_1-k_2^2(\mu+\phi_1)\}-\frac{\beta_2k_1(\beta_1k_1\phi+\lambda\beta_1k_3k_2)}{k_2(\alpha+\mu)-\beta_1k_1}w_4-w_4(\beta_2k_1\phi+\beta_2k_3k_2\lambda)=0 \nonumber\\
	\Rightarrow &w_2= \frac{\beta_2k_1(\beta_1k_1\phi+\lambda\beta_1k_3k_2)+(\beta_2k_1\phi+\beta_2k_3k_2\lambda)(k_2(\alpha+\mu)-\beta_1k_1)}{(k_2\phi\phi_1-k_2^2(\mu+\phi_1))(k_2(\alpha+\mu)-\beta_1k_1)}w_4=\frac{C_{11}}{C_{22}}w_4.
\end{align}
Substituting the expressions of $w_2$ and $w_3$ in the first equation of \eqref{case_1Bif} we get,
\begin{align}\label{case_w_1}
	w_1=\frac{1}{k_2}\left[\frac{\phi C_{11}}{C_{22}}-\frac{\beta_1k_1^2\beta_2}{k_2^2(\alpha+\mu)-\beta_1k_1k_2}-\frac{\beta_2k_1}{k_2}\right]w_4.
\end{align}
Thus, all parameters of {{right eigenvectors}} $w_1$, $w_2$, $w_3$, $w_5$, and $w_6$ can be expressed in terms of $w_4$.\\ 
\textbf{Left eigenvector:} Similarly, from $J(E_{0V})|_{\beta_2=\beta^*}$ we obtain,
$$[v1,v2,v3,v4,v5,v6]J^*(E_{0V})=[0,0,0,0,0,0].$$
Then, 
\begin{align} \label{case-2Bif}
	&-k_2v_1+\phi v_2=0, \nonumber\\
	&\phi_1v_1-(\mu+\phi_1)v_2=0,\nonumber\\
	&\frac{-\beta_1k_1}{k_2}v_1-\frac{\lambda\beta_1k_3}{k_2}v_2+\left(\frac{\beta_1k_1}{k_2}-(\alpha+\mu)\right)v_3+\left(\alpha+\lambda\beta_1\frac{k_3}{k_2}\right)v_4=0, \\
	&-\frac{\beta_2k_1}{k_2}v_1-\frac{\lambda\beta_2k_3}{k_2}v_2+\frac{\beta_2k_1}{k_2}v_3 +\left[\lambda\beta_2\frac{k_3}{k_2}-(\mu+\delta+\gamma+\gamma_1)\right]v_4+\gamma v_5+\gamma_1v_6 =0,\nonumber\\
	&-\mu v_5=0, \;\; \text{and} \;\; -\mu v_6=0.\nonumber
\end{align}
Therefore, $\displaystyle v_1=\frac{\displaystyle \phi v_2}{\displaystyle k_2}$, $v_5=0,\;\text{and}\; v_6=0$.
Putting $v_5=v_6=0$ on the third and fourth equation of \eqref{case-2Bif},
\begin{align*}
	&\beta_1k_1v_1+\lambda\beta_1k_3v_2-(\beta_1k_1-k_2(\alpha+\mu))v_3-(\alpha k_2+\lambda\beta_1k_3)v_4=0.\\
	&-\beta_2k_1v_1-\lambda\beta_2k_3v_2+\beta_2k_1v_3+[\lambda\beta_2k_3-k_2(\mu+\delta+\gamma+\gamma_1)]v_4=0.
\end{align*}
Now, multiplying third equation by $\beta_2$ and fourth equation by $\beta_1$ of \eqref{case-2Bif},
\begin{align*}
	&\beta_2\beta_1k_1v_1+\lambda\beta_1\beta_2k_3v_2-(\beta_1\beta_2k_1-\beta_2k_2(\alpha+\mu))v_3-\beta_2(\alpha k_2+\lambda\beta_1k_3)v_4=0.\\
	&-\beta_1\beta_2k_1v_1-\lambda\beta_1\beta_2k_3v_2+\beta_2\beta_1k_1v_3+\beta_1[\lambda\beta_2k_3-k_2(\mu+\delta+\gamma+\gamma_1)]v_4=0.
\end{align*}
By doing addition,
\begin{align*}
	[\beta_2k_2(\alpha+\mu)]v_3=[\beta_2(\alpha k_2+\lambda\beta_1k_3)-\beta_1(\lambda\beta_2k_3-k_2(\mu+\delta+\gamma+\gamma_1))]v_4.
\end{align*}
That implies,
\begin{align}\label{case_bif_v3}
	v_3=\frac{C_{33}}{C_{44}}v_4.
\end{align}
Where $C_{33}=[\beta_2(\alpha k_2+\lambda\beta_1k_3)-\beta_1(\lambda\beta_2k_3-k_2(\mu+\delta+\gamma+\gamma_1))]$, and $C_{44}=[\beta_2k_2(\alpha+\mu)].$
In the third equation of \eqref{case-2Bif} we have,
\begin{align*}
	\left[\frac{\beta_1k_1\phi}{k_2^2}+\frac{\lambda\beta_1k_3}{k_2}\right]v_2=\left[\left(\frac{\beta_1k_1}{k_2}-(\alpha+\mu)\right)\frac{C_{33}}{C_{44}}+\left(\alpha+\lambda\beta_1\frac{k_3}{k_2}\right)\right]v_4.
\end{align*}
Thus,
\begin{align}\label{case_bif_v2}
	v_2=\frac{C_{32}}{C_{42}}v_4.
\end{align}
Where $C_{32}=\left[\left(\frac{\displaystyle \beta_1k_1}{\displaystyle k_2}-(\alpha+\mu)\right)\frac{\displaystyle C_{33}}{\displaystyle C_{44}}+\left(\alpha+\lambda\beta_1\frac{\displaystyle k_3}{\displaystyle k_2}\right)\right]$, and $C_{42}= \left[\frac{\displaystyle \beta_1k_1\phi}{\displaystyle k_2^2}+\frac{\displaystyle \lambda\beta_1k_3}{\displaystyle k_2}\right].$\\
Hence, 
\begin{align}\label{case_bif_v1}
	v_1=\frac{\phi}{k_2}v_2=\frac{\phi C_{32}}{k_2 C_{42}}v_4.
\end{align}
Thus, $v_1$, $v_2$, and $v_3$ can be expressed in terms of $v_4$. Hence, in the {left eigenvector}, we can assume $v_4$ as a free variable. Since $v_4$ is a free variable, we evaluate the second-order partial derivatives $f_i$ at the disease-free equilibrium point $\mathcal{E}_0$ to exhibit the existence of backward bifurcation. Furthermore, in $f_4$ there are $\beta_1$ and $\beta_2$ terms (which are contact rate related to disease transmission). So, the associate non-zero second partial derivative of the model \eqref{Bifur_New_Model} evaluated at $(E_{0V},\beta^*).$
Now, $$f_4=\alpha x_3+(1-\varepsilon)(\beta_1x_3+\beta_2x_4)x_2-(\mu+\delta+\gamma+\gamma_1)x_4.$$
Taking derivative with respect to $x_3$ and $x_4$ with possible other combinations, as $x_3$ and $x_4$ indicates $E$ and $I$ terms.
\begin{align*}
	\frac{\partial f_4}{\partial x_3}&=\alpha+(1-\varepsilon)\beta_1 x_2^0
	=\alpha+\frac{(1-\varepsilon)(x_1+x_2+x_3+x_4+x_5+x_6)\phi\beta_1}{\mu+\phi},\\
	\frac{\partial^2 f_4}{\partial x_3\partial x_1}&=\frac{\partial^2 f_4}{\partial x_1\partial x_3}=\frac{(1-\varepsilon)\phi\beta_1}{\mu+\phi},\;
	 \frac{\partial^2 f_4}{\partial x_3\partial x_2}=\frac{\partial^2 f_4}{\partial x_2\partial x_3}=\frac{(1-\varepsilon)\phi\beta_1}{\mu+\phi},\;
	 \frac{\partial^2 f_4}{\partial x_3\partial x_3}=\frac{\partial^2 f_4}{\partial x_3\partial x_3}=\frac{(1-\varepsilon)\phi\beta_1}{\mu+\phi},\\
	 \frac{\partial^2 f_4}{\partial x_3\partial x_4}&=\frac{\partial^2 f_4}{\partial x_4\partial x_3}=\frac{(1-\varepsilon)\phi\beta_1}{\mu+\phi},\;
	 \frac{\partial^2 f_4}{\partial x_3\partial x_5}=\frac{\partial^2 f_4}{\partial x_5\partial x_3}=\frac{(1-\varepsilon)\phi\beta_1}{\mu+\phi}, \; 
 \frac{\partial^2 f_4}{\partial x_3\partial x_6}=\frac{\partial^2 f_4}{\partial x_6\partial x_3}=\frac{(1-\varepsilon)\phi\beta_1}{\mu+\phi}.
\end{align*}
Now, $\frac{\displaystyle \partial f_4}{\displaystyle \partial x_4}=(1-\varepsilon)\beta_2 x_2^0-(\mu+\delta+\gamma+\gamma_1)$.
Thus, putting the above $E_{0V}$ value,
\begin{align*}
	\frac{\partial f_4}{\partial x_4}&=\frac{(1-\varepsilon)\beta_2(x_1+x_2+x_3+x_4+x_5+x_6)\phi}{\mu+\phi}-(\mu+\delta+\gamma+\gamma_1),\\
	 \frac{\partial^2 f_4}{\partial x_4\partial x_1}&=\frac{\partial^2 f_4}{\partial x_1\partial x_4}=\frac{(1-\varepsilon)\phi\beta_2}{\mu+\phi},\;
	\frac{\partial^2 f_4}{\partial x_4\partial x_2}=\frac{\partial^2 f_4}{\partial x_2\partial x_4}=\frac{(1-\varepsilon)\phi\beta_2}{\mu+\phi},\;
	 \frac{\partial^2 f_4}{\partial x_4\partial x_3}=\frac{\partial^2 f_4}{\partial x_3\partial x_4}=\frac{(1-\varepsilon)\phi\beta_2}{\mu+\phi},\\
	 \frac{\partial^2 f_4}{\partial x_4\partial x_4}&=\frac{\partial^2 f_4}{\partial x_4\partial x_4}=\frac{(1-\varepsilon)\phi\beta_2}{\mu+\phi},\;
	 \frac{\partial^2 f_4}{\partial x_4\partial x_5}=\frac{\partial^2 f_4}{\partial x_5\partial x_4}=\frac{(1-\varepsilon)\phi\beta_2}{\mu+\phi},\;
	 \frac{\partial^2 f_4}{\partial x_4\partial x_6}=\frac{\partial^2 f_4}{\partial x_6\partial x_4}=\frac{(1-\varepsilon)\phi\beta_2}{\mu+\phi}.
\end{align*}
Now, we determine the bifurcation coefficients of $\bar{a}$ and $\bar{b}$ defined in Theorem \ref{Center Manifold Theorem}, stated by Castillo-Chavez and Song which is given as follows,
\begin{align*}
	\bar{b}=&\sum_{k,i=1}^{6}v_kw_i\frac{\partial^2 f_k(E_{0V},\beta_2^*=0)}{\partial x_i\partial \beta_2^*}\\
	&=v_1w_4\frac{\partial^2f_1}{\partial x_4\partial\beta_2^*}+v_2w_4\frac{\partial^2f_2}{\partial x_4\partial\beta_2^*}+v_3w_4\frac{\partial^2f_3}{\partial x_4\partial\beta_2^*}+v_4w_4\frac{\partial^2f_4}{\partial x_4\partial\beta_2^*}\\
	&+v_1w_3\frac{\partial^2f_1}{\partial x_3\partial\beta_2^*}+v_2w_3\frac{\partial^2f_2}{\partial x_3\partial\beta_2^*}+v_3w_3\frac{\partial^2f_3}{\partial x_3\partial\beta_2^*}+v_4w_3\frac{\partial^2f_4}{\partial x_3\partial\beta_2^*}.
\end{align*}
Here, we take combination of $v_1$, $v_2$, $v_3$, $v_4$ in expression of $\bar{b}$ as $\beta_1$, $\beta_2$ are present, also take combination of $w_3, w_4$ as they related to $E$ and $I$ compartments \cite{Bifurcation of R0-1, Bifurcation of R0-2}. Now, putting the DFE value $(\mathcal{E}^0)$ we have,
\begin{align*}
	\frac{\partial f_1}{\partial x_4}&=-\beta_2^*x_1 \;\Rightarrow \frac{\partial^2 f_1}{\partial x_4\partial \beta_2^*}=-x_1,\;
	\frac{\partial f_2}{\partial x_4}=-(1-\varepsilon)\beta_2^*x_2 \;\Rightarrow \frac{\partial^2 f_2}{\partial x_4\partial \beta_2^*}=-(1-\varepsilon)x_2,\\
 \frac{\partial f_3}{\partial x_4}&=\beta_2^*x_1\; \Rightarrow \frac{\partial^2 f_3}{\partial x_4\partial \beta_2^*}=x_1,\; \frac{\partial f_4}{\partial x_4}=(1-\varepsilon)\beta_2^*x_2-(\mu+\delta+\gamma+\gamma_1)\;\Rightarrow \frac{\partial^2 f_3}{\partial x_4\partial \beta_2^*}=(1-\varepsilon)x_2.\\
\frac{\partial f_1}{\partial x_3} &=-\beta_1 x_1\;\Rightarrow\frac{\partial^2 f_1}{\partial x_3\partial \beta_2^*}=0,\;\frac{\partial f_2}{\partial x_3}=-(1-\varepsilon)\beta_1 x_2\;\Rightarrow\frac{\partial^2 f_2}{\partial x_3\partial \beta_2^*}=0,\\
\frac{\partial f_3}{\partial x_3}&=-\beta_1 x_1-(\alpha+\mu)\;\Rightarrow\frac{\partial^2 f_3}{\partial x_3\partial \beta_2^*}=0,\;
	\frac{\partial f_4}{\partial x_3}=\alpha+(1-\varepsilon)\beta_1 x_2\;\Rightarrow\frac{\partial^2 f_4}{\partial x_3\partial \beta_2^*}=0.
\end{align*}
Putting these values in the above expression of $\bar{b}$,
\begin{align*}
	\bar{b}=&-v_1x_1^*-v_2(1-\varepsilon)x_2^*
	+v_3x_1^*+(1-\varepsilon)x_2^*.
\end{align*}
As $v_4$ and $w_4$ are free variables we put $v_4=w_4=1$. Now, substituting the DFE point we have,
\begin{align*}
	 \bar{b}=\frac{-v_1\mu N-v_2(1-\varepsilon)N\phi+v_3\mu N+(1-\varepsilon) N\phi}{\mu+\phi}.
\end{align*}
Putting the expressions of $v_1$, $v_2$, and $v_3$ from \eqref{case_bif_v1}, \eqref{case_bif_v2}, and \eqref{case_bif_v3} we get explicit expression of $\bar{b}$ in terms of model parameters and it reflects that $\bar{b}>0$ automatically.\\
Here, vaccine efficiency is $0<\varepsilon<1$, and the reinfection rate from the vaccination compartment is $0<\phi_1<1$.
Now, $$\bar{a}=\sum_{k,i,j=1}^{6}v_kw_iw_j\frac{\partial^2 f_k}{\partial x_i \partial x_j}(E_{0V},\beta_2^*).$$
Here, we take $v_k=v_4$ as $v_4$ is a free variable, and $w_j=\{w_3, w_4\}$ as they correspond to the contact rates $\beta_1$ and $\beta_2$. Then,
\begin{align*}
	\bar{a}=&v_4w_1\left[w_3\frac{\partial ^2 f_4}{\partial x_1\partial x_3}+w_4\frac{\partial ^2 f_4}{\partial x_1\partial x_4}\right]+v_4w_2\left[w_3\frac{\partial ^2 f_4}{\partial x_2\partial x_3}+w_4\frac{\partial ^2 f_4}{\partial x_2\partial x_4}\right]+\\
	&v_4w_3\left[w_3\frac{\partial^2f_4}{\partial x_3\partial x_3}+w_4\frac{\partial^2f_4}{\partial x_3\partial x_4}\right]+v_4w_4\left[w_3\frac{\partial^2f_4}{\partial x_4\partial x_3}+w_4\frac{\partial^2f_4}{\partial x_4\partial x_4}\right]+\\
	&v_4w_5\left[w_3\frac{\partial^2f_4}{\partial x_5\partial x_3}+w_4\frac{\partial^2f_4}{\partial x_5\partial x_4}\right]+v_4w_6\left[w_3\frac{\partial^2f_4}{\partial x_6\partial x_3}+w_4\frac{\partial^2f_4}{\partial x_6\partial x_4} \right].
\end{align*}
As $v_4$ and $w_4$ are free variables, so putting $v_4=1$ and $w_4=1$ in the expression of $\bar{a}$ we have,
\begin{align*}
	\bar{a}
	=\;&\frac{(1-\varepsilon)\phi\beta_1(w_1w_3+w_2w_3+w_3w_3+w_3+w_5w_3+w_6w_3)}{(\mu+\phi)}+\\
	&\frac{(1-\varepsilon)\phi\beta_2^*(w_1+w_2+w_3+1+w_5+w_6)}{\mu+\phi}.
\end{align*}
Substituting the expression of $w_1,$ $w_2,$ $w_3,$ $ w_4,$ $ w_5,$ \text{and} $w_6$ from \eqref{case_w_1}, \eqref{case_w_2} and \eqref{case_w_3} we have the explicit expression of $\bar{a}$ in terms of model parameters. Hence, $\bar{a} > 0$  automatically, as all parameters are non-negative, reinfection rate is $0<\phi_1<1$, and vaccine efficiency rate is $0<\varepsilon<1$.\\
Since $\bar{a}>0$ and $\bar{b}>0$, by Theorem \ref{Center Manifold Theorem} the modified system \eqref{Bifur_New_Model} undergoes backward bifurcation at $R_0=1$. So, the condition to occur backward bifurcation in mode \eqref{Bifur_New_Model} is for $\bar{a}>0$. This follows that the bifurcation parameter $\bar{a}>0$ whenever,
\begin{align*}
	&\beta_2^*>\frac{(\varepsilon-1)\beta_1(w_1w_3+w_2w_3+w_3w_3+w_3+w_5w_3+w_6w_3)}{(1-\varepsilon)(w_1+w_2+w_3+1+w_5+w_6)}.
\end{align*}
as required. Hence, the condition or crucial parameter for bifurcation $(\beta^*)$ is obtained. Additionally, it's important to emphasize that in situations where susceptible individuals under lockdown do not become infected during the lockdown period (i.e., $\beta_2^*=0$), the bifurcation coefficient $\bar{a}$ takes on a negative value.
Thus, $\bar{a}<0$ is the case according to Theorem \ref{Center Manifold Theorem}, for which no backward bifurcation occurs, hence forward bifurcation occurs. In other terms, this research demonstrates that the occurrence of backward bifurcation in the model \eqref{Bifur_New_Model} is a result of susceptible individuals getting reinfected during the lockdown period. This result is consistent when the DFE of the model is globally asymptotically stable with $\beta_2^*=0$. Since $\bar{a}<0$ and $\bar{b}>0$ at $\beta^*=\beta_2^*$,  a transcritical bifurcation occurs according to Theorems in \cite{Bifurcation of R0-7,Bifurcation of R0-8}.\\
In a forward bifurcation plot (Figure \ref{forward-backward-bifurcation}(a)), we typically have $\mathcal{R}_0$ on the x-axis and the force of infection ($\lambda_1$) on the y-axis. From Figure \ref{forward-backward-bifurcation}(a), we see that up to $\mathcal{R}_0\in [0,1]$, the force of infection lies in zero level. The plot shows different branches or curves that represent the possible force of infection for different values of $\mathcal{R}_0$. A forward bifurcation occurs when there is a critical value of $\mathcal{R}_0$ at which a qualitative change in the behavior of the force of infection occurs. This means that as $\mathcal{R}_0$ increases past this critical value ($\mathcal{R}_0=1$), the force of infection exhibits a sudden change in its behavior, such as transitioning from a low to a high level of infection or from a stable to an unstable state \cite{Bifurcation of R0-5}.\\
In Figure \ref{forward-backward-bifurcation}(b), we plot the backward bifurcation of the force of infection concerning $\mathcal{R}_0$, the figure describes the qualitative behavior of the force of infection as the basic reproduction number ($\mathcal{R}_0$) changes. The force of infection represents the rate at which susceptible individuals become infected in a population. In a backward bifurcation plot, we typically have $\mathcal{R}_0$ on the x-axis and the force of infection ($\lambda_1$) on the y-axis. The plot shows different branches or curves that represent the possible force of infection for different values of $\mathcal{R}_0$. A backward bifurcation occurs when there is a critical value of $\mathcal{R}_0$ at which the force of infection exhibits a sudden change in its behavior. However, in this case, the change occurs in the opposite direction compared to a forward bifurcation. Instead of transitioning from a low to a high level of infection, a backward bifurcation indicates a transition from a high to a low level of infection when $\mathcal{R}_0$ exceeds 1. This phenomenon is often associated with complex transmission dynamics, such as the presence of a secondary reservoir or a significant population of infectious individuals with prolonged infectivity. A backward bifurcation implies that even if the basic reproduction number is reduced below a certain threshold, the infection can persist in the population \cite{Bifurcation of R0-4}. The plot helps us understand the conditions under which an infection can persist or resurge despite efforts to control or reduce $\mathcal{R}_0$. It highlights the potential challenges in eradicating or controlling the disease and emphasizes the importance of considering additional factors beyond $\mathcal{R}_0$ in public health interventions.\\
From Figure \ref{forward-backward-bifurcation}(c), when plotting the basic reproduction number ($\mathcal{R}_0$) for the force of infection ($\lambda_1$), the scenario of $\mathcal{R}_0$ depends on how $\mathcal{R}_0$ changes as the force of infection increases or decreases. Figure \ref{forward-backward-bifurcation}(c) shows that if the force of infection increases, it means that the rate at which susceptible individuals become infected rises. Meanwhile, $\mathcal{R}_0$ increases with increasing the force of infection, which indicates that the disease becomes more transmissible. A higher force of infection leads to a greater number of new infections caused by each infected individual. On the other hand, $\mathcal{R}_0$ decreases with the decreasing of the force of infection, which means that the disease becomes less transmissible. A lower force of infection results in a reduced number of new infections caused by each infected individual \cite{Bifurcation of R0-6}. Transmission with contact rate $\beta_{1}$ and $\beta_{2}$, and factors such as population density, contact patterns, and interventions can also influence the relationship between $\mathcal{R}_0$ and the force of infection ($\lambda_1$). 
\begin{figure}[H]
	\centering  
	\subfloat[]{\includegraphics[width=2.5 in]{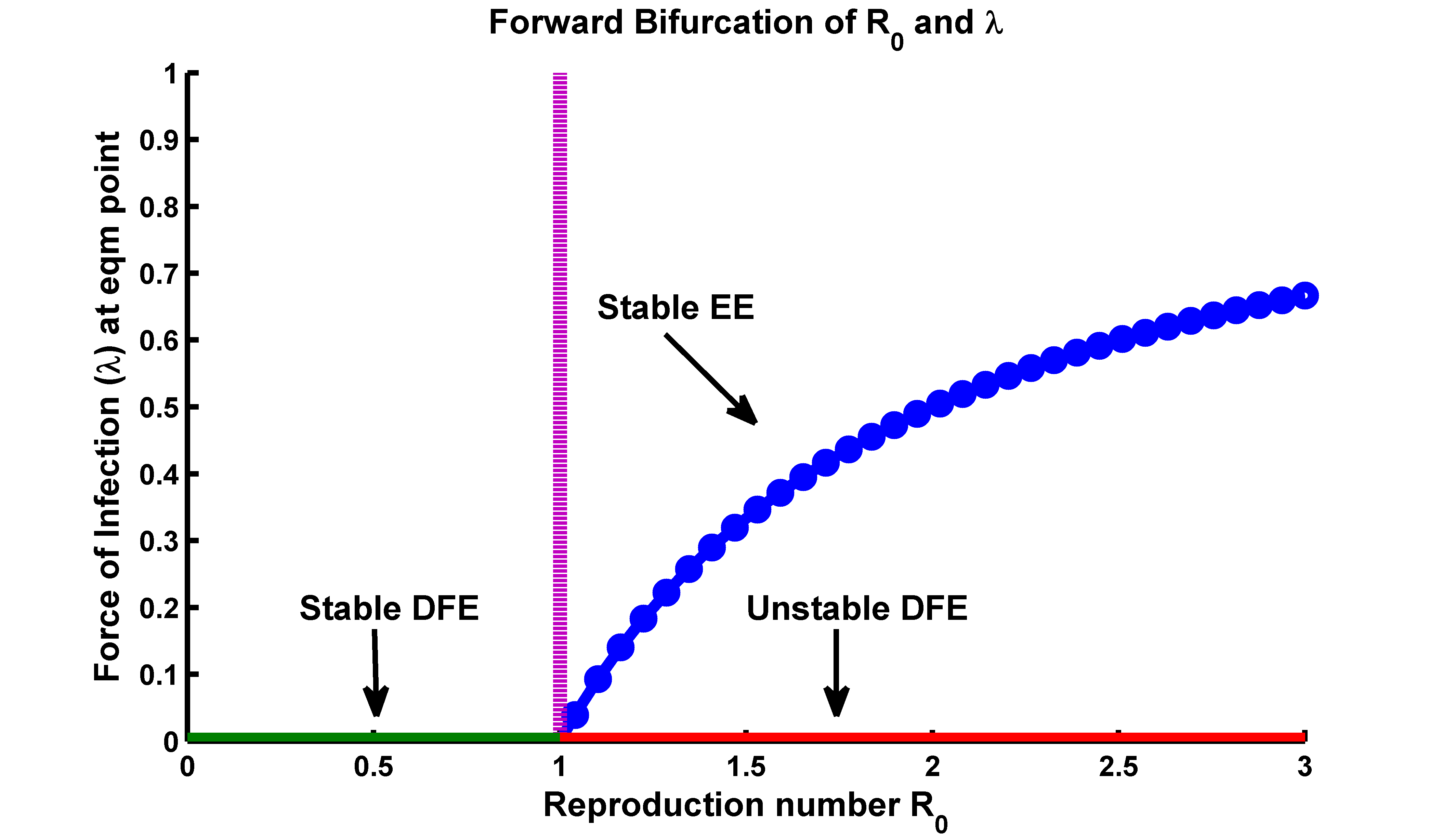}}
	\subfloat[]{\includegraphics[width=2.5 in]{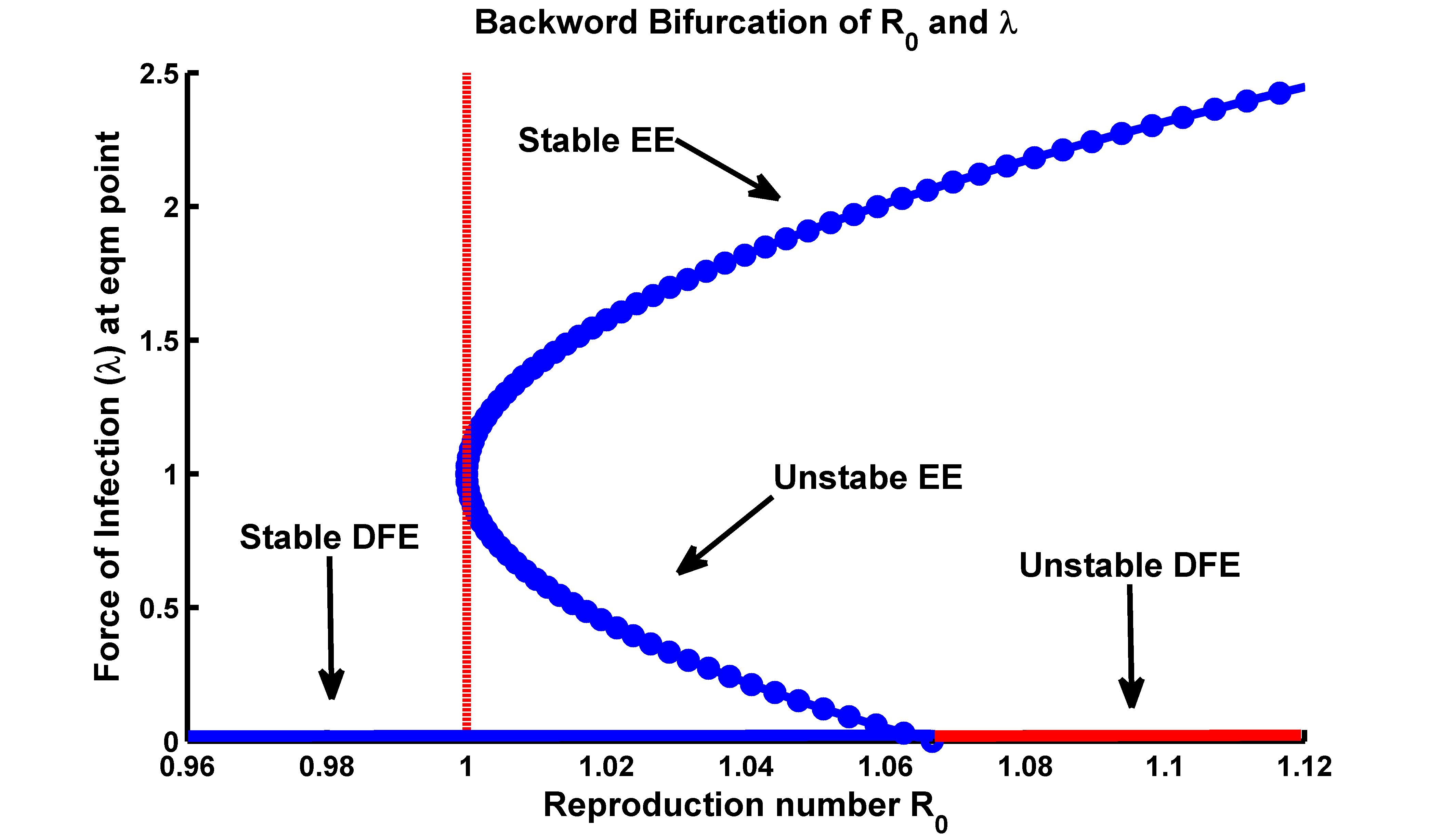}}\\
	\subfloat[]{\includegraphics[width=2.5 in]{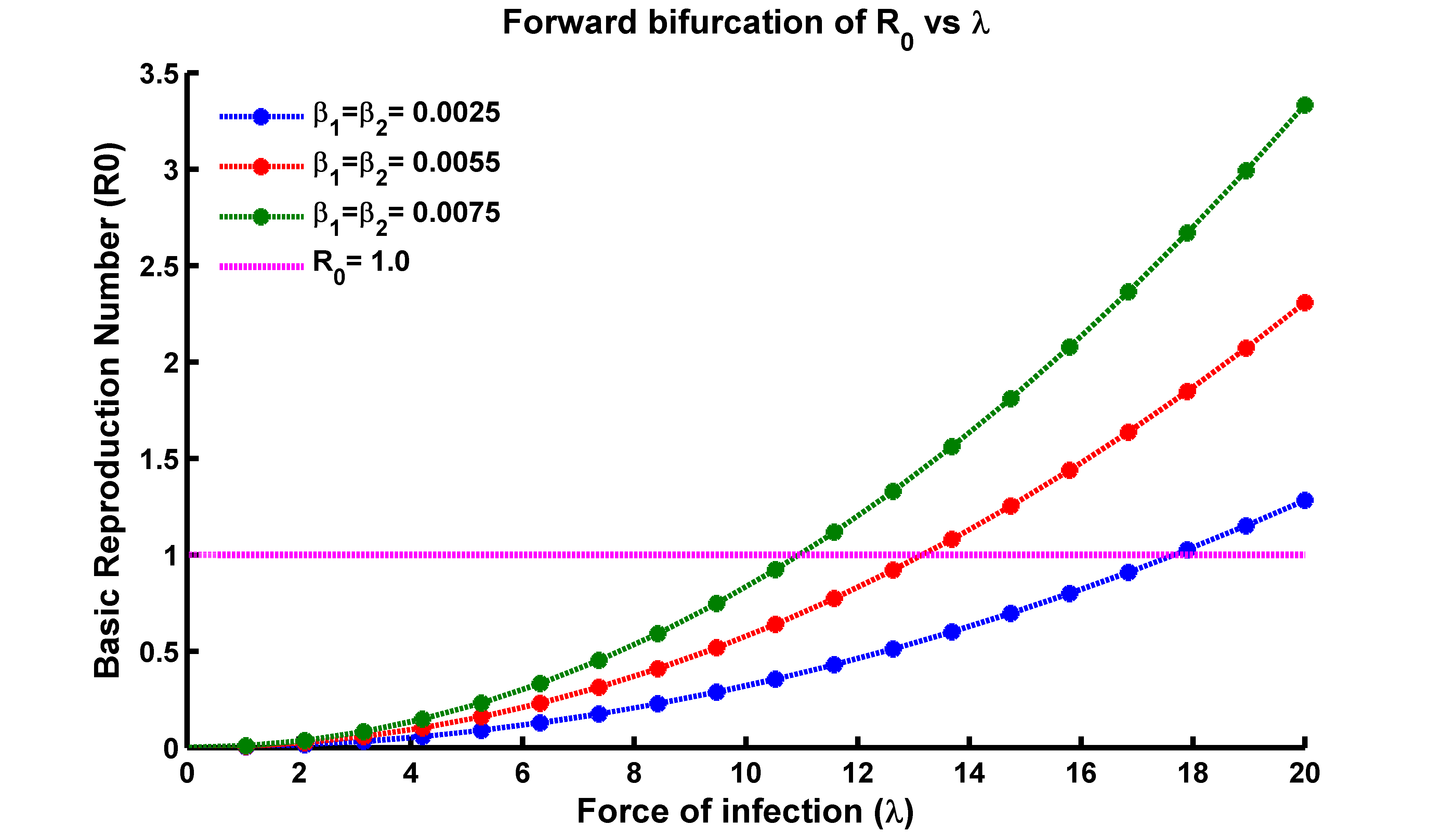}}
	\caption{{Simmulation of model \eqref{new_model} showing (a) forward bifurcation vs $\mathcal{R}_0$, (b) backward bifurcation vs $\mathcal{R}_0$, and (c) bifurcation of $\mathcal{R}_0$ vs the force of infection $\lambda$ at the equilibrium point respectively, where all the parameters are taken from Table \ref{tableparameter}.}}
	\label{forward-backward-bifurcation}
\end{figure}
\noindent
Figure \ref{forward-backward-bifurcation} illustrates that the bifurcation coefficient $\bar{b}$ is positive. Consequently, based on theorem \cite{Bifurcation of R0-4, Bifurcation of R0-5}, the model \eqref{Bifur_New_Model} represents a backward bifurcation phenomenon when the backward coefficient, $\bar{a}$, is positive. Figure \ref{forward-backward-bifurcation} depicts the corresponding forward and backward bifurcation diagram. Importantly, when setting the Influenza reinfection term $\phi_1$ to 0 as the modification parameter for enhanced susceptibility from the vaccination compartment, it was perceived that the bifurcation coefficient $\bar{a}$ is less than 0. Thus, in the Influenza co-infection scenario, when there is no reinfection after recovery from Influenza and effective measures are taken to prevent vaccinated-susceptible individuals from contracting Influenza Influenza \cite{Bifurcation of R0-4}, backward bifurcation does not occur. This outcome aligns with the earlier result. From an epidemiological viewpoint, the implication is that if this phenomenon occurs, controlling Influenza at the community level becomes more challenging, even with persistent vaccination programs and an associated reproduction number $\mathcal{R}_0<1$.

\section{Hopf-Bifurcation Analysis of the Model}\label{Section-Hopf-Bifurcation}
Hopf bifurcation occurs in an epidemic model when there is a transition from a stable equilibrium point to a stable limit cycle. In other words, the system starts to exhibit periodic oscillations instead of converging to a steady state \cite{Hopf LEAST LHS PRCC-1}. This is indicated in the SVEIRT model when the basic reproduction number $\mathcal{R}_0$ is greater than $1$ and there is a change in stability of the DFE equilibrium point. This can happen when the reproduction number, which represents the average number of new infections caused by a single infected individual, crosses a certain threshold value. The occurrence of Hopf bifurcation in an epidemic model is significant because it can lead to the emergence of sustained, periodic outbreaks of the disease \cite{Hopf LEAST LHS PRCC-2}. This can happen even if the disease would have otherwise died out in the absence of intervention or natural immunity. Sometimes this occurs when the model includes time delays and the steady-state equilibrium becomes unstable, leading to the emergence of limit cycles or oscillatory behavior in the system. The periodicity of the outbreaks can also make it more difficult to control the disease using traditional methods such as vaccination or quarantine. Specifically, the Hopf bifurcation occurs when a pair of complex conjugate eigenvalues of the  Jacobian matrix cross the imaginary axis as the value of the threshold quantity $\mathcal{R}_0$ increases. This leads to the emergence of limit cycles or periodic oscillations in the model. Hopf bifurcation is an important phenomenon in the study of infectious disease dynamics as it can lead to complex and unpredictable patterns in disease spread, which can have significant public health implications. Therefore, understanding the conditions under which Hopf bifurcation occurs in epidemic models can provide insights into the dynamics of infectious diseases and help inform public health policies to prevent and control outbreaks \cite{Hopf LEAST LHS PRCC-4}.
To determine the conditions for a Hopf bifurcation in the SVEIRT model, one can use the Routh-Hurwitz criterion, which involves computing the coefficients of the characteristic polynomial of the linearized system evaluated at the disease-free equilibrium. We need to compute the Jacobian matrix evaluated at each equilibrium point. By setting the determinant of the Jacobian matrix evaluated at this point to zero, $$|J(S^*, V^*, E^*, I^*, R^*, T^*) - \lambda I|= 0.$$
The characteristic polynomial can be written as
\begin{align*}
	p(\lambda) = \lambda^6 + a_1\lambda^5 + a_2\lambda^4 + a_3\lambda^3 + a_4\lambda^2 + a_5\lambda + a_6.
\end{align*}
Where the coefficients $a_1, a_2, a_3, a_4, a_5$, and $a_6$ depend on the model parameters and $\lambda$ is a complex eigenvalue with a positive real part that determines the stability of the limit cycle that arises from the bifurcation. To find the condition for a Hopf bifurcation, we need to calculate the sign of the coefficient of the linear term in the normal form of the system near the equilibrium point. The normal form is given by,
$$\dot{z} = (\alpha + i\omega)z - \mu|z|^2z.$$
Where $z$ is a complex variable representing the deviation from the equilibrium point, $\alpha$ and $\omega$ are real constants, and $\mu$ is a small parameter. To obtain the normal form, we need to examine the eigenvalues of the Jacobian matrix evaluated at the equilibrium point. If the real part of one of the eigenvalues changes sign as a parameter is varied, a Hopf bifurcation occurs. The characteristic polynomial gives us the eigenvalues of the Jacobian matrix. Therefore, we can use the coefficients of the characteristic polynomial to find the normal form coefficients. The coefficients are related to the normal form coefficients as follows,
\begin{align*}
	\alpha =& \frac{1}{2}(a_5 - a_1),\;\;\;\omega = \frac{1}{2}(a_4 - a_2), \; \text{and}\\
	\mu =& \frac{1}{4}(a_1a_5 - a_2a_4 + a_3a_3) - \frac{1}{2}(a_1a_4 + a_2a_5) + a_3a_6.
\end{align*}
The condition for a Hopf bifurcation is that $\alpha = 0$ and $\omega \neq 0$. Therefore, we need to set $a_5 - a_1 = 0$ and $a_4 - a_2 \neq 0$. This gives us the conditions,
$$a_5 = a_1,\;\; a_4 \neq a_2.$$
If the Routh-Hurwitz criterion yields that all the coefficients of the polynomial are positive, then the disease-free equilibrium is stable, and there is no Hopf bifurcation. However, if one or more of the coefficients are negative, then the disease-free equilibrium is unstable, and a Hopf bifurcation can occur \cite{Hopf LEAST LHS PRCC-1, Hopf LEAST LHS PRCC-2, Hopf LEAST LHS PRCC-4}. In summary, the condition for a Hopf bifurcation is that the characteristic polynomial has a repeated eigenvalue, i.e., $a_5 = a_1$, and the coefficient of the quartic term is different from the coefficient of the quadratic term, i.e., $a_4 \neq a_2$.\\
In epidemiology, when a Hopf bifurcation is represented in the susceptible population in Figure \ref{Hopf-Bifurcation-analysis}(a), with oscillation after 20 weeks, the Figure Figure \ref{Hopf-Bifurcation-analysis}(a) describes a complex dynamic pattern. It indicates that the number of susceptible individuals $(S(t))$ in the population exhibits periodic fluctuations over time. This suggests that the transmission dynamics of the infectious disease may undergo a qualitative change, leading to sustained oscillations in the susceptibility of the population.\\
When this bifurcation occurs in the vaccinated population represented in (Figure \ref{Hopf-Bifurcation-analysis}(b)), it suggests that the dynamics of the disease transmission are changing due to the vaccination efforts. Meanwhile, Figure \ref{Hopf-Bifurcation-analysis}(b) represents the temporal pattern of the vaccinated population's susceptibility to the disease. Initially, after the introduction of vaccination, the vaccinated population increases, and disease transmission decreases. However, after a certain period (20 weeks in this case), the vaccinated population's susceptibility starts oscillating, leading to periodic fluctuations in disease transmission. These oscillations can occur due to various factors, such as waning immunity, changes in contact patterns, or the emergence of new variants of the pathogen. The specific characteristics of the oscillations (e.g., amplitude, frequency) can provide insights into the dynamics of the disease and the effectiveness of vaccination strategies.\\
The specific interpretation of Figure \ref{Hopf-Bifurcation-analysis}(c) describing the oscillation after 20 weeks in the exposed population $E(t)$. In the case of a Hopf bifurcation, the oscillations in the exposed population could suggest periodic fluctuations in the number of individuals transitioning from the susceptible to the exposed state and vice versa.
\begin{figure}[H]
	\centering  
	\subfloat[]{\includegraphics[width=2.5 in]{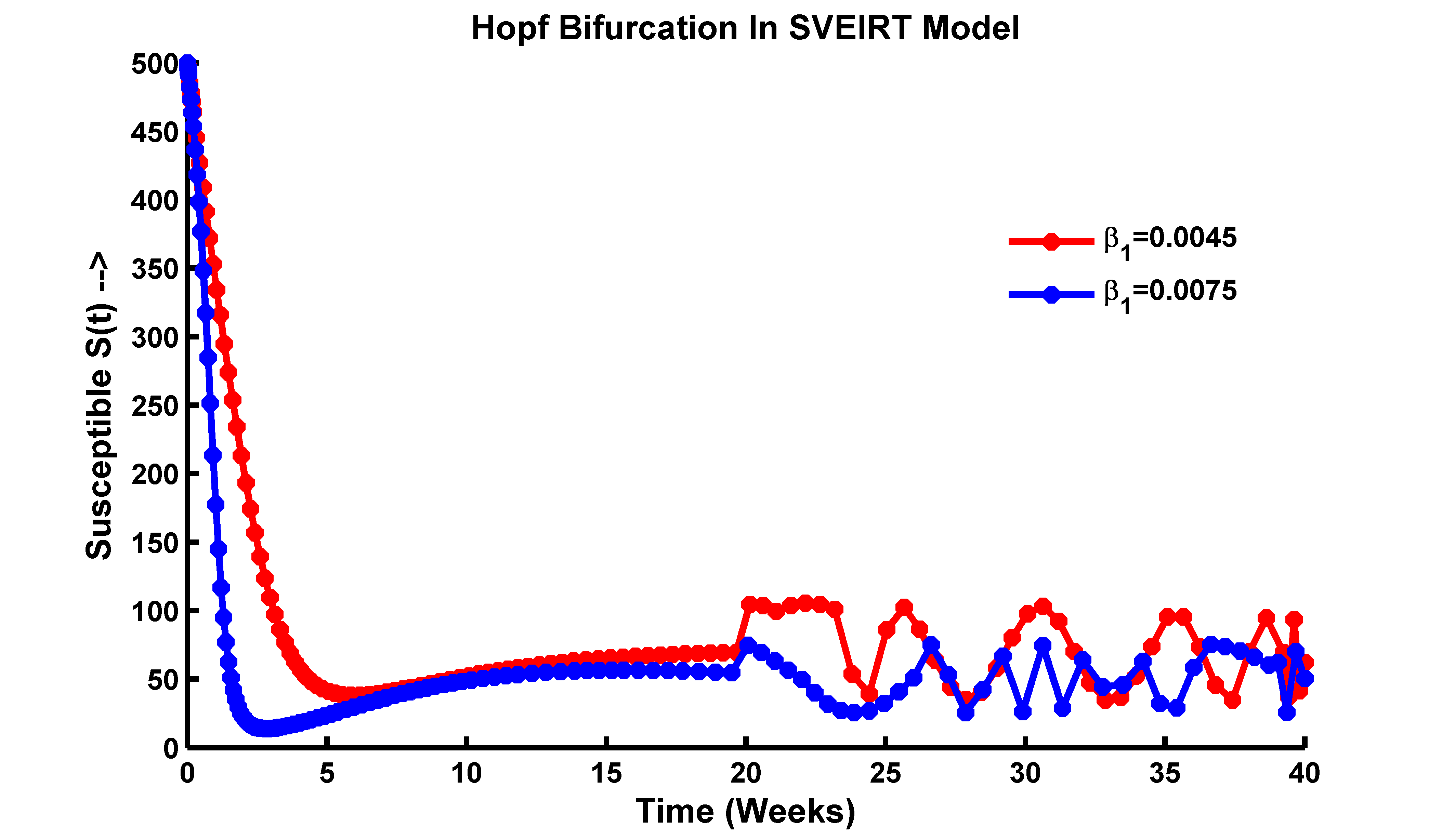}}
	\subfloat[]{\includegraphics[width=2.5 in]{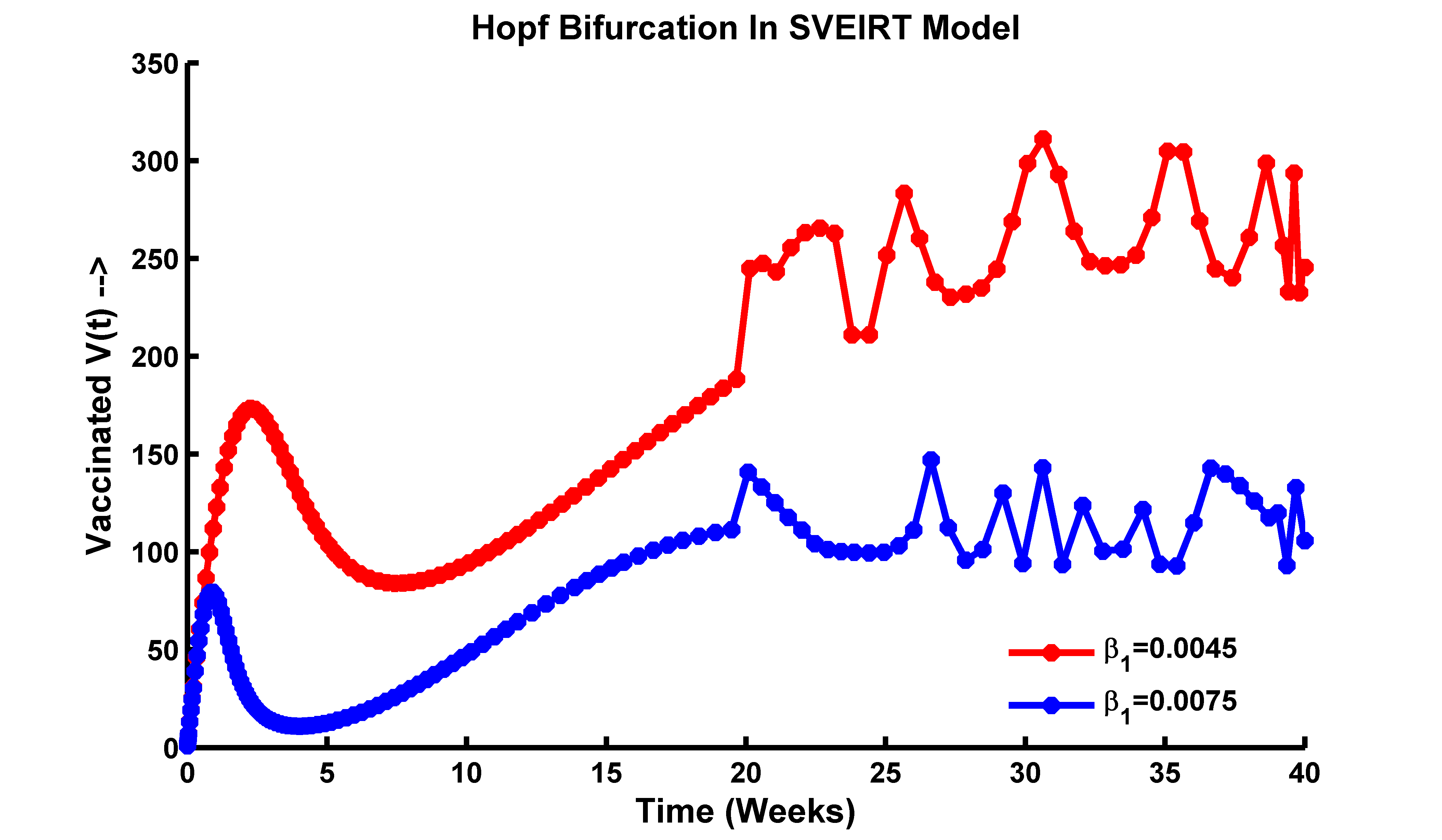}}\\
	\subfloat[]{\includegraphics[width=2.5 in]{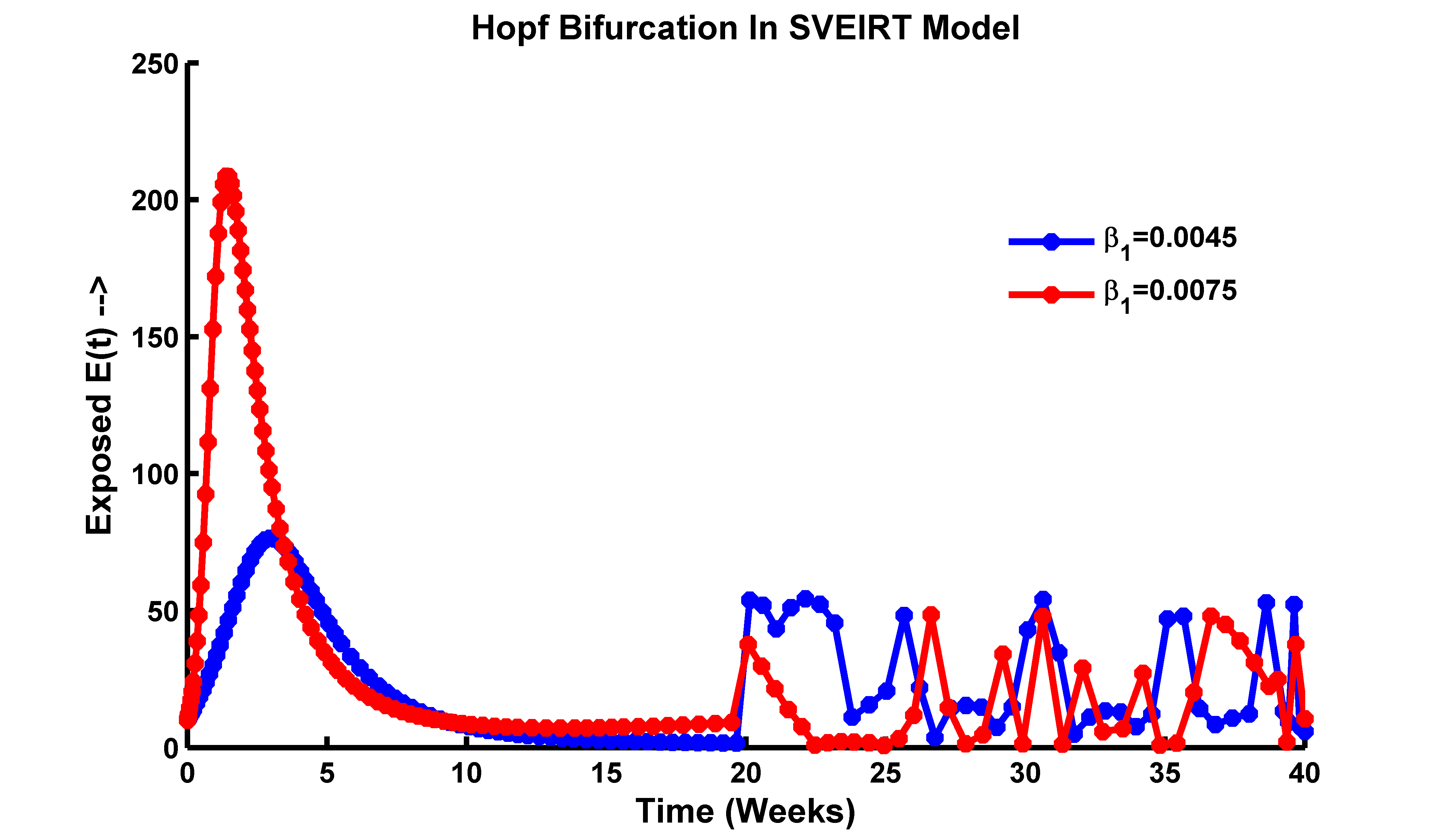}}
	\subfloat[]{\includegraphics[width=2.5 in]{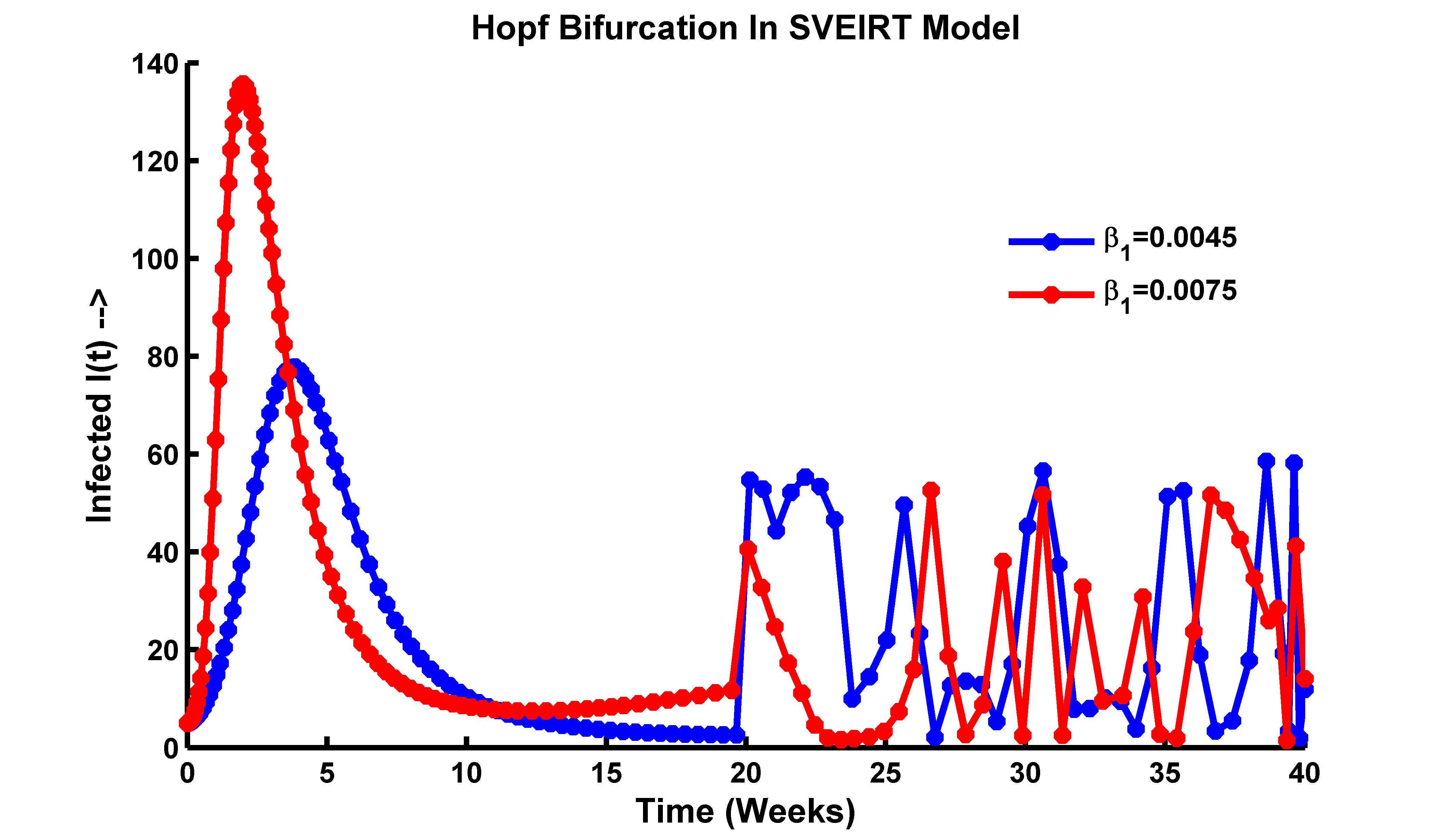}}\\
	\subfloat[]{\includegraphics[width=2.5 in]{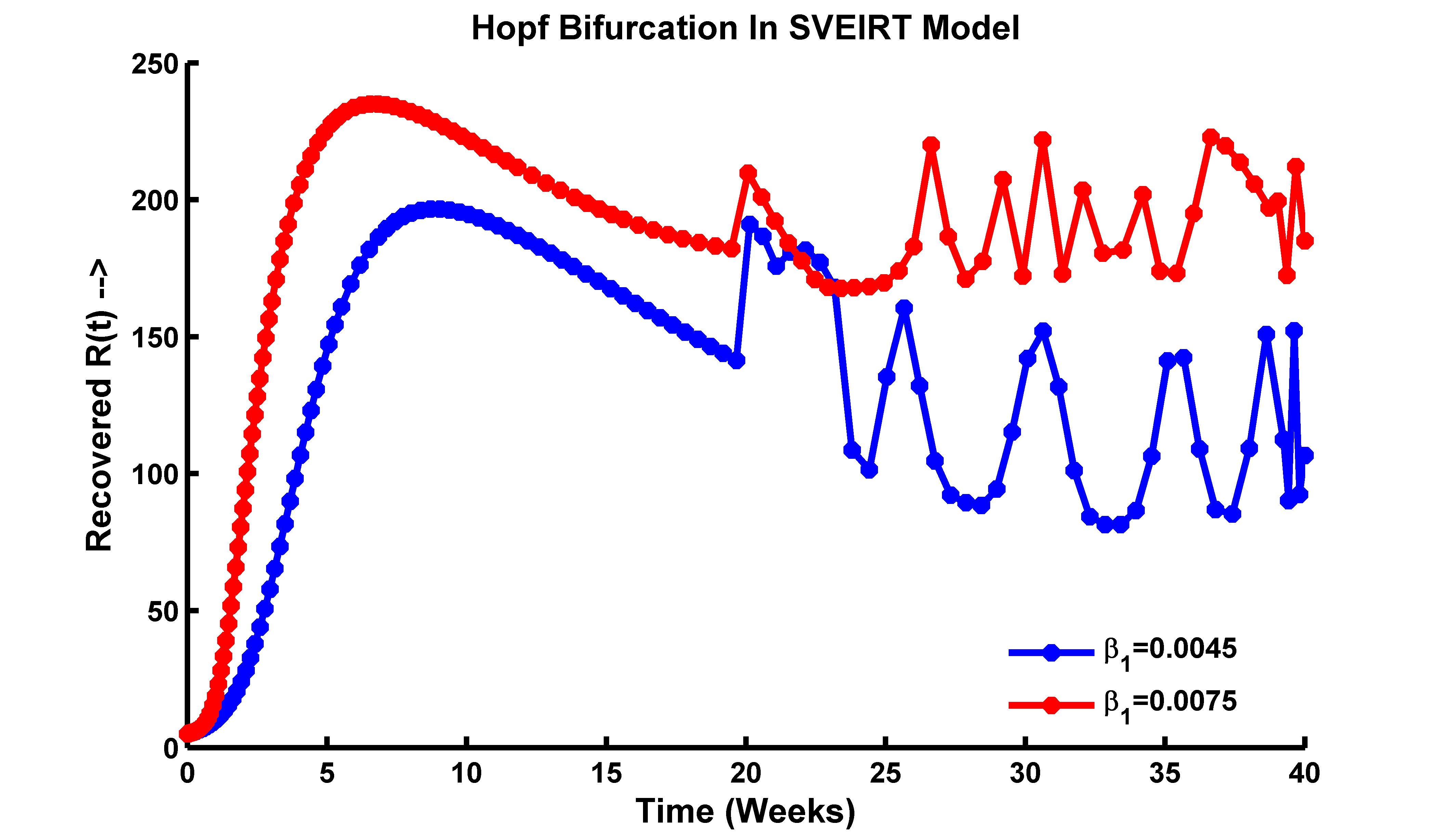}}
	\subfloat[]{\includegraphics[width=2.5 in]{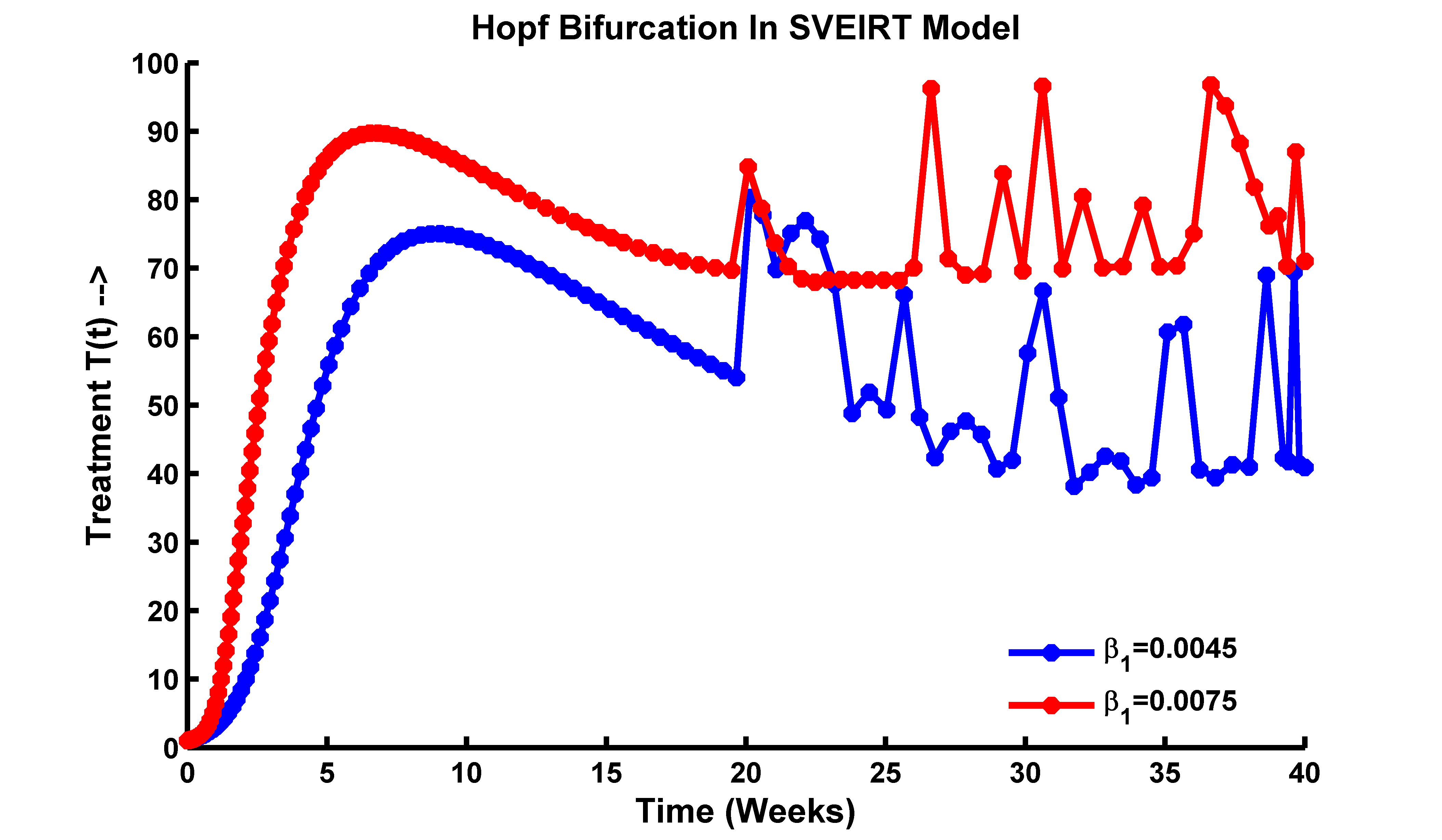}}
	\caption{Hopf bifurcation occurs at (a) 
 $S(t)$ compartment, (b) $V(t)$ compartment, (c) $E(t)$ compartment, (d) $I(t)$ compartment, (e) $R(t)$ compartment, and (f) $T(t)$ compartment when the endemic equilibrium point becomes unstable, using $\beta_1 = 0.0065,\;\beta_2 = 0.0065,$ \;$\mu=2$, and rest of the parameter values are taken from Table \ref{tableparameter}.}\label{Hopf-Bifurcation-analysis}
\end{figure}
\noindent
The precise implications of such oscillations would depend on various factors, including the specific disease being modeled, the population dynamics, and the model parameters. It reflects seasonal patterns, cyclic changes in human behavior or interventions, or other factors that influence the transmission dynamics of the disease.\\
In Figure \ref{Hopf-Bifurcation-analysis}(d), Hopf bifurcation is represented in the infected population in epidemiology, it implies that the number of infected individuals exhibits sustained oscillations after a certain period, such as 20 weeks. This means that the population of infected individuals cyclically fluctuates over time. Some possible explanations for the observed oscillations could include:\\
The oscillations may be driven by seasonal factors that influence the transmission of the disease. For example, certain diseases may exhibit increased transmission during specific seasons, leading to periodic spikes in the number of infections. Oscillations could be the result of human behavior or interventions that vary periodically. For instance, changes in contact patterns, adherence to preventive measures, or the implementation of control measures might vary over time, leading to fluctuations in the number of infections. The dynamics of the Influenza disease and its interaction with the population could create self-regulating feedback mechanisms that give rise to oscillatory behavior. These mechanisms might involve factors such as immunity, host susceptibility, or pathogen evolution.\\
In Figures \ref{Hopf-Bifurcation-analysis}(e) and (f), the representation of a Hopf bifurcation in the recovered and treatment population with oscillation after 20 weeks suggests a specific dynamic behavior in the disease system. A Hopf bifurcation occurs when a system undergoes a qualitative change in its behavior as a parameter (in this case, possibly an infection rate or treatment effectiveness) crosses a critical threshold. In this scenario, Figure \ref{Hopf-Bifurcation-analysis}(e) and (f), describes the fluctuation or oscillation observed in the population of individuals who have recovered from the disease and those undergoing treatment over 20 weeks. The oscillation indicates that the population sizes of these two groups are changing periodically, possibly with a recurring pattern. Some observations are: the Hopf bifurcation and subsequent oscillation suggest that the disease dynamics have transitioned from a stable state to cyclical or periodic behavior. The patterns indicate that factors such as seasonal variations, behavioral changes, or the dynamics of immunity and treatment influence the population sizes of the recovered and treatment groups. Understanding these oscillations and their underlying causes is crucial for developing effective interventions and strategies to control and manage the disease.\\
Once a Hopf bifurcation occurs, the disease-free equilibrium becomes unstable, and a stable limit cycle emerges. The limit cycle represents the oscillatory behavior of the disease dynamics, where the number of individuals in each compartment (susceptible, vaccinated, exposed, infected, recovered, and treated) varies periodically over time. The exact conditions for the emergence of the limit cycle depend on the specific values of the model parameters and cannot be determined analytically. However, numerical simulations can be used to explore the behavior of the system near the Hopf bifurcation and to estimate the parameters that lead to the emergence of the limit cycle.

\section{Local and Global Stabilities of DFE and EE}\label{Section-Local-global-stability of DFE EE}
We know that the stability characteristics of linear ODEs only depend on the system's eigenvalues. Since our suggested model \eqref{new_model} is non-linear, we must use Hartman and Grobman's theorem, and linearization to combine the local behavior of linear and non-linear systems. We now analyze the local stability of the equilibria at points $\mathcal{E}^0$ and $\mathcal{E}^*$ by approximating the non-linear system of differential equations with a linear system. The system is then locally perturbed from equilibrium, and the long-term behavior that results is then examined. This is performed by linearizing the system about each equilibrium, using the Jacobian approach for \eqref{new_model}. Analyzing the linearized system, $$\dot{z}=J(\mathbb{E})z(t).$$
Here, $\mathbb{E}$ represents the equilibrium space. We can look into the stability of each equilibrium point $\mathbb{E}=\mathcal{E}^0$ and $\mathbb{E}=\mathcal{E}^*$. We will obtain that the property depends on a crucial factor, referred to as basic reproduction number $\mathcal{R}_0$ which is estimated above. As a result, the nature of $\mathcal{R}_0$ can be examined to determine whether persistence or extinction of disease occurs as $t\rightarrow\infty$ \cite{Stability Bound-18, Stability Bound-16}.\\
In this section, we shall proceed to analyze the stability properties of the DFE and EE. Firstly, we analyze the following results regarding to local and global stability of DFE. 
\subsection{Local Stability of Disease-Free Equilibrium State $(\mathcal{E}^0)$}
\begin{Th}\label{lsDFE}
	The disease-free equilibrium point is locally asymptotically stable if $\mathcal{R}_0<1$ and unstable if $\mathcal{R}_0>1$.
\end{Th}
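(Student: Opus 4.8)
The plan is to linearize \eqref{new_model} about $\mathcal{E}^0$ (Hartman--Grobman applies since the DFE is hyperbolic when $\mathcal{R}_0\neq 1$) and exploit the block structure of the Jacobian. First I would compute $J(\mathcal{E}^0)$; because $E=I=0$ at the disease-free state, the $R$- and $T$-rows contribute only the diagonal entry $-\mu$ each, and ordering the variables as $(S,V,E,I,R,T)$ makes $J(\mathcal{E}^0)$ block lower triangular with a leading $4\times4$ block in $(S,V,E,I)$ whose $(S,V)$--$(E,I)$ off-diagonal block below the diagonal vanishes. Reading off the spectrum, four eigenvalues are immediate: $-\mu$ (twice, from $R$ and $T$), and $-(\mu+\phi)$ and $-\mu$ from the triangular $(S,V)$ sub-block. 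Hence local stability of $\mathcal{E}^0$ is governed entirely by the $2\times2$ infection block
\[
M=\begin{pmatrix} \beta_1 S^0-(\alpha+\mu) & \beta_2 S^0\\[2pt] \alpha+(1-\varepsilon)\beta_1 V^0 & (1-\varepsilon)\beta_2 V^0-(\mu+\delta+\gamma+\gamma_1)\end{pmatrix},
\]
with $S^0=\mu N/(\mu+\phi)$, $V^0=\phi N/(\mu+\phi)$.

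Next I would apply the Routh--Hurwitz criterion to $M$: $\mathcal{E}^0$ is locally asymptotically stable iff $\operatorname{tr}(M)<0$ and $\det(M)>0$. The key algebraic step is to simplify $\det(M)$: upon expansion the two $\beta_1\beta_2(1-\varepsilon)S^0V^0$ cross-terms cancel, leaving
\[
\det(M)=(\alpha+\mu)(\mu+\delta+\gamma+\gamma_1)-\bigl[S^0\bigl(\alpha\beta_2+\beta_1(\mu+\delta+\gamma+\gamma_1)\bigr)+(1-\varepsilon)\beta_2 V^0(\alpha+\mu)\bigr].
\]
Factoring out $(\alpha+\mu)(\mu+\delta+\gamma+\gamma_1)$ and inserting the explicit $S^0,V^0$ shows the bracketed ratio is exactly the controlled threshold $\mathcal{R}_{0V}$ of \eqref{repro1}; that is, $\det(M)=(\alpha+\mu)(\mu+\delta+\gamma+\gamma_1)\,(1-\mathcal{R}_{0V})$, where $\mathcal{R}_0$ is to be read as the threshold $\mathcal{R}_{0V}$ attached to the DFE $\mathcal{E}^0$. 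Thus $\det(M)>0\iff\mathcal{R}_0<1$. For the trace, from $\det(M)>0$ and $\alpha\beta_2 S^0\ge 0$ one gets $(\mu+\delta+\gamma+\gamma_1)\bigl(\alpha+\mu-\beta_1 S^0\bigr)>(\alpha+\mu)(1-\varepsilon)\beta_2 V^0\ge 0$ and likewise $(\alpha+\mu)\bigl((\mu+\delta+\gamma+\gamma_1)-(1-\varepsilon)\beta_2 V^0\bigr)>0$, whence $\beta_1 S^0<\alpha+\mu$ and $(1-\varepsilon)\beta_2 V^0<\mu+\delta+\gamma+\gamma_1$, so $\operatorname{tr}(M)<0$ automatically. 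This yields local asymptotic stability when $\mathcal{R}_0<1$.

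For the instability statement, when $\mathcal{R}_0>1$ we have $\det(M)=(\alpha+\mu)(\mu+\delta+\gamma+\gamma_1)(1-\mathcal{R}_{0V})<0$, so $M$ has two real eigenvalues of opposite sign; in particular $J(\mathcal{E}^0)$ has an eigenvalue with positive real part and $\mathcal{E}^0$ is unstable. Equivalently, one may invoke the next-generation result of van den Driessche--Watmough: $M=F-V$ with $F\ge 0$ and $V$ a non-singular M-matrix, so $s(M)<0\iff\rho(FV^{-1})=\mathcal{R}_0<1$, which subsumes both directions at once.

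The only non-routine part — and hence the main obstacle — is the bookkeeping inside $\det(M)$: one must verify that the quadratic cross-terms cancel and that the surviving combination matches the closed form of $\mathcal{R}_{0V}$ in \eqref{repro1} after substituting $S^0=\mu N/(\mu+\phi)$ and $V^0=\phi N/(\mu+\phi)$. Everything else (the block-triangular reduction of the $6\times6$ Jacobian, Routh--Hurwitz for a $2\times2$ matrix, and the sign deduction for $\operatorname{tr}(M)$) is mechanical.
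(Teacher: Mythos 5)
Your proof is correct, and while it shares the paper's overall strategy (linearize at $\mathcal{E}^0$, apply Routh--Hurwitz), your execution of the reduction is genuinely different and cleaner. The paper factors the full characteristic polynomial as $(x+\mu)^3(x^3+A_1x^2+A_2x+A_3)=0$, absorbing the known eigenvalue $-(\mu+\phi)$ into a cubic, and then verifies $A_1,A_2,A_3>0$ and $A_1A_2>A_3$ by direct (and rather heavy) algebraic manipulation of each coefficient. You instead permute to expose the block-triangular structure, peel off the four manifestly negative eigenvalues $-\mu,-\mu,-\mu,-(\mu+\phi)$, and reduce everything to the $2\times 2$ infection block $M$, where Routh--Hurwitz is just $\operatorname{tr}(M)<0$, $\det(M)>0$. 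Two features of your argument are worth highlighting: (i) the observation that the $\beta_1\beta_2(1-\varepsilon)S^0V^0$ cross-terms cancel so that $\det(M)=(\alpha+\mu)(\mu+\delta+\gamma+\gamma_1)\left(1-\mathcal{R}_{0V}\right)$ exactly, and (ii) deriving $\operatorname{tr}(M)<0$ as a consequence of $\det(M)>0$ rather than checking it separately --- both steps check out. Your remark that the threshold attached to this DFE is really $\mathcal{R}_{0V}$ of \eqref{repro1} (since the Jacobian carries the $(1-\varepsilon)\beta_2V^0$ terms) is also more careful than the paper, which slides between $\mathcal{R}_0$ and $\mathcal{R}_{0V}$ without comment. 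The closing appeal to van den Driessche--Watmough ($M=F-V$ with $F\ge 0$ and $V$ a non-singular M-matrix, so $s(M)<0\iff\rho(FV^{-1})<1$) gives both directions at once and is consistent with how the paper computes its reproduction number in Appendix \ref{BRN}; either route suffices, and what your version buys is a short, verifiable computation in place of the paper's page of coefficient bookkeeping.
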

\begin{proof}
	The next step is to linearize the system and use the Routh-Hurwitz criterion to identify the circumstances in which the linear system has only negative eigenvalues. Because a point is deemed an \enquote{attractor} if its Jacobian matrix's eigenvalues at that location have negative real portions, even slight disturbances from the equilibrium induce the system to gradually return there. Alternatively, if any eigenvalues represent positive real parts, slight deviations from equilibrium lead to amplification, causing the system to diverge. This results in a ``repeller" point, where the local behavior of the linearized system aligns with the non-linear system, following the Hartman-Grobman theorem. For the system, the Jacobian at the disease-free equilibrium point $\mathcal{E}^0=(S_0, V_0, E_0, I_0, R_0, T_0)$ results in,
	\begin{equation}
		J(\mathcal{E}^0)=
		\begin{pmatrix}
			-(\mu+\phi)& 0 & -\beta_1S_0 & -\beta_2S_0 & 0 & 0 \\
			\phi & -\mu & -\lambda\beta_1 V_0 & -\lambda\beta_2 V_0 & 0 & 0 \\
			0 & 0 & a_{33} & \beta_2S_0 & 0 & 0 \\
			0 & 0 & \alpha+\lambda\beta_1V_0 & a_{44} & 0 & 0\\
			0 & 0 & 0 & \gamma & -\mu & 0 \\
			0 & 0 & 0 & \gamma_1 & 0 & -\mu 
		\end{pmatrix}
	\end{equation}
	Here, we let, $\lambda=(1-\varepsilon),\; a_{33}=\beta_1S_0-(\alpha+\mu)$, and $a_{44}=\lambda\beta_2V_0-(\mu+\delta+\gamma+\gamma_1).$\\
	At the DFE point we have substituted  $\displaystyle S_0=\frac{\mu N}{\mu+\phi}$, $\displaystyle V_0=\frac{\phi N}{\mu+\phi}$, and rest of all state variables $0$.
	The characteristic polynomial of $J(\mathcal{E}^0)$ is got as,
	$$|J(\mathcal{E}^0)-x\mathbb{I}|=0.$$
	Expanding the terms as well as ordering by powers of $x$ we obtain the required characteristic polynomial. This equation ultimately simplifies to,
	\begin{align*}
		&(x+\mu)^3(x^2+(\alpha+\mu)(\gamma+\gamma_1+\delta+\mu-\beta_2V_0\lambda)+x(\alpha-\beta_1S_0+\gamma+\gamma_1+\delta+2\mu-V_0\beta_2\lambda)-\\
		&S_0(\alpha\beta_2+\beta_1(\mu+\delta+\gamma+\gamma_1)))(x+\mu+\phi)=0.\\
  \Rightarrow&(x+\mu)^3\{x^3+x(\alpha+\mu)(\gamma+\gamma_1+\delta+\mu-\beta_2V_0\lambda)+x^2(\alpha+2\mu+\delta+\gamma+\gamma_1-S_0\beta_1-V_0\beta_2\lambda)\\
		&-S_0x(\alpha\beta_2+\beta_1(\mu+\delta+\gamma+\gamma_1))+x^2(\mu+\phi)+(\alpha+\mu)(\mu+\phi)(\mu+\gamma+\gamma_1+\delta-\beta_2V_0\lambda)\\
		&+x(\mu+\phi)(2\mu+\alpha+\delta+\gamma+\gamma_1-\beta_1S_0-V_0\beta_2\lambda)-S_0(\mu+\phi)(\alpha\beta_2+\beta_1(\mu+\delta+\gamma+\gamma_1))\}=0\\
		\Rightarrow& (x+\mu^3)(x^3+A_1x^2+A_2x+A_3)=0.
	\end{align*}
	Where 
	\begin{align*}
		A_1=&(\alpha+2\mu+\delta+\gamma+\gamma_1-S_0\beta_1-V_0\beta_2\lambda)+(\mu+\phi).\\
		A_2=&(\alpha+\mu)(\gamma+\gamma_1+\mu+\delta-\beta_2V_0\lambda)-S_0(\alpha\beta_2+\beta_1(\mu+\delta+\gamma+\gamma_1))+\\
		&(\mu+\phi)(2\mu+\alpha+\delta+\gamma+\gamma_1-\beta_1S_0-V_0\beta_2\lambda).\\
		A_3=&(\alpha+\mu)(\mu+\phi)(\mu+\gamma+\gamma_1+\delta-\beta_2V_0\lambda)-S_0(\mu+\phi)(\alpha\beta_2+\beta_1(\mu+\delta+\gamma+\gamma_1)).
	\end{align*}
	According to the Routh-Hurwitz criterion \cite{  Stability Bound-11,Stability Bound-16}, all roots of the cubic equation (second factor of the polynomial expression), possess negative real part if and only if $A_1, A_2, A_3 > 0$ and $A_1A_2-A_3>0$. It reflects $A_1>0$, and after expressing $A_2$ in terms of $\mathcal{R}_0$ we have obtained,
	\begin{align*}
		A_2=&(\alpha+\mu)(\mu+\phi)(\mu+\delta+\gamma+\gamma_1)(1-\mathcal{R}_0)+(\mu+\phi)(\alpha+\mu)(\mu+\delta+\gamma+\gamma_1)(1-\mathcal{R}_0)+\\
		&2\mu(\mu+\phi).
	\end{align*}
	Thus, for $A_2>0$, it is necessary that $\mathcal{R}_0<1$. Similarly, we write $A_1$ as in terms of $\mathcal{R}_0$ as follows, 
	\begin{align*}
		A_1=(\alpha+\mu)+(\mu+\phi)+(\alpha+\mu)(\mu+\phi)(\mu+\delta+\gamma+\gamma_1)(1-\mathcal{R}_0),
	\end{align*}
	and using the previous condition if $\mathcal{R}_0<1$ we find that $A_1>0$. Moreover, we write $A_3$ as in terms of $\mathcal{R}_0$,
	\begin{align*}
		A_3=(\alpha+\mu)(\mu+\phi)(\mu+\delta+\gamma+\gamma_1)(1-\mathcal{R}_0).
	\end{align*} 
	Hence, we find that $A_3>0$ if $\mathcal{R}_0<1.$
	Clearly, for $\mathcal{R}_0<1$ we obtain that $A_1A_2-A_3>0$, and the Routh-Hurwitz criteria are satisfied.
	The characteristic roots (eigenvalues) of the Jacobian matrix at $J(\mathcal{E}^0)$ are
	$-\mu,-\mu,-\mu$,
	and the rest of them are determined by the nature of the coefficients of the cubic polynomials described above.
	Thus, $\mathcal{R}_0<1$ implies that all the eigenvalues $\lambda_i\;(i=1,2,\cdots,6)$ of the linearized system are negative. Conversely, for the case $\mathcal{R}_0>1$, then there is at least one positive eigenvalue in the linearized system, and the equilibrium becomes unstable \cite{Stability Bound-14}. Hence, the disease free equilibrium state ($\mathcal{E}^0$) is locally asymptotically stable if $\mathcal{R}_0<1$ and unstable if $\mathcal{R}_0>1$.
\end{proof}
The above analysis reveals that when $\mathcal{R}_0=1$ then the above analysis fails. The scenario $\mathcal{R}_0=1$ is equivalent to, $$\beta_2=\beta_2^{[TC]}=\frac{(\mu+\phi)(\alpha+\mu)(\mu+\delta+\gamma+\gamma_1)-\mu N\beta_1(\mu+\delta+\gamma+\gamma_1)}{\phi N\lambda(\alpha+\mu)+\mu N\alpha}.$$
In the previous section, we have shown that model system \eqref{new_model} passes through transcritical bifurcation at disease-free state $(\mathcal{E}^0)$ when model parameter $\beta_2$ undergoes its critical value $\beta_2=\beta_2^{[TC]}$.

\subsection{Global Stability of Disease-Free Equilibrium State $\mathbf{(\mathcal{E}^0)}$}
\begin{Th}\label{gsDFE_1}
\cite{Stability Bound-8, Stability Bound-21, Stability Bound-16}.	When $\mathcal{R}_0<1$, the DFE $\mathcal{E}^0$ is globally asymptotically stable. When $\mathcal{R}_0>1$, $\mathcal{E}^0$ is unstable.
\end{Th}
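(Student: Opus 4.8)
The plan is to combine the next-generation-matrix Lyapunov function with LaSalle's invariance principle. First I would observe that the $R$ and $T$ equations are driven only by $I$ and decay at rate $\mu$, so it suffices to prove that $(S,V,E,I)\to(S_0,V_0,0,0)$; the $R,T$ components then follow. Working inside the attracting positively invariant region of Theorems~\ref{th2}--\ref{th3}, a comparison argument applied to $\dot S\le\Lambda-(\mu+\phi)S$, and then to $\dot V\le\phi S-\mu V$, shows that $\{S\le S_0,\ V\le V_0\}$ intersected with that region is forward invariant and attracts every orbit; hence it is enough to prove global stability relative to this set.

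Next I would write the infected subsystem in the standard form $\dot z=\mathcal{F}(z)-\mathcal{V}(z)$ with $z=(E,I)^{T}$, where at $\mathcal{E}^0$ one has $F=\begin{pmatrix}\beta_1 S_0 & \beta_2 S_0\\ \lambda\beta_1 V_0 & \lambda\beta_2 V_0\end{pmatrix}$ and $V=\begin{pmatrix}\alpha+\mu & 0\\ -\alpha & \mu+\delta+\gamma+\gamma_1\end{pmatrix}$, with $\lambda=1-\varepsilon$, so that $\rho(FV^{-1})=\mathcal{R}_0$ (the two-term formula \eqref{repro1} is exactly this spectral radius, since $FV^{-1}$ has rank one). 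Let $w=(w_1,w_2)\gg0$ be the left Perron eigenvector of $V^{-1}F$, strictly positive because $\beta_1,\beta_2>0$ makes that matrix positive, hence primitive. Take $L=w^{T}V^{-1}z$, a linear function of $E,I$ with positive coefficients. Since $\mathcal{V}(z)=Vz$ exactly here, differentiating along \eqref{new_model} gives $\dot L=w^{T}V^{-1}\mathcal{F}-w^{T}z$; writing $\mathcal{F}=Fz-\widehat{\mathcal F}$ with $\widehat{\mathcal F}$ collecting the nonnegative deficits $(\beta_1E+\beta_2I)(S_0-S)$ and $\lambda(\beta_1E+\beta_2I)(V_0-V)$, and using $w^{T}V^{-1}F=\mathcal{R}_0 w^{T}$, yields
\begin{equation*}
\dot L=(\mathcal{R}_0-1)\,w^{T}z-w^{T}V^{-1}\widehat{\mathcal F}.
\end{equation*}
Because $V^{-1}\ge0$ entrywise (inverse of an $M$-matrix), $w\gg0$, and $\widehat{\mathcal F}\ge0$ on the attracting set, the last term is $\le0$, so $\mathcal{R}_0\le1$ forces $\dot L\le0$.

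I would then close the argument with LaSalle. For $\mathcal{R}_0<1$, $\dot L=0$ forces $w^{T}z=0$, hence $E=I=0$ by strict positivity of $w$; substituting $E=I=0$ into \eqref{new_model} drives $S\to S_0$, $V\to V_0$, $R\to0$, $T\to0$, so the largest invariant set in $\{\dot L=0\}$ is $\{\mathcal{E}^0\}$, and global asymptotic stability follows on the attracting set, hence (by the invariance/attractivity of Theorems~\ref{th2}--\ref{th3}) on $\mathbb{R}_+^6$. For $\mathcal{R}_0>1$, instability of $\mathcal{E}^0$ follows from the sign analysis of the Routh--Hurwitz coefficients $A_1,A_2,A_3$ already performed in the proof of Theorem~\ref{lsDFE} (equivalently, from the standard next-generation instability result), which shows $J(\mathcal{E}^0)$ has an eigenvalue with positive real part.

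I expect the main obstacle to be the bookkeeping in the second step: one must check carefully that the deficit terms gathered in $\widehat{\mathcal F}$ are genuinely nonnegative throughout the attracting set --- this is precisely where the comparison bounds $S\le S_0$, $V\le V_0$ are indispensable, since without them $\dot L\le0$ can fail --- and one must confirm strict positivity of the Perron eigenvector $w$ so that $w^{T}z=0$ really forces $z=0$ in the LaSalle step. A secondary technical point is justifying the reduction to the attracting subset rather than working on all of $\mathbb{R}_+^6$, which follows from Theorems~\ref{th2}--\ref{th3} together with a limiting-system argument; an equivalent route avoiding some of this is the comparison-theorem framework of Castillo-Chavez--Feng--Huang, which leads to the same two computations.
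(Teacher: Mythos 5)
Your proposal is correct, but it follows a genuinely different route from the paper's. The paper proves Theorem~\ref{gsDFE_1} by invoking Lemma~\ref{gsl_DFE} (the Castillo-Chavez--Feng--Huang decomposition): it sets $X_1=(S,R)$, $X_2=(E,I)$, checks that the uninfected subsystem with $X_2=0$ is linear and globally stable, and writes $G(X_1,X_2)=AX_2-\hat G$ with $\hat G=\bigl((\beta_1E+\beta_2I)(N-S),\,0\bigr)^{T}\ge 0$. You instead construct the explicit linear Lyapunov function $L=w^{T}V^{-1}z$ from the left Perron eigenvector of $V^{-1}F$ and close with LaSalle. The two arguments are cousins --- both rest on the decomposition $\mathcal F=Fz-\widehat{\mathcal F}$ with $\widehat{\mathcal F}\ge 0$ --- but yours makes the Lyapunov function and the appearance of $\mathcal R_0-1$ explicit rather than delegating them to a cited lemma, and it buys two real improvements. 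First, the paper's verification of $\hat G\ge 0$ silently drops the vaccination infection route $(1-\varepsilon)(\beta_1E+\beta_2I)V$ from the $I$-equation; once that term is restored, the second component of the deficit is $(1-\varepsilon)(\beta_1E+\beta_2I)(V_0-V)$, whose nonnegativity is exactly what your preliminary comparison bound $V\le V_0$ supplies --- so the step you flag as the main obstacle is genuinely necessary and is handled by you but glossed over in the paper. Second, your identification of $\rho(FV^{-1})$ with the two-term formula \eqref{repro1} makes clear that the operative threshold is the controlled quantity $\mathcal R_{0V}$ rather than \eqref{repro2}, a distinction the theorem statement leaves ambiguous. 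Your instability argument for $\mathcal R_0>1$ via the Routh--Hurwitz coefficients of Theorem~\ref{lsDFE} matches what the paper relies on and is fine.
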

\noindent To prove this theorem we use the following lemma.
\begin{Lemma}\label{gsl_DFE}
	\cite{Stability Bound-8}. (Global stability of DFE) Consider the model written in the form,
	\begin{align*}
		\frac{dX_1}{dt} &=F(X_1,X_2),\\
		\frac{dX_2}{dt} &=G(X_1,X_2),\;\;\;\; G(X_1,0)=0.
	\end{align*}
	Let $X_1 \in \mathbb{R}^m$ represent the number of uninfected individuals. Similarly, $X_2 \in \mathbb{R}^m$ signifies the count of infected people, encompassing latent, infectious, and other categories. $X_0=(X_1^*)$ is the DFE for the system \eqref{new_model}. Furthermore, the following conditions (H1) and (H2) are assumed:
	\begin{itemize}
		\item[(H1)] $ X_1^*$ is globally asymptotically stable when $\displaystyle \frac{dX_1}{dt}=F(X_1,0)$.
		\item[(H2)] For $\displaystyle G(X_1,X_2)=AX_2-\hat{G}(X_1,X_2) ,\; \hat{G}(X_1,X_2)\geq 0$ for $(X_1, X_2) \in \Omega$. Where the Jacobian $\displaystyle A=\frac{\partial G}{\partial X_2}(X_1^*,0)$ is an M-matrix (the off-diagonal elements of A are non-negative), and $\Omega$ is the region where the model makes biological sense. Then the DFE $X_0=(X_1^*,0)$ is globally asymptotically stable provided that $\mathcal{R}_0 <1.$
	\end{itemize}
\end{Lemma}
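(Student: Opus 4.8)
The plan is to establish this Castillo-Chavez--Feng--Huang criterion by decoupling the infected block $X_2$ from the uninfected block $X_1$, proving first that $X_2(t)\to 0$ through a cooperative comparison argument and then that $X_1(t)\to X_1^*$ via the theory of asymptotically autonomous systems. Throughout I would restrict attention to the compact positively invariant region $\Omega$ of Theorem \ref{th2}, so that every orbit is bounded, and invoke Theorem \ref{th1} to guarantee $X_2(t)\geq 0$ for all $t\geq 0$. Using hypothesis (H2) I would write the infected equation as $\dot{X}_2=AX_2-\hat{G}(X_1,X_2)$ with $\hat{G}\geq 0$ on $\Omega$, which gives the componentwise differential inequality $\dot{X}_2\leq AX_2$. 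Since $A$ has non-negative off-diagonal entries, the linear field $Z\mapsto AZ$ is quasimonotone, so the comparison theorem for cooperative systems yields $0\leq X_2(t)\leq Z(t)$ for the solution $Z$ of $\dot{Z}=AZ$ with $Z(0)=X_2(0)$.

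The crucial point is the equivalence between $\mathcal{R}_0<1$ and the Hurwitz stability of $A$. Identifying $A=F-V$ with the next-generation decomposition of Section \ref{Section-Reproduction Number}, the van den Driessche--Watmough characterization gives $s(A)=s(F-V)<0$ if and only if $\rho(FV^{-1})=\mathcal{R}_0<1$, where $s(\cdot)$ denotes the stability modulus. Hence, under $\mathcal{R}_0<1$, all eigenvalues of $A$ have negative real part and $Z(t)\to 0$; together with the squeeze $0\leq X_2(t)\leq Z(t)$ this forces $X_2(t)\to 0$ as $t\to\infty$.

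With $X_2(t)\to 0$, the equation $\dot{X}_1=F(X_1,X_2(t))$ becomes an asymptotically autonomous system whose limit equation is $\dot{X}_1=F(X_1,0)$, for which $X_1^*$ is globally asymptotically stable by (H1). Since all orbits are precompact in $\Omega$, the Markus--Thieme theory of asymptotically autonomous systems then gives $X_1(t)\to X_1^*$, so that $(X_1(t),X_2(t))\to(X_1^*,0)$ and the DFE is globally attracting. Local asymptotic stability follows because the Jacobian at the DFE is block triangular with diagonal blocks $D_{X_1}F(X_1^*,0)$ (stable by (H1)) and $A$ (stable by the previous step), and global attractivity together with local stability gives global asymptotic stability. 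For the case $\mathcal{R}_0>1$ I would note that then $s(A)>0$, so $A$---and hence the DFE Jacobian---has an eigenvalue with positive real part, making $\mathcal{E}^0$ unstable by the Hartman--Grobman theorem.

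The main obstacle I expect is the passage from $X_2(t)\to 0$ to $X_1(t)\to X_1^*$: asymptotically autonomous systems need not inherit the convergence behavior of their limit equation, and counterexamples exist when orbits escape to infinity. The argument therefore rests essentially on the boundedness furnished by the positive invariance of $\Omega$ (Theorem \ref{th2}) to secure precompactness, without which this step would fail. A secondary technical point is the clean justification of the equivalence $s(A)<0\iff\mathcal{R}_0<1$, which I would handle by appealing to the next-generation-matrix machinery already assembled earlier in the paper.
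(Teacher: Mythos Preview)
The paper does not prove this Lemma at all: it is quoted verbatim from \cite{Stability Bound-8} as an established criterion (the Castillo-Chavez--Feng--Huang test), and the only associated work in the paper is the \emph{application} of the Lemma in the proof of Theorem~\ref{gsDFE_1} in Appendix~\ref{allproofs}, where the authors set $X_1=(S,R)$, $X_2=(E,I)$, solve the linear system $\dot X_1=F(X_1,0)$ explicitly to verify (H1), and compute $A$ and $\hat G$ for the specific SVEIRT model~\eqref{new_model} to verify (H2).

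Your proposal, by contrast, is a proof sketch of the \emph{general} Lemma itself: the cooperative comparison $\dot X_2\leq AX_2$, the equivalence $s(A)<0\iff\mathcal R_0<1$ via the van den Driessche--Watmough next-generation machinery, and the Markus--Thieme asymptotically-autonomous passage $X_1(t)\to X_1^*$. That outline is essentially the standard argument behind the cited result and is correct in spirit; it is simply doing something different from---and considerably more than---what the paper does. If your intent was to match the paper, nothing of this kind is expected here: the Lemma is taken as a black box, and the only task is the verification of (H1)--(H2) for the model at hand.
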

\noindent Next, we will proceed the proof for Theorem \ref{gsDFE_1}, and \ref{gs_DFE2} (Appendix \ref{allproofs}).\\

\begin{Th}\label{gs_DFE2}\cite{Stability Bound-17}.
	The disease-free steady state $(\mathcal{E}^0)$ of the model \eqref{new_model} is globally asymptotically stable when $\mathcal{R}_0 <1$ and the disease dies out.
\end{Th}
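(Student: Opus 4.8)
The plan is to establish global asymptotic stability of $\mathcal{E}^0$ on the attracting region $\Omega$ (Theorems~\ref{th2}--\ref{th3}) by building an explicit Lyapunov function on the infected compartments and then invoking LaSalle's invariance principle. Because model~\eqref{new_model} carries a vaccination term ($\phi>0$), the threshold appearing in the statement is the control reproduction number of~\eqref{repro1}, so throughout I read $\mathcal{R}_0=\mathcal{R}_{0V}$. Local asymptotic stability for $\mathcal{R}_0<1$ and instability for $\mathcal{R}_0>1$ are already furnished by Theorem~\ref{lsDFE}; what is left is global attractivity of $\mathcal{E}^0$ when $\mathcal{R}_0<1$.

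First I would record asymptotic bounds on the uninfected compartments. From the first equation $\dot S\le \Lambda-(\mu+\phi)S$, hence $\limsup_{t\to\infty}S(t)\le \Lambda/(\mu+\phi)=S_0$; feeding this into $\dot V\le \phi S-\mu V$ gives $\limsup_{t\to\infty}V(t)\le \phi S_0/\mu=V_0$. Consequently, for each $\epsilon>0$ every trajectory in $\Omega$ eventually enters a set of the form $\Omega_\epsilon=\{S\le S_0+\epsilon,\ V\le V_0+\epsilon\}\cap\Omega$, and it suffices to prove attractivity there. Write $a=\mu+\delta+\gamma+\gamma_1$, $a_1=\alpha+\mu$, $\lambda=1-\varepsilon$. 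Since $\mathcal{R}_{0V}<1$ forces $\lambda\beta_2 V_0/a<1$, the constant $c=\dfrac{\beta_2 S_0}{a-\lambda\beta_2 V_0}$ is well defined and positive, and I propose the Lyapunov candidate $L=E+cI$.

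Next I would differentiate $L$ along~\eqref{new_model}, using $\dot E=(\beta_1E+\beta_2I)S-a_1E$ and $\dot I=\alpha E+\lambda(\beta_1E+\beta_2I)V-aI$, and bounding $S\le S_0+\epsilon$, $V\le V_0+\epsilon$ on $\Omega_\epsilon$. Collecting the coefficients of $E$ and of $I$, a short computation shows that the $I$-coefficient vanishes identically by the choice of $c$, while the $E$-coefficient equals $\dfrac{a_1a\,(\mathcal{R}_{0V}-1)}{a-\lambda\beta_2 V_0}$ in the limit $\epsilon\to0$. Hence $\dot L\le 0$ on $\Omega$ when $\mathcal{R}_0<1$, with equality forcing $E=0$. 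By LaSalle's invariance principle the $\omega$-limit set of any trajectory lies in the largest invariant subset of $\{E=0\}$; but on $E\equiv0$ the relation $\dot E=\beta_2 I S$ forces $I\equiv0$ (as $S>0$), and the residual linear system $\dot S=\Lambda-(\mu+\phi)S$, $\dot V=\phi S-\mu V$, $\dot R=-\mu R$, $\dot T=-\mu T$ has $\mathcal{E}^0$ as its unique globally attracting point. Therefore every solution in $\Omega$ converges to $\mathcal{E}^0$, which with Theorem~\ref{lsDFE} gives global asymptotic stability; instability for $\mathcal{R}_0>1$ is immediate from Theorem~\ref{lsDFE}. In particular $E(t),I(t)\to0$, i.e. the disease dies out.

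I expect the main obstacle to be the passage from the crude invariant region $\Omega$ of Theorems~\ref{th2}--\ref{th3}, on which $S\le S_0$ need not hold, to a region where the Lyapunov inequality is genuinely valid: one must either run the LaSalle argument on $\Omega_\epsilon$ and let $\epsilon\downarrow0$, or appeal to the theory of asymptotically autonomous systems. An alternative that avoids the explicit constant $c$ is a comparison argument: on $\Omega_\epsilon$ the infected subsystem satisfies $(\dot E,\dot I)^{T}\le (F-V)(E,I)^{T}$ componentwise, where $F,V$ are the next-generation matrices at $\mathcal{E}^0$, and since $\rho(FV^{-1})=\mathcal{R}_{0V}<1$ is equivalent to the stability modulus $s(F-V)<0$, the comparison system drives $(E,I)\to0$; the limiting equations for $S,V,R,T$ then yield convergence to $\mathcal{E}^0$. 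Either route reduces the theorem to routine estimates once the asymptotic bounds on $S$ and $V$ are in hand.
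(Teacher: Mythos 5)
Your argument is correct, but it is not the route the paper takes, so a comparison is in order. The paper proves Theorem~\ref{gs_DFE2} with a Goh--Volterra-type functional $U_1=S-\bar S-\bar S\ln(S/\bar S)+E+I$ defined on all of $\Omega$: the logarithmic term in $S$ absorbs the inflow $\Lambda=(\mu+\phi)\bar S$ and produces the negative square $-(\mu+\phi)(S-\bar S)^2/S$, after which the unweighted sum $E+I$ is estimated against $\mathcal{R}_0$. That construction avoids your $\Omega_\epsilon$ detour entirely, but because $E$ and $I$ enter with equal weight it does not diagonalize the infected subsystem: the paper has to discard the vaccination-mediated infection term $\lambda(\beta_1E+\beta_2I)V$ and ends up with side conditions (e.g.\ the sign of $\beta_1\bar S-\mu$ and a factor $\alpha/(\alpha+\mu)$ multiplying $\mathcal{R}_0$) rather than the clean threshold. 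Your weighted combination $L=E+cI$, with $c$ read off from the next-generation structure, is exactly what makes the $I$-coefficient cancel and the $E$-coefficient collapse to a multiple of $\mathcal{R}_{0V}-1$, so it yields the sharp threshold \emph{including} the vaccination route of infection; this is also why your reading $\mathcal{R}_0=\mathcal{R}_{0V}$ of \eqref{repro1} is the more defensible interpretation of the statement for a model with imperfect vaccine efficacy. The price is precisely the obstacle you identify: the invariant set of Theorems~\ref{th2}--\ref{th3} only bounds $N$, not $S$, so the inequalities $S\le S_0$, $V\le V_0$ hold only asymptotically. Your closing comparison-matrix alternative is in the spirit of the paper's \emph{other} global-stability result, Theorem~\ref{gsDFE_1} via Lemma~\ref{gsl_DFE}, rather than of the proof of Theorem~\ref{gs_DFE2} itself.

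One small sharpening: on $\Omega_\epsilon$ the $I$-coefficient of $\dot L$ does not vanish identically for your fixed $c$; it equals $\epsilon\beta_2(1+c\lambda)>0$. The standard repair is to let the weight depend on $\epsilon$, taking $c_\epsilon=\beta_2(S_0+\epsilon)/\bigl(a-\lambda\beta_2(V_0+\epsilon)\bigr)$, which kills the $I$-coefficient on $\Omega_\epsilon$ exactly while the $E$-coefficient tends to $a_1a(\mathcal{R}_{0V}-1)/(a-\lambda\beta_2V_0)<0$ as $\epsilon\downarrow0$, hence stays negative for $\epsilon$ small. With that adjustment your LaSalle step and the limiting system for $(S,V,R,T)$ go through as written.
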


\subsection{Local Stability of Endemic Equilibrium State $\mathbf{(\mathcal{E}^*)}$}
\begin{Th}\label{lsEE}
	The endemic equilibrium point $\mathcal{E}^*(S^*,V^*,E^*,I^*,R^*,T^*)$ is locally asymptotically stable if $\mathcal{R}_0<1$ and unstable if $\mathcal{R}_0>1$.
\end{Th}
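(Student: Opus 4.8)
The plan is to mirror the argument of Theorem~\ref{lsDFE}: linearize \eqref{new_model} about $\mathcal{E}^*$, read off the characteristic polynomial of the Jacobian $J(\mathcal{E}^*)$, and apply the Routh--Hurwitz criterion, using the equilibrium identities of \eqref{equi} to keep the coefficients manageable. First I would compute $J(\mathcal{E}^*)$. Since the $R$- and $T$-equations are linear and forced only by $I$, the last two rows and columns are lower block-triangular and contribute the factor $(\lambda+\mu)^2$; hence two eigenvalues equal $-\mu$ and the analysis reduces to the $4\times4$ block $J_4$ in the variables $(S,V,E,I)$. Writing $\lambda_1^*=\beta_1E^*+\beta_2I^*$ for the force of infection at the endemic state and $a_1=\alpha+\mu$, $a_2=\mu+\phi$, $a_3=\mu+\delta+\gamma+\gamma_1$, $\lambda=1-\varepsilon$, the entries of $J_4$ are affine in $\lambda_1^*,S^*,V^*$ and these constants, and its characteristic polynomial takes the form $P(x)=x^4+c_1x^3+c_2x^2+c_3x+c_4$.

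The crucial simplification is to substitute the equilibrium relations obtained by setting the right-hand sides of \eqref{new_model} to zero: $(\lambda_1^*+a_2)S^*=\Lambda$, $\lambda_1^*S^*=a_1E^*$, $\phi S^*=(\lambda\lambda_1^*+\mu)V^*$ and $\alpha E^*+\lambda\lambda_1^*V^*=a_3I^*$. These let me eliminate the products $\beta_iS^*$ and $\beta_iV^*$ appearing in $J_4$ in favour of the ratios $E^*/S^*$, $I^*/V^*$ and the lumped constants, so that each $c_i$ becomes a polynomial in the parameters and in $\lambda_1^*$ (equivalently, in $I^*$). I would then rewrite $c_1,c_3,c_4$ and the Hurwitz combination using $1-\mathcal{R}_0=\dfrac{a_1a_2a_3-\Lambda(\alpha\beta_2+\beta_1a_3)}{a_1a_2a_3}$, the identity already derived in Section~\ref{Section-Existence of EE} together with \eqref{repro2}, so that the sign of each quantity is tied to the sign of $\mathcal{R}_0-1$. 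The Routh--Hurwitz test for a quartic requires $c_1>0$, $c_3>0$, $c_4>0$ and $c_1c_2c_3-c_3^2-c_1^2c_4>0$; verifying these yields, via the Hartman--Grobman theorem, local asymptotic stability, while a violation (the natural candidate being a sign change in $c_4=\det J_4$) produces an eigenvalue with positive real part and hence instability.

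The main obstacle will be the algebra: the quartic coefficients $c_i$ — and especially the Hurwitz determinant $c_1c_2c_3-c_3^2-c_1^2c_4$ — are bulky rational expressions, and a clean sign determination is only feasible after a disciplined use of the four equilibrium identities above to collapse repeated subexpressions; without that bookkeeping the inequalities are intractable by hand. A second, more delicate point is that $E^*$ (hence $\lambda_1^*$) is itself a root of the quartic $b_4\lambda_1^4+b_3\lambda_1^3+b_2\lambda_1^2+b_1\lambda_1+b_0=0$ of Section~\ref{Section-Existence of EE}, so the statement is only meaningful on the parameter range where $\mathcal{E}^*$ is the unique admissible equilibrium; I would therefore first invoke that existence result to guarantee $\lambda_1^*>0$, then use the defining polynomial to reduce the highest powers of $\lambda_1^*$ in the $c_i$, and only afterwards carry out the sign analysis and conclude.
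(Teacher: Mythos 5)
Your plan coincides with the paper's own proof: it linearizes about $\mathcal{E}^*$, peels off the two eigenvalues $-\mu$ from the $R$ and $T$ rows, forms the quartic characteristic polynomial of the remaining $4\times4$ block, rewrites the coefficients (and the Hurwitz combination $A_1A_2A_3>A_3^2+A_1^2A_4$) in terms of $\mathcal{R}_0-1$ using the equilibrium identities, and concludes via Routh--Hurwitz; your extra remarks about first invoking the existence of $\mathcal{E}^*$ and reducing powers of $\lambda_1^*$ via its defining quartic are sensible refinements the paper omits but do not change the route. Note only that the sign analysis (yours and the paper's) actually establishes local asymptotic stability for $\mathcal{R}_0>1$ and instability for $\mathcal{R}_0<1$, which is the reverse of the direction asserted in the theorem statement as printed.
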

\begin{proof}
	The analysis for $\mathcal{E}^*$ is similar to that of $\mathcal{E}^0$. The Jacobian of the system at the endemic equilibrium point (Linearizing \eqref{new_model} about $\mathcal{E}^*$) $\mathcal{E}^*(S^*, V^*, E^*, I^*, R^*, T^*)$ is as follows,
	\begin{equation}
		J(\mathcal{E}^*)=
		\begin{pmatrix}
			a_{11} & 0 & -\beta_1S^* & -\beta_2S^* & 0 & 0 \\
			\phi & a_{22} & -\lambda\beta_1 V^* & -\lambda\beta_2 V^* & 0 & 0 \\
			(\beta_1 E^*+\beta_2 I^*) & 0 & a_{33} & \beta_2S^* & 0 & 0 \\
			0 &a_{42} & \alpha+\lambda\beta_1V^* & a_{44} & 0 & 0\\
			0 & 0 & 0 & \gamma & -\mu & 0 \\
			0 & 0 & 0 & \gamma_1 & 0 & -\mu 
		\end{pmatrix}
	\end{equation}
	where $a_{11}= -(\beta_1E^*+\beta_2I^*)-(\mu+\phi),\; a_{22}=-\lambda(\beta_1E^*+\beta_2I^*)-\mu$,
	$a_{33}=\beta_1S^*-(\alpha+\mu),\;a_{42}=\lambda(\beta_1E^*+\beta_2I^*)$, and $a_{44}=\lambda\beta_2V^*-(\mu+\delta+\gamma+\gamma_1).$\\
	At the endemic equilibrium state $\mathcal{E}^*$, by calculating the Jacobian matrix $J$, after that by solving $|(J-x\mathbb{I})|=0$, the two eigenvalues are $-\mu,\;-\mu$ which are negative. Rest are the roots of the characteristic polynomial of the matrix,
	\begin{equation}
		J_1(\mathcal{E}^*)=
		\begin{pmatrix}
			a_{11} & 0 & -\beta_1S^* & -\beta_2S^*\\
			\phi & a_{22} & -\lambda\beta_1 V^* & -\lambda\beta_2 V^*\\
			(\beta_1 E^*+\beta_2 I^*) & 0 & \beta_1S^*-(\alpha+\mu) & \beta_2S^* \\
			0 & \lambda(\beta_1 E^*+\beta_2 I^*) & \alpha+\lambda\beta_1V^* & a_{44}
		\end{pmatrix}
	\end{equation}
	where $a_{11}= -(\beta_1E^*+\beta_2I^*)-(\mu+\phi),\; a_{22}=-\lambda(\beta_1E^*+\beta_2I^*)-\mu$,
	and $a_{44}=\lambda\beta_2V^*-(\mu+\delta+\gamma+\gamma_1).$\\
	 The characteristic equation is obtained as follows,
$$x^4+A_1x^3+A_2x^2+A_3x+A_4=0.$$
	Expanding the expressions and arranging them in terms of powers of $x$, these coefficients of the equation ultimately simplify to,\\
	\begin{align*}
		A_1=&\alpha-S^*\beta_1+I^*\beta_2+\delta+\gamma+\gamma_1+I^*\beta_2\lambda-V^*\beta_2\lambda+E^*\beta_1(1+\lambda)+4(\mu+\phi).\\
  A_2=&-S^*\alpha\beta_2+\alpha\gamma-S^*\beta_1\gamma+\alpha\gamma_1-S^*\beta_1\gamma_1+\alpha\delta-S^*\beta_1\delta+E^{*2}\beta_1^2\lambda-V^*\alpha\beta_2\lambda+I^{*2}\beta_2^2\lambda+\\
		&3\alpha\mu-3S^*\beta_1\mu+3\gamma\mu+3\gamma_1\mu+3\delta\mu-3V^*\beta_2\lambda\mu+6\mu^2+\alpha\phi-S^*\beta_1\phi+\gamma\phi+\gamma_1\phi+\delta\phi-\\
		&V^*\beta_2\lambda\phi+3\mu\phi+I^*\beta_2(\alpha+\gamma+\gamma_1+\delta+\alpha\lambda-S^*\beta_1\lambda-V^*\beta_2\lambda+\gamma\lambda+\gamma_1\lambda+\delta\lambda+3\mu+3\lambda\mu+\lambda\phi)\\
		&+E^*\beta_1(\alpha+\gamma+\gamma_1+\delta+\alpha\lambda-S^*\beta_1\lambda+2I^*\beta_2\lambda-V^*\beta_2\lambda+\gamma\lambda+\gamma_1\lambda+\delta\lambda+3\mu+3\lambda\mu+\lambda\phi).\\
  A_3=& -2S^*\alpha\beta_2\mu+2\alpha\gamma\mu-2S^*\beta_1\gamma\mu+2\alpha\gamma_1\mu-2S^*\beta_1\gamma_1\mu+2\alpha\delta\mu-2S^*\beta_1\delta\mu-2V^*\alpha\beta_2\lambda\mu+3\alpha\mu^2-\\
		&3S^*\beta_1\mu^2+3\gamma\mu^2+3\gamma_1\mu^2+3\delta\mu^2-3V^*\beta_2\lambda\mu^2+4\mu^3+E^{*2}\beta_1^2\lambda(\alpha+\gamma+\gamma_1+\delta+2\mu)+\\
		&I^{*2}\beta_2^2\lambda(2\mu+\alpha+\gamma+\gamma_1+\delta)-S^*\alpha\beta_2\phi+\alpha\gamma\phi-S^*\beta_1\gamma\phi+\alpha\gamma_1\phi-S^*\beta_1\gamma_1\phi+\alpha\delta\phi-\\
		&S^*\beta_1\delta\phi-V^*\alpha\beta_2\lambda\phi+2\alpha\mu\phi-2S^*\beta_1\mu\phi+2\gamma\mu\phi+2\gamma_1\mu\phi+2\delta\mu\phi-2V^*\beta_2\lambda\mu\phi+3\mu^2\phi+\\
		&I^*\beta_2(2\gamma\mu+2\gamma_1\mu+2\delta\mu-2V^*\beta_2\lambda\mu+2\gamma\lambda\mu+2\gamma_1\lambda\mu+2\delta\lambda\mu+3\mu^2+3\lambda\mu^2+\gamma\lambda\phi+\gamma_1\lambda\phi+\delta\lambda\phi+\\
		&2\lambda\mu\phi+\alpha(\gamma+\gamma_1+\delta-S^*\beta_2\lambda-V^*\beta_2\lambda+\gamma\lambda+\gamma_1\lambda+\delta\lambda+2\mu+2\lambda\mu+\lambda\phi)-\\
		&S^*\lambda(-\beta_2\phi+\beta_1(\gamma+\gamma_1+\delta+2\mu+\phi)))+E^*\beta_1(2I^*\beta_2\gamma\lambda+2I^*\beta_2\gamma_1\lambda+2I^*\beta_2\delta\lambda+2\gamma\mu+2\gamma_1\mu+\\
		&2\delta\mu+4I^*\beta_2\lambda\mu-2V^*\beta_2\lambda\mu+2\gamma\lambda\mu+2\gamma_1\lambda\mu+2\delta\lambda\mu+3\mu^2+3\lambda\mu^2+\gamma\lambda\phi+\gamma_1\lambda\phi+\delta\lambda\phi+\\
		&2\lambda\mu\phi+\alpha(\gamma+\gamma_1+\delta+2I^*\beta_2\lambda-S^*\beta_2\lambda-V^*\beta_2\lambda+\gamma\lambda+\gamma_1\lambda+\delta\lambda+2\mu+2\lambda\mu+\lambda\phi)-\\
		&S^*\lambda(-\beta_2\phi+\beta_1(2\mu+\gamma+\gamma_1+\delta+\phi))).\\
  A_4=&I^{*2}\alpha\beta_2^2\gamma\lambda+I^{*2}\alpha\beta_2^2\gamma_1\lambda+I^{*2}\alpha\beta_2^2\delta\lambda+I^*\alpha\beta_2\gamma\mu+I^*\alpha\beta_2\gamma_1\mu+I^*\alpha\beta_2\delta\mu+I^{*2}\alpha\beta_2^2\lambda\mu-I^*S^*\alpha\beta_2^2\lambda\mu\\
		&-I^*V^*\alpha\beta_2^2\lambda\mu+I^*\alpha\beta_2\gamma\lambda\mu-I^*S^*\beta_1\beta_2\gamma\lambda\mu+I^{*2}\beta_2^2\gamma\lambda\mu+I^*\alpha\beta_2\gamma_1\lambda\mu-I^*S^*\beta_1\beta_2\gamma_1\lambda\mu+\\
		&I^{*2}\beta_2^2\gamma_1\lambda\mu+I^*\alpha\beta_2\delta\gamma\mu-I^*S^*\beta_1\beta_2\delta\lambda\mu+I^{*2}\beta_2^2\delta\lambda\mu+I^*\alpha\beta_2\mu^2-S^*\alpha\beta_2\mu^2+\alpha\gamma\mu^2-S^*\beta_1\gamma\mu^2-I^*\beta_2\gamma\mu^2\\
		&+\alpha\gamma_1\mu^2-S^*\beta_1\gamma_1\mu^2+I^*\beta_2\gamma_1\mu^2+\alpha\delta\mu^2-S^*\beta_1\delta\mu^2+I^*\beta_2\delta\mu^2+I^*\alpha\beta_2\lambda\mu^2-V^*\alpha\beta_2\lambda\mu^2-\\
		&I^*S^*\beta_1\beta_2\lambda\mu^2+I^{*2}\beta_2^2\lambda\mu^2-I^*V^*\beta_2^2\lambda\mu^2+I^*\beta_2\gamma\lambda\mu^2+I^*\beta_2\gamma_1\lambda\mu^2+I^*\beta_2\delta\lambda\mu^2+\alpha\mu^3-S^*\beta_1\mu^3+\\
		&I^*\beta_2\mu^3+\gamma\mu^3+\gamma_1\mu^3+\delta\mu^3+I^*\beta_2\lambda\mu^3-V^*\beta_2\lambda\mu^3+\mu^4+E^{*2}\beta_1^2\lambda(\alpha+\mu)(\mu+\delta+\gamma+\gamma_1)+\\
		&I^*\alpha\beta_2\gamma\lambda\phi-I^*S^*\beta_1\beta_2\gamma\lambda\phi+I^*\alpha\beta_2\gamma_1\lambda\phi-I^*S^*\beta_1\beta_2\gamma_1\lambda\phi+I^*\alpha\beta_2\delta\lambda\phi-I^*S^*\beta_1\beta_2\delta\lambda\phi-\\
		&S^*\alpha\beta_2\mu\phi+\alpha\gamma\mu\phi-S^*\beta_1\gamma\mu\phi+\alpha\gamma_1\mu\phi-S^*\beta_1\gamma_1\mu\phi+\alpha\delta\mu\phi-S^*\beta_1\delta\mu\phi+I^*\alpha\beta_2\lambda\mu\phi-\\
		&V^*\alpha\beta_2\lambda\mu\phi-I^*S^*\beta_1\beta_2\lambda\mu\phi+I^*S^*\beta_2^2\lambda\mu\phi+I^*\beta_2\gamma\lambda\mu\phi+I^*\beta_2\gamma_1\lambda\mu\phi+I^*\beta_2\delta\lambda\mu\phi+\alpha\mu^2\phi-S^*\beta_1\mu^2\phi+\\
		&\gamma\mu^2\phi+\gamma_1\mu^2\phi+\delta\mu^2\phi+I^*\beta_2\lambda\mu^2\phi-V^*\beta_2\lambda\mu^2\phi+\mu^3\phi+E^*\beta_1(-S^*\beta_1\gamma\lambda\mu-S^*\beta_1\gamma_1\lambda\mu-\\
		&S^*\beta_1\delta\lambda\mu+\gamma\mu^2+\gamma_1\mu^2+\delta\mu^2-S^*\beta_1\lambda\mu^2-V^*\beta_2\lambda\mu^2+\gamma\lambda\mu^2+\gamma_1\lambda\mu^2+\delta\lambda\mu^2+\mu^3+\lambda\mu^3+\\
		&2I^*\beta_2\lambda(\alpha+\mu)(\gamma+\gamma_1+\delta+\mu)-S^*\beta_1\gamma\lambda\phi-S^*\beta_1\gamma_1\lambda\phi-S^*\beta_1\delta\lambda\phi-S^*\beta_1\lambda\mu\phi+S^*\beta_2\lambda\mu\phi+\\
		&\gamma\lambda\mu\phi+\gamma_1\lambda\mu\phi+\delta\lambda\mu\phi+\lambda\mu^2\phi+\alpha(\delta\mu-S^*\beta_2\lambda\mu-V^*\beta_2\lambda\mu+\delta\lambda\mu+\mu^2+\lambda\mu^2+\delta\lambda\phi+\lambda\mu\phi+\\
		&\gamma(\mu+\lambda\mu+\lambda\phi)+\gamma_1(\mu+\lambda\mu+\lambda\phi))).
	\end{align*}
	
	As a consequence of the Routh-Hurwitz criteria \cite{Stability Bound-8, Stability Bound-16}, every root of this bi-quadratic equation possesses a negative real part if and only if $A_1, A_2,  A_3, A_4 >0,\; A_1A_2>A_3$ and $A_1A_2A_3>A_3^2+A_1^2A_4$.
	For $\mathcal{R}_0>1$,
	\begin{align*}
		S_0\alpha\beta_2+S_0\beta_1(\mu+\delta+\gamma+\gamma_1)+V_0\beta_2\lambda(\alpha+\mu)>(\alpha+\mu)(\mu+\delta+\gamma+\gamma_1).
	\end{align*}
	Now, rewriting $A_1$ in terms of $\mathcal{R}_0$,
	\begin{align*}
		& I^*\beta_2\lambda+E^*\beta_1(1+\lambda)+4(\mu+\phi)-\mu+(\alpha+\mu)(\mu+\delta+\gamma+\gamma_1)(\mathcal{R}_0-1).
	\end{align*}
	Hence, the nature $\mathcal{R}_0>1$ is mandatory, in order to satisfy $A_1>0$ and $A_1^2>0.$
	Again, from the expression of $A_2$, 
	\begin{align*}
		&(\alpha+\mu)(\mu+\delta+\gamma+\gamma_1)(\mathcal{R}_0-1)+\lambda(\gamma+\gamma_1+3\mu+\phi+\delta+\alpha)+2\mu.
	\end{align*}
Further, from the coefficient of $E^*\beta_1$ in $A_2$, 
	\begin{align*}
		&\alpha+\gamma+\gamma_1+\delta+\alpha\lambda-S^*\beta_1\lambda+2I^*\beta_2\lambda-V^*\beta_2\lambda+\gamma\lambda+\gamma_1\lambda+\delta\lambda+3\mu+3\lambda\mu+\lambda\phi\\
		=&\lambda(\gamma+\gamma_1+3\mu+\phi+\delta+\alpha)+2\mu+(\alpha+\mu)(\mu+\delta+\gamma+\gamma_1)(\mathcal{R}_0-1).
	\end{align*}
	Therefore, from the above expression we find that it requires $\mathcal{R}_0>1$ to satisfy $A_2>0.$
	Now, from the expression of $A_3$, the coefficient of $I^*\beta_2$ can be expressed in terms of $\mathcal{R}_0$ as follows,
	\begin{align*}
		&\alpha((\alpha+\mu)(\mu+\delta+\gamma+\gamma_1)(\mathcal{R}_0-1)+\mu+\lambda(\gamma+\gamma_1+2\mu+\delta+\phi))-\lambda(\alpha+\mu)(\mu+\delta+\gamma+\gamma_1)\\
		&(\mathcal{R}_0-1)+\lambda\mu\beta_1.
	\end{align*}
	Thus, we find that it is necessary $\mathcal{R}_0>1$ in order to satisfy $A_3>0$ and $A_3^2>0$. Hence, it shows that $A_1A_2>A_3$ for $\mathcal{R}_0>1$. Finally, from the expression of $A_4$ taking a portion as coefficient of $E^*\beta_1$,
	\begin{align*}
		&\alpha(\delta\mu-S^*\beta_2\lambda\mu-V^*\beta_2\lambda\mu+\delta\lambda\mu+\mu^2+\lambda\mu^2+\delta\lambda\phi+\lambda\mu\phi+\gamma(\mu+\lambda\mu+\lambda\phi)+\gamma_1(\mu+\lambda\mu+\lambda\phi))\\
		=&\alpha(\lambda\mu(\alpha+\mu)(\mu+\delta+\gamma+\gamma_1)(\mathcal{R}_0-1)+\mu^2+\delta\lambda\phi+(\gamma+\gamma_1)(\mu+\lambda\mu+\lambda\phi)).
	\end{align*}
	Clearly, $A_4>0$ for $\mathcal{R}_0>1.$
	Therefore, from the Routh-Hurwitz Criterion \cite{Stability Bound-17, Stability Bound-21} for the Jacobian matrix with characteristic polynomial of degree $n=4$, combining all the above expressions $A_1A_2A_3>A_3^2+A_1^2A_4$ is satisfied for $\mathcal{R}_0>1$.\\
	Therefore, all the roots of the characteristic equation will possess a negative real part. Consequently, when $\mathcal{R}_0 >1$, the endemic equilibrium point will be locally asymptotically stable. In contrast, if $\mathcal{R}_0<1$, suggests that the infected state is unstable as the Jacobian includes at least one positive eigenvalue. This establishes the validity of the conclusion.
\end{proof}

\begin{Th} \cite{Stability Bound-11}
	The endemic equilibrium state $(\mathcal{E}^*)$ of the system \eqref{equi} is globally asymptotically stable for $\mathcal{R}_0 >1$ and the disease persists.
\end{Th}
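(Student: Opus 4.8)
The plan is to construct a Goh--Volterra (Korobeinikov-type) Lyapunov functional on the positively invariant compact region $\Omega$ of Theorem~\ref{th2} and then apply LaSalle's invariance principle, supplemented by the persistence result of Section~\ref{Subsection-Stability and Persistence}. Since the $R$ and $T$ equations are decoupled from the rest (neither $R$ nor $T$ appears on the right-hand side of the $S,V,E,I$ equations), I would first prove global attractivity of the reduced four-dimensional system $(S,V,E,I)$ towards $(S^*,V^*,E^*,I^*)$, and then deduce $R(t)\to R^*$, $T(t)\to T^*$ by an asymptotically-autonomous argument applied to the linear scalar equations $\dot R=\gamma I-\mu R$ and $\dot T=\gamma_1 I-\mu T$ once $I(t)\to I^*$.

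For the reduced system I would take
\[
L \;=\; c_1\Bigl(S-S^*-S^*\ln\tfrac{S}{S^*}\Bigr)
      +c_2\Bigl(V-V^*-V^*\ln\tfrac{V}{V^*}\Bigr)
      +c_3\Bigl(E-E^*-E^*\ln\tfrac{E}{E^*}\Bigr)
      +c_4\Bigl(I-I^*-I^*\ln\tfrac{I}{I^*}\Bigr),
\]
with positive weights $c_1,c_2,c_3,c_4$ fixed later. Each bracket is nonnegative on the positive orthant and vanishes only at $\mathcal{E}^*$, so $L$ is a valid candidate. Differentiating along \eqref{new_model} and eliminating the parameters $\Lambda,\phi,\alpha,\dots$ via the equilibrium identities from \eqref{equi} --- namely $\Lambda=(\beta_1E^*+\beta_2I^*)S^*+(\mu+\phi)S^*$, $\phi S^*=\bigl((1-\varepsilon)(\beta_1E^*+\beta_2I^*)+\mu\bigr)V^*$, $(\beta_1E^*+\beta_2I^*)S^*=(\alpha+\mu)E^*$, and $\alpha E^*+(1-\varepsilon)(\beta_1E^*+\beta_2I^*)V^*=(\mu+\delta+\gamma+\gamma_1)I^*$ --- turns $\dot L$ into a sum of terms carrying factors of the form $\bigl(2-\tfrac{S^*}{S}-\tfrac{S}{S^*}\bigr)$ and of longer ``ratio cycles'' such as $3-\tfrac{S^*}{S}-\tfrac{S E^*}{S^*E}-\tfrac{E I^*}{E^* I}-\cdots$. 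The weights $c_i$ are then chosen (essentially $c_1=c_3$, with $c_2,c_4$ matched through the $V\!\to\!I$ infection flux so the vaccinated-incidence terms pair with the infectious-incidence terms) so that all linear leftovers cancel and $\dot L$ is expressed purely as a combination, with nonnegative coefficients, of brackets of the form $n-\sum(\text{ratios})$ whose geometric mean equals $1$; by the AM--GM inequality each such bracket is $\le 0$, whence $\dot L\le 0$ on $\Omega$, with equality iff $S=S^*,V=V^*,E=E^*,I=I^*$.

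Finally, the largest invariant subset of $\{\dot L=0\}$ is the singleton $\{\mathcal{E}^*\}$, so by LaSalle's invariance principle every trajectory starting in $\Omega$ --- hence, by Theorem~\ref{th2}, every trajectory with nonnegative initial data --- converges to $\mathcal{E}^*$ when $\mathcal{R}_0>1$; uniform persistence of $E$ and $I$ from Section~\ref{Subsection-Stability and Persistence} then shows the infection cannot vanish, i.e.\ the disease persists. I expect the main obstacle to be precisely the weight-selection/grouping step: because two compartments ($E$ and $I$) drive transmission and the vaccinated class $V$ re-enters the infectious chain with the factor $(1-\varepsilon)$, the four incidence terms $\beta_1ES,\ \beta_2IS,\ (1-\varepsilon)\beta_1EV,\ (1-\varepsilon)\beta_2IV$ generate a four-ratio cycle rather than the usual clean two- or three-cycle, and forcing every residual bracket into the AM--GM pattern may require an extra structural constraint on the parameters beyond $\mathcal{R}_0>1$. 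Isolating that precise condition is the crux; should the global Lyapunov construction fail to close, the fallback is a geometric (Li--Muldowney second additive compound) analysis of the four-dimensional reduced system, which only needs the boundedness and uniform persistence already available.
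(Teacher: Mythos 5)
Your skeleton (Goh--Volterra functional, substitution of the equilibrium identities from \eqref{equi}, AM--GM grouping of the ratio cycles, LaSalle, plus the separate uniform-persistence argument) is exactly the route the paper itself uses --- but not for this theorem. For this particular statement the paper supplies no proof at all: it writes one line deferring to the cited reference. The closest in-paper argument is the proof of the companion result, Theorem~\ref{gs_EE2}, which uses the functional $V=\bigl(S-S^{*}-S^{*}\ln\tfrac{S}{S^{*}}\bigr)+\bigl(E-E^{*}-E^{*}\ln\tfrac{E}{E^{*}}\bigr)+K\bigl(I-I^{*}-I^{*}\ln\tfrac{I}{I^{*}}\bigr)$ with a single tuned weight $K$ and \emph{no} $V$-compartment term. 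Crucially, that computation is only carried out after the authors announce that they ``consider the simplified case of our model by assuming $\lambda=(1-\varepsilon)=0$,'' i.e.\ a perfectly effective vaccine. Under that assumption the vaccinated class drops out of the infection chain, the incidence collapses to $\beta_1ES+\beta_2IS$, and the standard three-compartment SEIR cancellation closes. In other words, the four-ratio cycle you single out as the crux --- generated by $(1-\varepsilon)(\beta_1E+\beta_2I)V$ feeding back into $I$ --- is precisely the obstruction the paper never overcomes; it is removed by fiat rather than by a choice of your weight $c_2$.

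So your diagnosis is accurate, and your plan is more ambitious than what the paper actually does. Two caveats. First, the weight-selection step you leave open is a genuine gap, not a routine computation: your suspicion that an extra structural condition beyond $\mathcal{R}_0>1$ may be required is corroborated by the paper's own Section~\ref{Subsection-Forward-Backward_Bif}, where a closely related version of the model is shown to undergo backward bifurcation (coexisting endemic equilibria), a phenomenon incompatible with unconditional global asymptotic stability of a unique $\mathcal{E}^*$ for all $\mathcal{R}_0>1$; any complete argument, including your Li--Muldowney fallback, must either restrict parameters (as the paper effectively does by taking $\varepsilon=1$) or explicitly exclude that regime. Second, your reduction to the $(S,V,E,I)$ subsystem and the asymptotically-autonomous handling of $R$ and $T$ is sound and matches how the paper treats the two trailing $-\mu$ eigenvalues throughout; that part needs no change.
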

\begin{proof}
    The proof is excluded as a comparable result can be found in \cite{Stability Bound-11}.
\end{proof}

\begin{Th}\label{gs_EE2}
	The endemic equilibrium state $(\mathcal{E}^*)$ of the model \eqref{new_model} is globally asymptotically stable provided that $\mathcal{R}_0 >1$ \cite{Stability Bound-18, Stability Bound-16}.
\end{Th}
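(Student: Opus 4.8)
The plan is to build a Goh--Volterra (logarithmic) Lyapunov function on the reduced core system and close the argument with LaSalle's invariance principle, the feasible region $\Omega$ of Theorems~\ref{th2}--\ref{th3} supplying the required compact, positively invariant domain on which the limit sets of all trajectories lie.

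\emph{Step 1 (decoupling $R$ and $T$).} The last two equations of \eqref{new_model} do not feed back into $S,V,E,I$, so the system is triangular and it suffices to prove that $(S^*,V^*,E^*,I^*)$ is globally asymptotically stable for the core subsystem
\begin{align*}
	\dot S &= \Lambda-(\beta_1E+\beta_2I)S-(\mu+\phi)S,\\
	\dot V &= \phi S-\lambda(\beta_1E+\beta_2I)V-\mu V,\\
	\dot E &= (\beta_1E+\beta_2I)S-(\alpha+\mu)E,\\
	\dot I &= \alpha E+\lambda(\beta_1E+\beta_2I)V-(\mu+\delta+\gamma+\gamma_1)I,
\end{align*}
with $\lambda=1-\varepsilon$. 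Once $I(t)\to I^*$, the linear equations $\dot R=\gamma I-\mu R$, $\dot T=\gamma_1 I-\mu T$ force $R(t)\to\gamma I^*/\mu=R^*$ and $T(t)\to\gamma_1 I^*/\mu=T^*$ by the standard asymptotically autonomous / limiting-equation argument, so the full claim reduces to the core one.

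\emph{Step 2 (candidate function and equilibrium identities).} Take
\begin{equation*}
	L=c_1 S^* g\!\left(\tfrac{S}{S^*}\right)+c_2 V^* g\!\left(\tfrac{V}{V^*}\right)+c_3 E^* g\!\left(\tfrac{E}{E^*}\right)+c_4 I^* g\!\left(\tfrac{I}{I^*}\right),\qquad g(u)=u-1-\ln u\ge 0,
\end{equation*}
with constants $c_i>0$ to be fixed, so that $L>0$ on $\Omega\setminus\{\mathcal{E}^*\}$ and $L(\mathcal{E}^*)=0$. At $\mathcal{E}^*$ one has $\Lambda=(\beta_1E^*+\beta_2I^*)S^*+(\mu+\phi)S^*$, $\phi S^*=\lambda(\beta_1E^*+\beta_2I^*)V^*+\mu V^*$, $(\alpha+\mu)E^*=(\beta_1E^*+\beta_2I^*)S^*$, and $(\mu+\delta+\gamma+\gamma_1)I^*=\alpha E^*+\lambda(\beta_1E^*+\beta_2I^*)V^*$. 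Substituting these into $\dot L=\sum_i c_i(1-x_i^*/x_i)\dot x_i$ eliminates all bare constants and leaves (i) manifestly nonpositive quadratic terms $-c_1(\mu+\phi)(S-S^*)^2/S$ and $-c_2\mu(V-V^*)^2/V$, and (ii) a collection of ``exchange'' terms that are ratios of the state variables weighted by the force-of-infection components $\beta_1E,\beta_2I$.

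\emph{Step 3 (coefficient matching, AM--GM, LaSalle).} The crux is to pick $c_1,c_2,c_3,c_4$ so the exchange terms regroup into blocks of the form $\kappa\big(m-\sum(\text{cyclic ratios})\big)$, each $\le 0$ by the arithmetic--geometric mean inequality, with simultaneous equality only when $S=S^*,V=V^*,E=E^*,I=I^*$; the natural starting point is $c_1=c_3$ to annihilate the $S$--$E$ cross terms, with $c_2,c_4$ then fixed from the $V$--$I$ and $\alpha E\to I$ balances. This yields $\dot L\le 0$ throughout $\Omega$, whereupon LaSalle's invariance principle identifies $\{\mathcal{E}^*\}$ as the largest invariant subset of $\{\dot L=0\}$, giving global asymptotic stability of $\mathcal{E}^*$ for the core system, hence for \eqref{new_model}, precisely in the regime $\mathcal{R}_0>1$ where $\mathcal{E}^*$ exists in $\Omega$ (Section~\ref{Section-Existence of EE}). \emph{Main obstacle:} unlike the textbook bilinear incidence $\beta IS$, here the exposed class drives its own force of infection through the term $\beta_1ES$ in $\dot E$, so the contribution $\beta_1S(E-E^*)$ (with its partner in $\dot S$) is not of pure ratio type and does not cancel for every choice of the $c_i$; together with the split force of infection $\beta_1E+\beta_2I$ and the second transmission route $V\to I$, this multiplies the number of ratio terms that must be balanced. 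Should no single $(c_1,c_2,c_3,c_4)$ make all residual terms AM--GM-negative unconditionally, the fallback is either to treat $V,R,T$ as slaved variables and run the Lyapunov argument on the $(S,E,I)$ kernel under an explicit structural/smallness condition on $\beta_1$, or to invoke the geometric (Li--Muldowney) criterion via the second additive compound matrix on the core, using that $\Omega$ is simply connected and the system is uniformly persistent for $\mathcal{R}_0>1$ (persistence being established in Section~\ref{Subsection-Stability and Persistence}), which replaces the coefficient algebra by a Bendixson--Dulac-type estimate.
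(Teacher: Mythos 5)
Your strategy --- a Goh--Volterra functional $\sum_i c_i x_i^* g(x_i/x_i^*)$ with $g(u)=u-1-\ln u$, substitution of the equilibrium identities, regrouping into AM--GM blocks, and LaSalle --- is exactly the route the paper takes, so in spirit the two arguments coincide. The substantive difference is scope: you attempt the full four-dimensional core $(S,V,E,I)$ with general $\lambda=1-\varepsilon$ and undetermined weights $c_i$, and you correctly identify the two genuine obstacles, namely the self-driving incidence $\beta_1 ES$ in $\dot E$ and the second transmission route $\lambda(\beta_1E+\beta_2I)V$ into $I$. The paper does not overcome either obstacle in that generality; it invokes precisely your first fallback, explicitly assuming $\lambda=1-\varepsilon=0$ (``vaccination $100\%$ effective'') so that $V$ drops out of the infection dynamics, and then works with a three-term function in $(S,E,I)$ and a single weight $K=(\beta_1+\beta_2)S^*I^*/(\alpha E^*)$. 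Even in that reduced setting the $\beta_1 ES$ term is where the difficulty concentrates: the paper's grouped expression for $\dot V$ contains summands such as $E/E^{*2}$, $1/E^*$ and $I^{*2}/(E^{*2}I)$ that are not of cyclic-ratio type, and the asserted nonpositivity of the coefficients of $\beta_1E^*S^*$ and $\beta_2I^*S^*$ is claimed via AM--GM on collections of mixed sign rather than on clean products whose geometric mean is one. So your Step 3 is left open in your write-up, but you should be aware that the paper's own completion of that step holds only for the degenerate case $\lambda=0$ and is itself not watertight; if you want a fully rigorous statement for the actual model, your second fallback (the Li--Muldowney second-compound criterion on the uniformly persistent core, which the paper's persistence theorem supports) is the more promising path, or else the theorem should be stated, as the paper implicitly does, only for perfectly effective vaccination.
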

\begin{proof}
	Considering the model \eqref{new_model} and $\mathcal{R}_0 >1$, so that the associated unique endemic equilibrium $\mathcal{E}^*$ of the model exists (which proved already). To examine the global stability of $\mathcal{E}^*$ we have considered the following non-linear Lyapunov function of the Goh-Voltera type,
	\begin{align*}
		V=\left(S-S^{**}-S^{**}\ln\frac{S}{S^{**}}\right)+\left(E-E^{**}-E^{**}\ln\frac{E}{E^{**}}\right)+K\left(I-I^{**}-I^{**}\ln\frac{I}{I^{**}}\right)\\
		\text{where }\; K=\frac{(\beta_1+\beta_2)S^*I^*}{\alpha E^*}.
	\end{align*}
	Notice that, $V$ is non-negative, and becomes identically zero if and only if it is evaluated at the non-negative endemic equilibrium state $\mathcal{E}^*$. By performing the derivative of V along the solution curves of (\ref{new_model}) yields,
	\begin{align*}
		V'=S'\left(1-\frac{S^*}{S}\right)+E'\left(1-\frac{E^*}{E}\right)+KI'\left(1-\frac{I^*}{I}\right).
	\end{align*}
	Here, prime $(')$ denotes the derivatives. Substituting the derivatives $(S', E', I')$ into this equation from (\ref{new_model}) we have,
	\begin{flalign*}
		V'=& [\Lambda-(\beta_1 E+\beta_2 I)S-(\mu +\phi)S]\left(1-\frac{S^*}{S}\right)+[(\beta_1 E+\beta_2 I)S-(\alpha+\mu)E]\left(1-\frac{E^*}{E}\right)+&\\
		&K[\alpha E-(\mu+\delta+\gamma+\gamma_1)I]\left(1-\frac{I^*}{I}\right).
	\end{flalign*}
	Here, we have considered the simplified case of our model by assuming $\lambda=(1-\varepsilon)=0$ i.e., the vaccination rate is $100$ effective.\\
	At the steady state from equation \eqref{equi} we have,
	\begin{align*}
		\Lambda = (\beta_1 E^*+\beta_2 I^*)S^*+(\mu+\phi)S^*,\;
		\text{and}\\ (\beta_1 E^*+\beta_2 I^*)S^*=(\alpha+\mu)E^*.
	\end{align*}
	Substituting these relations in the expression of $V'$ we have obtained that,
	\begin{flalign*}
		V'=&[(\beta_1E^*+\beta_2I^*)S^*+(\mu+\phi)S^*-(\beta_1 E+\beta_2 I)S-(\mu+\phi)S -(\beta_1 E^*+\beta_2 I^*)\frac{S^{*2}}{S}- &\\
		&\frac{(\mu+\phi)S^{*2}}{S}+(\beta_1 E+\beta_2 I)S^*+(\mu+\phi)S^*]+[\beta_1S(E-E^*)+\beta_2IS\left(1-\frac{E^*}{E}\right)-&\\
		&A_1(E-E^*)]+\left[K\alpha E\left(1-\frac{I^*}{I}\right)-KA_2(I-I^*)\right].
	\end{flalign*}
	Here, $(\alpha+\mu)=A_1$, and $(\mu+\delta+\gamma+\gamma_1)=A_2$.
	By collecting all the infected classes without the star $(*)$ and equating to zero we obtain,
	\begin{align*}
		-(\beta_1 E+\beta_2 I)S+(\beta_1 E+\beta_2 I)S^* +\beta_1SE+\beta_2IS-A_1E+K\alpha E-KA_2I=0.
	\end{align*}
	A little perturbation of steady state results in,
	\begin{align}\label{eqn4.6}
		K=\frac{S^*(\beta_1+\beta_2)}{A_2},\; A_1=\frac{(\beta_1+\beta_2)S^*I^*}{E^*},\; \alpha=\frac{A_2I^*}{E^*}.
	\end{align}
	Substituting the expression from \eqref{eqn4.6} into the expression of $V'$,
	\begin{flalign*}
		V'=&\left[(\beta_1 E^*+\beta_2 I^*)S^*+2(\mu+\phi)S^*-(\beta_1 E+\beta_2 I)S\right]+&\\
		&\left[-(\mu+\phi)S-(\beta_1 E^*+\beta_2 I^*)\frac{S^{*2}}{S}-\frac{(\mu+\phi)S^{*2}}{S}+(\beta_1 E+\beta_2 I)S^*\right]+&\\
		&\left[\beta_1S(E-E^*)+\beta_2IS\left(1-\frac{E^*}{E}\right)-\frac{(\beta_1+\beta_2)S^*I^*}{E^*}(E-E^*)\right]+ &\\
		&\left[\frac{S^*(\beta_1+\beta_2)I^*}{E^*}\left(1-\frac{I^*}{I}\right)-S^*(\beta_1+\beta_2)(I-I^*)\right].\\
  =&(\mu+\phi)S^*\left[2-\frac{S}{S^*}-\frac{S^*}{S}\right]+&\\
		&\beta_1E^*S^*\left[1-\frac{S^*}{S}+\frac{E}{E^*}-\frac{S}{S^*}-\frac{E}{E^{*2}}+\frac{1}{E^*}+\frac{I^*}{E^{*2}}-\frac{I^{*2}}{E^{*2}I}-\frac{I}{E^*}+\frac{I^*}{E^*}\right]+&\\
		&\beta_2I^*S^*\left[3-\frac{S^*}{S}+\frac{I}{I^*}-\frac{I}{I^*}\frac{S}{S^*}\frac{E^*}{E}-\frac{E}{E^*}+\frac{1}{E^*}-\frac{I^*}{E^*I}-\frac{I}{I^*}\right].
	\end{flalign*}
	Hence, we have got the new form of $V'$. Now, the coefficient of $\beta_1E^*S^*$ gives,
	\begin{align*}
		\left(1-\frac{S^*}{S}-\frac{S}{S^*}\right)+\left(\frac{E}{E^*}+\frac{1}{E^*}-\frac{E}{E^*}\right)+\frac{I^*}{E^{*2}}\left(1-\frac{I^*}{I}\right)+\left(\frac{I^*}{E^*}-\frac{I}{E^*}\right)\leq 0.
	\end{align*}
	Finally, as the arithmetic mean exceeds the geometric mean. Hence, the following inequality from immediate expression of $V^*$ (for $n=3$) results into,
	\begin{align*}
		\left(2-\frac{S}{S*}-\frac{S^*}{S}\right)\leq 0, \;\text{and}\;\left(3-\frac{S^*}{S}+\frac{I}{I^*}-\frac{I}{I^*}\frac{S}{S^*}\frac{E^*}{E}-\frac{E}{E^*}+\frac{1}{E^*}-\frac{I^*}{E^*I}-\frac{I}{I^*}\right)\leq 0,
	\end{align*}
	which shows that each of the resulting terms is non-positive. Thus, with  these conditions we conclude that $V'(t)\leq 0$ for all positive values of $\{S,E,I\}$ i.e., if $S=S^*,\; E=E^*,\; I=I^*$, and also for $\mathcal{R}_0>1$. Moreover, the strict equality $V'=0$ holds only for $S=S^*,\;E=E^*,\;\text{and}\;I=I^*$.\\
	Hence, the maximum invariance set $\{(S,E,I)\in \Omega : V'(t)=0\}$ is the singleton $\{\mathcal{E}^*\}$, where $\{\mathcal{E}^*\}$ is the endemic equilibrium point.\\
	Thus, as a consequence of LaSalle's Invariance principle, the endemic equilibrium point $(\mathcal{E}^*)$ is globally asymptotically stable in the set $\Omega$ when $\mathcal{R}_0>1$. In other words, every solution to the equations of the model \eqref{new_model} converges to the corresponding unique endemic equilibria $(\mathcal{E}^*)$, of the model as $t\rightarrow \infty$ for $\mathcal{R}_0>1$.
\end{proof}
\subsection{Stability and Persistence of the System}\label{Subsection-Stability and Persistence}
\begin{Th} \cite{Optimal Control-1}.
	The DFE $\mathcal{E}^0$ of the model \eqref{new_model} is a global attractor.
\end{Th}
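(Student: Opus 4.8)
The plan is to show that, under the standing hypothesis $\mathcal{R}_0<1$ (consistent with Theorems \ref{gsDFE_1} and \ref{gs_DFE2}), every solution of \eqref{new_model} with nonnegative initial data is attracted to $\mathcal{E}^0$. First I would reduce the problem to the compact set $\Omega$: by Theorems \ref{th2} and \ref{th3} the region $\Omega$ is positively invariant and attracting, so it suffices to prove attractivity for trajectories eventually confined to $\Omega$, where every component is bounded by $\Lambda/\mu$. Because the incidence term $(\beta_1E+\beta_2I)S$ is nonnegative by positivity (Theorem \ref{th1}), one has $\dot S\le\Lambda-(\mu+\phi)S$, hence $\limsup_{t\to\infty}S(t)\le\Lambda/(\mu+\phi)=S_0$; similarly $\dot V\le\phi S-\mu V$ gives $\limsup_{t\to\infty}V(t)\le\phi S_0/\mu=V_0$. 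Thus for every $\epsilon>0$ there is $t_\epsilon$ with $S(t)\le S_0+\epsilon$ and $V(t)\le V_0+\epsilon$ for all $t\ge t_\epsilon$.

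Next I would control the infected compartments $(E,I)$ by a linear comparison argument. For $t\ge t_\epsilon$, using $E,I\ge0$ and $1-\varepsilon>0$,
\begin{align*}
\dot E&\le\beta_1(S_0+\epsilon)E+\beta_2(S_0+\epsilon)I-(\alpha+\mu)E,\\
\dot I&\le\alpha E+(1-\varepsilon)(V_0+\epsilon)(\beta_1E+\beta_2I)-(\mu+\delta+\gamma+\gamma_1)I,
\end{align*}
that is, $(\dot E,\dot I)^{T}\le M_\epsilon(E,I)^{T}$ componentwise, where $M_\epsilon$ is the $\epsilon$-perturbed version of the matrix $F-V$ furnished by the next-generation construction. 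The matrix $F-V$ is Metzler (both off-diagonal entries are nonnegative), and by the standard characterisation underlying $\mathcal{R}_0=\rho(FV^{-1})$ its spectral abscissa satisfies $s(F-V)<0$ precisely when $\mathcal{R}_0<1$. Since $s(\cdot)$ is continuous on the Metzler class, $s(M_\epsilon)<0$ for $\epsilon$ small enough; fixing such an $\epsilon$, the comparison principle for cooperative systems gives $0\le(E(t),I(t))\le e^{M_\epsilon(t-t_\epsilon)}(E(t_\epsilon),I(t_\epsilon))\to(0,0)$, so $E(t)\to0$ and $I(t)\to0$ as $t\to\infty$.

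Finally I would cascade this back through the remaining equations. From $\dot R=\gamma I-\mu R$ and $\dot T=\gamma_1I-\mu T$ with $I(t)\to0$ we get $R(t)\to0$ and $T(t)\to0$. The $(S,V)$ subsystem is then asymptotically autonomous: since $(\beta_1E+\beta_2I)S\to0$ in $\dot S=\Lambda-(\mu+\phi)S-(\beta_1E+\beta_2I)S$, a squeeze argument (or Thieme's theory of asymptotically autonomous systems) gives $S(t)\to\Lambda/(\mu+\phi)=S_0$, and then $\dot V=\phi S-\mu V-(1-\varepsilon)(\beta_1E+\beta_2I)V$ yields $V(t)\to\phi S_0/\mu=V_0$. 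Hence $(S,V,E,I,R,T)(t)\to\mathcal{E}^0$ along every admissible trajectory, so $\mathcal{E}^0$ is a global attractor. An equivalent one-shot route applies LaSalle's invariance principle with $L=w^{T}V^{-1}(E,I)^{T}$, where $w$ is the left Perron eigenvector of $FV^{-1}$, for which $\dot L\le(\mathcal{R}_0-1)\,w^{T}(E,I)^{T}\le0$ on $\Omega$; the largest invariant subset of $\{\dot L=0\}$ lies in $\{E=I=0\}$, where the reduced flow converges to $\mathcal{E}^0$, mirroring the Castillo--Chavez reduction already used for Theorem \ref{gsDFE_1} via Lemma \ref{gsl_DFE}.

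The main obstacle in either route is the nonlinear coupling through the incidence term $(\beta_1E+\beta_2I)(\cdot)$: one must propagate the $\epsilon$-perturbations through the comparison matrix $M_\epsilon$, which needs continuity of the spectral abscissa on the Metzler class to preserve $s(M_\epsilon)<0$, and then justify the asymptotically autonomous limit for $(S,V)$ without circular reasoning. Every remaining estimate is linear and routine.
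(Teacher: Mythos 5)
The paper does not actually prove this theorem: its ``proof'' consists of the single line deferring to \cite{Optimal Control-1}, so there is no argument in the text to compare yours against. Your proposal supplies a genuine, essentially self-contained proof, and the main route is sound: the fluctuation bounds $\limsup S\le \Lambda/(\mu+\phi)=S_0$ and $\limsup V\le \phi S_0/\mu=V_0$ are correct, the cooperative comparison $(\dot E,\dot I)^T\le M_\epsilon(E,I)^T$ with a Metzler matrix is legitimate, the equivalence $s(F-V)<0\iff\rho(FV^{-1})<1$ is the standard van den Driessche--Watmough fact, and the cascade through $R,T$ and the asymptotically autonomous $(S,V)$ subsystem is routine. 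This is a stronger contribution than what the paper offers and is close in spirit to Lemma \ref{gsl_DFE} (the Castillo-Chavez--type splitting) that the paper does use for Theorem \ref{gsDFE_1}, but your comparison-matrix route avoids having to verify $\hat G\ge 0$ and instead leans on monotonicity.

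Two points need tightening. First, because your comparison matrix retains the vaccinated-class incidence $(1-\varepsilon)(\beta_1E+\beta_2I)V$, the spectral radius $\rho(FV^{-1})$ governing $s(F-V)<0$ is the paper's \emph{controlled} threshold $\mathcal{R}_{0V}$ of \eqref{repro1}, not the reduced $\mathcal{R}_0$ of \eqref{repro2}; your standing hypothesis should therefore be $\mathcal{R}_{0V}<1$ (the paper itself is loose about this distinction). Second, the ``one-shot'' LaSalle alternative with $L=w^TV^{-1}(E,I)^T$ does not give $\dot L\le(\mathcal{R}_0-1)w^T(E,I)^T$ on all of $\Omega$, since $\Omega$ only guarantees $S\le \Lambda/\mu$, which exceeds $S_0=\Lambda/(\mu+\phi)$; that inequality needs the smaller absorbing set $\{S\le S_0,\;V\le V_0\}$ that your first route already constructs. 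As written, the primary argument is the one that carries the proof; the LaSalle remark should either be dropped or restricted to that smaller invariant set.
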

\begin{proof}
    Omitting the proof, as a similar result can be referenced in \cite{Optimal Control-1}.
\end{proof}

\begin{Th}
	\cite{Stability Bound-8, Optimal Control-1}. If $\mathcal{R}_0 > 1$, the system described by \eqref{new_model} exhibits uniform persistence, implying the existence of a constant $\xi > 0$. This constant ensures that for any initial data $\xi $ in $\Omega$: $\liminf\limits_{t\rightarrow \infty} S(t) > \xi$, $\liminf\limits_{t\rightarrow \infty} E(t) > \xi$, $\liminf\limits_{t\rightarrow \infty} I(t) > \xi$, and $\liminf\limits_{t\rightarrow \infty} V(t) > \xi$.
Remarkably, the value of $\xi$ remains independent of the initial data in $\Omega$.

\end{Th}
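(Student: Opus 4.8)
The plan is to establish uniform persistence by the standard machinery for dissipative semiflows (in the spirit of Hale--Waltman and Thieme), exploiting the instability of the disease-free equilibrium proved in Theorem \ref{lsDFE} to show that the disease-free boundary repels every interior trajectory, uniformly.

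\textbf{Step 1 (Setup and dissipativity).} Let $\Phi_t$ be the semiflow on $\Omega$ generated by \eqref{new_model}. By Theorems \ref{th2} and \ref{th3}, $\Omega$ is compact, positively invariant, and attracting, so $\Phi_t$ is point dissipative, eventually uniformly bounded, and admits a global compact attractor in $\Omega$. Split the state space as $X=\Omega$, $X_0=\{(S,V,E,I,R,T)\in\Omega : E>0,\ I>0\}$ (the interior where the disease is present), and $\partial X_0 = X\setminus X_0$. From the $E$ and $I$ equations of \eqref{new_model} together with the positivity result (Theorem \ref{th1}), $X_0$ is positively invariant, and a Gr\"onwall estimate on the $S$ and $V$ equations shows $S(t),V(t)>0$ for all $t>0$ whenever the orbit starts in $X_0$.

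\textbf{Step 2 (Dynamics on the boundary).} On $\partial X_0$ the invariant subset with $E\equiv I\equiv 0$ forces $R,T\to 0$, and the residual $(S,V)$ subsystem is linear and cooperative with unique, globally attracting equilibrium $\bigl(\mu N/(\mu+\phi),\ \phi N/(\mu+\phi)\bigr)$. Hence $\Omega_\partial := \bigcup_{x\in\partial X_0}\omega(x) = \{\mathcal{E}^0\}$, so the only invariant set in $\partial X_0$ is the singleton $\{\mathcal{E}^0\}$; it is trivially isolated in $X$ and acyclic, and $\{\mathcal{E}^0\}$ is an acyclic covering of $\Omega_\partial$.

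\textbf{Step 3 (The DFE is a uniform weak repeller for $X_0$).} This is the crux. We must show $W^s(\mathcal{E}^0)\cap X_0=\varnothing$ in the strong (uniform) sense: there is $\eta>0$ with $\limsup_{t\to\infty}\mathrm{dist}(\Phi_t x,\mathcal{E}^0)\ge\eta$ for all $x\in X_0$. Suppose not; then along some interior orbit $S(t),V(t)$ converge to their DFE values while $E,I,R,T$ become arbitrarily small for large $t$. Linearizing the $(E,I)$ subsystem at $\mathcal{E}^0$ yields the matrix $F-V$ of the next-generation construction (Appendix \ref{BRN}), whose spectral bound $s(F-V)$ is strictly positive precisely because $\rho(FV^{-1})=\mathcal{R}_0>1$. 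Let $\zeta\gg 0$ be the Perron left eigenvector of $F-V$; then the functional $L(t)=\zeta\cdot(E(t),I(t))$ satisfies $\dot L \ge \bigl(s(F-V)-\varepsilon\bigr)L>0$ once $(S,V)$ are within a suitable $\delta$ of their DFE values (the perturbation from the $S,V$-dependence and the nonlinear terms being absorbed into $\varepsilon$ by choosing $\delta$ small). This forces $L(t)\to\infty$, contradicting $E,I\to 0$. Hence $\mathcal{E}^0$ is a uniform weak repeller for $X_0$.

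\textbf{Step 4 (Conclusion) and main obstacle.} By the Hale--Waltman/Thieme persistence theorem, a point-dissipative semiflow with a global compact attractor, whose maximal boundary invariant set has an acyclic covering by isolated invariant sets that are uniform weak repellers, is uniformly strongly persistent: there exists $\xi>0$, independent of the initial data in $X_0$, with $\liminf_{t\to\infty}\min\{S(t),V(t),E(t),I(t)\}\ge\xi$; the $R$ and $T$ components then inherit the lower bounds $\gamma\xi/\mu$ and $\gamma_1\xi/\mu$. This is exactly the asserted uniform persistence. The one genuinely delicate point is Step 3 --- converting the algebraic fact $\mathcal{R}_0>1\Rightarrow s(F-V)>0$ into a uniform repulsion estimate valid on a full neighborhood of $\mathcal{E}^0$ in $X_0$, i.e. ensuring the linear growth of $\zeta\cdot(E,I)$ dominates the $(S,V)$-dependent and quadratic corrections; the remaining ingredients (dissipativity, boundary dynamics, acyclicity) follow routinely from Theorems \ref{th1}--\ref{th3} and \ref{lsDFE}.
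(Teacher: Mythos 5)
Your proposal is correct and follows essentially the same route as the paper: both invoke the Hale--Waltman acyclicity framework, reduce the boundary dynamics to the globally attracting DFE, and obtain uniform persistence from showing $W^{s}(\mathcal{E}^0)\cap X_0=\varnothing$. The only cosmetic difference is in the repulsion step, where you build a growing scalar functional from the left Perron eigenvector of $F-V$ while the paper runs a comparison-principle argument on the perturbed $(E,I)$ subsystem; these are equivalent expressions of the fact that $\mathcal{R}_0>1$ forces a positive spectral bound near $\mathcal{E}^0$.
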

\begin{proof}
	When $t\rightarrow \infty$ from system \eqref{new_model}, we have the following limiting system,
	\begin{align}\label{pers_eqn}
		\begin{cases}
			\vspace{0.2cm}
			\displaystyle S'=\Lambda-\frac{\mu}{\Lambda}(\beta_1 \xi E+\beta_2 I)S-(\mu+\phi)S.\\
			\vspace{0.2cm}
			\displaystyle V'=\phi S-\frac{\mu}{\Lambda}(1-\varepsilon)(\beta_1 \xi E+\beta_2 I)V-\mu V.\\
			\vspace{0.2cm}
			\displaystyle E'=\frac{\mu}{\Lambda}(\beta_1 \xi E+\beta_2 I)S-(\alpha+\mu)E.\\
			\displaystyle I'=\alpha E+\frac{\mu}{\Lambda}(1-\varepsilon)(\beta_1 \xi E+\beta_2 I)V-(\mu+\delta+\gamma+\gamma_1)I.
		\end{cases}
	\end{align}	For the case of notation, we still use the notation $\mathcal{E}^0$ to denote the DFE of the equation \eqref{pers_eqn}. Define,
	\begin{align*}
		X =&\{(S,E,I,V) : S \geq 0,E \geq 0,I \geq 0,V \geq 0 \},\\
		X_0 =&\{(S,E,I,V) : S > 0,E > 0,I > 0,V > 0 \},\;\text{and}\\
		\partial X_0= & X\setminus X_0.
	\end{align*}
	It is often suffices to show that \eqref{pers_eqn} is uniformly persistent with respect to $(X_0, \partial X_0)$.\\
	Firstly, by the form of \eqref{pers_eqn}, it is easy to see that both $X$ and $X_0$ are positively invariant. $\partial X_0$ is relatively closed in $X$ and system \eqref{pers_eqn} is point dissipative. Consider the following set using solutions $(S(t),V(t),E(t),I(t))$ of the system \eqref{pers_eqn}.
	\begin{align*}
		M_{\partial}=\{(S(0),E(0),I(0),V(0)) : (S(t),E(t),I(t),V(t))\in \partial  X_0, \forall\;t\geq 0\}.
	\end{align*}
	We now show that,
	\begin{align}\label{MdelSV}
		M_{\partial}=\{S,V,0,0 : S\geq 0 \;,\; V\geq 0\}.
	\end{align}
	Assume that $(S(0),V(0),E(0),I(0)) \in M_{\partial}$. It suffices to show that $E(t)=I(t)=0$ for all $t \geq 0$.\\
	Suppose not, then there exists a $t_0\geq 0$ such that $(E(t_0)>0,I(t_0)=0)$ or, $(E(t_0)=0,I(t_0)>0)$.\\
	For, $(E(t_0)>0,I(t_0)=0)$ we have,
	\begin{align*}
		I'(t_0)=\alpha\varepsilon E(t_0)>0.
	\end{align*}
	It follows that there is an $\epsilon_0>0$ such that $I(t)>0$ for $t_0<t<t_0+\epsilon_0$. Clearly, we can restrict $\epsilon_0>0$ small enough such that $E(t_0)>0$ for $t_0<t<t_0+\epsilon_0$. This means that $(S(t),V(t),E(t),I(t))\notin \partial X_0$ for $t_0<t<t_0+\epsilon_0$, which contradicts the assumption that  $(S(0),V(0),E(0),I(0)) \in M_{\partial}$. For other cases, we can show these contradict the assumption that $(S(0), V(0), E(0), I(0)) \in M_{\partial}$ respectively. Thus (\ref{MdelSV}) holds.\\
	{Note that, $\mathcal{E}^0$ (DFE point) is globally asymptotically stable in Int$M_{\partial}$ (interior of $M_{\partial}$)}. Moreover, $\mathcal{E}_0$ is isolated invariant set in $X$, every orbit in $M_{\partial}$ converges to $\mathcal{E}_0$ and $\mathcal{E}_0$ is acyclic in $M_{\partial}$. We only need to show that $W^S(\mathcal{E}_0)\cap X_0=\emptyset$ if $\mathcal{R}_0>1$.\\
	In the following, we prove that $W^S(\mathcal{E}_0)\cap X_0=\emptyset$. Suppose not, that is, $W^S(\mathcal{E}_0)\cap X_0\neq \emptyset$. Then there exists a positive solution $(\widetilde{S}(t),\widetilde{V}(t),\widetilde{E}(t),\widetilde{I}(t))$,
	with $(\widetilde{S}(0),\widetilde{V}(0),\widetilde{E}(0),\widetilde{I}(0)) \in X_0$ such that $(\widetilde{S}(t),\widetilde{V}(t),\widetilde{E}(t),\widetilde{I}(t))\rightarrow\mathcal{E}_0$ as $t\rightarrow +\infty$. Since, $\mathcal{R}_0>1$, we can choose a $\eta >0 $ small enough such that, $$\mathcal{R}_0-\eta\frac{\mu}{\Lambda}\mathcal{R}_0 > 1.$$
	Thus, when $t$ is sufficiently large, we have,
	\begin{align*}
		S_0-\eta \leq \widetilde{S}(t)\leq S_0+\eta,\; 0\leq  \widetilde{V}(t)\leq \eta,\; 0\leq  \widetilde{E}(t)\leq \eta ,\; 0\leq  \widetilde{I}(t)\leq \eta, \; \text{and}
	\end{align*}
	\begin{align*}
		\begin{cases}
			\vspace{0.2cm}
			\displaystyle V'=\phi(S_0-\eta)-\frac{\mu}{\Lambda}(1-\varepsilon)(\beta_1\xi E+\beta_2 I)(S_0-\eta)-\mu V.\\
			\vspace{0.2cm}
			\displaystyle E'=\frac{\mu}{\Lambda}(\beta_1\xi E+\beta_2 I)(S_0-\eta)-(\alpha+\mu) E.\\
			\displaystyle I'=\alpha E+\frac{\mu}{\Lambda}(1-\varepsilon)(\beta_1\xi E+\beta_2 I)(S_0-\eta)-(\mu+\delta+\gamma+\gamma_1)I.
		\end{cases}
	\end{align*}
	By the comparison principle, it is easy to see that $\widetilde{E}(t)\rightarrow +\infty$, $\widetilde{I}(t)\rightarrow +\infty$ as $t \rightarrow +\infty$, which contradicts $\widetilde{E}(t)\rightarrow 0$, $\widetilde{I}(t)\rightarrow 0$ as $t \rightarrow +\infty$. This proves that, $$W^S(\mathcal{E}^0)\cap X_0 = \emptyset.$$
	Since $W^S(\mathcal{E}^0)\cap X_0 = \emptyset$, $\bigcup_{x\in M_{\partial}}\omega(x)=\{\mathcal{E}^0\}$, $\mathcal{E}^0$ is isolated invariant set in $X$, and $\mathcal{E}^0$ is acyclic in $M_{\partial}$, thus we are able to conclude that the system (\ref{pers_eqn}) is uniformly persistent with respect to $(X_0,\partial X_0)$. Then, the system (\ref{new_model}) is uniformly persistent.
\end{proof}

\section{Numerical Simulation}\label{Section-Numerical-Simulation}
In this section, we analyze the model \eqref{new_model} numerically to support the analytical results presented in previous sections.  Numerical simulations enable the incorporation of real data, such as population demographics, disease parameters, and contact patterns, to create more accurate and realistic models. This also facilitate parameter estimation by comparing model predictions with observed data, optimizing parameters to minimize the difference between simulated and actual outcomes \cite{Stability Bound-4, Stability Bound-25}. By modifying parameters or implementing different control measures within the simulation, researchers can assess the effectiveness of interventions such as vaccination campaigns, social distancing, or contact tracing. Moreover, numerical simulations allow for forecasting the progression of an epidemic under different scenarios and can aid decision-makers in making informed choices regarding resource allocation, healthcare capacity, and policy implementation. Relationship between parameters and $\mathcal{R}_0$ along with different compartments in model \eqref{new_model} are presented graphically. Also, we have presented phase plane analysis of the model \eqref{new_model} to realize the scenario at different situations.
\subsection{Numerical Analysis of Compartments at $\mathcal{R}_0 < 1$}
In epidemiology, the compartmental model typically used to analyze the spread of infectious diseases in the community. We can provide an interpretation of the scenario when the basic reproduction number ($\mathcal{R}_0$) is less than 1 in the context of the SVEIRT model. When ($\mathcal{R}_0$) is less than 1, it means that, on average, each infected individual will infect fewer than one other person. 
Here's how the scenario from Figure \ref{basicgraphless1} may be interpreted for the compartments in model \eqref{new_model}:\\
Initially, the majority of the population is susceptible to the disease. As the epidemic progresses, some individuals will become infected, while others may be vaccinated or naturally acquire immunity. The number of infected individuals will initially increase, indicating the growth phase of the epidemic. However, since $\mathcal{R}_0<1$, the rate of new infections will decrease over time for model parameter values as in Table \ref{tableparameter}. This can be due to various factors such as public health interventions, social distancing measures, or increasing immunity in the population. Also, the number of recovered individuals will gradually increase as infected individuals either recover from the disease or succumb to it. In Figure \ref{basicgraphless1}, we see that the total number of infected human with treatment decreases faster than the total number of infected human with treatment.
\begin{figure}[H]
	\centering  
	\includegraphics[width=4 in]{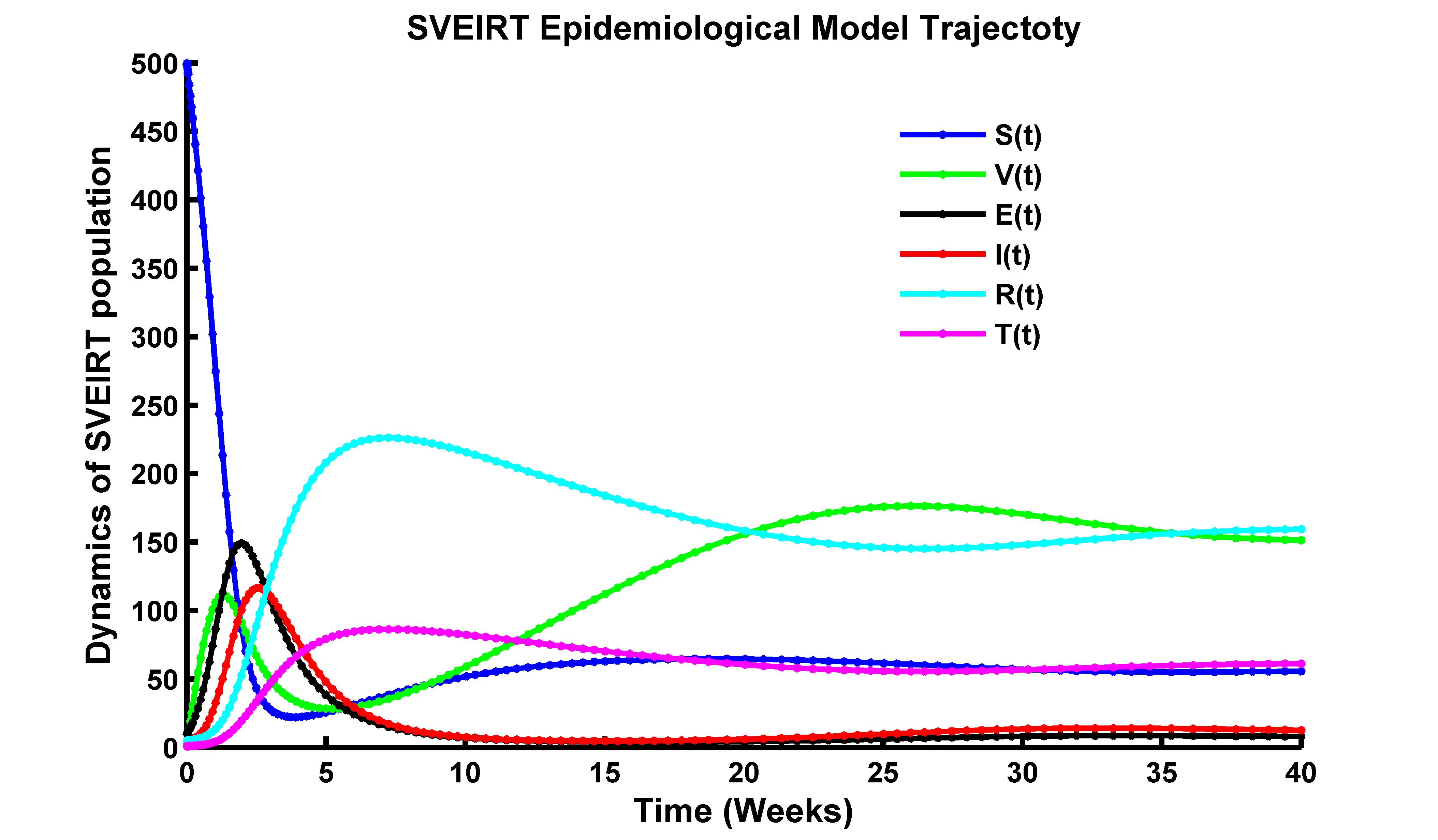} 
	\caption{Simulation of the dynamics of $S(t),\; V(t),\; E(t),\; I(t),\; R(t),\; T(t)$ population over time with $S(0)=500,\; V(0)=1,\; E(0)=1,\; I(0)=1,\; R(0)=0,\; T(0)=1$ and $\mathcal{R}_0<1$, where all the parameter values are taken from Table \ref{tableparameter}.}\label{basicgraphless1}
\end{figure}
\noindent
As the epidemic slows down and the number of new infections decreases, the number of recoveries will eventually surpass the number of new infections, leading to a decline in the active cases. In this case, both exposed and vaccinated individuals contribute to reducing the susceptible population and can further decrease the transmission rate of the disease. Time $(t)$ is an essential aspect of the analysis, as it represents the progression of the epidemic. As $\mathcal{R}_0<1$, the epidemic curve will eventually reach its peak and then start to decline. From Figure \ref{basicgraphless1}, we see that, from 1 to 5 weeks $S(t)$ population decreases to 50, after than increases up to 80. The Exposed population goes pick level after 5 weeks ($E(t)=150$), and same scenario for infected population after 5 weeks ($I(t)=120$). The total duration of the epidemic can vary depending on various factors such as the disease's characteristics, population size, and intervention measures. Thus Figure \ref{basicgraphless1} illustrates that, after a long time total number of exposed and infected population converges to zero level. Susceptible, recovered, treated population remains in the community. Also from Figure \ref{basicgraphless1}, the recovered and treated population gradually increase after 5 weeks, the after 20 weeks $R(t)$ and $T(t)$ population goes parallelly. Hence numerical analysis supports the analytical result at DFE when $\mathcal{R}_0<1$. This indicates that the disease is not spreading efficiently within the population, and the epidemic is likely to decline and eventually die out.

\subsection{Numerical Analysis of Compartments at $\mathcal{R}_0 > 1$}
When the basic reproduction number $\mathcal{R}_0>1$, it indicates that each infected individual, on average, is infecting more than one susceptible individual. Here's a short description of each compartment's behaviour in model \eqref{new_model} in such a scenario:\\
As the epidemic progresses, the Figure \ref{basicgraphgrt1} illustrates that, the number of susceptible individuals decreases as they become infected or vaccinated. In an $\mathcal{R}_0>1$ scenario, the number of infected individuals initially increases rapidly, indicating a growing outbreak. As time progresses and interventions take place, the number of infected individuals may eventually decrease. Further, in the early stages of the epidemic, the number of recovered individuals is low. However, as the epidemic progresses, the number of recoveries gradually increases. The number of exposed individuals initially increases, reflecting the transmission from infected individuals. After the latent period, they transition to the infected compartment. The number of vaccinated individuals increases over time, reducing the susceptible population and potentially slowing down the spread of the disease.

For model parameter values taken from Table \ref{tableparameter}, Figure \eqref{new_model} illustrates that, susceptible population decreases up to 5 weeks, after that goes to a constant level. When control with vaccination started, this $V(t)$ compartment significantly grows to 90, then decreases as $E(t)$ and $I(t)$ increases. After 4 weeks, $E(t)$ and $I(t)$ population goes peak level at 160 and 125 respectively. After 5 weeks, the scenario of $E(t)$ and $I(t)$ is gradually decreasing. Meanwhile, $R(t)$ and $T(t)$ population goes peak level after 5 weeks. After 10 weeks, $R(t)$ and $T(t)$ population becomes parallel level at 180 and 60 respectively.
\begin{figure}[H]
	\centering  
	\includegraphics[width=4 in]{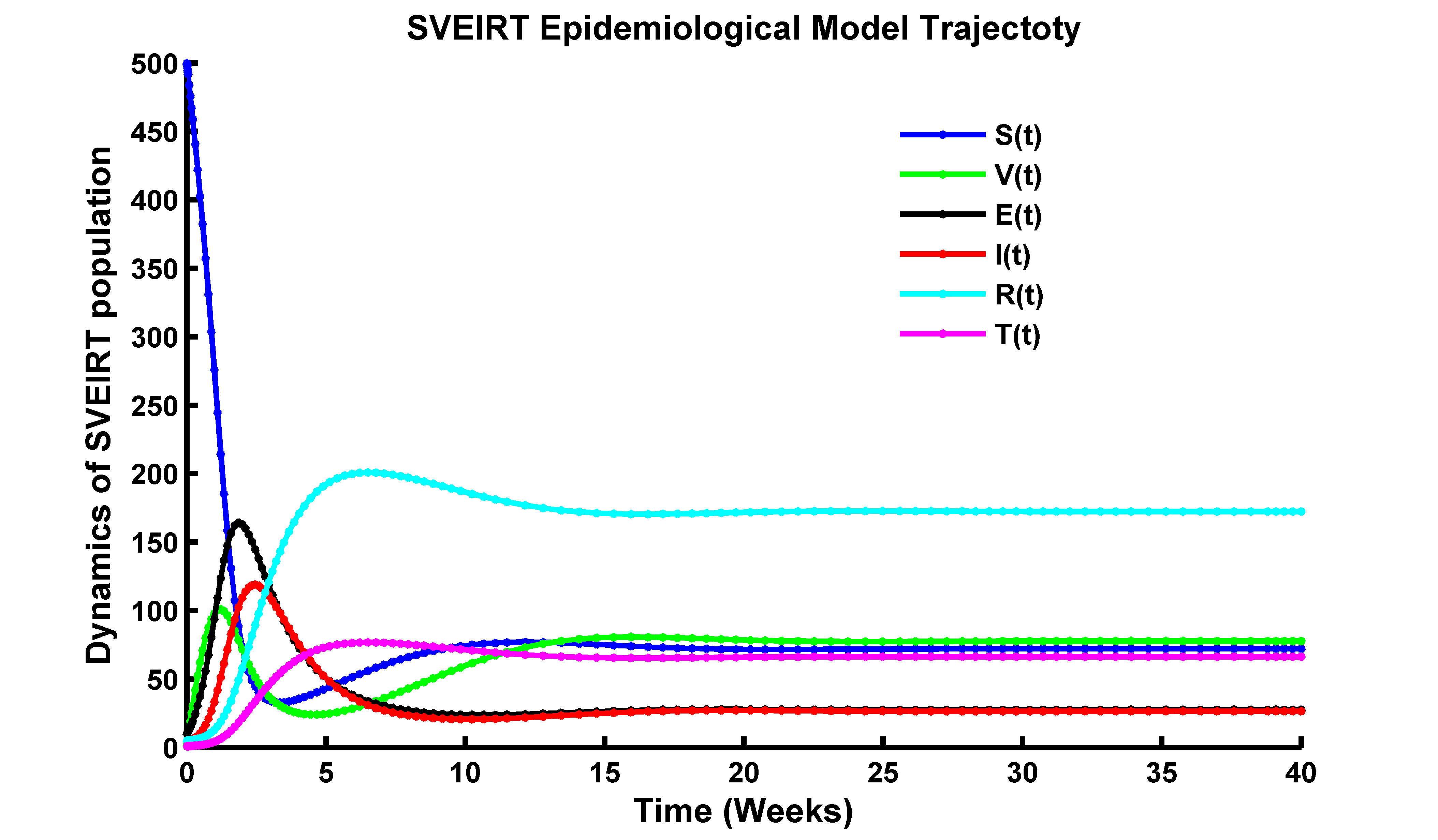} 
	\caption{Simulation of the dynamics of $S(t),\; V(t),\; E(t),\; I(t),\; R(t),\; T(t)$ population over time with $S(0)=500,\; V(0)=1,\; E(0)=1,\; I(0)=1,\; R(0)=0,\; T(0)=1$ and $\mathcal{R}_0>1$, where all the parameter values are taken from Table \ref{tableparameter}.}\label{basicgraphgrt1}
\end{figure}
\noindent
From the Figure \eqref{new_model} we see that, the more $E(t)$ and $I(t)$ decreases, the $R(t)$ and $T(t)$ population grows significantly in the community. Also, we see from the Figure \eqref{new_model} that, when $\mathcal{R}_0>1$ the total susceptible, vaccinated , recovered, treated population remains in the community. After a long time total infected and exposed population became parallel to susceptible population and never approaches to zero level. Therefore, Figure \ref{basicgraphgrt1} presented for $\mathcal{R}_0>1$, supports the stability of EE point numerically. That means, we observe that EE of the model \eqref{new_model} exists and it is locally asymptotically stable for $\mathcal{R}_0>1$. This signifies an epidemic scenario where the disease is spreading rapidly within the population i.e. disease persists in the community. 
\subsection{Phase Plane Analysis of Different Compartments of the Model}
The phase plane is a graphical representation that allows us to analyze the dynamics of a system of differential equations, such as those used in infectious disease modeling. It helps visualize the interactions and trajectories of different compartments (e.g., susceptible, infected) over time. By examining the stability of equilibrium points and the shape of trajectories, phase plane analysis provides valuable information for understanding disease spread, identifying critical thresholds, and evaluating the impact of interventions in a concise and intuitive manner \cite{Stability Bound-25, Bifurcation of R0-7, Bifurcation of R0-8}.
\begin{figure}[H]
	\centering  
	\subfloat[]{\includegraphics[width=2. in]{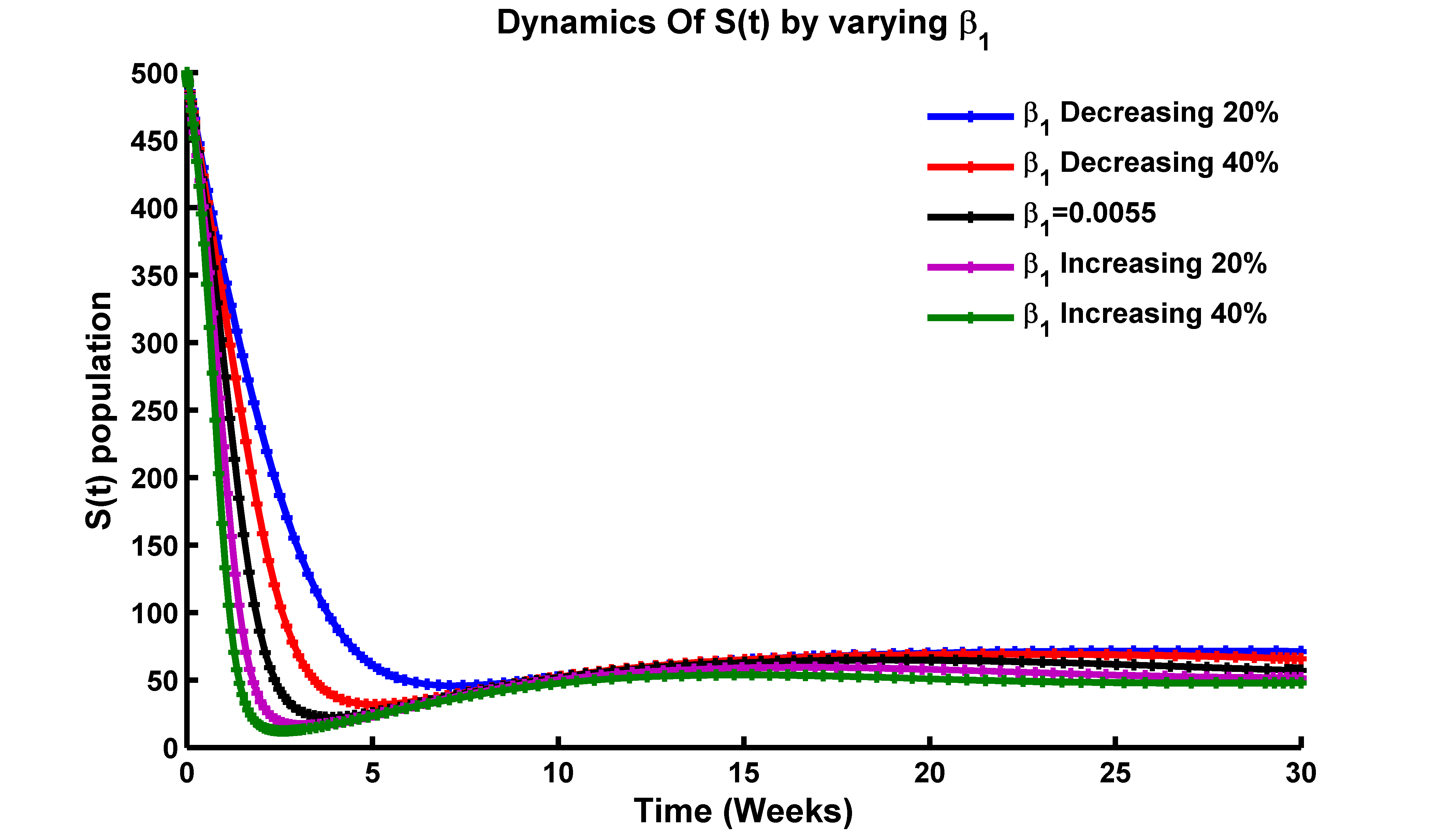}}
	\subfloat[]{\includegraphics[width=2. in]{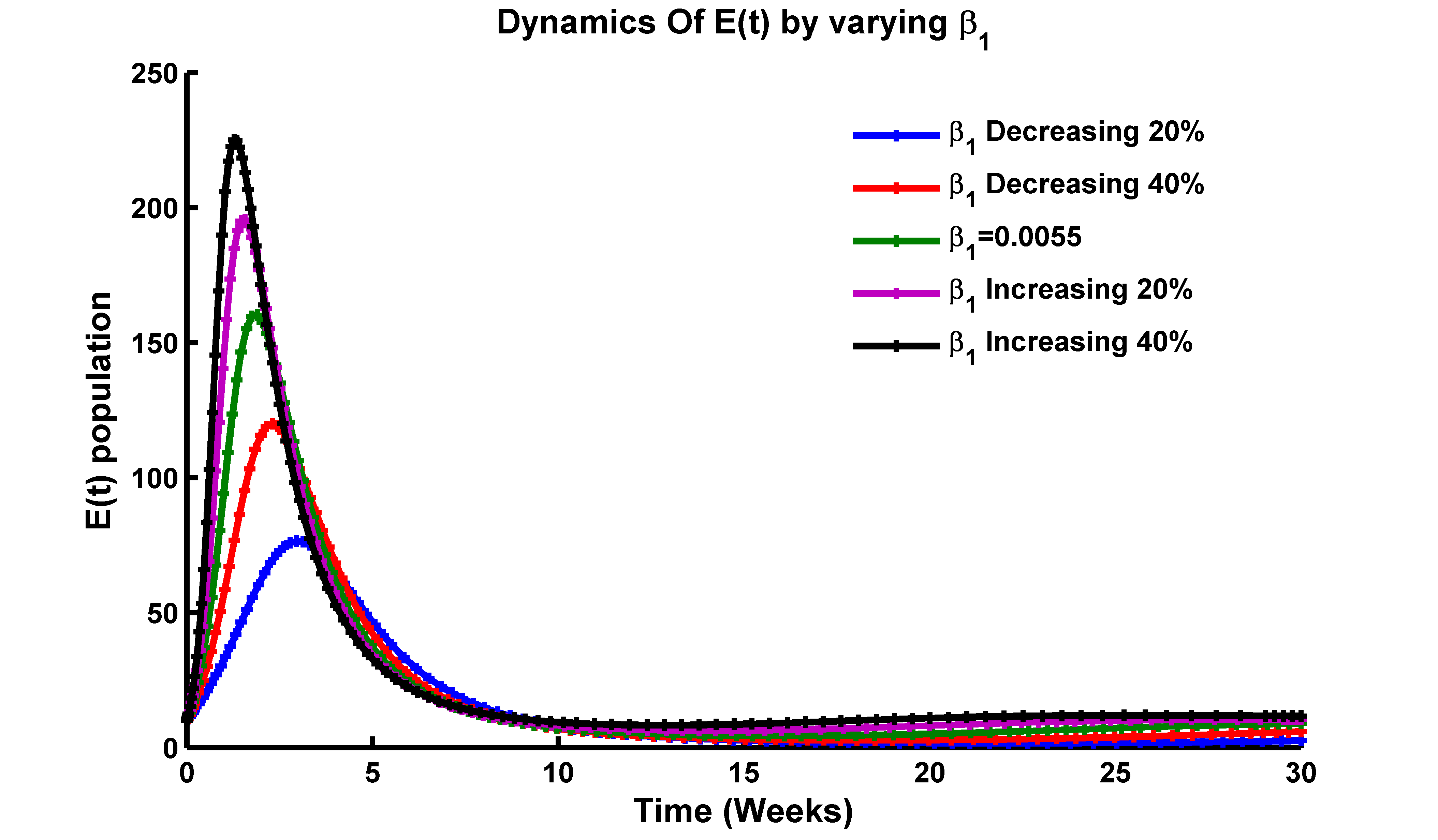}}
	\subfloat[]{\includegraphics[width=2. in]{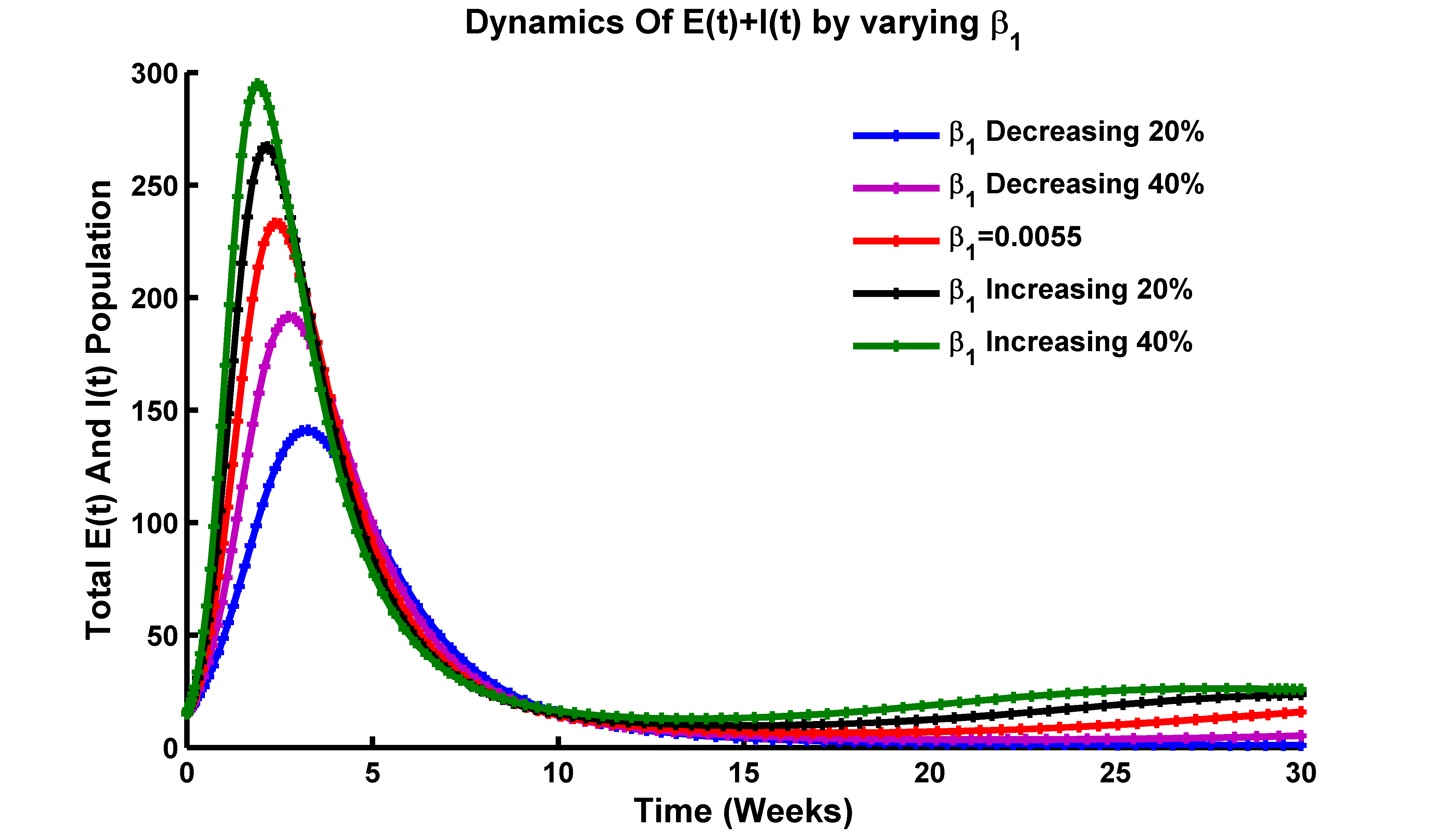}}\\
	\subfloat[]{\includegraphics[width=2. in]{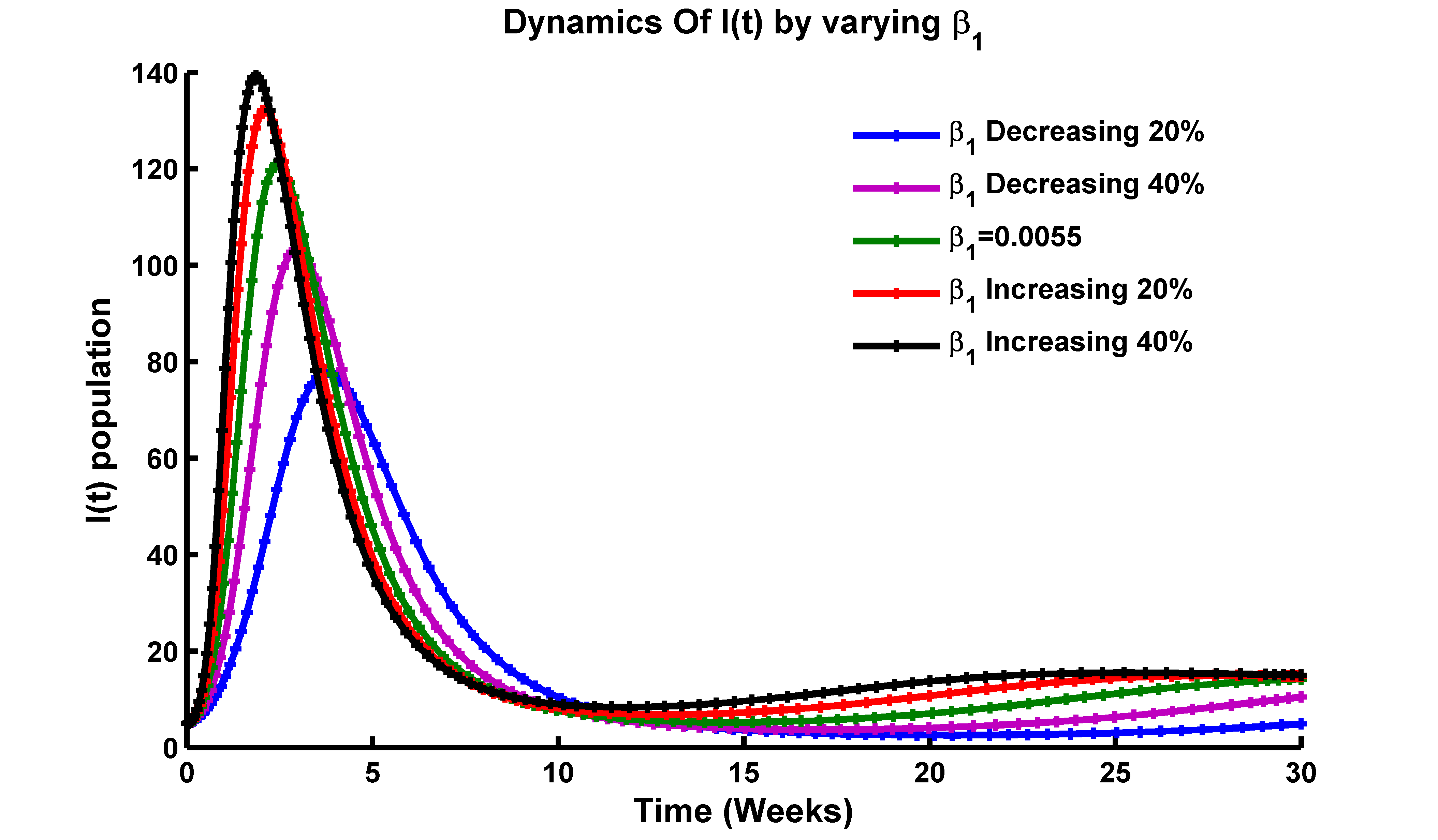}}
	\subfloat[]{\includegraphics[width=2. in]{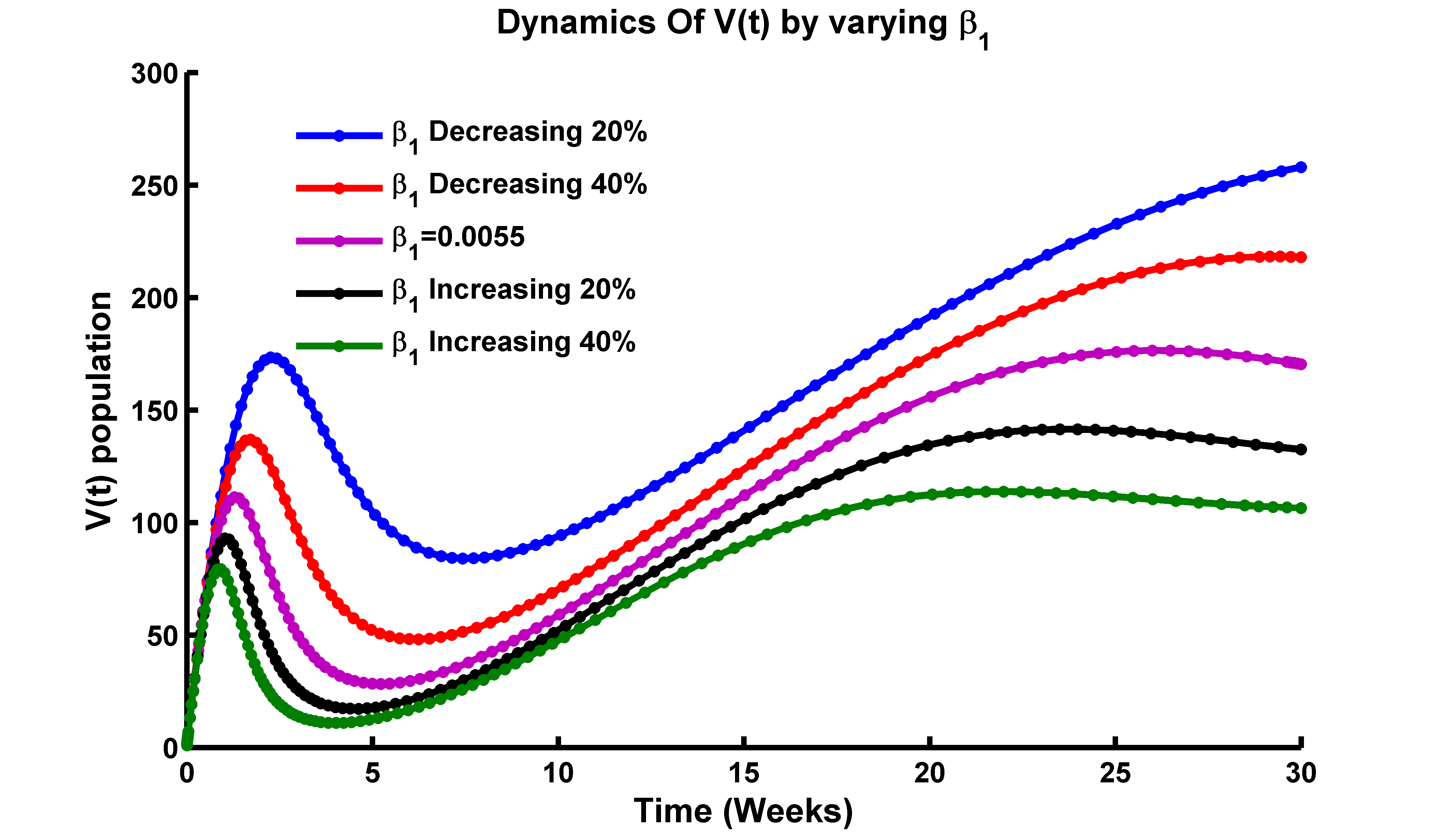}}
	\caption{Effect of $\beta_1$ in phase plane of (a) $S(t)$ compartment, (b) $E(t)$ compartment, (c) $E(t)+I(t)$ compartment, (d) $I(t)$ compartment and (e) $V(t)$ compartment where $\beta_1$ varying in range $[0.0035,0.0065]$ and rest of parameters are taken from Table \ref{tableparameter}.}\label{phase-plane-11-compartment}
\end{figure}
\noindent
Figure \ref{phase-plane-11-compartment}(a) illustrates that when the contact rate increases in the phase plane analysis, it affects the trajectory and behaviour of the susceptible compartment. Specifically, an increase in the contact rate leads to a steeper slope or gradient in the direction of the susceptible compartment axis. This steeper slope indicates that the rate at which individuals transition from the susceptible compartment to the infected compartment increases. In other words, as the contact rate rises, more susceptible individuals become infected at a faster pace. The trajectory of the susceptible compartment in the phase plane will show a more pronounced downward trend as time progresses, indicating a more rapid depletion of the susceptible population. This implies that the disease is spreading more rapidly within the population due to increased contact between infected and susceptible individuals.

Figure \ref{phase-plane-11-compartment}(b), (c) and (d) illustrates that, an increase in the contact rate can lead to a higher influx of individuals into the exposed compartment. The trajectory of the exposed compartment in the phase plane may exhibit a steeper slope, indicating a faster accumulation of individuals in the exposed state. If the contact rate is significantly higher than the recovery rate, the exposed compartment may grow rapidly, reaching a higher peak before declining. On the other hand, with an increased contact rate, the rate of individuals transitioning from the exposed compartment to the infected compartment will be higher. The trajectory of the infected compartment may show a steeper slope, indicating a faster rise in the number of infected individuals. If the contact rate surpasses the recovery rate, the infected compartment may continue to grow without reaching a peak, resulting in a sustained or increasing number of infected individuals.

In phase plane analysis, the trajectory of the vaccinated compartment is typically affected slightly by changes in the contact rate. It grows slowly over time, depending on the model assumptions and the rate of vaccination. Figure \ref{phase-plane-11-compartment}(e) illustrates that, the vaccinated compartment acts as a buffer or barrier, reducing the number of susceptible individuals who can be infected due to increased contact rates. When contact rate $\beta_{1}$ and $\beta_{2}$ rises, vaccination class reduces slightly. The presence of a large and growing vaccinated compartment can lead to a decrease in the overall disease transmission within the population, as vaccinated individuals are less likely to contract and spread the infection.

When the treatment rate increases, it means that infected individuals are progressed to treated class under control from the infection at a faster rate. This has an impact on the trajectory of the infected compartment in the phase plane. Specifically, the trajectory tends to shift towards lower values along the y-axis, indicating a decrease in the number of infected individuals over time. Figure \ref{phase-plane-12-compartment}(a) indicates that, a higher treatment rate results in a steeper slope of the trajectory, suggesting a more rapid decline in the number of infected individuals. This indicates that the duration of the infectious period is shorter, leading to a faster resolution of the infection within the population.

When the recovery rate increases, it means that infected individuals are recovering from the infection at a faster rate. This has an impact on the trajectory of the recovered compartment in the phase plane. Specifically, the trajectory tends to shift towards higher values along the y-axis, indicating an increase in the number of recovered individuals over time.
\begin{figure}[H]
	\centering  
	\subfloat[]{\includegraphics[width=2.5 in]{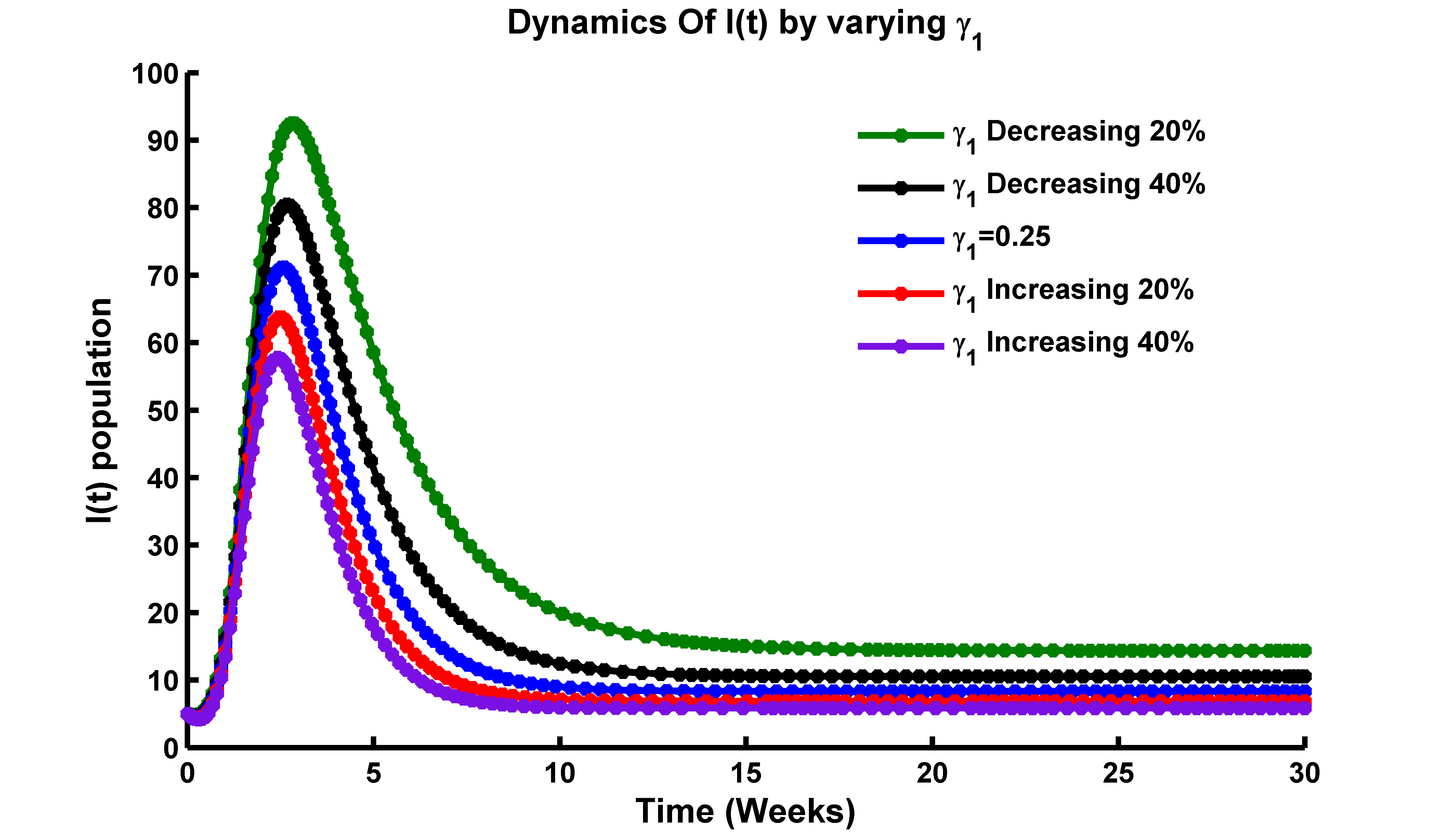}}
	\subfloat[]{\includegraphics[width=2.5 in]{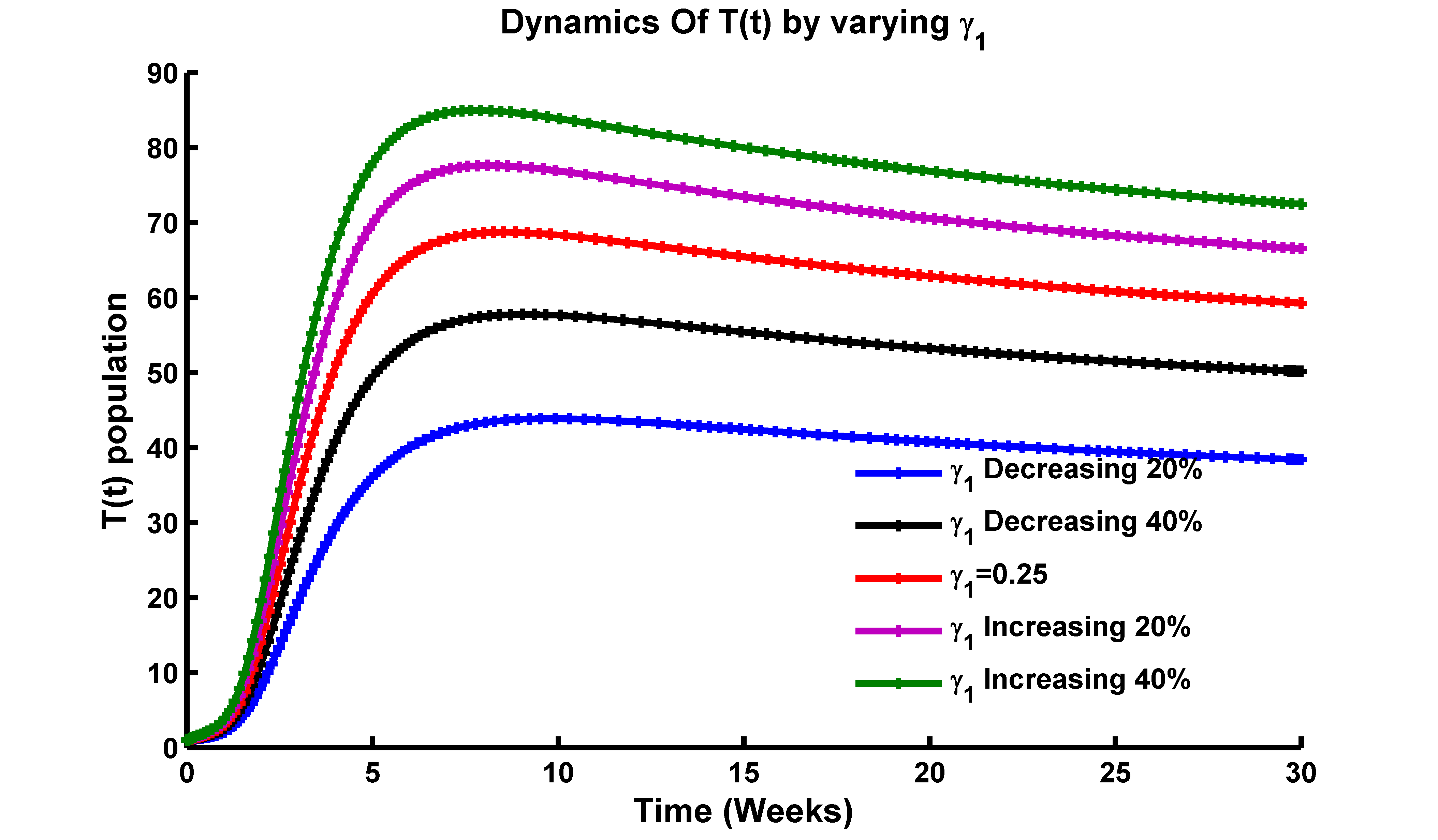}}\\
	\subfloat[]{\includegraphics[width=2.5 in]{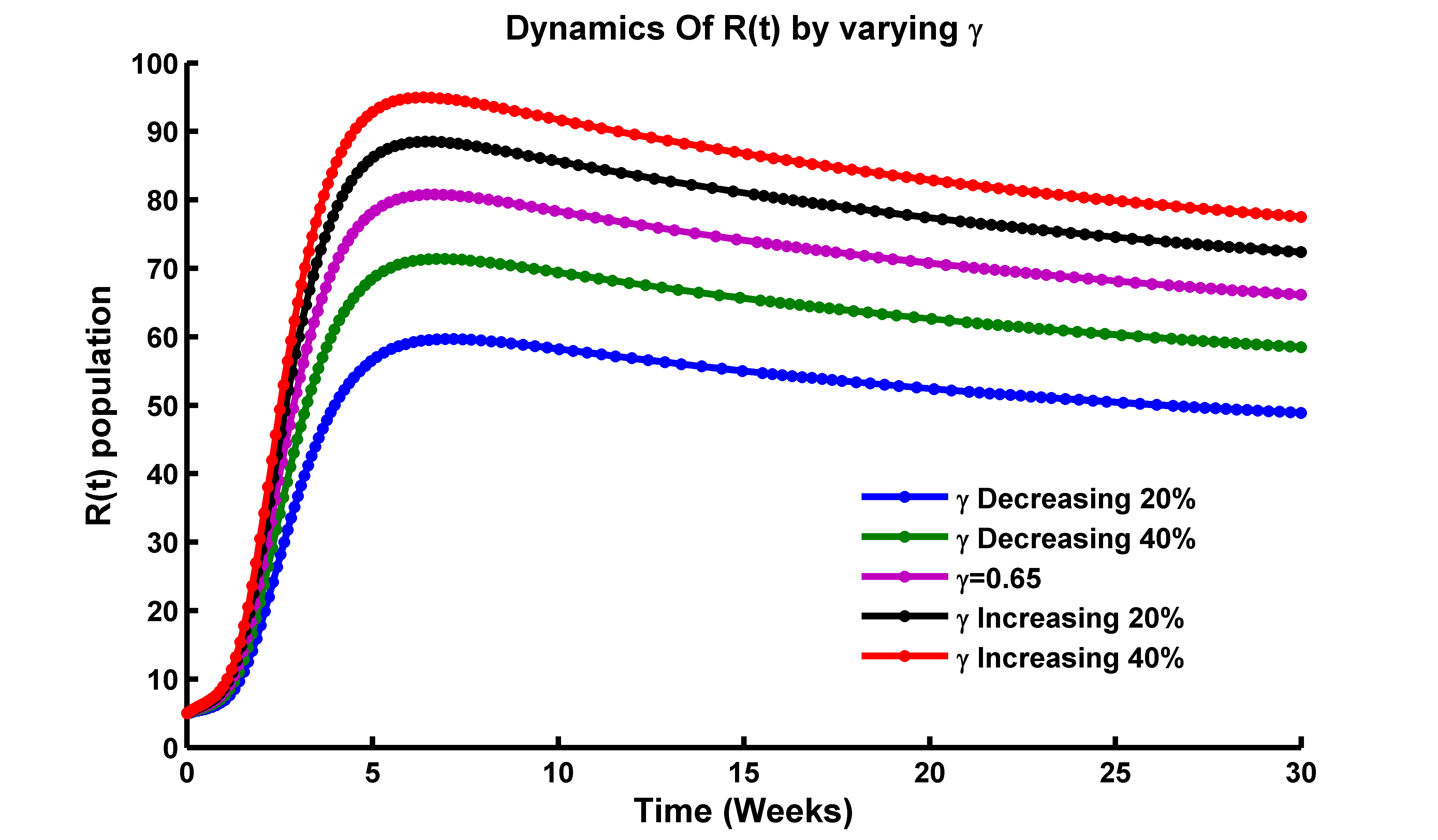}}
	\caption{Effect of $\gamma$ and $\gamma_1$ in the phase plane of (a) $I(t)$ compartment, (b) $T(t)$ compartment, (c) $R(t)$ compartment ,where $\gamma_1$ varying in range $[0.15,0.45]$, $\gamma$ varying in range $[0.35,0.75]$ and rest of parameters are taken from Table \ref{tableparameter}.}\label{phase-plane-12-compartment}
\end{figure}
\noindent
Figure \ref{phase-plane-12-compartment}(c) reflects that, a higher recovery rate results in a steeper slope of the trajectory, suggesting a more rapid increase in the number of recovered individuals. This indicates that individuals are transitioning out of the infected compartment and into the recovered compartment more quickly. An increase in the treatment rate may result in a shift of the treated compartment towards a stable state. This means that the individuals or entities being treated experience positive changes and move towards a more desired condition or outcome. Figure \ref{phase-plane-12-compartment}(b) indicates that, when the treatment rate increases, individuals or entities in the treated compartment may experience faster recovery or improvement. This can be visualized in the phase plane analysis by observing trajectories that move more rapidly towards healthier states or regions associated with improved outcomes.

\subsection{Phase Plane Analysis of Two Compartments}
The number or proportion of individuals in the exposed class influences the rate at which new infections occur. From Figure \ref{phase-plane-21-compartment}(a) we see that, as the exposed individuals become infectious, they transition into the infected class, contributing to the overall number of infected individuals. The interaction and flow of individuals between the exposed and infected classes determine the dynamics of the epidemic, such as the rate of transmission, the speed of disease spread, and the eventual size of the infected population. The more exposed population grows, the faster grows of the infected population. This relationship is crucial for developing effective strategies to control and mitigate the impact of the epidemic. Figure \ref{phase-plane-21-compartment}(b) indicates that, as individuals recover, they transition from the infected class to the recovered class, reducing the number of active infections. The size and dynamics of the recovered class can impact the spread of the disease. A larger number of recovered individuals means a smaller pool of susceptible individuals for the disease to infect, potentially slowing down transmission rates. The rate at which individuals recover and transition to the recovered class affects the overall duration and severity of the epidemic. Faster recovery rates can lead to a quicker decline in the number of active infections. Figure \ref{phase-plane-21-compartment}(c) indicates that, the size and effectiveness of the treatment class can impact the progression of the epidemic. Prompt and effective treatment can help reduce the duration of infectiousness and potentially lower the transmission rates. The interaction between the treatment and infected classes influences the overall burden of the disease on the healthcare system and the potential for reducing morbidity and mortality. The more population progress to the treated class from infected, it reduces the disease burden. Figure \ref{phase-plane-21-compartment}(d) indicates that, vaccination reduces the susceptibility of individuals to the infectious disease, thereby decreasing the likelihood of them transitioning from the susceptible class to the infected class. The higher the vaccination coverage within a population, the lower the number of individuals in the susceptible class, leading to a reduced pool of potential hosts for the disease. As the number of vaccinated individuals increases, the infected class may experience a decline in its size, resulting in a decline in the overall disease transmission rate. Thus, vaccination also plays a crucial role in reducing the transmission of the disease from infected individuals to susceptible ones, further limiting the spread of the epidemic.

Figure \ref{phase-plane-22-compartment}(a) reflects that, initially, the infected class increases rapidly, indicating a high rate of new infections, while the cumulative infected class starts from zero. As time progresses, the infected class may reach a peak and start to decline, while the cumulative infected class continues to increase as new infections add to the total count. The shape and pattern of the trajectory in the phase plane can reveal important information about the dynamics of the epidemic, such as the effectiveness of control measures or the presence of multiple waves.
\begin{figure}[H]
	\centering  
	\subfloat[]{\includegraphics[width=2.5 in]{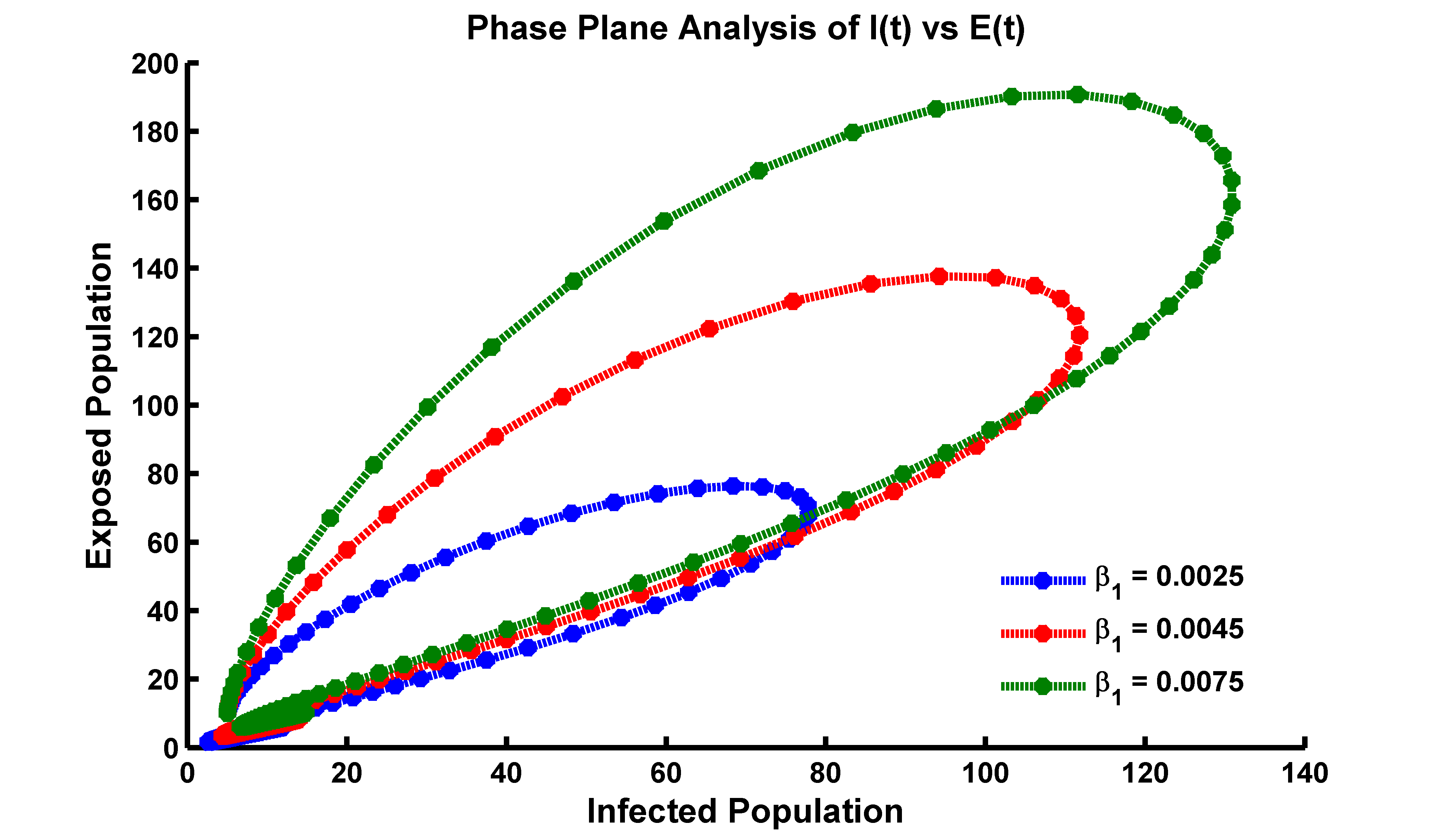}}
	\subfloat[]{\includegraphics[width=2.5 in]{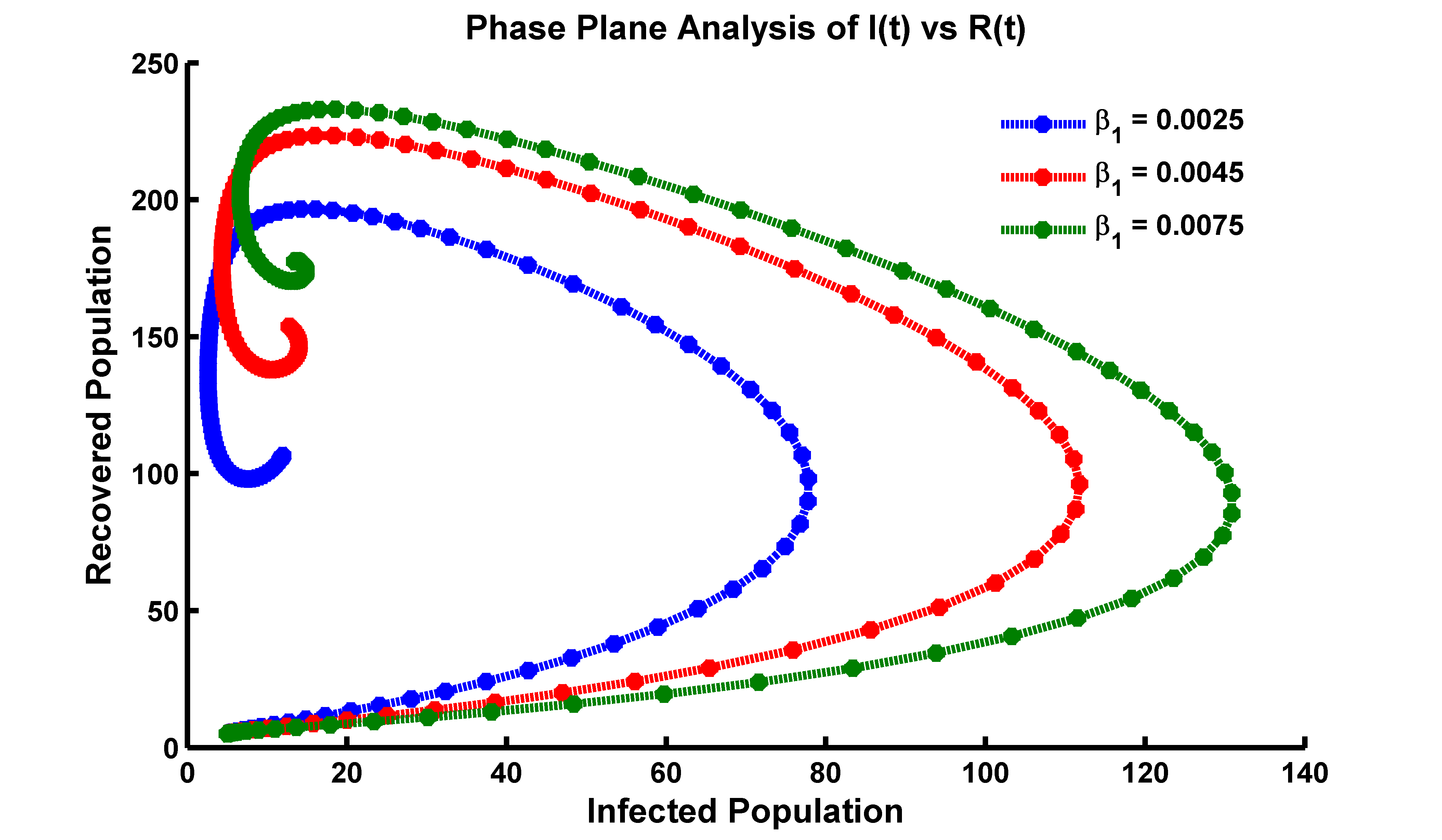}}\\
	\subfloat[]{\includegraphics[width=2.5 in]{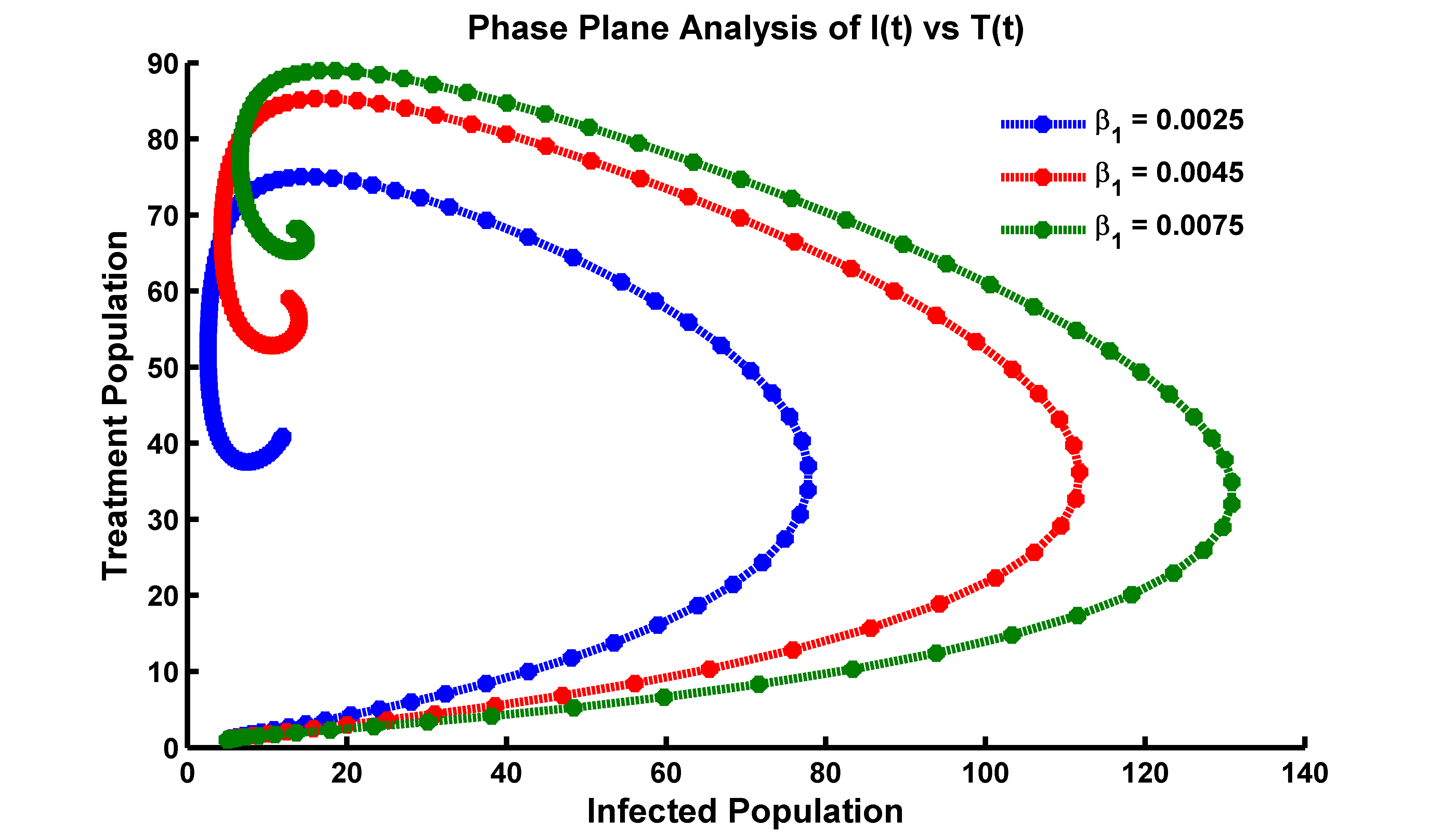}}
	\subfloat[]{\includegraphics[width=2.5 in]{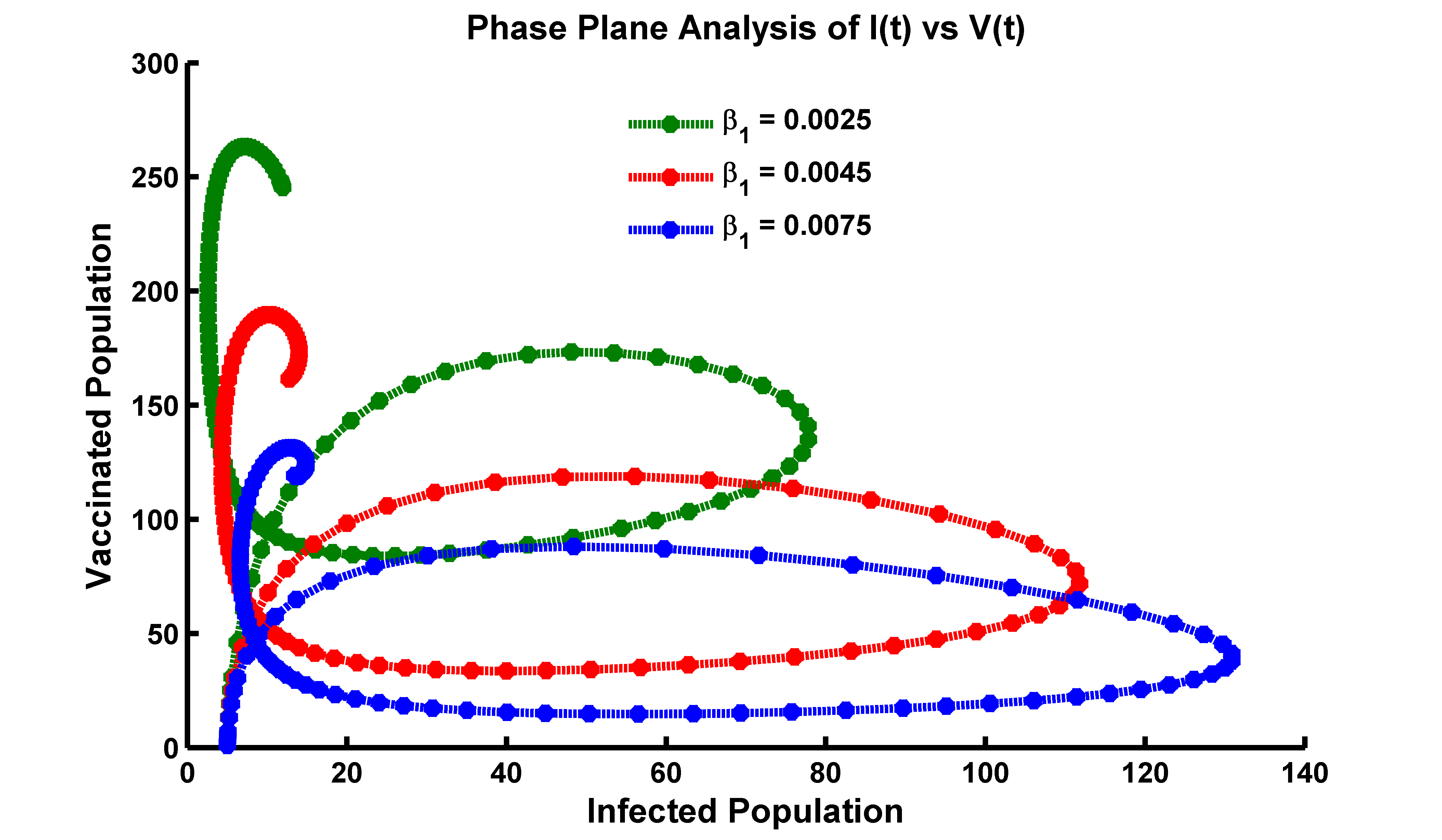}}
	\caption{Phase plane of (a) $I(t)$ vs $E(t)$ compartment,(b) $I(t)$ vs $R(t)$ compartment,(c) $I(t)$ vs $T(t)$ compartment and (d) $I(t)$ vs $V(t)$ compartment where all parameters are taken from Table \ref{tableparameter}.}\label{phase-plane-21-compartment}
\end{figure}
\noindent
From Figure \ref{phase-plane-22-compartment}(b) we see that, the model \eqref{new_model} equations describe how the number of susceptible individuals changes over time as they become exposed to a particular infectious agent. The phase plane plot allows us to visualize the dynamics of this relationship by plotting the susceptible population on one axis and the exposed population on the other. Trajectories in the phase plane represent the flow of individuals between the susceptible and exposed states, providing insights into the progression of an infectious disease. The more susceptible individuals reduces, it progress to the exposed class and exposed population increase gradually. From Figure \ref{phase-plane-22-compartment}(c), these trajectories represent the movement of the system over time, considering the two variables: susceptible population and infected population. The direction and shape of the trajectories reveal the dynamics of the epidemic. Typically, when the susceptible population is high and the infected population is low, the trajectories move towards the susceptible axis. As the infected population increases, the more susceptible population reduces from community, the trajectories shift towards the infected axis, indicating the spread of the disease. The intersection or equilibrium point between the two axes represents the steady state, where the epidemic reaches a balance between susceptible and infected individuals. 
\begin{figure}[H]
	\centering
	\subfloat[]{\includegraphics[width=2.5 in]{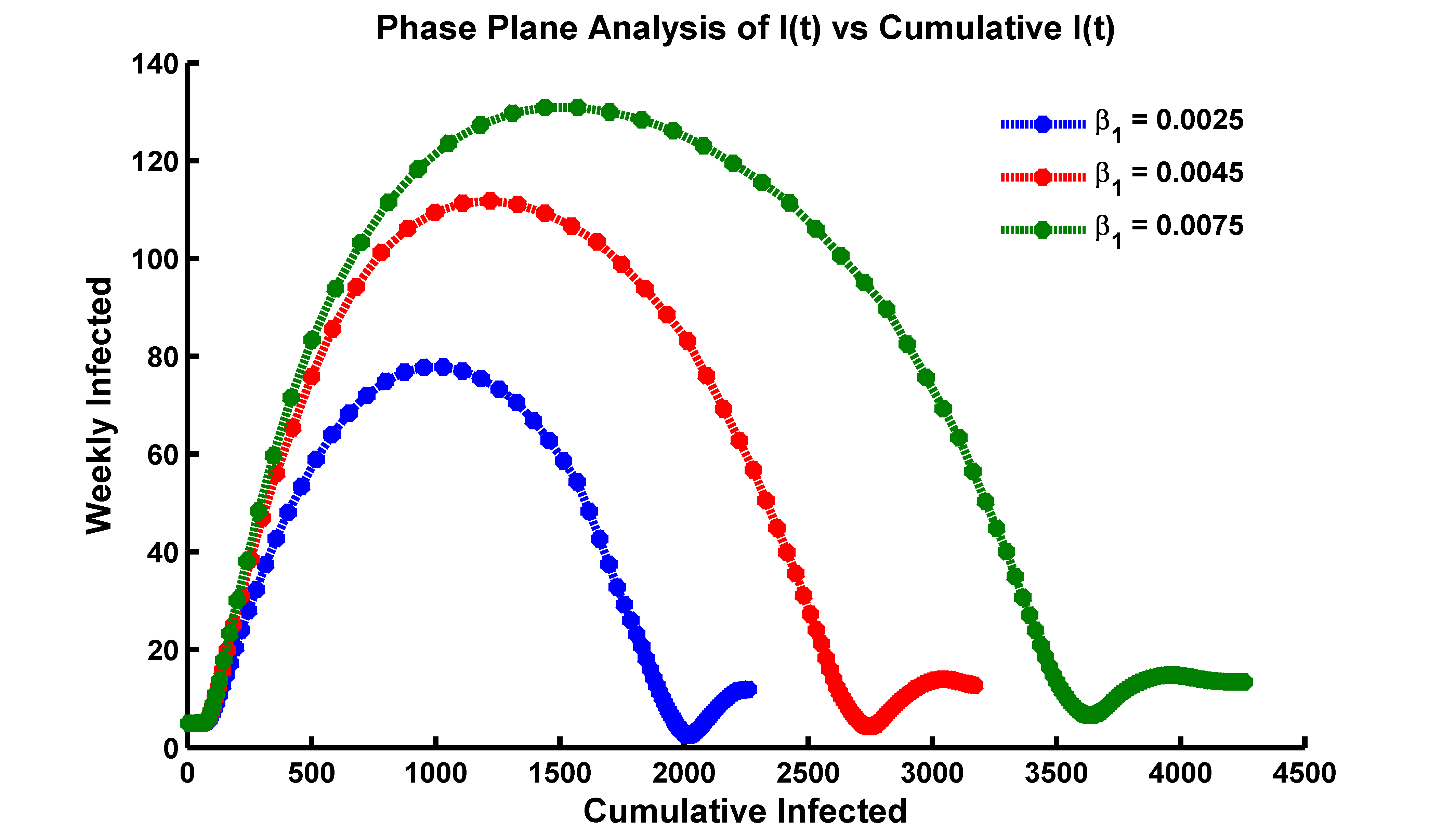}}
	\subfloat[]{\includegraphics[width=2.5 in]{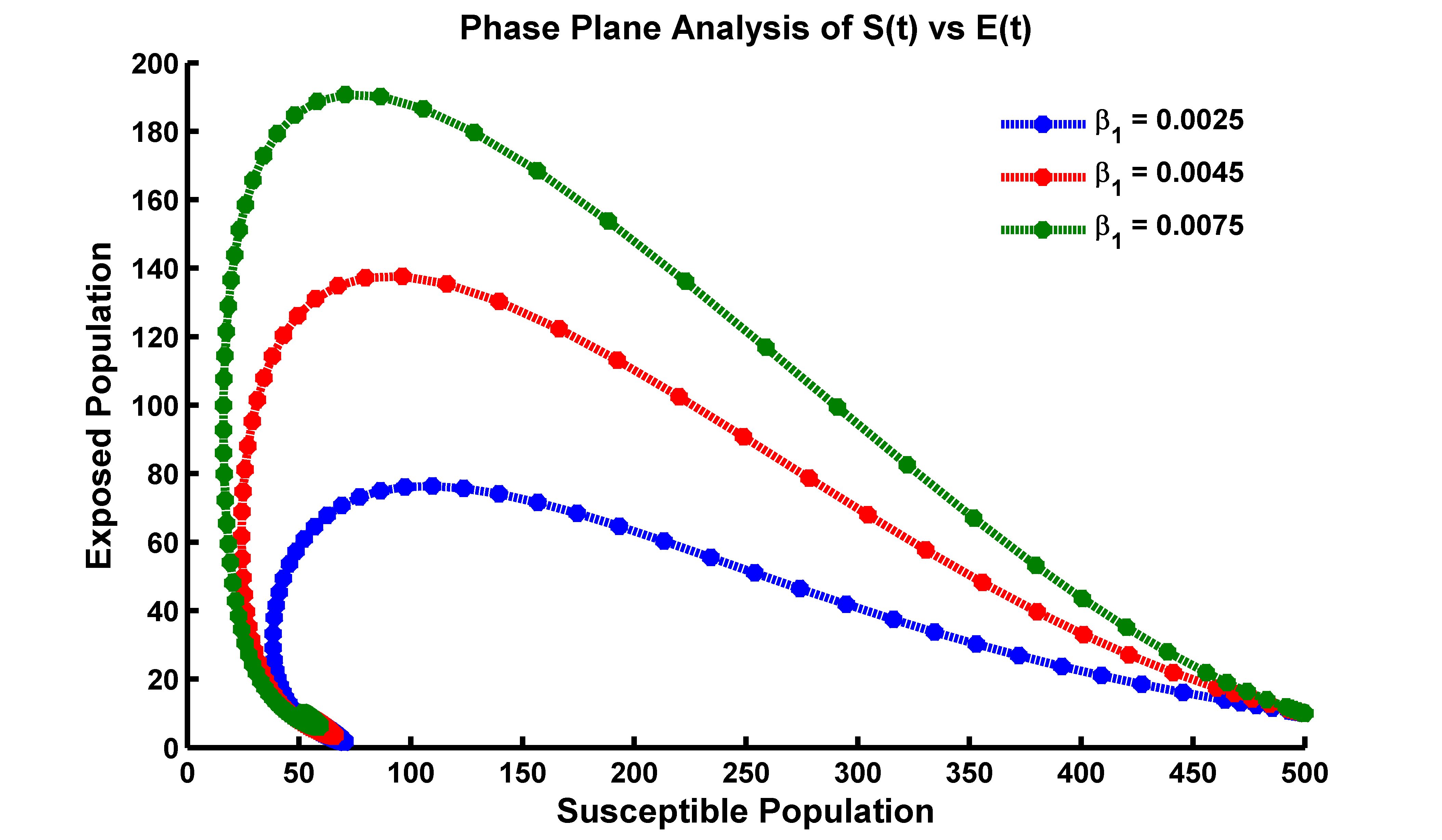}}\\
	\subfloat[]{\includegraphics[width=2.5 in]{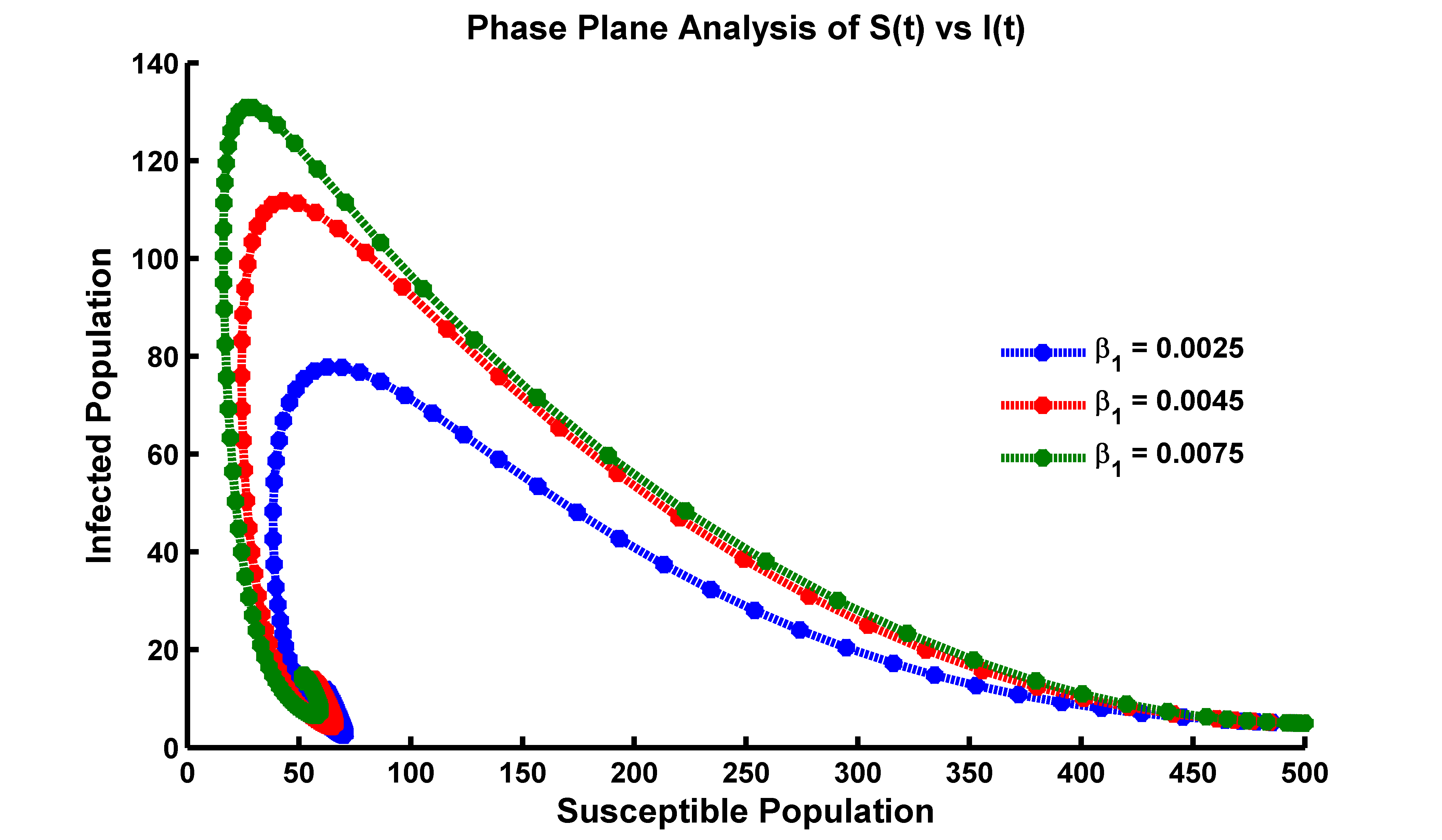}}
	\subfloat[]{\includegraphics[width=2.5 in]{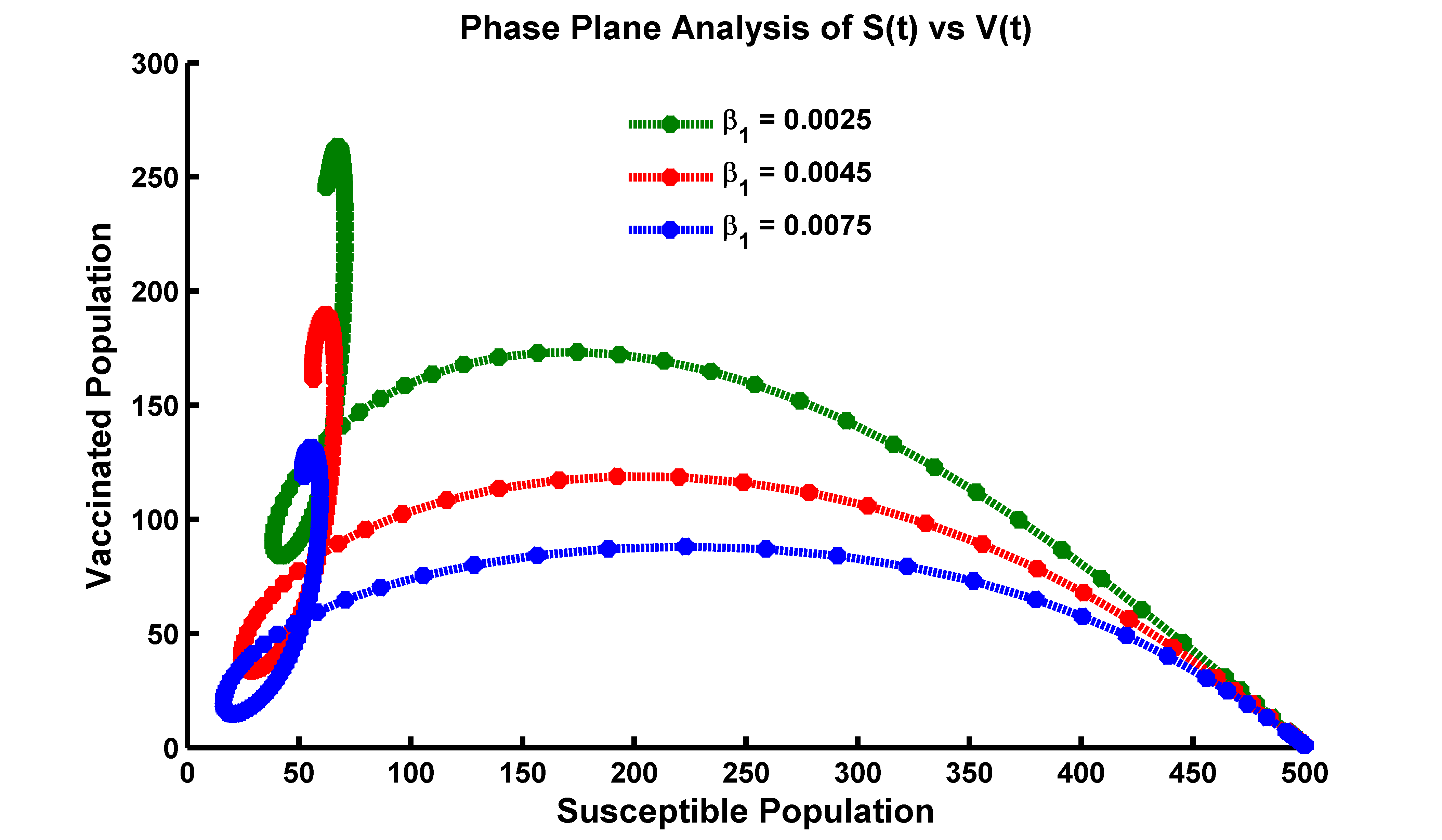}}
	\caption{Phase plane of (a) $I(t)$ vs Cumulative $I(t)$ compartment (b) $S(t)$ vs $E(t)$  compartment (c) $S(t)$ vs $I(t)$ compartment and (d) $S(t)$ vs $V(t)$ compartment where all parameters are taken from Table \ref{tableparameter}.}
	\label{phase-plane-22-compartment}
\end{figure}
\noindent
From Figure \ref{phase-plane-22-compartment}(d), it reflects that, these trajectories demonstrate the dynamics of the system over time, considering the two variables: susceptible population and vaccinated population. When the susceptible population is high and the vaccinated population is low, the trajectories move towards the susceptible axis. As the vaccinated population increases with the decreasing of susceptible population, the trajectories shift towards the vaccinated axis, indicating the impact of vaccination on reducing susceptibility. That means, by vaccination, more portion of people progress to the control strategy. The intersection or equilibrium point between the two axes represents the steady state, where a balance is achieved between susceptible and vaccinated individuals. The analysis helps assess the effectiveness of vaccination in mitigating the spread of the disease.

\subsection{Contour Plot Analysis of $\mathcal{R}_0$}\label{Subsection-Contour_Plot}
A contour plot of $\mathcal{R}_0$ (basic reproduction number) with respect to two parameters in epidemiology provides valuable insights into the spread and control of infectious diseases. By creating a contour plot, we can visualize how changes in two specific parameters affect the value of $\mathcal{R}_0$. This helps us in this thesis work to understand the dynamics of disease transmission and make informed decisions regarding interventions and control measures \cite{Bifurcation of R0-3, Bifurcation of R0-7}. The contour plot allows us to identify regions where $\mathcal{R}_0$ remains low or high based on the parameter values. It helps identify critical thresholds or tipping points that may lead to an outbreak or epidemic. For example, if the contour plot shows a steep increase in $\mathcal{R}_0$ as parameter values cross a certain threshold, it suggests that particular factors significantly influence disease transmission and warrant attention. Furthermore, the contour plot can guide decision-making by identifying areas where interventions or modifications to the parameters could effectively reduce $\mathcal{R}_0$. By manipulating the parameters within favorable regions, public health measures can be tailored to control or prevent the spread of infectious diseases more efficiently \cite{Bifurcation of R0-8}. Moreover, contour plot serves as a powerful tool to understand the relationship between key factors, assess the potential for disease spread, and devise targeted strategies for disease control and prevention.\\
Additionally, The colour bar in a contour plot indicates the values associated with different colours in the plot. It provides a visual representation of the magnitude or level of the variable being displayed. This typically represents the values of $\mathcal{R}_0$. Each colour on the colour-bar corresponds to a specific range or interval of $\mathcal{R}_0$ values. The colour intensity or shading within the contour plot indicates the relative magnitude of $\mathcal{R}_0$ at different points on the plot.

In this section, we have presented the relationship if the parameters by contour plot of the basic reproduction number $\mathcal{R}_0$ as a function of parameters. 
Figure \ref{contourplot-analysis}(a) depicts the simulation of the model \eqref{new_model} by showing the contour plot of the threshold quantity $\mathcal{R}_0$ as a function of disease transmission rate $\beta_{1}$ from $E(t)$ compartment, and recovery rate $\gamma$. In general it shows that, $\mathcal{R}_0$ value decreases when $\beta_{1}$ increases, and $\mathcal{R}_0$ value increases when $\beta_{1}$ decreases. Also, $\mathcal{R}_0$ value decreases significantly, as $\gamma$ increases. But the rate of progression of $\mathcal{R}_0$ is faster than that of $\gamma$.
\begin{figure}[H]
	\centering  
	\subfloat[]{\includegraphics[width=2.5 in]{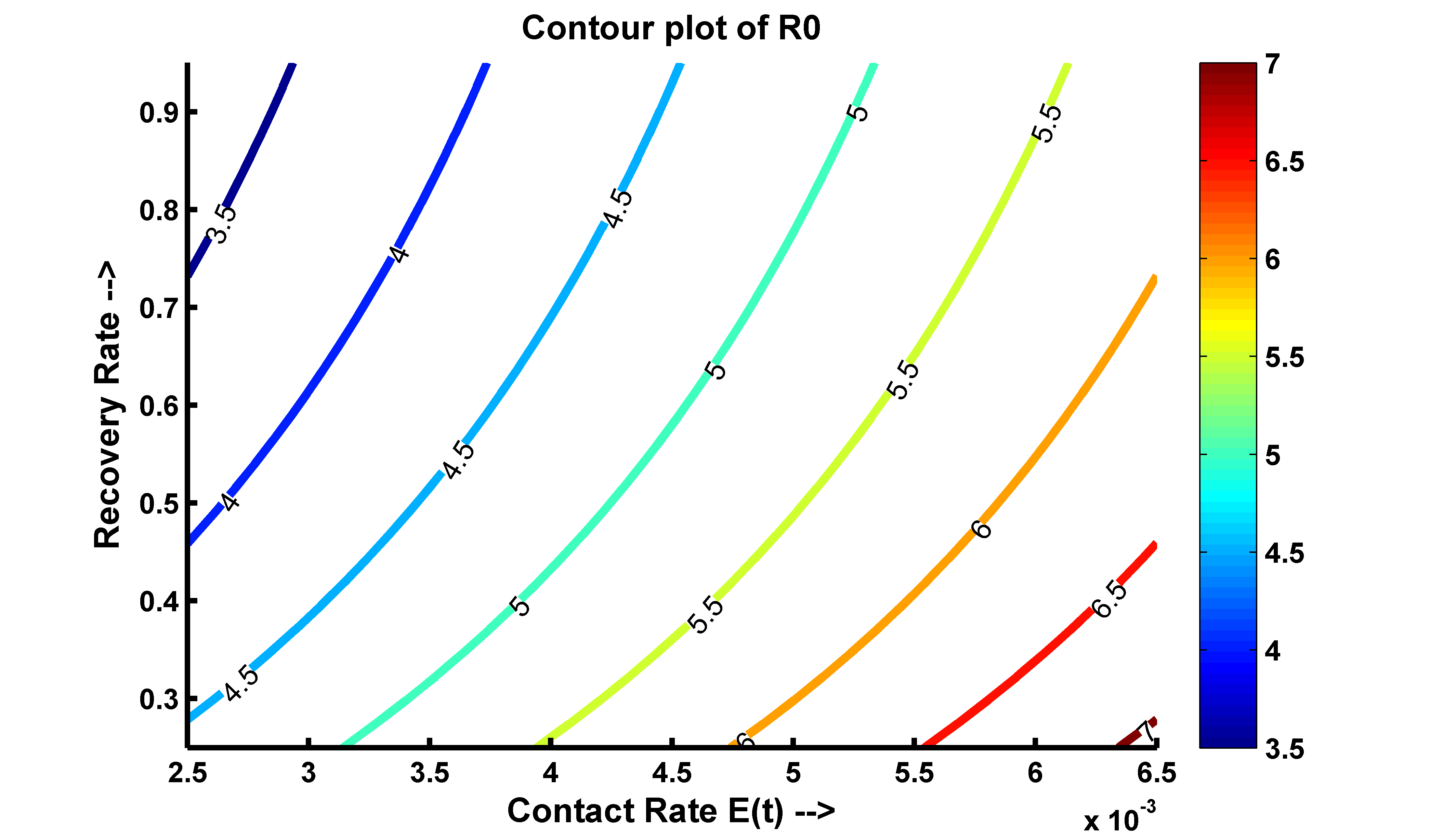}}
	\subfloat[]{\includegraphics[width=2.5 in]{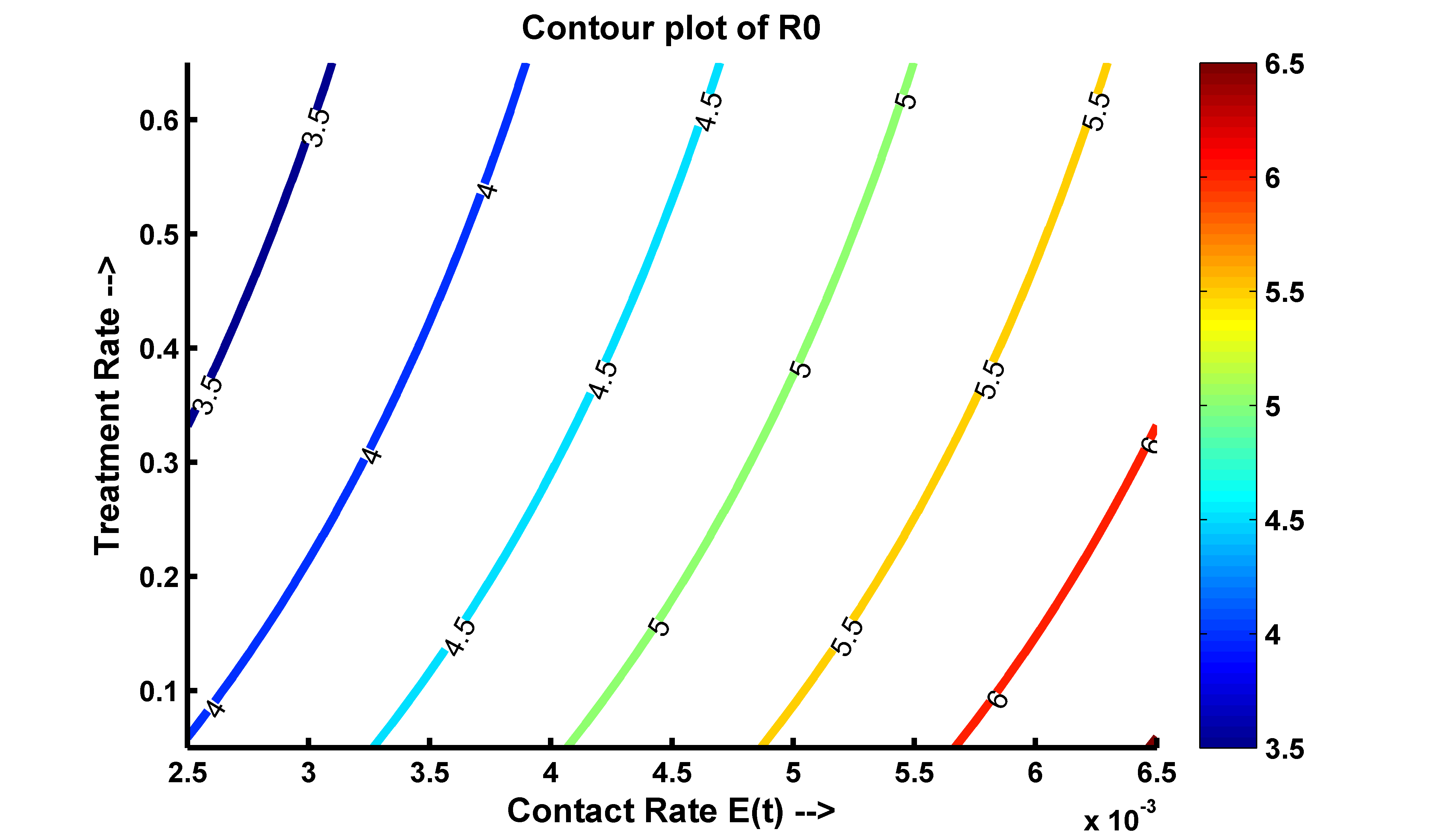}}\\
	\subfloat[]{\includegraphics[width=2.5 in]{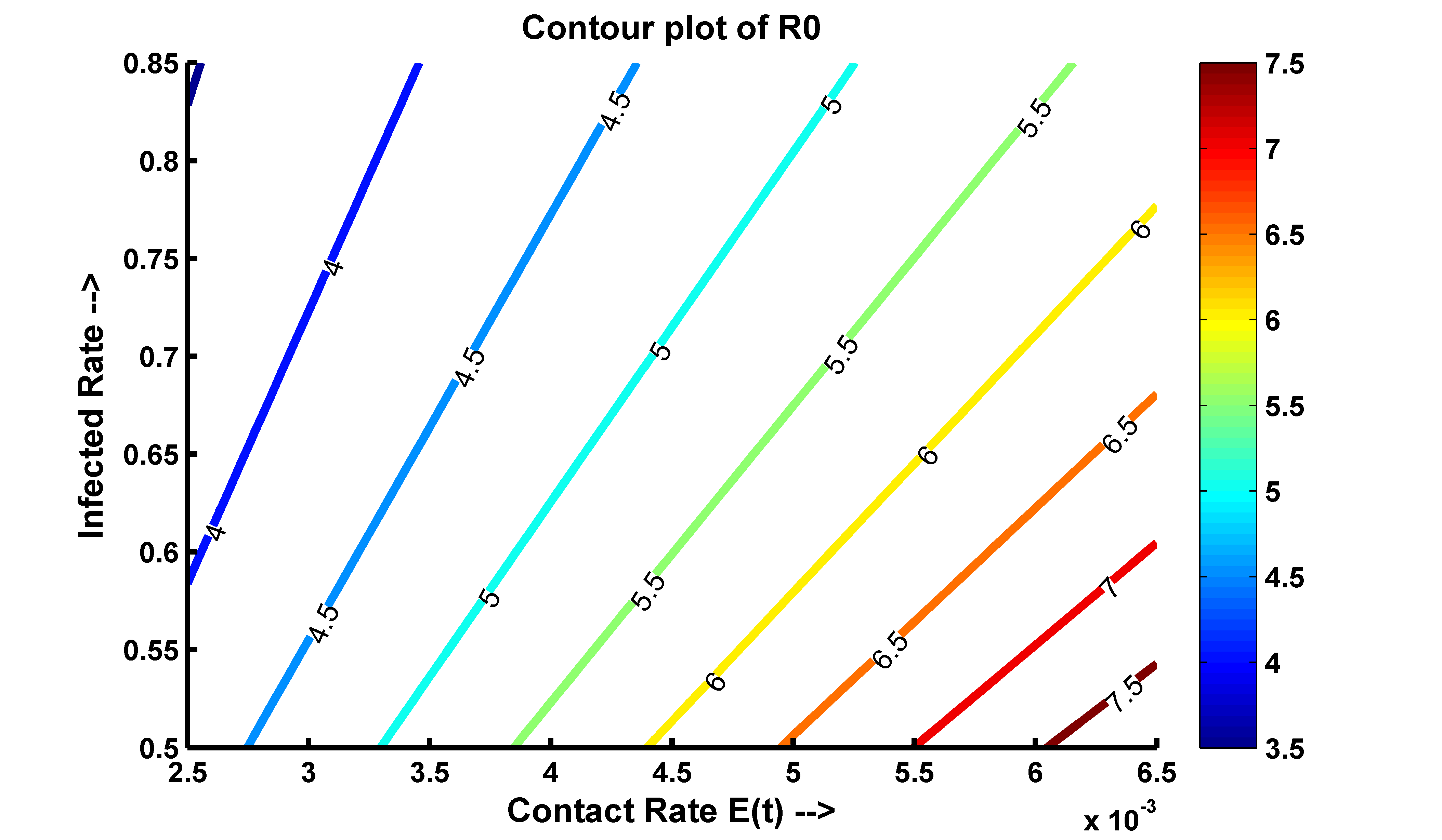}}
	\subfloat[]{\includegraphics[width=2.5 in]{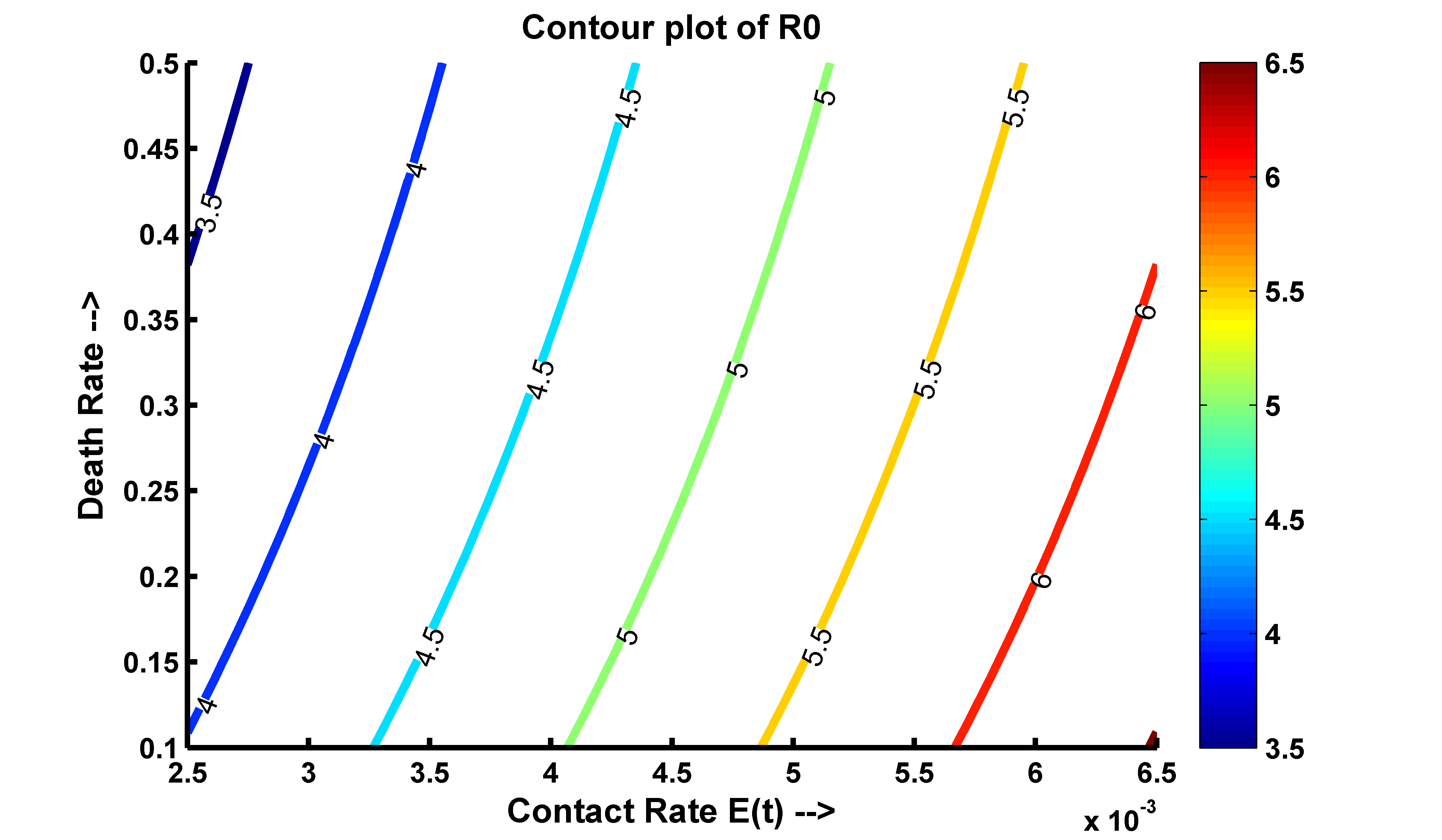}}\\
	\subfloat[]{\includegraphics[width=2.5 in]{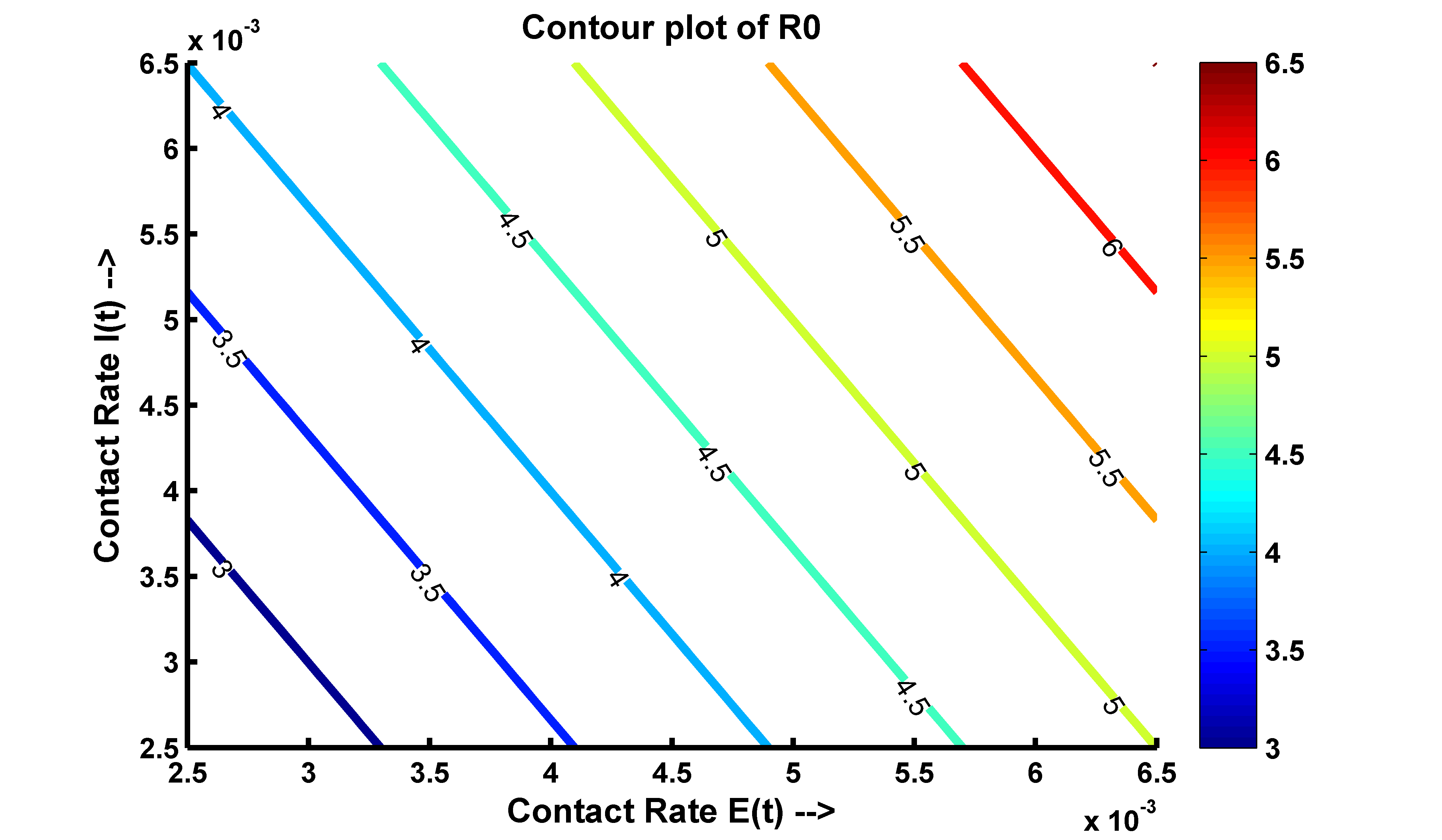}}
	\caption{Contour plots of $\mathcal{R}_0$ as a function of (a) parameter $\beta_1$ vs $\gamma$ (b) parameter $\beta_1$ vs $\gamma_1$ (c) parameter $\beta_1$ vs $\alpha$ (d) parameter $\beta_1$ vs $\delta$ and (e) parameter $\beta_1$ vs $\beta_2.$, parameter values are taken from Table \ref{tableparameter}.}\label{contourplot-analysis}
\end{figure}
\noindent
Simulation of the model \eqref{new_model} is presented in Figure \ref{contourplot-analysis}(b) showing the contour plots of $\mathcal{R}_0$ as function of $\beta_{1}$ and the treatment rate $\gamma_1$. It indicates that the value of $\mathcal{R}_0$ increases (or decreases) gradually when the value of $\gamma_1$ increases (or decreases). That means if we can take treatment strategy, we can mitigate the disease burden. Thus, one prevention strategies can be vaccination and proper isolation of the infected individuals.

From the numerical simulation of the model presented in Figure \ref{contourplot-analysis}(c) showing contour plot of $\mathcal{R}_0$ as a function of parameter $\beta_1$ and infection rate $\alpha$. It shows that $\mathcal{R}_0$ increases (or decreases) rapidly when $\alpha$ increases (or decreases). Thus by vaccination and treatment strategies, if progression to the infected compartment can be reduced, the disease burden can be minimized. Figure \ref{contourplot-analysis}(d) showing contour plot of $\mathcal{R}_0$ as a function of parameter $\beta_1$ and disease induced death rate $\delta$ and natural death rate $\mu$. That indicates that, $\mathcal{R}_0$ has reverse relation with the progression rate of $\delta$ and $\mu$. Thus, separation of the infected and exposed individuals has a significant impact to extinction of the disease from the community. Further, Figure \ref{contourplot-analysis}(e) showing contour plot of $\mathcal{R}_0$ as a function of parameter $\beta_1$ and parameter $\beta_2$. In general, it shows that the threshold quantity increases (or decreases) quickly if $\beta_{1}$ and $\beta_{2}$ increases (or decreases). But the increasing (or decreasing) rate of $\mathcal{R}_0$ with respect to $\beta_{2}$ is faster than that of $\beta_{1}$.  Thus, $\beta_{1}$ and $\beta_{2}$ are the crucial parameters which have significant impact to persist the disease in the community. Epidemiological meaning of the simulation is that, the increasing rate of $\gamma$ and $\gamma_1$ and decreasing rate of $\beta_{1}$, $\beta_{2}$ and $\alpha$ can reduce the spread of disease and mitigate the burden from the community.

\subsection{Box Plot Analysis}\label{Subsection-BoxPlot}
A box plot is a type of graph used to visualize the distribution of a dataset, particularly its median, quartiles, and outliers. In the context of basic reproduction number, a box plot analysis can be used to understand the variability of the basic reproduction number ($\mathcal{R}_0$) across different groups or time periods. When analyzing $\mathcal{R}_0$ using box plots, the median represents the typical value of $\mathcal{R}_0$, while the box indicates the interquartile range (IQR) of $\mathcal{R}_0$ values. The whiskers of the box plot represent the range of $\mathcal{R}_0$ values, while any points beyond the whiskers are considered outliers. In this section we have carried out the box plot analysis of $\mathcal{R}_0$ as a function of two parameters \cite{Bifurcation of R0-3, Bifurcation of R0-7, Bifurcation of R0-8}.

Figure \ref{Box-plot-analysis}(a) reflects valuable insights into the relationship between parameters $\beta_{1}$,$\beta_{2}$ (contact rate) and $\gamma$ (recovery rate) with the threshold quantity $\mathcal{R}_0$. The box visually displays the distribution of $\mathcal{R}_0$ at different combinations of $\beta_1$,$\beta_2$ and $\gamma$. We observe that, the tendency of $\mathcal{R}_0$ is decreasing (or increasing) for the value of parameter $\gamma$ increasing (or decreasing). But progression of $\mathcal{R}_0$ is proportional with $\beta_1$ and $\beta_2$. From, the whiskers extending from the box we have observed interquartile range of $\mathcal{R}_0$ is 5 to 7 when median value is 6 (for increasing 30\% of $\gamma$). Meanwhile, by increasing $\gamma$ by 60\% and 90\%, the median value of $\mathcal{R}_0$ decrease to 5 and 4.5 respectively; also the interquartile range.
\begin{figure}[H]
	\centering  
	\subfloat[]{\includegraphics[width=2.5 in]{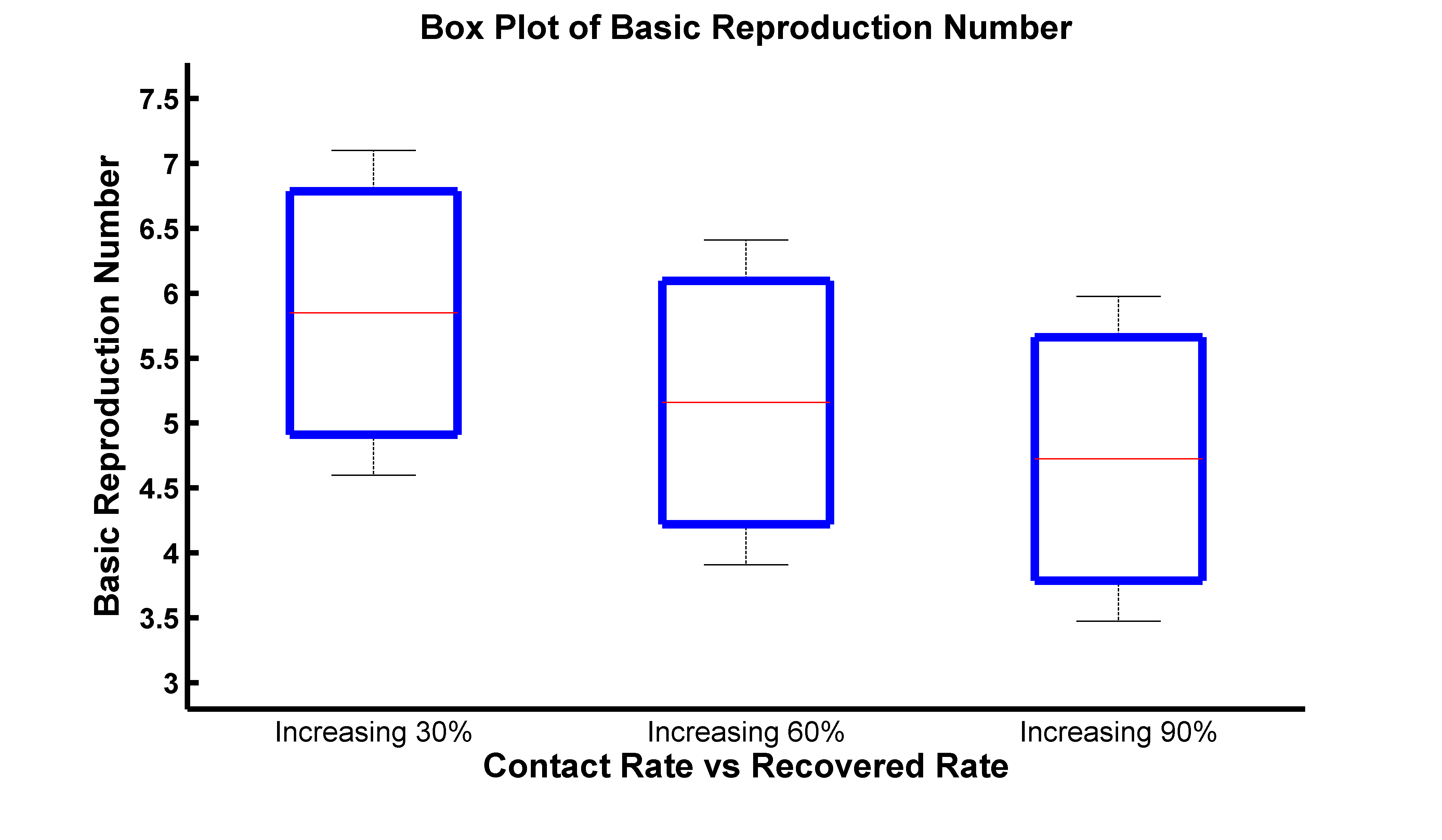}}
	\subfloat[]{\includegraphics[width=2.5 in]{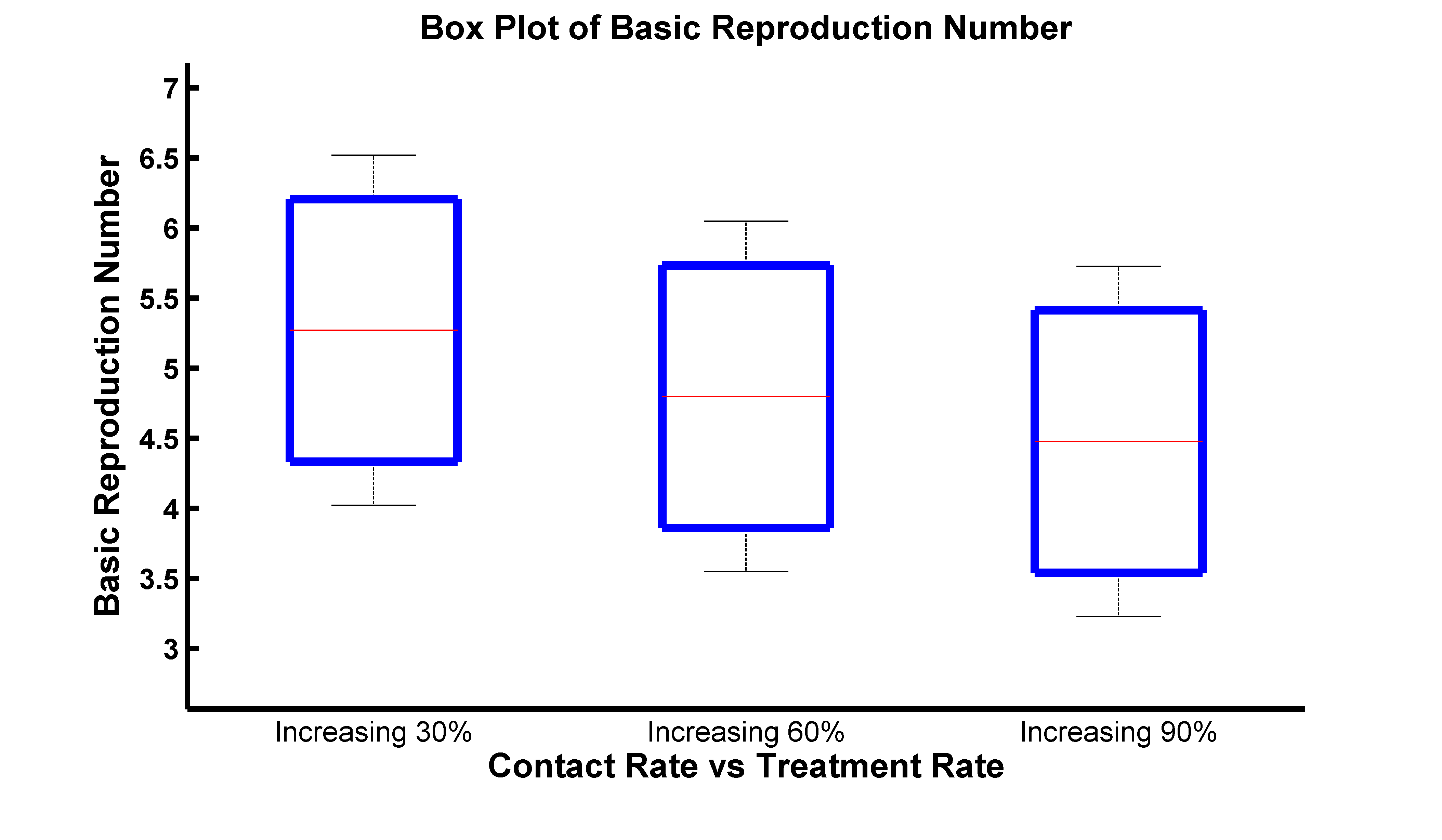}}\\
	\subfloat[]{\includegraphics[width=2.5 in]{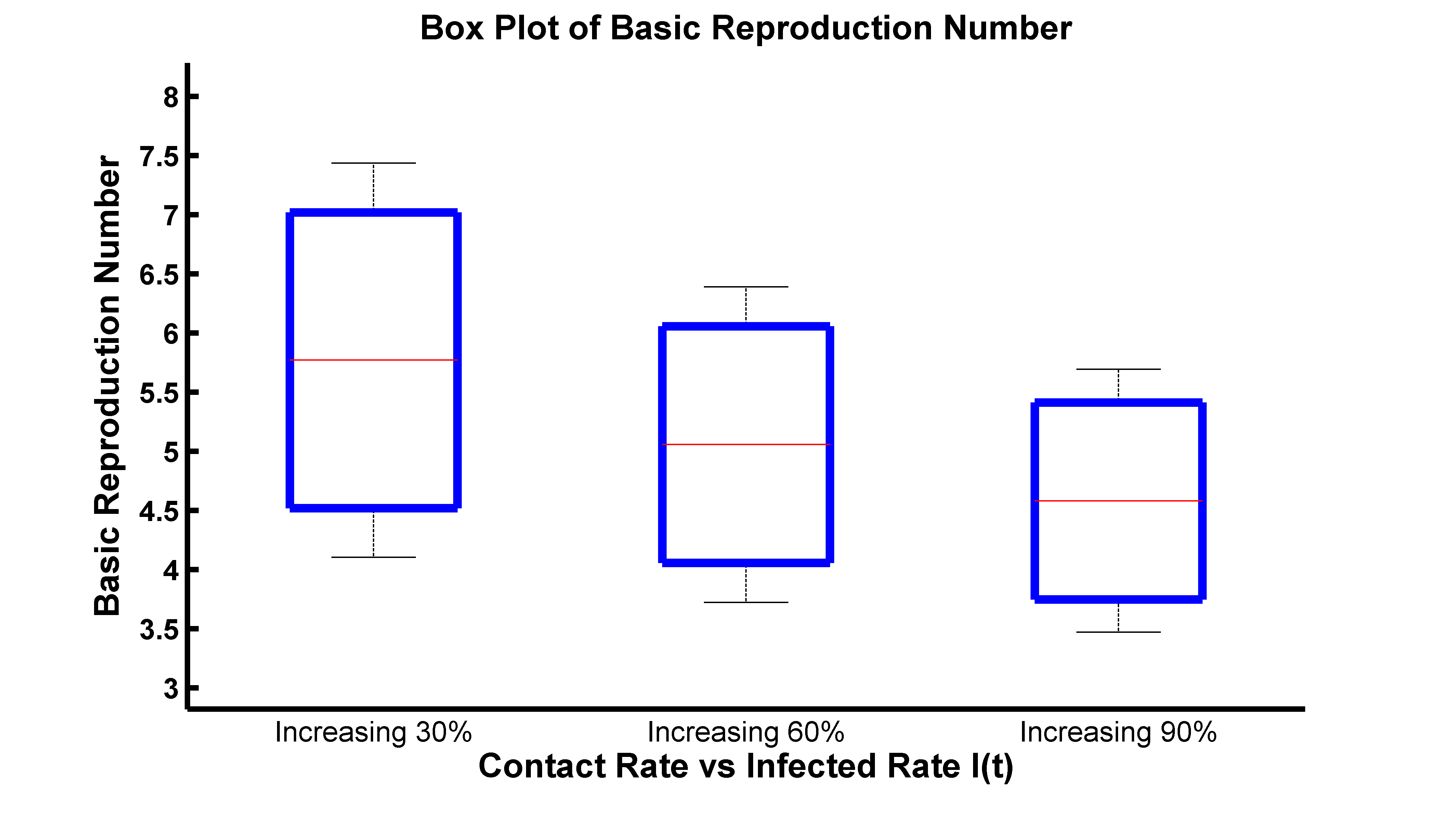}}
	\subfloat[]{\includegraphics[width=2.5 in]{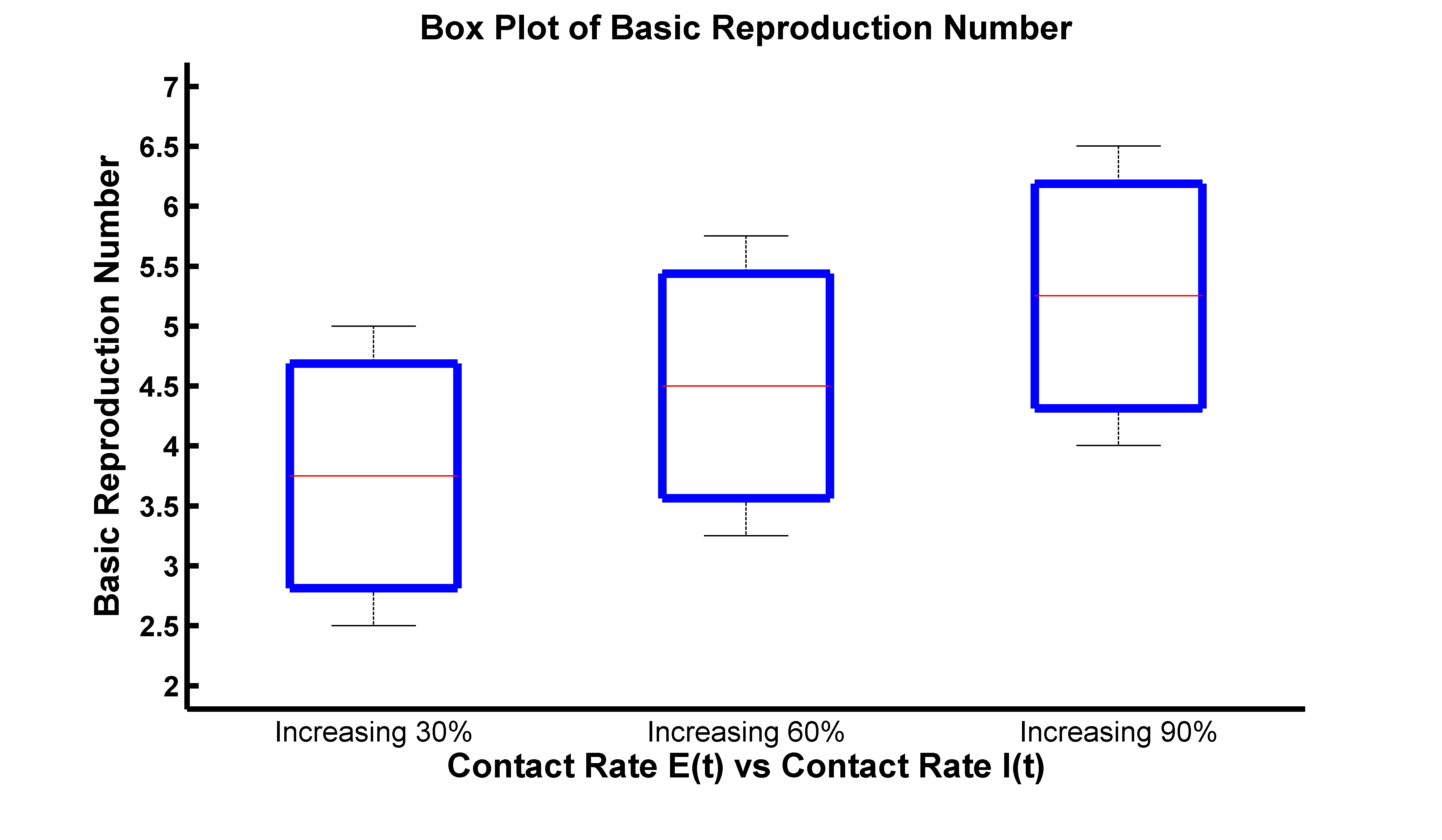}}
	\caption{Box plot analysis of $\mathcal{R}_0$ with the (a) parameter $\beta_1$ vs $\gamma$ (b) $\beta_1$ vs $\gamma_1$ (c) $\beta$ vs $\alpha$ and (d) $\beta_1$ vs $\beta_2$ , where all the parameter values are taken from Table \ref{tableparameter}.}\label{Box-plot-analysis}
\end{figure}
\noindent
Figure \ref{Box-plot-analysis}(b) reflects the influence of contact rates $\beta_1$, $\beta_2$ and treatment rate $\gamma_1$ on basic reproduction number $\mathcal{R}_0$. It reflects that, by increasing rate of $\gamma_1$, reduces the median value of $\mathcal{R}_0$. When, $\gamma_1$ increases 60\%, the median value of $\mathcal{R}_0$ is 4.7. From, the whiskers extending from the box we have observed interquartile range of $\mathcal{R}_0$ is 3.75 to 5.8.

Box plot analysis of $\mathcal{R}_0$ as a function of parameters $\beta_2$ and infection rate $\alpha$ is presented in Figure \ref{Box-plot-analysis}(c). This indicates that, with the progression of $\alpha$ and $\beta_2$, positive impact on the interquartile range of $\mathcal{R}_0$. When $\alpha$ and $\beta_2$ both increase 60\%, the median of $\mathcal{R}_0$ is 5, along with the interquartile range 4 to 6. But with the increasing of $\alpha$ and $\beta_{2}$ both 30\%, the interquartile range of $\mathcal{R}_0$ increases in range $[4.5, 7]$ and median is 5.75. Moreover, Figure \ref{Box-plot-analysis}(d) reveals box plot of $\mathcal{R}_0$ as a function of parameter $\beta_1$ and $\beta_2$. This indicates that, $\beta_1$ and $\beta_2$ have strong positive influence on the growth of $\mathcal{R}_0$. When both these contact rates on crease 30\%, then median value of $\mathcal{R}_0$ is 3.75, while the interquartile range is $[2.75,4.6]$. After the progression of $\beta_1$ and $\beta_2$ at 60\%, median value of $\mathcal{R}_0$ increase to 4.5 and whiskers reflects the increasing of interquartile range in $[3.5,5.5]$. After that, when $\beta_1$ and $\beta_2$ increases up-to 90\%, the median of $\mathcal{R}_0$ is 5.2 and the interquartile range fall in $[4.32, 6]$. This analysis suggests that, disease is more likely to spread rapidly with the increasing amount to transmission rates $\beta_1$, $\beta_2$ and infection rate $\alpha$.

Medicine of Influenza virus is available in several countries. Also experimental vaccine trials are going through in Africa sun continent regions \cite{CDC, WHO}. The objective is to determine the impact of treatment rate $\gamma_1$ on the threshold quantity $\mathcal{R}_0$. By taking the parameters value from Table \ref{tableparameter}, box plot analysis shows that the treatment to exposed and infected individuals reduces the quantity $\mathcal{R}_0$ effectively. If the infected individuals at sub-acute phase are considered as infectious, Figure \ref{Box-plot-analysis} reveals that the treatment and vaccination control strategy has a significant impact to reduce the disease burden. 

\section{Conclusion}\label{Section-Concluding-Remarks}
We have considered a deterministic SVEIRT epidemic model for influenza infection by taking vaccination and treatment strategies.
Theoretically, it is justified that the model is well-posed. 
The DFE is also globally asymptotically stable, which is established by Lyapunov functions and the LaSalle Invariance Principle whenever $\mathcal{R}_0<1$. When the basic reproduction number, $\mathcal{R}_0$ is greater than one, there exists a unique EE point of the model. Local stability of the EE is shown by the Routh-Hurwitz process when $\mathcal{R}_0>1$. Also, the global stability of EE is proved by using the non-linear Lyapunov function and LaSalle Invariance Principle, whenever $\mathcal{R}_0>1$. 
	Moreover, numerical simulations of the model support the existence and stability of DFE and EE.
 From the simulations of the model, we see that if the transmission rates $\beta_1$, $\beta_2$, and rate of acquiring infection $\alpha$, increases that results in the rapid increase of the disease burden. We see that if the rate of acquiring infection increases, then the disease burden will increase. Also, we observed the growth rate in $E(t)$ compartment arises the disease burden. Meanwhile, treatment rate and vaccination rate mitigate the factor $\mathcal{R}_0$. Moreover, contact rate $\beta_1$ and $\beta_2$ resulted into the rapid increase of the disease in the community. Also treatment of infected individuals in acute phase is significant. Human recruitment rate have a great influence on controlling the disease. According to the report of CDC \cite{CDC}, flu vaccination reduces the risk of flu illness by between 40\% and 60\% among the overall population during seasons when most circulating flu viruses are well-matched to those used to make flu vaccines. Several different brands of standard dose flu shots are available, including Afluria Quadrivalent, Fluarix Quadrivalent, FluLaval Quadrivalent, and Fluzone Quadrivalent as a active ingradient \cite{WHO}. Vaccination programme can reduce the quantity $(1-\varepsilon)$ which can reduce the rate $\mathcal{R}_0$.

 Quarantine policy, using masks, hand sanitizer, and droplets in a proper way is important to increase the recovery and reduction of interaction rate. 
 We have validated our framework by comparing its predictions with simulation results. In
 the end, vaccination , treatment program, preventive maintenance in lifestyle, and confirming
 an adequately supportive medical care system for all can provide significant result to reduce
 the outbreak of the disease. Thus, our desired model assures the effectiveness of the strategy to reduce and gradually prevent disease outbreaks.

\begin{appendices}
\appendix
\section*{Appendix}
\section{ Demonstrations of the Analytical Findings}\label{allproofs}
This appendix represents the proofs for the analytical results outlined in Section \ref{Section-Mathematical Model SVEIRT}, and \ref{Section-Local-global-stability of DFE EE}.\\

\noindent{\bf Proof of Theorem \ref{theorem01}.}
	The Picard-Lindelof Theorem asserts that concerning the initial value issue,
	$y'(t)= f(y(t)),\;y(t_0)=y_0,\;t \in [t_0-\epsilon,\;t_0+\epsilon]$, if $f$ is continuous in $t$ and locally Lipschitz in $y$, then for some value $\epsilon > 0,$ a unique solution $y(t)$ exists to the initial value problem within the range $[t_0-\epsilon,t_0+\epsilon].$ Because the system of ODEs is autonomous, it suffices to demonstrate that the function $\mathbf{f}: \mathbb{R}^6\rightarrow \mathbb{R}^6$ determined by,
	\[\mathbf{f}(\mathbf{y})=\begin{pmatrix}
		\Lambda-(\beta_1 E+\beta_2 I)S-(\mu+\phi)S \\ \phi S-(1-\varepsilon)(\beta_1 E+\beta_2 I)V -\mu V \\ (\beta_1 E+\beta_2 I)S-(\alpha+\mu)E \\ \alpha E+(1-\varepsilon)(\beta_1 E+\beta_2 I)V-(\mu+\delta+\gamma+\gamma_1)I \\
		\gamma I-\mu R \\ \gamma_1 I-\mu T
	\end{pmatrix}\]
	is locally Lipschitz in its $y$ argument. The Jacobian matrix, 
	\[\nabla\mathbf{f}(\mathbf{y})=\begin{pmatrix}
		a_{11}&0&-\beta_1S&-\beta_2S&0&0\\
		\phi & a_{22}&-\lambda\beta_1V&-\lambda\beta_2V&0&0\\
		\beta_1E+\beta_2I & 0& a_{33}&\beta_2 S&0&0\\
		0& a_{42}&\alpha+\lambda\beta_1 V&a_{44}&0&0\\
		0&0&0&\gamma&-\mu&0\\
		0&0&0&\gamma_1&0&-\mu
	\end{pmatrix}\]
	is linear in $\mathbf{y} \in \mathbb{R}^6$.
	Where \\ $a_{11}=-(\beta_1E+\beta_2I)-(\mu+\phi),\;a_{22}=-\lambda(\beta_1E+\beta_2I)-\mu $,\\
	$a_{33}=\beta_1 S-(\alpha+\mu),\; a_{42}=\lambda(\beta_1E+\beta_2I)$,
	and $a_{44}=\lambda\beta_2V-(\mu+\delta+\gamma+\gamma_1).$
 
	\noindent So, on a closed interval, $\nabla\mathbf{f}(\mathbf{y})$ is continuous while differentiable on an open interval $I_1\in \mathbb{R}^6$. According to the Mean Value Theorem, 
	\begin{equation*}
		\frac{|\mathbf{f}(\mathbf{y_1})-\mathbf{f}(\mathbf{y_2})|}{|\mathbf{y_1}-\mathbf{y_2}|}\leq |\nabla \mathbf{f}(\mathbf{y^*})|
	\end{equation*}
	for $\mathbf{y}^*\in I_1$. Let $|\nabla \mathbf{f}(\mathbf{y^*})|=K$, we get $|\mathbf{f}(\mathbf{y_1})-\mathbf{f}(\mathbf{y_2})|\leq K|\mathbf{y_1}-\mathbf{y_2}|$ for all $\mathbf{y_1},\mathbf{y_2}\in I_1$, and therefore for every $\mathbf{y}  \in \mathbb{R}^6$, $\mathbf{f}(\mathbf{y})$ is locally bounded.As a result, $\mathbf{f}$ is locally Lipschitz in $\mathbf{y}$ since it has a continuous, bounded derivative on any compact subset $\mathbb{R}^3$. The Pichard-Lindelof theorem states that for any time $t_0>0$, there is a unique solution, $y(t)$, to the ordinary differential equation $y'(t)=f(y(t))$ with starting value $y(0)=y_0$ on $[0,t_0]$ \cite{Stability Bound-11}.\\
 
\noindent{\bf Proof of Theorem \ref{th1}.}
	Assume that 
	$$\widehat{t} =\sup \{t >0 : S(t) \geq 0, V(t) \geq 0, E(t) \geq 0,I(t) \geq 0, R(t) \geq 0, \text{and} \;T(t) \geq 0 \} \in [0,t].$$
	Since the solution is continuous and each initial condition is non-negative, there must be a period while the outcome is still positive, and we observe that $\widetilde{t}>0$. Then, each term is calculated on the interval $[0,t]$.
	Thus, $\widehat{t} >0 $ and results from the equation of system \eqref{new_model} that,
	\begin{align*}
		\dfrac{\mathrm{d}S}{\mathrm{d}t} \geq \Lambda - (\lambda_1 +\mu)S. \quad
		\left[\text{where}\;\lambda_1 =(\beta_1 E + \beta_2 I)\right]
	\end{align*}
	
	\noindent This inequality can be resolved by applying the integrating factor approach.
	\begin{align*}
		\frac{d}{dt}\left\{S(t) \exp \left[\mu (t) + \int_0^t \lambda_1 (s) dS\right] \right\}
		\geq \pi \exp \left[\mu t + \int_0^t \lambda_1 (s) dS\right].
	\end{align*}
	Integrating both sides yields,\\
	\begin{align*}
		S(\widehat{t}) \exp \left[\mu \widehat{t} + \int_0^{\widehat{t}} \lambda_1 (s) dS\right] \geq \int_0^{\widehat{t}} \pi \exp \left[\mu \widehat{t} + \int_0^{\widehat{t}} (\lambda_1 (w)) dw\right]d\widehat{t}+C.
	\end{align*}
	Where C is the integration constant depending on the upper limit of $\lambda_1,\;\mu$, and $S(0)$. Hence,
	\begin{flalign*}
		&S(\widehat{t})\geq S(0) \exp\left[-\left(\mu \widehat(t)+\int_0^{\widehat{t}} (\lambda_1(S) dS)\right)\right] + &\\
		&\exp\left[-\left(\mu \widehat(t)+\int_0^{\widehat{t}} (\lambda_1(S) dS)\right)\right]\centerdot \left(\int_0^{\widehat{t}} \pi \exp \left[(\mu \widehat(t)+\int_0^{\widehat{t}} (\lambda_1(w) dw))\right] d\widehat{t}\right) >0.
	\end{flalign*}
	So, $S(\widehat{t}) \geq 0 ,\; \forall\;\widehat{t} \geq 0$.\\
	Next, from the positivity of the solutions place bounds on other compartments \cite{Stability Bound-14}. Here, 
	\begin{align*}
		\frac{dV}{dt} \geq & -((1-\varepsilon)\lambda_1 +\mu)V\\
		\Rightarrow V(\widehat{t}) \geq & V(0) \exp\left[-\left\{ \mu \widehat{t} + \int_0^{\widehat{t}} (1-\varepsilon)\lambda_1(s) dS \right\} \right]>0,\; \forall\;\widehat{t} \geq 0.
	\end{align*}
	
	It is examined that,
	\begin{align*}
		E(\widehat{t}) \geq &  E(0) e^{ -(\mu + \alpha)\widehat{t}} >0.\\
		I(\widehat{t}) \geq &  I(0)  e^{-(\mu + \delta +\gamma +\gamma_1)\widehat{t}} >0.\\
		R(\widehat{t}) \geq &  R(0) e^{-\mu \widehat{t}} >0.\\
		T(\widehat{t}) \geq &  T(0) e^{-\mu \widehat{t}} >0.
	\end{align*}
	for $\widetilde{t}\in [0,t]$. As a result, an upper limit can be set for $S(t), V(t), E(t), I(t), R(t)$, and $T(t)$.
	Therefore, all the solutions of the system \eqref{new_model} will stay non-negative for $t \geq 0$, encompassing at time $\widetilde{t}$. According to continuity, there must exist $t>\widetilde{t}$ such that $S(t), V(t), E(t), I(t), R(t)$, and $T(t)$ are strictly positive on the entire interval $[0,t]$. Further extending the interval of existence is possible because all functions stay bounded on this same interval \cite{Stability Bound-17}. The bounds on the compartments $S, V, E, I, R$, and $T$ that were derived previously hold for any brief time (compact interval). Consequently, we can extend the existence of the solution to $[0,t]$ for every $t>0$. Based on the aforementioned reasoning, the solutions continue to be positive and confined to $[0,t]$.\\

\noindent{\bf Proof of Theorem \ref{th2}.}
	We can write in vector form, 
	$$X=(S,V,E,I,R,T)^T \in \mathbb{R}^6.$$ 
	We define, \[F(X)= \\
	\begin{pmatrix}
		F_1(X) \\F_2(X)\\F_3(X) \\F_4(X)\\F_5(X)\\F_6(X)
	\end{pmatrix}=
	\begin{pmatrix}
		\Lambda-(\beta_1 E(t)+\beta_2 I(t))S(t)-(\mu+\phi)S(t)\\
		\phi S(t)-(1-\varepsilon)(\beta_1 E(t)+\beta_2 I(t))V(t)-\mu V(t)\\
		(\beta_1 E(t)+\beta_2 I(t))S(t)-(\alpha+\mu)E(t)\\
		\alpha E(t)+(1-\varepsilon)(\beta_1 E(t)+\beta_2 I(t))V(t)-(\mu+\delta+\gamma+\gamma_1)I(t)\\
		\gamma I(t)-\mu R(t)\\
		\gamma_1 I(t)-\mu T(t)
	\end{pmatrix}
	\] 
	where $F : \mathbf{C}_+ \rightarrow \mathbb{R}^6$, and $F \in \mathbf{C}^\infty(\mathbb{R}^6)$.\\
 
	\noindent Now,  $$\dot X=F(X_t),$$ \\
	where $\displaystyle \centerdot \equiv \frac{d}{dt}$ including $X_t(\theta) = X(t+\theta),\; \theta \in [0,\tau]. $\\
	It is simple to verify that whenever we desire $X(\theta) \in \mathbf{C}_+$ such that $X_i = 0 $, then we acquire $F_i(X) \vert_{X_i(t)=0}$, $X_{t} \in \mathbf{C}_+ \geq 0, \; i=1,2,\cdots,6.$
	Any result to the model's equation, alongside $X_{t}(\theta) \in \mathbf{C}_+$ say, $X(t)= X(t,X(0))$ is such that $X(t) \in \mathbb{R}_{0+}^6$ for all $t>0$.
	The size of the population, $N=S+V+E+I+R+T$
	with initial conditions, $S(0) \geq 0,\;V(0) \geq 0,\;E(0) \geq 0,\;I(0) \geq 0,\;R(0) \geq 0,\;T(0) \geq 0.$\\
	Now, for the boundedness of the solution we define, 
	\begin{align*}
		\frac{dN}{dt}=&
		\leq   \Lambda-\mu N.
	\end{align*}	
	This indicates that $N(t)$ is bounded, and so are $S(t), V(t), E(t), I(t), R(t) \; \text{and} \; T(t).$
	\begin{align*}
		\text{Here, }\;\;  
		& N \leq N_0e^{-\mu t}+ \frac {\Lambda}{\mu}\left(1-e^{-\mu t}\right)
	\end{align*}
	from this expression when $\displaystyle t \rightarrow \infty, \; \text{and}\; N(t) \leq \frac{\Lambda}{\mu}.$
	The system will be examined in biologically feasible regions as follows. As a result, we can consider the feasible region
	$\displaystyle \Omega=\left\{(S,V,E,I,R,T) \in \mathbb{R}_{+}^6 :S \leq \frac{\Lambda}{\mu} , V, E, I, R, T \geq 0 \right\}.$\\

\noindent{\bf Proof of Theorem \ref{th3}.}
Here,
	\begin{align*}
		\frac{dN}{dt}= \Lambda-\mu N.
	\end{align*}
	The omission of influenza infection ensures that, 
	\begin{align*}
		\frac{dN}{dt} \leq \Lambda-\mu N.
	\end{align*}
	Now,
	\begin{align}\label{th3-2.6}
		N(t) \leq \frac{\Lambda}{\mu}+\left(N(0)-\frac{\Lambda}{\mu}\right)\exp(-\mu t).
	\end{align}
	From (\ref{th3-2.6}), we examine that as $\displaystyle t\rightarrow \infty,\; N(t)\rightarrow \frac{\Lambda}{\mu}$. So, if $\displaystyle N(0)\leq \frac{\Lambda}{\mu}$ then $\displaystyle \lim_{t\to\infty} N(t)=\frac{\Lambda}{\mu}$. On the other hand, if $\displaystyle N(0) > \frac{\Lambda}{\mu}$, then total population $N$ will decrease to $\displaystyle \frac{\Lambda}{\mu}$ as $t\rightarrow \infty$. Particularly, $\displaystyle N(t) < \frac{\Lambda}{\mu}$ if $\displaystyle N(0) < \frac{\Lambda}{\mu}$. This means that $\displaystyle N(t)\leq \max \left\{N(0),\frac{\Lambda}{\mu}\right\}$. Hence, no solution path crosses any borders of $\Omega$, making the region $\Omega$ a positively invariant set of the model \eqref{new_model}. This demonstrates that the developed model is applicable from a mathematical and epidemiological perspective \cite{Stability Bound-14, Stability Bound-18}. The model is appraised in the biologically feasible region which means the considered model is well-defined. Further, $\displaystyle N(t)>\frac{\Lambda}{\mu}$, then the other solution enters $\Omega$ in finite time, or $N(t)$ approaches to $\displaystyle \frac{\Lambda}{\mu}$, and the variables $E(t),\; I(t),\;R(t),\;V(t)\;\text{and} \; T(t)$ approach to zero. Hence the region $\Omega$ is attracting.\\
	Therefore, $N(t)$ is bounded above. Subsequently, $S(t),\;V(t),\;E(t),\;I(t),\;R(t)\; \text{and} \;\;T(t)$ are all bounded above. Thus, in $\Omega$, system \eqref{new_model} is well-posed and global attractor of the system.\\

\noindent {\bf Proof of Theorem \ref{gsDFE_1}.}
 Utilizing Lemma \ref{gsl_DFE} on the model \eqref{new_model}, we acknowledge that $X_1= (S, R)$ and $X_2=(E, I)$ when the system at the DFE. At this DFE, the state variables are given by $X_1^*=(N,0)$. It is significant to remember that:
	\begin{equation*}
		\frac{dX_1}{dt}=F(X_1,0)=
		\begin{pmatrix}
			\mu N-(\mu+\phi)S\\
			-\mu R
		\end{pmatrix}
	\end{equation*}
	is linear and its result can be easily identified as, 
	$$R(t)= R(0)e^{-\mu t},\; S(t)=N-(N-S(0))e^{-\mu t}.$$
	Evidently, as $t\rightarrow \infty$, both $R(t)$ tends to 0 and $S(t)$ tends to $N$, regardless of the initial values of $R(0)$ and $S(0)$. Thus, the equilibrium point $X_1^*=(N,0)$ is globally asymptotically stable, and condition (H1) is satisfied.
	Next,
	\begin{equation*}
		G(X_1,X_2)=
		\begin{pmatrix}
			(\beta_1 E+\beta_2 I)S -(\alpha+\mu)E \\
			\alpha E-(\mu+\delta+\gamma+\gamma_1)I
		\end{pmatrix}.
	\end{equation*}
	We can obtain,
	\begin{equation*}
		A=
		\begin{pmatrix}
			\beta_1 N-(\alpha+\mu) & \beta_2 N\\
			\alpha & -(\mu+\delta+\gamma+\gamma_1)
		\end{pmatrix}	
	\end{equation*}
	with all non-negative off-diagonal elements. Consequently, 
	\begin{equation*}
		\hat{G}(X_1,X_2)=
		\begin{pmatrix}
			\beta_1 E(N-S)+\beta_2 I(N-S)\\
			0
		\end{pmatrix}.
	\end{equation*}
	Since, $0\leq S \leq N$, it is obvious that $\hat{G}\geq 0$. That leads to the global stability of DFE for $\mathcal{R}_0<1$.\\

 \noindent {\bf Proof of Theorem \ref{gs_DFE2}.}
 To examine the global stability of $\mathcal{E}^0$, we assume a Lyapunov functional $U_1(t)$,
	$$U_1=\bar S F\left(\frac{S}{\bar{S}}\right)+E+I = \left(S-\bar{S}-\bar{S}\ln\frac{S}{\bar{S}}\right)+E+I.$$
	Here, $U_1$	is continuous, well-defined, and positive definite for all $(S,V,E,I,R,T) > 0$ and $\theta \in [0,\tau]$.\\
	It illustrates that $U_1$ is always non-negative, and $U_1$ equals zero exclusively when assessed at the non-infective equilibrium point $\mathcal{E}^0$. Moreover, the global minimum of $U_1$ is achieved at $\mathcal{E}^0$. Consequently, all outcomes converge toward the infection-free steady state $\mathcal{E}^0$. Additionally, the functions $U_1$ along the system's trajectories adhere to the following relations:

	\begin{flalign*}
		 \frac{dU_1}{dt} =&\left(1-\frac{\bar{S}}{S}\right)(\Lambda-(\beta_1 E+\beta_2 I)S-(\mu+\phi)S)+(\beta_1 E+\beta_2 I)S-(\mu+\alpha)E+\alpha E+\\
		&\lambda(\beta_1 E+\beta_2 I)V- (\mu+\delta+\gamma+\gamma_1)I.	
	\end{flalign*}
	Utilizing the infection-free steady state of the model \eqref{new_model}, $\Lambda=(\mu+\phi)\bar{S}$ in above expression,
	then the equation becomes,
	\begin{align*}
		\frac{dU_1}{dt} &\leq -\frac{(\mu+\phi)}{S}(S-\bar{S})^2+\beta_1 E\bar{S}+\beta_2 I\bar{S}-(\mu+\phi)E+\alpha E -(\mu+\delta+\gamma+\gamma_1)I\\
		&\leq-\frac{(\mu+\phi)}{S}(S-\bar{S})^2+\left(\frac{\alpha}{\alpha+\mu}\left(\frac{S_0\left[\alpha\beta_2+\beta_1(\gamma+\gamma_1+\mu+\delta)\right]}{(\alpha+\mu)(\gamma+\gamma_1+\delta+\mu)}\right)-1\right)(\mu+\delta+\gamma+\gamma_1)I-\\
		&\;\;\;\;(\beta_1\bar{S}-\mu)E\\
		&\leq-\frac{(\mu+\phi)}{S}(S-\bar{S})^2+\left(\frac{\alpha}{\alpha+\mu}\mathcal{R}_0-1\right)(\mu+\delta+\gamma+\gamma_1)I-\left(\beta_1\frac{\mu N}{\mu+\phi}-\mu\right)E\\
		&\leq 0.
	\end{align*}
	Here, we substituted the DFE value $\displaystyle S_0=\frac{\mu N}{\mu+\phi}$. If $\mathcal{R}_0<1$, then $\displaystyle \frac{dU_1}{dt}$ is negative. Additionally, $\displaystyle \frac{dU_1}{dt}=0$ if and only if $S(t)=\bar{S}$ and $E(t)=I(t)=R(t)=0$. Therefore, based on the Lasalle invariance principle, the infection-free equilibrium point $\mathcal{E}^0$ is globally asymptotically stable over $\Omega$ in the scenario where $\mathcal{R}_0<1$.
\end{appendices}
\begin{appendices}
    \section{Endemic Equilibrium Point}\label{endemic_calculation}
    This appendix presents the brief calculation for computing fixed points in Section \ref{Section-Determination Fixed Points}.\\
    
    The variables for the endemic equilibrium (EE) are substituted as 
$(\widetilde{S},\widetilde{V},\widetilde{E},\widetilde{I},\widetilde{R},\widetilde{T}) \equiv (S^{*},V^{*},E^{*},I^{*},R^{*},T^{*}), $ where $ E^{*} > 0, \; \text{and} \; I^{*} > 0$\ also $E^{*}\neq 0,\; I^{*}\neq 0$.
And we have the following system as follows,

\begin{align}\label{equiEE}
	\begin{cases}
		\vspace{0.2cm}
		\displaystyle\Lambda - \left(\beta_1E^{*}+\beta_2I^{*}\right) S^{*}-(\mu+\phi) S^{*}=0. \\
		\vspace{0.2cm}
		\displaystyle\phi S^{*}-(1-\varepsilon)\left(\beta_1E^{*}+\beta_2I^{*}\right) V^{*}-\mu V^{*}=0.\\
		\vspace{0.2cm}
		\displaystyle\left(\beta_1E^{*}+\beta_2I^{*}\right)S^{*}-(\alpha+\mu) E^{*}=0.\\
		\vspace{0.2cm}
		\displaystyle\alpha E^{*}+ (1-\varepsilon)\left(\beta_1E^{*}+\beta_2I^{*}\right) V^{*}-(\mu+\delta+\gamma+\gamma_1) I^{*}=0.\\
		\vspace{0.2cm}
		\displaystyle\gamma I^{*}-\mu R^{*}=0.\\
		\displaystyle\gamma_1I^{*}-\mu T^{*}=0.
	\end{cases}
\end{align}
Now, adding the first and third equation of \eqref{equiEE} we have,
\begin{align}\label{expS}
	\begin{cases}
		\vspace{0.2cm}
		& \displaystyle\Lambda-(\mu+\phi) S^{*} =  (\alpha+\mu)E^{*}\\
		\vspace{0.2cm}
		& \displaystyle\Rightarrow S^{*}=\frac{\Lambda-(\alpha+\mu)E^{*}}{(\mu+\phi)}= \frac{\Lambda-a_1 E^{*}}{a_2}. 
	\end{cases}
\end{align}
$\text{Here,}\; \displaystyle a_1=\alpha+\mu, \; \text{and} \; a_2=\mu +\phi$. Now, from the second equation of \eqref{equiEE} we have,
\begin{align}\label{expV}
	\begin{cases}
		\vspace{0.2cm}
		&\displaystyle \phi S^{*}=\left\{\mu+\lambda(\beta_1 E +\beta_2 I)\right\} V\\
		\vspace{0.2cm}
		& \displaystyle\Rightarrow V^*=\frac{\phi S^{*}}{\mu+\lambda(\beta_1 E^{*} +\beta_2 I^{*})} = \frac{\phi\left(\frac{\Lambda-a_1 E^{*}}{a_2} \right)}{\mu+\lambda(\beta_1 E^{*} +\beta_2 I^{*})}\\
		&\displaystyle\Rightarrow V^*=\frac{\phi(\Lambda-a_1 E^{*})}{a_2\left\{ \mu +\lambda(\beta_1 E^{*}+\beta_2 I^{*}) \right\}} \;\;\;\; \text{where} \; \lambda=(1-\varepsilon).
	\end{cases}
\end{align}	
Now, from \eqref{expS} substituting the value of $S^*$, we get from third equation of \eqref{expS}, 
\begin{align*}
	& \Rightarrow (\beta_1E^{*}+\beta_2I^{*})=(\alpha+\mu)E^*=a_1 E^{*}\\
	& \Rightarrow \left( \frac{\Lambda-a_1 E^{*}}{a_2}\right) (\beta_1E^{*}+\beta_2I^{*}) = a_1 E^{*}\\
	&\Rightarrow a_1\beta_1  E^{*2}+E(a_1a_2+a_1\beta_2 I^{*}-\Lambda \beta_1)-\Lambda \beta_2 I^{*}=0.
\end{align*}
Now, 
\begin{align}\label{critexpee}
	E^*=\frac{(\Lambda \beta_1-a_1a_2-a_1\beta_2 I^{*})\pm \sqrt{(\Lambda \beta_1-a_1a_2-a_1\beta_2 I^{*})^2 + 4\Lambda\beta_2I^* a_1\beta_1}}{2a_1\beta_1}.
\end{align}
Two roots of the EE point of \eqref{critexpee} will be real if and only if,
\begin{align*}
	&  (\Lambda \beta_1-a_1a_2-a_1\beta_2 I^{*})^2> -4\Lambda\beta_2I^* a_1\beta_1.
\end{align*}
Here, from the expression of \eqref{critexpee}, one root will be always positive, other will be positive if and only if $4\Lambda\beta_2I^* a_1\beta_1 <0.$
Now, 
\begin{align*}
	&\frac{S_0 \alpha \beta_2 +S_0 \beta_1(\gamma +\gamma_1+\delta+\mu)+V_0 \beta_2\lambda(\alpha+\mu)}{(\alpha+\mu)(\gamma +\gamma_1+\delta+\mu)} > 1\\
	& \Rightarrow S_0\alpha\beta_2 +S_0\beta_1 a_3 +V_0\beta_2\lambda a_1 > a_1a_3.
\end{align*}
Where we let $ a_3= (\gamma +\gamma_1+\delta+\mu).$
Now, from the second equation of \eqref{equiEE} we have,
\begin{align*}
	&V^{*}= \frac{\phi(\Lambda-a_1 E^{*})}{a_2(\mu+\lambda\lambda_1)}.
\end{align*}
Where we let $ \lambda_1= \beta_1 E^{*}+\beta_2 I^{*}$, which is the force of infection of the model \eqref{equiEE}.
Similarly, 

\begin{align*} \label{eeI}
	R^* = \frac{\lambda I^*}{\mu}, \; \text{and}\;
	T^* = \frac{\lambda_1 I^*}{\mu}.
\end{align*}
Thus, at the endemic equilibrium $S^{*},V^{*},E^{*},I^{*},R^{*},\; \text{and} \; T^{*}$ depends on the nature of $I^{*}$.
Now, putting all expressions from above from the third equation of \eqref{equiEE} we have,
\begin{align}
	(\beta_1E^* +\beta_2I^*)S^* = (\alpha+\mu)E^*
	\Rightarrow \frac{(\alpha+\mu)E^*-\beta_1E^*S^*}{\beta_2S^*}= I
	\Rightarrow \frac{(\alpha+\mu)E^* -\beta_1E^*\left(\frac{\Lambda-a_1 E^{*}}{a_2}\right)}{\beta_2\left(\frac{\Lambda-a_1 E^{*}}{a_2}\right)} =I^*\nonumber\\
 \Rightarrow \displaystyle a_1E^* - \beta_1E^*\left(\frac{\Lambda-a_1 E^{*}}{a_2}\right)-\beta_2I^*\left(\frac{\Lambda-a_1 E^{*}}{a_2}\right)=0.
\end{align}

Now, from the fourth equation of \eqref{equiEE}  we have,	
\begin{align}
	\begin{cases}
		\vspace{0.2cm}
		&\displaystyle \alpha E^* +\lambda(\beta_1E^*+\beta_2I^*)V^*-(\mu+\delta+\gamma+\gamma_1)I^*=0.\\
		&\displaystyle \alpha E^*+\lambda(\beta_1E^*+\beta_2I^*)\left\{ \frac{\phi(\lambda-a_1E^*)}{a_2\{\mu+\lambda(\beta_1 E^{*}+\beta_2 I^{*})\}} \right\}-(\mu+\delta+\gamma+\gamma_1)I^*=0.
	\end{cases}
\end{align}
Let, the force of infection
$$\lambda_1=(\beta_1E^*+\beta_2I^*),\;
a_3=(\mu+\delta+\gamma+\gamma_1),\; \text{and}\;
a_4=(\mu+\lambda\lambda_1).$$
Now, by solving and simplifying the above two expressions using Mathematica we have obtained,
\begin{flalign*}
	&I^*= \frac{G_1+G_2}{(2a_1a_2a_3a_4(-a_2a_4(a_3\beta_1+\alpha\beta_2)+a_1\beta_2\lambda\lambda_1\phi))}. &\\
	&\text{Where}&\\ &\mathbb{K}=(a_2^2a_4(-2a_1a_2a_3a_4(a_3\beta_1+\alpha\beta_2)\Lambda+a_4(a_3\beta_1+\alpha\beta_2)^2\Lambda^2+a_1^2a_3(a_2^2a_3a_4+4\beta_2\lambda\Lambda\lambda_1\phi))),&\\
	&G_1=-(a_1^2a_2^2a_3a_4\lambda\lambda_1+a_2a_4\alpha(a_2a_4(a_3\beta_1+\alpha\beta_2)\Lambda+\sqrt{\mathbb{K}})), \; \text{and}&\\
	&G_2=a_1(a_2^3a_3a_4^2\alpha+a_2a_4(-a_3\beta_1+\alpha\beta_2)\lambda\Lambda\lambda_1\phi+\lambda\lambda_1\phi\sqrt{\mathbb{K}}).
\end{flalign*}
Thus, for the threshold parameter $\mathcal{R}_0 >1$ we have, 
$$a_4(a_3\beta_1+\alpha\beta_2)^2\Lambda^2+a_1^2a_3(a_2^2a_3a_4+4\beta_2\lambda\Lambda\lambda_1\phi) > 2a_1a_2a_3a_4(a_3\beta_1+\alpha\beta_2)\Lambda.$$
Hence, the expression of the endemic equilibrium point of the model \eqref{new_model} is obtained.\\
Since different virus particles and infected cells are present in varied amounts, we refer to this as vital persistence. We can also shorten the point as $\mathcal{E}^*=\left(S^*, V^*, E^*, I^*, R^*, T^*\right)$.\\
In mathematical biology, $\mathcal{E}^0$ represents a short-lived infection that naturally clears from the body. In contrast, $\mathcal{E}^*$ signifies a situation where the body can't eliminate the illness on its own. In this case, the influenza infection becomes more noteworthy over time \cite{Stability Bound-18, Stability Bound-16}.  Consequently, more sophisticated models accounting for latent infection, the impact of macrophages, the cytotoxic immune response (CLT), or spatial dependence become essential to explain the dynamics of influenza spread throughout the body and its evolution toward an outbreak.\\
If the system explained by \eqref{new_model} reaches an equilibrium point, it will persist throughout the remaining period. Alternatively, the system is not required to reach these equilibrium values. However, it may approach the equilibrium, deviate from it, or oscillate between definite values. Conducting a comprehensive stability study of the system is essential for precisely predicting its behavior and understanding how it will interact with the equilibrium.

\end{appendices}
\begin{appendices}

    \section{Basic Reproduction Number (BRN)}\label{BRN}
    A brief illustration of BRN for both controlling and without controlling strategies is included here for Section \ref{Section-Reproduction Number}.\\
    {\bf Concise computation of BRN with control}.
    From the model~(\ref{new_model}), in the presence of vaccination class $ \mathcal{R}_{0V} $ is known as basic reproduction number with control. With vaccination the DFE of \eqref{equi} is,
$$\mathcal{E}^0\equiv \left( \dfrac{\mu N}{\mu + \phi},\dfrac{\phi N}{\mu + \phi}, 0, 0, 0,0 \right).$$
We now apply the next generation matrix method to the model \eqref{new_model} and modeling only the exposed and infected compartments $E(t)$ and $I(t)$ is necessary since we are only interested in cells that disseminate infection. Hence, considering subpopulation $E(t)$ and $I(t)$ containing new infection terms and disease transmission terms, we can obtain the following subsystem,
\begin{align}\label{new-repro-syst}
	\frac{dE}{dt}= &(\beta_1E+\beta_2I)S-(\alpha+\mu)E. \nonumber\\
	\frac{dI}{dt}= &\alpha E+(1-\varepsilon)(\beta_1E+\beta_2I)V- (\mu+\delta+\gamma+\gamma_1)I.
\end{align}
From the system \eqref{new-repro-syst}, we obtain,\\   
\[F = 
\begin{pmatrix}
	\beta_{1} S_{0}  &  \beta_{2} S_{0}  \\
	\lambda \beta_{1} V_{0}  &  \lambda \beta_{2} V_{0} \\
\end{pmatrix}, \;\; \text{and}\;\; V = 
\begin{pmatrix}
	\mu +\alpha   &  0  \\
	-\alpha   &  \mu+\delta+\gamma+\gamma_1  \\
\end{pmatrix}.\]\\
Therefore, 
\[V^{-1} = 
\begin{pmatrix}
	\frac{\displaystyle 1}{\displaystyle \alpha+\mu}   &   0                            \\
	\frac{\displaystyle \alpha}{\displaystyle (\alpha+\mu)(\mu+\delta+\gamma+\gamma_1)}   &   \frac{\displaystyle 1}{\displaystyle (\mu+\delta+\gamma+\gamma_1)}   \\
\end{pmatrix}.
\]
\\
Here, $F$ and $V$ stand for new infection term and transferred terms, respectively.
Thus, the  next-generation matrix $FV^{-1}$ is,
\[FV^{-1} = 
\begin{pmatrix}
	\frac{\displaystyle S_0\beta_1}{\displaystyle (\alpha+\mu)}+ \frac{\displaystyle S_0\alpha\beta_2}{\displaystyle (\alpha+\mu)(\mu+\delta+\gamma+\gamma_1)} &  \frac{\displaystyle S_0\beta_2}{\displaystyle (\mu+\delta+\gamma+\gamma_1)}    \\
	\frac{\displaystyle V_0\beta_1\lambda}{\displaystyle \alpha+\mu}+\frac{\displaystyle V_0\alpha\beta_2\lambda}{\displaystyle (\alpha+\mu)(\mu+\delta+\gamma+\gamma_1)}   &  \frac{\displaystyle V_0\beta_2\lambda}{\displaystyle (\mu+\delta+\gamma+\gamma_1)} \\
\end{pmatrix}.
\]
Here, the eigenvalues of $FV^{-1}$ are 
$\displaystyle \left\{ 0,\frac{S_0\alpha\beta_2+S_0\beta_1\gamma+S_0\beta_1\gamma_1+S_0\beta_1\delta+V_0\alpha\beta_2\lambda+S_0\beta_1\mu+V_0\beta_2\lambda\mu}{(\alpha+\mu)(\mu+\delta+\gamma+\gamma_1)} \right\}.$\\
Hence, the controlled basic reproduction number $ \mathcal{R}_{0V} $ which is the spectral radius of $FV^{-1}$ is given as follows,
\begin{align} \label{repro1}
	\mathcal{R}_{0V} = \rho(FV^{-1}) &=\frac{S_0\alpha\beta_2+S_0\beta_1\gamma+S_0\beta_1\gamma_1+S_0\beta_1\delta+V_0\alpha\beta_2\lambda+S_0\beta_1\mu+V_0\beta_2\lambda\mu}{(\alpha+\mu)(\mu+\delta+\gamma+\gamma_1)}\nonumber  \\
	&=\frac{\mu N\left[\alpha\beta_2+\beta_1(\mu+\delta+\gamma+\gamma_1)\right]}{(\mu+\phi)(\alpha+\mu)(\mu+\delta+\gamma+\gamma_1)}+\frac{N\phi\beta_2\lambda}{(\mu+\phi)(\mu+\delta+\gamma+\gamma_1)}.
\end{align}

\noindent {\bf Concise computation of BRN without control.}
From the model~\eqref{new_model}, in absence of vaccination class i.e. when $\lambda=(1-\varepsilon)=0,\;\varepsilon=1$, then the threshold quantity $ \mathcal{R}_{0V} $ becomes $ \mathcal{R}_{0} $ which is known as basic reproduction number without control. Without vaccination the DFE of \eqref{equi} is,
$\mathcal{E}^0\equiv \left( \dfrac{\mu N}{\mu + \phi},\dfrac{\phi N}{\mu + \phi}, 0, 0, 0,0 \right).$
\noindent We can extract the two following matrices from the system \eqref{new-repro-syst}  which are $F$ and $V$ by replacing $\varepsilon=0$. They are presented as follows,
\[F = 
\begin{pmatrix}
	\beta_{1} S_{0}  &  \beta_{2} S_{0}  \\
	0  & 0 \\
\end{pmatrix},
\;\;\text{and}\;\; V = 
\begin{pmatrix}
	\mu +\alpha   &  0  \\
	-\alpha   &  \mu+\delta+\gamma+\gamma_1  \\
\end{pmatrix}.\]\\
Thus,
\[V^{-1} = 
\begin{pmatrix}
	\frac{\displaystyle 1}{\displaystyle \alpha+\mu}   &   0                            \\
	\frac{\displaystyle \alpha}{\displaystyle (\alpha+\mu)(\mu+\delta+\gamma+\gamma_1)}   &   \frac{\displaystyle 1}{\displaystyle (\mu+\delta+\gamma+\gamma_1)}   \\
\end{pmatrix}.
\]
Here, $F$ and $V$ stand for new infection term and transferred terms, respectively.
Thus, the  next-generation matrix $FV^{-1}$ is,
\[FV^{-1} = 
\begin{pmatrix}
	\frac{\displaystyle S_0\beta_1}{\displaystyle (\alpha+\mu)}+ \frac{\displaystyle S_0\alpha\beta_2}{\displaystyle (\alpha+\mu)(\mu+\delta+\gamma+\gamma_1)} &  \frac{\displaystyle S_0\beta_2}{\displaystyle (\mu+\delta+\gamma+\gamma_1)}    \\
	0  &  0 \\
\end{pmatrix}.
\]
Here, the eigenvalues of $FV^{-1}$ are
$\displaystyle \left\{ 0,\frac{S_0\alpha\beta_2+S_0\beta_1\gamma+S_0\beta_1\gamma_1+S_0\beta_1\delta+S_0\beta_1\mu}{(\alpha+\mu)(\mu+\delta+\gamma+\gamma_1)} \right\}.$\\
Hence, the basic reproduction number $ \mathcal{R}_{0} $ which is the spectral radious of $FV^{-1}$ is given by,
\begin{align} \label{repro2}
	\mathcal{R}_{0} = \rho(FV^{-1}) &=\frac{S_0\alpha\beta_2+S_0\beta_1\gamma+S_0\beta_1\gamma_1+S_0\beta_1\delta+S_0\beta_1\mu}{(\alpha+\mu)(\mu+\delta+\gamma+\gamma_1)} \nonumber \\	&=\frac{S_0\left[\alpha\beta_2+\beta_1(\gamma+\gamma_1+\mu+\delta)\right]}{(\alpha+\mu)(\gamma+\gamma_1+\delta+\mu)}.
\end{align}
The basic reproduction number ($\mathcal{R}_0$) is defined as the average number of secondary infections resulting from the introduction of a single virus cell into a host where every target cell is susceptible. In our model, $\mathcal{R}_0$ depends on two variables: the average number of target cells per unit of time (considering natural death) and the rate of disease transmission by an infective cell \cite{Stability Bound-25, Stability Bound-26}.
\end{appendices}

\section*{Acknowledgments}
The research by M. Kamrujjaman was partially supported by the University Grants Commission (UGC), 
and the  University of Dhaka, Bangladesh.

\section*{Conflict of interest}
The authors declare no conflict of interest. 


\section*{Data sharing}
There is no available data in this study. 



\begin{thebibliography}{100}
	
	\bibitem{CDC} Centers for Disease Control and Prevention\\ \url{https://www.cdc.gov/}, \textit{Accessed:2023-02-15}.
	
	\bibitem{WHO} World Health Organization\\ \url{https://www.who.int/}, \textit{Accessed:2023-02-18}.
 
 \bibitem{Stability Bound-5} Alcaraz, G. G., \& Vargas-De-León, C. (2012). Modeling control strategies for influenza A H1N1 epidemics: SIR models. Revista Mexicana de Física, 58(1), 37-43.

 \bibitem{Stability Bound-8} Modnak, C., \& Wang, J. (2017). Optimal treatment strategy of an avian influenza model with latency. International Journal of Biomathematics, 10(05), 1750066.
	
	\bibitem{Stability Bound-9}Islam, R., Biswas, M. H. A., \& Jamali, A. J. U. (2017). Mathematical analysis of epidemiological model of influenza a (h1n1) virus transmission dynamics in perspective of bangladesh. GANIT: Journal of Bangladesh Mathematical Society, 37, 39-50.

 
\bibitem{Stability Bound-14}Kanyiri, C. W., Mark, K., \& Luboobi, L. (2018). Mathematical analysis of influenza A dynamics in the emergence of drug resistance. Computational and Mathematical Methods in Medicine, 2018.	

\bibitem{AAA1}
 Akter Akhi, A., Tasnim, F., Akter, S., \& Kamrujjaman, M. (2023). A mathematical model of a diphtheria outbreak in Rohingya settlement in Bangladesh. Journal of Mahani Mathematical Research, 547-563.

\bibitem{Stability Bound-17}Krishnapriya, P., Pitchaimani, M., \& Witten, T. M. (2017). Mathematical analysis of an influenza A epidemic model with discrete delay. Journal of computational and Applied Mathematics, 324, 155-172.

\bibitem{AAA2}
Akhi, A. A., Kamrujjaman, M., Nipa, K. F., \& Khan, T. (2023). A continuous-time Markov chain and stochastic differential equations approach for modeling malaria propagation. Healthcare Analytics, 4, 100239.
	
	\bibitem{Stability Bound-18}Khanh, N. H. (2016). Stability analysis of an influenza virus model with disease resistance. Journal of the Egyptian Mathematical Society, 24(2), 193-199.

 \bibitem{Stability Bound-13}Rosyada, F. S., \& Hariyanto, S. (2019, May). Local stability analysis of an influenza virus transmission model case study: tondano health center in pekalongan city. In Journal of Physics: Conference Series (Vol. 1217, No. 1, p. 012057). IOP Publishing.

 \bibitem{Stability Bound-19}Leonenko, V. N., \& Ivanov, S. V. (2016). Fitting the SEIR model of seasonal influenza outbreak to the incidence data for Russian cities. Russian Journal of Numerical Analysis and Mathematical Modelling, 31(5), 267-279.
	
	\bibitem{Stability Bound-20}Lee, S., \& Chowell, G. (2017). Exploring optimal control strategies in seasonally varying flu-like epidemics. Journal of theoretical biology, 412, 36-47.

 \bibitem{Stability Bound-21}Pitchaimani, M., \& Krishnapriya, P. (2016). Global Stability Analysis of an Influenza A (H1N1) Model with Two Discrete Delays. Sohag Journal of Math, 3(3), 105-112.

 	\bibitem{Stability Bound-23}Maji, D., \& Ghosh, A. ON THE STABILITY OF MATHEMATICAL MODELLING OF INFLUENZA, 203.110.246.133.

  	\bibitem{Marcheva Book} Martcheva, M. (2015). An introduction to mathematical epidemiology. Springer. 

   \bibitem{Stability Bound-11}Kamrujjaman, M., Ghosh, U., \& Islam, M. S.  (2022). Dynamics of SEIR model: A case study of COVID-19 in Italy. Results in Control and Optimization, 7, 100119.
	
	

  \bibitem{Stability Bound-6} Mahmud, M. S., Kamrujjaman, M., Adan, M. M. I. Y., Hossain, M. A., Rahman, M. M., Islam, M. S., ... \& Molla, M. M. (2022). Vaccine efficacy and SARS-COV-2 control in california and us during the session 2020–2026: A modeling study. Infectious Disease Modelling, 7(1), 62-81.


  	\bibitem{Stability Bound-16}Ojo, M. M., \& Akinpelu, F. O. (2017). Lyapunov functions and global properties of seir epidemic model. International journal of Chemistry, mathematics and Physics (IJCMP), 1(1).

   	\bibitem{Stability Bound-7} Hassan, M. N., Mahmud, M. S., Nipa, K. F., \& Kamrujjaman, M. (2023). Mathematical Modeling and Covid-19 Forecast in Texas, USA: a prediction model analysis and the probability of disease outbreak. Disaster medicine and public health preparedness, 17, e19.
	
	

	

   \bibitem{Stability Bound-25}Kumar Ghosh, J., Saha, P., Kamrujjaman, M., \& Ghosh, U. (2023). Transmission Dynamics of COVID-19 with Saturated Treatment: A Case Study of Spain. Brazilian Journal of Physics, 53(3), 54.
	
	\bibitem{Stability Bound-26}Song, H., Wang, R., Liu, S., Jin, Z., \& He, D. (2022). Global stability and optimal control for a COVID-19 model with vaccination and isolation delays. Results in physics, 42, 106011.

 	\bibitem{Stability Bound-27}Mathur, K. S., \& Narayan, P. (2018). Dynamics of an SVEIRS epidemic model with vaccination and saturated incidence rate. International Journal of Applied and Computational Mathematics, 4(5), 118.
	
		\bibitem{Bifurcation of R0-7}Nadim, S. S., \& Chattopadhyay, J. (2020). Occurrence of backward bifurcation and prediction of disease transmission with imperfect lockdown: A case study on COVID-19. Chaos, Solitons \& Fractals, 140, 110163.
		
			\bibitem{Stability Bound-4}Kamrujjaman, M., Mahmud, M. S., Ahmed, S., Qayum, M. O., Alam, M. M., Hassan, M. N., ... \& Bulut, U. (2021). SARS-CoV-2 and Rohingya refugee camp, Bangladesh: Uncertainty and how the government took over the situation. Biology, 10(2), 124.
	
	\bibitem{Bifurcation of R0-8}Teklu, S. W. (2022). Mathematical analysis of the transmission dynamics of COVID-19 infection in the presence of intervention strategies. Journal of Biological Dynamics, 16(1), 640-664.

 	\bibitem{Bifurcation of R0-6}Balamuralitharan, S., \& Radha, M. (2018, April). Bifurcation analysis in SIR epidemic model with treatment. In Journal of Physics: Conference Series (Vol. 1000, No. 1, p. 012169). IOP Publishing.
	
\bibitem{Bifurcation of R0-3}Van den Driessche, P., \& Watmough, J. (2002). Reproduction numbers and sub-threshold endemic equilibria for compartmental models of disease transmission. Mathematical biosciences, 180(1-2), 29-48.
	
	\bibitem{Bifurcation of R0-4}Omame, A., Sene, N., Nometa, I., Nwakanma, C. I., Nwafor, E. U., Iheonu, N. O., \& Okuonghae, D. (2021). Analysis of COVID‐19 and comorbidity co‐infection model with optimal control. Optimal Control Applications and Methods, 42(6), 1568-1590.

 \bibitem{Bifurcation of R0-1}Buonomo, B. (2015). A note on the direction of the transcritical bifurcation in epidemic models. Nonlinear Analysis: Modelling and Control, 20(1), 38-55.
	
	\bibitem{Bifurcation of R0-2}Das, S., \& Sarkar, S. (2020). Basic reproductive number and bifurcation of infectious disease model with carriers. Global Journal of Pure and Applied Mathematics, 16(2), 187-197.
	
		\bibitem{Stability Bound-10}Kamrujjaman, M., Jubyrea, J., \& Islam, M. S. (2020). Data analysis and mathematical model: control measures and prediction to prevent COVID-19 outbreak. Arabian Journal of Medical Sciences, 3(2), 5-9.

 \bibitem{Bifurcation of R0-5}Elbasha, E. H., Podder, C. N., \& Gumel, A. B. (2011). Analyzing the dynamics of an SIRS vaccination model with waning natural and vaccine-induced immunity. Nonlinear Analysis: Real World Applications, 12(5), 2692-2705.

 \bibitem{Optimal Control-1}Shi, W. W., \& Tan, Y. S. (2012). Transmission dynamics and optimal control of an influenza model with quarantine and treatment. International Journal of Biomathematics, 5(03), 1260011.

 \bibitem{Stability Bound-24}
 Khanh, N. H. (2014). Stability Analysis of a Transmission Model for Influenza Virus
A H1N1. Department of Mathematics, College of Natural Science, Can Tho University,
Viet Nam.

\bibitem{Hopf LEAST LHS PRCC-1}
Oluyori, D. A., Perez, A. G., Okhuese, V. A., \& Akram, M. (2021). Dynamics of an SEIRS COVID-19 epidemic model with saturated incidence and saturated treatment response: bifurcation analysis and simulations. AUPET Press Technical Journal of Daukeyev University, 1(1), 39-56.

\bibitem{Hopf LEAST LHS PRCC-2}
Abta, A., Laarabi, H., \& Alaoui, H. T. (2014). The Hopf bifurcation analysis and
optimal control of a delayed SIR epidemic model. Int. J. Anal, 940819, 2014.

\bibitem{Hopf LEAST LHS PRCC-4}
Introduction to bifurcations and the Hopf bifurcation theorem for ... (n.d.).
https://www.math.colostate.edu/ shipman/47/volume42011/M640-Munez.pdf.

\end{thebibliography}
\end{document}